  \providecommand\BibTeX{{%
    \normalfont B\kern-0.5em{\scshape i\kern-0.25em b}\kern-0.8em\TeX}}}
\newcommand\footnoteref[1]{\protected@xdef\@thefnmark{\ref{#1}}\@footnotemark}
\pgfplotsset{compat=newest}
\newcommand{\probability}[1]{\prob{[#1]}}
\newenvironment{code}{\noindent%
\begin{tabbing}%
\hspace{2em}\=\hspace{2em}\=\hspace{2em}\=\hspace{2em}\=\hspace{2em}\=%
\hspace{2em}\=\hspace{2em}\=\hspace{2em}\=\hspace{2em}\=\hspace{2em}\=%
\kill}{\end{tabbing}}
\newcommand{\RRem}[1]   {\`{\bf --\hspace{0.5mm}--~}{\rm#1}}
\newcounter{algorithmicH}
\let\oldalgorithmic\algorithmic
\renewcommand{\algorithmic}{%
  \stepcounter{algorithmicH}%
  \oldalgorithmic}%
\renewcommand{\theHALG@line}{ALG@line.\thealgorithmicH.\arabic{ALG@line}}
\newtheorem{assumption}{Assumption}
\crefname{assumption}{Assumption}{Assumptions}
\crefname{assumption}{assumption}{assumptions}
\Crefname{assumption}{Assumption}{Assumptions}
\newcommand{\Is}       {:=}
\newcommand{\punkt}{\enspace .}
\newcommand{\Oh}[1]{\mathcal{O}\!\left( #1\right)}
\newcommand{\Th}[1]{\Theta\!\left( #1\right)}
\newcommand{\Om}[1]{\Omega\!\left( #1\right)}
\newcommand{\om}[1]{\omega\!\left( #1\right)}
\newcommand{\Ohbp}[1]{\mathcal{O}\!\!\>\big( #1\big)}
\newcommand{\Thbp}[1]{\Theta\!\!\>\big( #1\big)}
\newcommand{\Ombp}[1]{\Omega\!\!\>\big( #1\big)}
\newcommand{\dissnote}[1]{}
\newcommand{\todo}[1]{}
\newcommand{\answer}[1]{}
\newcommand{\frage}[1]{}
\newcommand{\note}[1]{}
\newcommand{\VarArray}{A\xspace}
\newcommand{\BaseCaseSize}{n_0\xspace}
\newcommand{\BlockSize}{b\xspace}
\newcommand{\OversamplingFactor}{\alpha\xspace}
\newcommand{\tbegin}{\underline{t}}
\newcommand{\tend}{\overline{t}}
\newcommand{\VarBucketCount}{k\xspace}
\newcommand*{\VarBucket}[1][]{\ifthenelse{\equal{#1}{}}{b}{b_{#1}}\xspace}
\newcommand{\TargetBucketIndex}{\text{dest}}
\newcommand*{\writeblock}[1][]{\ifthenelse{\equal{#1}{}}{w}{w_{#1}}\xspace}
\newcommand*{\readblock}[1][]{\ifthenelse{\equal{#1}{}}{r}{r_{#1}}\xspace}
\newcommand*{\delimiterblock}[1][]{\ifthenelse{\equal{#1}{}}{d}{d_{#1}}\xspace}
\newcommand{\VarThreadCount}{t\xspace}
  \def\textsuperscript#1{\textasciicircum(#1)}%
\newcommand{\compssssort}{\textsf{S\textsuperscript{4}o}\xspace}
\newcommand{\compmyssssaxtmann}{\textsf{1S\textsuperscript{4}o}\xspace}
\newcommand{\compssssschneider}{\textsf{S\textsuperscript{4}oS}\xspace}
\newcommand{\compissssort}{\textsf{I1S\textsuperscript{4}o}\xspace}
\newcommand{\compblock}{\textsf{BlockQ}\xspace}
\newcommand{\compspdq}{\textsf{BlockPDQ}\xspace}
\newcommand{\compssort}{\textsf{std::sort}\xspace}
\newcommand{\compsyaros}{\textsf{DualPivot}\xspace}
\newcommand{\compswiki}{\textsf{WikiSort}\xspace}
\newcommand{\compsgrail}{\textsf{GrailSort}\xspace}
\newcommand{\compsmergequick}{\textsf{QMSort}\xspace}
\newcommand{\compstim}{\textsf{Timsort}\xspace}
\newcommand{\radixlearned}{\textsf{LearnedSort}\xspace}
\newcommand{\radixipp}{\textsf{IppRadix}\xspace}
\newcommand{\radixsska}{\textsf{SkaSort}\xspace}
\newcommand{\compissrsort}{\textsf{I1S\textsuperscript{2}Ra}\xspace}
\newcommand{\imsdradix}{\textsf{IMSDradix}\xspace}
\newcommand{\compiparassssort}{\textsf{IPS\textsuperscript{4}o}\xspace}
\newcommand{\compiparassssortnts}{\textsf{IPS\textsuperscript{4}oNT}\xspace}
\newcommand{\compparastringssssort}{\textsf{StringPS\textsuperscript{4}o}\xspace}
\newcommand{\compmyparassssaxtmann}{\textsf{PS\textsuperscript{4}o}\xspace}
\newcommand{\comppsort}{\textsf{MCSTLmwm}\xspace}
\newcommand{\comppbalancedsort}{\textsf{MCSTLbq}\xspace}
\newcommand{\compppbbs}{\textsf{PBBS}\xspace}
\newcommand{\compptbb}{\textsf{TBB}\xspace}
\newcommand{\radixregion}{\textsf{RegionSort}\xspace}
\newcommand{\comppaspas}{\textsf{ASPaS}\xspace}
\newcommand{\radixraduls}{\textsf{RADULS2}\xspace}
\newcommand{\radixparadis}{\textsf{PARADIS}\xspace}
\newcommand{\radixppbbr}{\textsf{PBBR}\xspace}
\newcommand{\compiparassrsort}{\textsf{IPS\textsuperscript{2}Ra}\xspace}
\newcommand{\pcintellargefour}{\mbox{I4x20}\xspace}
\newcommand{\pcamd}{\mbox{A1x16}\xspace}
\newcommand{\pcinteltwo}{\mbox{I2x16}\xspace}
\newcommand{\pcintelfour}{\mbox{A1x64}\xspace}
\newcommand{\distzipf}{Zipf\xspace}
\newcommand{\distuniform}{Uniform\xspace}
\newcommand{\distexpo}{Exponential\xspace}
\newcommand{\distalmostsorted}{AlmostSorted\xspace}
\newcommand{\distsorted}{Sorted\xspace}
\newcommand{\distreversesorted}{ReverseSorted\xspace}
\newcommand{\distones}{Zero\xspace}
\newcommand{\distduplicatesroot}{RootDup\xspace}
\newcommand{\distduplicatestwice}{TwoDup\xspace}
\newcommand{\distduplicateseight}{EightDup\xspace}
\newcommand{\bytes}{100B\xspace}
\newcommand{\pair}{Pair\xspace}
\newcommand{\quartet}{Quartet\xspace}
\newcommand{\double}{double\xspace}
\newcommand{\uint}{uint32\xspace}
\newcommand{\ulong}{uint64\xspace}
\newcommand{\oset}[2]{[#1\>..\>#2]}
\newcommand{\excloset}[2]{[#1\>..\>#2)}
\begin{document}

\newcommand*\SubLabel[1]{
  \node [
  anchor=north west,
  text width=2em,
  text height=2ex,
  text depth=1ex,
  align=left,
  ] at (axis description cs:0.5,1.2)
  {\phantomsubcaption\label{#1}\subref{#1})};
}

\pgfplotsset{
  layers/my layer set/.define layer set={
    background,
    main,
    foreground
  }{},
  set layers=my layer set,
}

\pgfplotscreateplotcyclelist{my subroutine}{%
  black,every mark/.append style={solid,fill=gray},mark=otimes*,mark size=1.75pt\\%
  orange,every mark/.append style={fill=orange!80!black},mark=triangle*,mark size=1.75pt\\%
  yellow!60!black,dashed,every mark/.append style={solid,fill=yellow!80!black},mark=square*,mark size=1.4pt\\%
  teal,every mark/.append style={fill=teal!80!black},mark=star,mark size=1.75pt\\%
  brown!60!black,every mark/.append style={fill=brown!80!black},mark=square*,mark size=1.4pt\\%
  blue,mark=star,every mark/.append style=solid,mark size=1.75pt\\%
  red!70!white,every mark/.append style={solid,fill=red!80!black},mark=*,mark size=1.4pt\\%
  red,dashed,mark=star,mark size=1.75pt\\%
  red,dashed,every mark/.append style={solid,fill=red!80!black},mark=diamond*,mark size=1.75pt\\%
  red!60!black,dashed,every mark/.append style={solid,fill=red!80!black},mark=square*,mark size=1.4pt\\%
}

\pgfplotsset{
  plotstylesubroutine/.style={
    cycle list name=my subroutine,
  }
}

\pgfplotscreateplotcyclelist{my exotic sequential}{%
  black,every mark/.append style={solid,fill=gray},mark=otimes*,mark size=1.75pt\\%
  lime!80!black,every mark/.append style={fill=lime},mark=*,mark size=1.4pt\\%
  teal,every mark/.append style={fill=teal!80!black},mark=triangle*,mark size=1.75pt\\%
  red,every mark/.append style={fill=red!80!black},mark=diamond*,mark size=1.75pt\\%
  red!60!red,every mark/.append style={solid,fill=red!80!black},mark=star,mark size=1.75pt\\%
  red!60!black,dashed,every mark/.append style={solid,fill=red!80!black},mark=square*,mark size=1.4pt\\%
  red,dashed,every mark/.append style={solid,fill=red!80!black},mark=diamond*,mark size=1.75pt\\%
  red,dashed,mark=star,mark size=1.75pt\\%
  red,every mark/.append style={solid,fill=gray},mark=otimes*,mark size=1.75pt\\%
}

\pgfplotsset{
  plotstylesequential/.style={
    cycle list name=my exotic sequential,
  }
}

\pgfplotscreateplotcyclelist{my exotic sequential perf}{%
  black,every mark/.append style={solid,fill=gray},mark=otimes*,mark size=1.75pt\\%
  lime!80!black,every mark/.append style={fill=lime},mark=*,mark size=1.4pt\\%
  red,every mark/.append style={fill=red!80!black},mark=diamond*,mark size=1.75pt\\%
  red!60!black,dashed,every mark/.append style={solid,fill=red!80!black},mark=square*,mark size=1.4pt\\%
  red,dashed,every mark/.append style={solid,fill=red!80!black},mark=diamond*,mark size=1.75pt\\%
  red,dashed,mark=star,mark size=1.75pt\\%
  red,every mark/.append style={solid,fill=gray},mark=otimes*,mark size=1.75pt\\%
}

\pgfplotsset{
  plotstylesequentialperf/.style={
    cycle list name=my exotic sequential perf,
  }
}

\pgfplotscreateplotcyclelist{my exotic parallel}{%
  black,every mark/.append style={solid,fill=gray},mark=otimes*,mark size=1.75pt\\%
  lime!80!black,every mark/.append style={fill=lime},mark=triangle*,mark size=1.75pt\\%
  teal,every mark/.append style={fill=teal!80!black},mark=star,mark size=1.75pt\\%
  red!60!black,dashed,every mark/.append style={solid,fill=red!80!black},mark=square*,mark size=1.4pt\\%
  red,dashed,mark=star,mark size=1.75pt\\%
  red,dashed,every mark/.append style={solid,fill=red!80!black},mark=diamond*,mark size=1.75pt\\%
  red!70!white,every mark/.append style={solid,fill=red!80!black},mark=*,mark size=1.4pt\\%
  red,every mark/.append style={solid,fill=gray},mark=otimes*,mark size=1.75pt\\%
}

\pgfplotsset{
  plotstyleparallel/.style={
    cycle list name=my exotic parallel,
  }
}

\pgfplotscreateplotcyclelist{my exotic parallel perf}{%
  black,every mark/.append style={solid,fill=gray},mark=otimes*,mark size=1.75pt\\%
  orange,every mark/.append style={fill=orange!80!black},mark=triangle*,mark size=1.75pt\\%
  teal,every mark/.append style={solid,fill=teal!80!black},mark=diamond*,mark size=1.75pt\\%
  red!60!black,dashed,every mark/.append style={solid,fill=red!80!black},mark=square*,mark size=1.4pt\\%
  red,dashed,mark=star,mark size=1.75pt\\%
  red,dashed,every mark/.append style={solid,fill=red!80!black},mark=diamond*,mark size=1.75pt\\%
  red!70!white,every mark/.append style={solid,fill=red!80!black},mark=*,mark size=1.4pt\\%
}

\pgfplotsset{
  plotstyleparallelperf/.style={
    cycle list name=my exotic parallel perf,
  }
}

\title{Engineering In-place (Shared-memory) Sorting Algorithms}

\thanks{This work is based on an earlier work: In-Place Parallel Super Scalar Samplesort (IPS$^4$o). In 25th European Symposium on Algorithms (ESA), Vol. 87. 2017. LIPIcs, 9:1–9:14. \url{https://doi.org/10.4230/LIPIcs.ESA.2017.9}}

\author{Michael Axtmann}
\affiliation{%
  \institution{Karlsruhe Institute of Technology}
  \city{Karlsruhe}
  \state{Germany}
}
\email{michael.axtmann@kit.edu}

\author{Sascha Witt}
\affiliation{%
  \institution{Karlsruhe Institute of Technology}
  \city{Karlsruhe}
  \state{Germany}
}
\email{sascha.witt@kit.edu}

\author{Daniel Ferizovic}
\affiliation{%
  \institution{Karlsruhe Institute of Technology}
  \city{Karlsruhe}
  \state{Germany}
}
\email{dani93.f@gmail.com}

\author{Peter Sanders}
\affiliation{%
  \institution{Karlsruhe Institute of Technology}
  \city{Karlsruhe}
  \state{Germany}
}
\email{sanders@kit.edu}

\begin{abstract}
  We present new sequential and
  parallel sorting algorithms that now represent the fastest known
  techniques for a wide range of input sizes, input distributions,
  data types, and machines. Somewhat surprisingly, part of the speed
  advantage is due to the additional feature of the algorithms to work
  in-place, i.e., they do not need a significant amount of space beyond
  the input array. Previously, the in-place feature often implied
  performance penalties. Our main algorithmic contribution is a
  blockwise approach to in-place data distribution that
  is provably cache-efficient.
  We also parallelize this approach taking dynamic load balancing
  and memory locality into account.

  Our new comparison-based algorithm \emph{In-place Superscalar
    Samplesort (\compiparassssort)}, combines this technique with
  branchless decision trees. By taking cases with many equal elements into account
  and by adapting the distribution degree dynamically, we obtain a highly robust
  algorithm that outperforms the
  best previous in-place parallel comparison-based sorting algorithms
  by almost a factor of three. That algorithm also outperforms the best comparison-based
  competitors regardless of whether we consider in-place or not in-place,
  parallel or sequential settings.

  Another surprising result is that \compiparassssort even outperforms
  the best (in-place or not in-place) integer sorting algorithms in a
  wide range of situations. In many of the remaining cases (often
  involving near-uniform input distributions, small keys, or a
  sequential setting), our new \emph{In-place Parallel Super Scalar
    Radix Sort (\compiparassrsort)} turns out to be the best
  algorithm.

  Claims to have the -- in some sense -- ``best'' sorting algorithm
  can be found in many papers which cannot all be true.  Therefore, we
  base our conclusions on an extensive experimental study involving a
  large part of the cross product of 21 state-of-the-art sorting
  codes, 6 data types, 10 input distributions, 4 machines, 4 memory
  allocation strategies, and input sizes varying over 7 orders of
  magnitude. This confirms the claims made about the robust performance of our
  algorithms while revealing major performance problems in many competitors
  outside the concrete set of measurements
  reported in the associated publications. This is particularly true
  for integer sorting algorithms giving one reason to prefer
  comparison-based algorithms for robust general-purpose sorting.
\end{abstract}

\keywords{in-place algorithm, branch prediction}

\maketitle

\section{Introduction}
Sorting an array of
elements according to a total ordering of their keys is a fundamental
subroutine used in many applications.  Sorting is used for index
construction, for bringing similar elements together, or for
processing data in a ``clever'' order.  Indeed, often sorting is the
most expensive part of a program.  Consequently, a huge amount of
research on sorting has been done.  In particular, algorithm
engineering has studied how to make sorting practically fast in
presence of complex features of modern hardware like multi-core
(e.g.,~\cite{TsiZha03,putze2007mcstl,blelloch2010low,shun2012brief,TsiZha03,reinders2007intel,obeya2019theo,martel1989fast,heidelberger1990parallel,francis1992parallel,hou2015aspas,polychroniou2014comprehensive,kokot2017even,bingmann2017engineering,polychroniou2014comprehensive})
instruction parallelism
(e.g.,~\cite{sanders2004super,Intel2020Ipp,hou2015aspas,polychroniou2014comprehensive}),
branch prediction
(e.g.,~\cite{sanders2004super,KS06,edelkamp2016blockquicksort,yaroslavskiy2009dual,skarupke2016skasort,bingmann2017engineering}),
caches
(e.g.,~\cite{sanders2004super,BFV04,Fran04,blelloch2010low,kokot2017even}),
or virtual memory
(e.g.,~\cite{Rah03,jurkiewicz2015model,SanWas11}).
In contrast, the sorting algorithms used in the
standard libraries of programming languages like Java or C++ still use
variants of quicksort -- an algorithm that is more than 50 years old
\cite{hoare1962quicksort}.  A reason seems to be that you have to
outperform quicksort in every respect in order to replace it.  This is
less easy than it sounds since quicksort is a pretty good algorithm --
a careful randomized implementation needs $\Oh{n\log n}$ expected work
independent of the input, it can be
parallelized~\cite{TsiZha03,putze2007mcstl}, it can be implemented to
avoid branch mispredictions~\cite{edelkamp2016blockquicksort}, and it
is reasonably cache-efficient.  Furthermore, quicksort works in-place
which is of crucial importance for very large inputs. These features
rule out many contenders.  Further algorithms are eliminated by the
requirement to work for arbitrary data types and input distributions.
This makes integer sorting algorithms like radix sort
(e.g.,~\cite{kokot2017even}) or using specialized hardware (e.g.,~GPUs
or SIMD instructions) less attractive since these algorithms
are not sufficiently general for a reusable library that has to work for arbitrary
data types.  Another portability issue is that the algorithm should
use no code specific to the processor architecture or the operating
system like non-temporal writes or overallocation of virtual memory
(e.g. \cite{SanWas11,kokot2017even}).  One aspect of making an
algorithm in-place is that such ``tricks'' are not needed.  Hence,
this paper concentrates on portable algorithms with a particular
focus on comparison-based algorithms and how they can be made robust
for arbitrary inputs, e.g., with a large number of repeated keys or
with skewed input distributions. That said, we also contribute to
integer sorting algorithms and we extensively compare ourselves also
to a number of non-portable and non-comparison-based sorting
algorithms.

The main contribution of this paper is to propose a new algorithm -- \emph{{\bf\em I}n-place {\bf\em P}arallel {\bf\em S}uper {\bf\em S}calar {\bf\em S}ample{\bf\em so}rt}~(\compiparassssort)%
\footnote{The Latin word ``ipso'' means ``by itself'', referring to
  the in-place feature of \compiparassssort.}  -- that combines enough
advantages to become an attractive replacement of quicksort.  Our
starting point is \emph{Super Scalar Samplesort}
(\compssssort)~\cite{sanders2004super} which already provides a very
good sequential non-in-place algorithm that is cache-efficient, allows
considerable instruction parallelism, and avoids branch
mispredictions.  \compssssort is a variant of samplesort
\cite{FraMck70}, which in turn is a generalization of quicksort to
multiple pivots.  The main operation is distributing elements of an
input sequence to $\VarBucketCount$ output buckets of about equal
size.  Our two main innovations are that we make the algorithm
in-place and parallel. The first phase of
\compiparassssort\ distributes the elements to $\VarBucketCount$
\emph{buffer blocks}.  When a buffer block becomes full, it is emptied into
a block of the input array that has already been distributed.
Subsequently, the memory blocks are permuted into the globally correct
order.  A cleanup phase handles empty blocks and half-filled buffer
blocks.  The classification phase is parallelized by assigning disjoint
pieces of the input array to different threads.  The block permutation
phase is parallelized using atomic fetch-and-add operations for each
block move.  Once subproblems become smaller, we adjust their
parallelism until they can be solved independently in parallel.  We
also make \compiparassssort more
robust by taking advantage of inputs with many identical keys.

It turns out that most of what is said above
is not specific to samplesort. It also applies to integer sorting,
specifically \emph{most-significant-digit (MSD) radix sort}~\cite{friend56electronic}
where data distribution is based on extracting the most significant
(distinguishing) bits of the input subproblem. We therefore also
present \emph{In-place Parallel Super Scalar Radix Sort
  (\compiparassrsort)} -- a proof-of-concept implementation of MSD
radix sort using the in-place partitioning framework we developed for
\compiparassssort.

After introducing basic tools in \cref{sec:preliminaries} and
discussing related work in \cref{sec:related}, we describe our new
algorithm \compiparassssort\ in \cref{sec:ips4o} and analyze our
algorithm in \cref{sec:analysis}.  In \cref{s:details} we give
implementation details of \compiparassssort and \compiparassrsort.

We then turn to an extensive experimental
evaluation in \cref{s:experiments}. It turned out that there is a
surprisingly large number of contenders for ``the best'' sorting
algorithm, depending on which concrete setting is considered.  We
therefore consider a large part of the cross product of 21 state-of-the-art sorting codes, 6 data types, 10 input
distributions, 4 machines, 4 memory allocation strategies, and input
sizes varying over 7 orders of magnitude.  Overall, more than
500\,000 different configurations where tried.  Our
algorithm \compiparassssort can be called ``winner'' of this
complicated comparison in the following sense: (1) It outperforms all
competing implementations for most cases. (2) In many of these cases,
the performance difference is large.  (3) When \compiparassssort is
slower than a competitor this is only by a small percentage in the
overwhelming majority of cases -- the only exceptions are for easy
instances that are handled very quickly by some contenders that
however perform poorly for more difficult instances.
Our radix sorter \compiparassrsort complements this by being even faster in some cases
involving few processor cores and small, uniformly distributed keys.
\compiparassrsort also outperforms other radix sorters in many cases.
We believe that our approach to experimental evaluation is also a contribution independent from our new algorithms since it helps to better understand which algorithmic measures (e.g., exploiting various features of a processor architecture) have an impact under which circumstances. 
This approach may thus help to improve future studies of sorting algorithms.

An overall
discussion and possible future work is given in \cref{s:conclusion}.
\ifarxiv The appendix provides further experimental data and proofs. \else An extended version of this paper~\cite{axtmann2020ips4oarxiv} provides further experimental data and proofs. \fi
The codes and benchmarks are available at \url{https://github.com/ips4o}.

\section{Definitions and Preliminaries}\label{sec:preliminaries}

The input of a sorting algorithm is an array $\VarArray\oset{0}{n - 1}$ of
$n$ elements, sorted by $t$ threads.  We expect that the output of a sorting
algorithm is stored in the input array.
We use the notation $\oset{a}{b}$ as a shorthand for the ordered set $\{a,\ldots,b\}$ and $\excloset{a}{b}$ for $\{a,\ldots,b-1\}$.
We also use $\log x$ for $\log_2 x$.
  
A machine has one or multiple \emph{CPUs}.
A CPU contains one or multiple \emph{cores}, which in turn contain
one, two, or more \emph{hardware threads}.
We denote a machine with multiple CPUs as a Non-Uniform Memory Access
(\emph{NUMA}) machine if the cores of the CPUs can access their
``local main memory'' faster than the memory attached to the other
CPUs.  We call the CPUs of NUMA machines \emph{NUMA nodes}.
  
  In algorithm theory, an algorithm works in-place if it uses only
  constant space in addition to its input.  We use the term
  \emph{strictly in-place} for this case.  In algorithm engineering,
  one is sometimes satisfied if the additional space is logarithmic in
  the input size.  In this case, we use the term \emph{in-place}.  In
  the context of in-place algorithms, we count machine words and
  equate the machine word size with the size of a data element to be
  sorted.  Note that other space complexity measures may count the
  number of used bits.

  The parallel external memory (\emph{PEM}) model~\cite{AGNS08} is a
  cache-aware extension of the parallel random-access machine.  This
  model is used to analyze parallel algorithms if the main issue is
  the number of accesses to the main memory.  In the PEM model, each
  of the $t$~threads has a private cache of size~$M$ and access to
  main memory happens in \emph{memory blocks} of size~$B$.
  The \emph{I/O complexity} of an algorithm is the asymptotic number
  of parallel memory block transfers (I/Os) between the main memory
  and the private caches.  An algorithm is denoted as
  \emph{I/O-efficient} if its I/O complexity is optimal.  In this
  work, we use the term \emph{cache-efficient} as a synonym for
  I/O-efficient when we want to emphasize that we consider memory
  block transfers between the main memory and the private
  cache.

  We adopt an asynchronous variant of the PEM model where we charge
  $t$ I/Os if a thread accesses a variable which is shared with other
  threads.  To make this a realistic assumption, our implementations
  avoid additional delays due to \emph{false sharing} by allocating at
  most one shared variable to each memory block.
  \dissnote{Michael:
    Cite~\cite[p.~29-32]{sanders2019sequential} Sequential models
    ~\cite[p.~31-32]{sanders2019sequential} External Memory Model
    \cite[p.~p.~80]{sanders2019sequential} Parallel External Memory
    Model}

\subsubsection*{(Super Scalar) Samplesort.}

The $\VarBucketCount$-way \compssssort
algorithm~\cite{sanders2004super} 
starts with allocating two temporary arrays of size $n$ -- one data
array to store the buckets, and one so-called \emph{oracle array}.  The
partitioning routine contains three phases and is executed recursively.
The sampling phase sorts
$\OversamplingFactor \VarBucketCount - 1$~randomly sampled input
elements where the \emph{oversampling factor} $\OversamplingFactor$ is
a tuning parameter.  The splitters~$S=\oset{s_0}{s_{\VarBucketCount-2}}$ are then picked equidistantly from the sorted
sample.  The classification phase classifies each input element,
stores its target bucket in a so-called \emph{oracle array}, and
increases the size of its bucket.  Element $e$ goes to bucket
$\VarBucket[i]$ if~$s_{i-1}< e\leq s_i$ (with $s_{-1}=-\infty$ and
$s_{\VarBucketCount-1}=\infty$).  Then, a prefix sum is used to
calculate the bucket boundaries.  The distribution phase uses the
oracle array and the bucket boundaries to copy the elements from the
input array into their buckets in the temporary data array.

The main contribution of \compssssort to samplesort is to use a
decision tree for element classification which eliminates branch
mispredictions (\emph{branchless decision tree}).  Assuming
$\VarBucketCount$ is a power of two, the splitters are stored in an
array $a$ representing a complete binary search tree:
$a_1=s_{\VarBucketCount/2-1}$, $a_2=s_{\VarBucketCount/4-1}$,
$a_3=s_{3\VarBucketCount/4-1}$, and so on. More generally, the left
successor of $a_i$ is $a_{2i}$ and its right successor is $a_{2i+1}$.
Thus, navigating through this tree is possible by performing a
conditional instruction for incrementing an array index.  \compssssort
completely unrolls the loop that traverses the decision tree to reduce
the instruction count.  Furthermore, the loop for classifying elements
is unrolled several times to reduce data dependencies between
instructions. This allows a higher degree of instruction parallelism.

Bingmann~et~al.~\cite{bingmann2017engineering} apply the branchless
decision tree to parallel~string~sample~sorting
(\compparastringssssort) and add additional buckets for elements identical
to a splitter.
After the decision tree of \compssssort has assigned element $e$ to
bucket $b_i$, \compparastringssssort updates the bucket to introduce
additional equality buckets for elements corresponding to a splitter:
Element $e$ goes to bucket $\VarBucket[2i+\mathbbm{1}_{e = s_i}]$
if~$i<k-1$, otherwise $e$ goes to bucket $\VarBucket[2i]$.\footnote{We
  use $\mathbbm{1}_c$ to express a conversion of a comparison result
  to an integer. When $c$ is true, $\mathbbm{1}_c$ is equal to
  $1$. Otherwise, it is equal to $0$.}  The case distinction $i<k-1$
is necessary as $a_{k-1}$ is undefined.

For \compiparassssort, we adopt (and refine) the approach of element
classification but change the organization of buckets in order to make
\compssssort in-place and parallel.  Our element classification works
as follows: Beginning by the source node $i=1$ of the decision tree,
the next node is calculated by $i \gets 2i+\mathbbm{1}_{a_i<e}$.  When
the leafs of the tree are reached, we update $i$ once more
$i\gets 2i+\mathbbm{1}_{a_i<e} - k$.  For now, we know for $e$ that
$s_{i-1}<e \leq s_i$ if we assume that $s_{-1}=-\infty$ and that
$s_{k-1}=\infty$.  Finally, the bucket of $e$ is
$2i+1-\mathbbm{1}_{e<s_i}$.  Note that we do not use the comparison
$\mathbbm{1}_{e = s_i}$, from $\compparastringssssort$ to calculate
the final bucket.  The reason is that our algorithm accepts a compare
function $<$ and \compparastringssssort compares radices.  Instead, we
use $1-\mathbbm{1}_{e<s_i}$ which is identical to $\mathbbm{1}_{e = s_i}$
since we already know that $e \leq s_i$.
Also, note that we avoid the case distinction $i<k-1$
from the classification of \compparastringssssort which may
potentially cause a branch misprediction.  Instead, we set
$s_{k-1} = s_{k-2}$.  Compared to \compssssort and
\compparastringssssort, we support values of $k$ which are no powers
of two, i.e., when we had removed splitter duplicates in our
algorithm.  In these cases, we round up $k$ to the next power of two
and pad the splitter array $S$ with the largest splitter.  We note
that this does not increase the depth of the decision tree.
\cref{fig:decision tree} depicts our refined decision tree and
\cref{alg:element classification1} classifies elements using the
refined decision tree.  \cref{alg:element classification1} classifies
a chunk of elements in one step.  A single instruction of the decision
tree traversal is executed on multiple elements before the next
operation is executed.  We use loops to execute each instruction on a
constant number of elements.  It turned out that recent compilers
automatically unroll these loops and remove the instructions of the
loops for code optimization.

\begin{figure}[tbp]
  \begin{center}
    \includegraphics[]{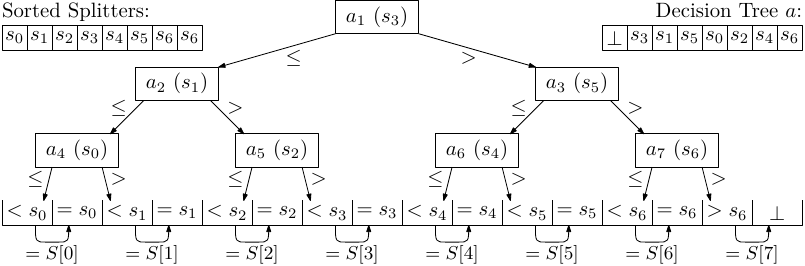}
  \end{center}
  \caption{\label{fig:decision tree}
    Branchless decision tree with $7$ splitters and $15$ buckets, including $7$ equality buckets.
    The first entry of the decision tree array stores a dummy to allow tree navigation.
    The last splitter in the sorted splitter array is duplicated to avoid a case distinction.
  }
\end{figure}

\begin{algorithm}[t]
  \begin{algorithmic}
    \caption{Element classification of the first $u\lfloor n/u\rfloor$ elements}\label{alg:element classification1}
    \State \textbf{Template parameters:} $s$ number of splitters, $u$ unroll factor, $\mathit{equalBuckets}$ boolean value which indicates the use of equality buckets
    \State \textbf{Input:} \begin{minipage}[t]{33em}$\VarArray\oset{0}{n-1}$ an array of $n$ input elements

      $\mathit{tree}\oset{1}{s}$ decision tree, splitters stored in left-to-right breadth-first order

      $\mathit{splitter}\oset{0}{s}$ sorted array of $s$ splitters, last splitter is duplicated

      \Call{compare}{$e_l, e_r$} a comparator function which returns $0$ or $1$

      \Call{output}{$e, t$} an output function which gets an element $e$ and its target bucket $t$\end{minipage}
    \State $l\gets$ \Call{$\log_2$}{$s+1$}\Comment{Log.\ number of buckets (equality buckets are excluded)}
    \State $k\gets 2^{l + 1}$\Comment{Number of buckets}
    \State $b\oset{0}{u-1}$\Comment{Array to store current position in the decision tree}
    \For{$j\gets 0$ {\bf in steps of} $u$ {\bf to } $n - u$}\Comment{Loop over elements in blocks of $u$}
    \For{$i\gets 0$ {\bf to} $u-1$}
    \State $b[i]\gets 1$\Comment{Set position to the tree root}
    \EndFor
    \For{$r\gets 0$ {\bf to} $l$}\Comment{Unrolled by most compilers as $l$ and $u$ are constants}
    \For{$i\gets 0$ {\bf to} $u-1$}
    \State $b[i]\gets 2 \cdot b[i] +$ \Call{compare}{$\mathit{tree}[b[i]], a[j + i]$}\Comment{Navigate through the tree}
    \EndFor
    \EndFor
    \If{$\mathit{equalBuckets}$}
    \For{$i\gets 0$ {\bf to} $u-1$}\Comment{Assign elements identical to the splitter to its equality bucket}
    \State $b[i]\gets 2 \cdot b[i] + 1-$\Call{compare}{$a[j + i], \mathit{splitter}[b[i] - k / 2]$}
    \EndFor
    \EndIf
    \For{$i\gets 0$ {\bf to} $u-1$}
    \State \Call{output}{$b[i] - k, a[j + i]$}
    \EndFor
    \EndFor
  \end{algorithmic}
\end{algorithm}

\section{Related Work}\label{sec:related}

\subsubsection*{Quicksort.} Variants of Hoare's
quicksort~\cite{hoare1962quicksort,musser1997introspective} are
generally considered some of the most efficient general-purpose
sorting algorithms.  Quicksort works by selecting a \emph{pivot}
element and partitioning the array such that all elements smaller than
the pivot are in the left part and all elements larger than the pivot
are in the right part.  The subproblems are solved recursively.
Quicksort (with recursion on the smaller subproblem first) needs
logarithmic additional space for the recursion stack. 
Strictly in-place variants~\cite{vdurian1986quicksort, bing1986one,
  wegner1987generalized} of quicksort avoid recursion, process the array from
left to right, and use a careful placement of the pivots to find the
end of the leftmost partition.
A variant of quicksort (with a fallback to heapsort to avoid worst-case scenarios) is currently used in the C++ standard library of
GCC~\cite{musser1997introspective}.

Some variants of quicksort use two or three
pivots~\cite{yaroslavskiy2009dual,KLMQ14} and achieve improvements of
around $20$\,\% in running time over the single-pivot case. The basic principle of quicksort remains,
but elements are partitioned into three or four subproblems instead of
two.

Quicksort can be parallelized in a scalable way by parallelizing both
partitioning and
recursion~\cite{martel1989fast,heidelberger1990parallel,francis1992parallel}. Tsigas
and Zhang~\cite{TsiZha03} show in practice how to do this
in-place. Their algorithm scans the input from left to right and from
right to left until the scanning positions meet -- as in most
sequential implementations. The crucial adaptation is to do this in a
blockwise fashion such that each thread works at one block from each
scanning direction at a time. When a thread finishes a block from one
scanning direction, it acquires a new one using an atomic
fetch-and-add operation on a shared pointer.  This process terminates
when all blocks are acquired.  The remaining unfinished blocks are
resolved in a sequential cleanup phase.  Our \compiparassssort\
algorithm can be considered as a generalization of this approach to
$k$ pivots.  This saves a factor $\Th{\log k}$ of passes through the
data.  We also parallelize the cleanup process.

\subsubsection*{Samplesort.}
Samplesort~\cite{FraMck70,blelloch1991comparison,blelloch2010low}\note{first
  sequential, the last two parallel} can be considered as a
generalization of quicksort which uses $\VarBucketCount-1$ splitters
to partition the input into $\VarBucketCount$ subproblems (from now on
called \emph{buckets}) of about equal size.
Unlike single- and dual-pivot quicksort, samplesort is usually not
in-place, but it is well-suited for parallelization and more
cache-efficient than quicksort.

\compssssort~\cite{sanders2004super} improves samplesort by avoiding
inherently hard-to-predict conditional branches linked to element
comparisons.  Branch mispredictions are very expensive because they
disrupt the pipelined and instruction-parallel operation of modern
processors.  Traditional quicksort variants suffer massively from
branch mispredictions~\cite{KS06}.  By replacing
conditional branches with conditionally executed machine instructions,
branch mispredictions can be largely avoided.  This is done
automatically by modern compilers if only a few instructions depend on
a condition.  As a result, \compssssort is up to two times faster than
quicksort~(\texttt{std::sort}), at the cost of $\Oh{n}$ additional
space.  BlockQuicksort~\cite{edelkamp2016blockquicksort} applies
similar ideas to single-pivot quicksort, resulting in a very fast
in-place sorting algorithm with performance similar to \compssssort.

For \compiparassssort, we used a refined version of the branchless
decision tree from \compssssort.  As a starting point, we took the
implementation of the branchless decision tree from
\compssssschneider, an implementation of \compssssort written by
Lorenz Hübschle-Schneider.  \compssssort has also been adapted for
efficient parallel string sorting~\cite{bingmann2017engineering}.  We
apply their approach of handling identical keys to our decision tree.

\subsubsection*{Radix Sort.}
As for samplesort, the core of radix sort is a
$k$-way data partitioning routine which is recursively executed.  In
its simplest way, all elements are classified once to determine the
bucket sizes and then a second time to distribute the elements.  Most
partitioning routines are applicable to samplesort as well as to radix
sort.  Samplesort classifies an element with $\Th{\log k}$ invocations
of the comparator function while radix sort just extracts a digit of
the key in constant time.
In-place $k$-way data partitioning is often done element by element,
e.g., in the sequential in-place radix sorters American
Flag~\cite{mcilroy1993engineering} and
\radixsska~\cite{skarupke2016skasort}. However, these approaches have two
drawbacks. First, they perform the element classification twice.  This
is a particular problem when we apply this approach to samplesort as
the comparator function is more expensive.  Second, a naive
parallelization where the threads use the same pointers and acquire
single elements suffer from read/write
dependencies.

In 2014, Orestis~and~Ross~\cite{polychroniou2014comprehensive} outlined a
parallel in-place radix sorter that moves blocks of elements
in its $k$-way data partitioning routine.  We use the same general approach for \compiparassssort.
However, the paper \cite{polychroniou2014comprehensive} leaves open
how the basic idea can be turned into a correct in-place
algorithm. The published prototypical implementation uses $20$\,\%
additional memory, and does not work for small inputs or a number of threads different from 64.

In 2015,
Minsik~et~al.\ published \radixparadis~\cite{cho2015paradis}, a
parallel in-place radix sorter.  The partitioning routine of
\radixparadis classifies the elements to get bucket boundaries and
each thread gets a subsequence of unpartitioned elements from each
bucket.  The threads then try to move the elements within their
subsequences so that the elements are placed in the subsequence of
their target bucket.  This takes time $\Oh{n/t}$.  Depending on the
data distribution, elements may still be in the wrong bucket.  In this
case, the threads repeat the procedure on the unpartitioned elements.
Depending on the key distribution, the load of the threads in the
partitioning routine differs significantly. No bound better than
$\Oh{n}$ is known for this partitioning
routine~\cite{obeya2019theo}. 

In 2019, Shun et al.~\cite{obeya2019theo} proposed an in-place
$k$-way data partitioning routine for the radix sorter
\radixregion.  This algorithm builds a graph
that models the relationships between element regions and their target
buckets.  Then, the algorithm performs multiple rounds where the
threads swap regions into their buckets.

To the best of our knowledge, the initial version of
\compiparassssort~\cite{axtmann2017confplace}, published in 2017, is
the first parallel $k$-way partitioning algorithm that moves elements
in blocks, works fully in-place, and gives adequate performance
guarantees.  Our algorithm \compiparassssort is more general than
\radixregion in the sense that it is comparison based. To demonstrate
the advantages of our approach, we also propose the radix sorter
\compiparassrsort which adapts our in-place partitioning
routine.

\subsubsection*{(Strictly) In-Place Mergesort}

There is a
considerable amount of theory work on strictly in-place sorting
(e.g.,~\cite{Fran04,FraGef05,langston1991timespace}).
\note{\cite{Fran04,FraGef05,langston1991timespace} are only a theory
  paper -- no implementation found, both strictly in-place} %
\note{\cite{langston1991timespace} is parallel and not mergesort.}
However, there are few -- mostly negative -- results of transferring
the theory work into practice.  Implementations of non-stable in-place
mergesort~\cite{katajainen1996practical,elmasry2012branch,edelkamp2019worst}
are reported to be slower than quicksort from the C++ standard
library.  Katajainen and Teuhola report that their
implementation~\cite{katajainen1996practical} is even slower than
heapsort, which is quite slow for big inputs due to its
cache-inefficiency.  The fastest non-stable in-place mergesort
implementation we have found is QuickMergesort (\compsmergequick) from
Edelkamp~et~al.~\cite{edelkamp2019worst}.  Relevant implementations of
stable in-place mergesort are \compswiki (derived
from~\cite{pokson2008patio}) and \compsgrail (derived
from~\cite{huang1992faststable}).
However,
Edelkamp~et~al.~\cite{edelkamp2019worst} report that \compswiki is a
factor of more than $1.5$ slower than \compsmergequick for large
inputs and that \compsgrail performs similar to \compswiki.
Edelkamp~et~al.\ also state that non-in-place mergesort is considerably
faster than in-place mergesort.
There is previous theoretical
work on
sequential (strictly) in-place multi-way merging \cite{Geffert2009}.
However, this approach needs to allocate very large blocks to become
efficient.  In contrast, the block size of \compiparassssort\ does not
depend on the input size.
The best practical multi-core mergesort algorithm we
found is the non-in-place multi-way mergesort algorithm (\comppsort)
from the MCSTL library~\cite{putze2007mcstl}.
We did not find any practical parallel in-place
mergesort implementation.

\section{In-Place Parallel Super Scalar Samplesort (\compiparassssort)}\label{sec:ips4o}

 \compiparassssort is a recursive algorithm.
 Each recursion level divides the input into $\VarBucketCount$~buckets (\emph{partitioning step}), such that each element of bucket~$\VarBucket[i]$ is smaller than all elements of $\VarBucket[i+1]$.
 Partitioning steps operate on the input array in-place and are executed with one or more threads, depending on their size.
If a bucket is smaller than a certain base case size, we invoke a base case algorithm on the bucket (\emph{base case}) to sort small inputs fast.
 A scheduling algorithm determines at which time a base case or partitioning step is executed and which threads are involved.
 We describe the partitioning steps in \cref{sec:partitioning step} and the scheduling algorithm in \cref{sec:task scheduling}.

\subsection{Sequential and Parallel Partitioning}\label{sec:partitioning step}

\begin{figure}[tbp]
  \begin{center}
    \includegraphics[]{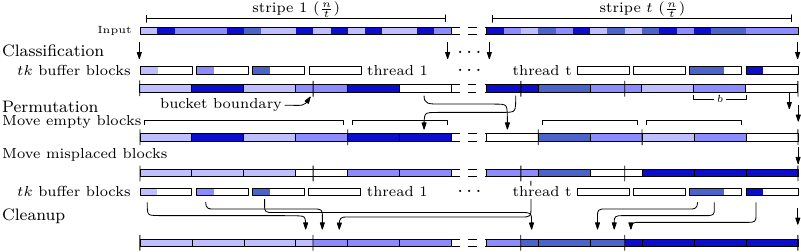}
  \end{center}
  \caption{\label{fig:partitioning step overview}
    Overview of a parallel $k$-way partitioning step ($k=4$) with $t$ threads and blocks of size three.
    Elements with the same color belong into the same bucket.
    The brighter the color, the smaller the bucket index.
    This figure depicts the first and last stripe of the input array, containing $n/t$ elements each.
    In the classification phase, thread $i$ classifies the elements of stripe $i$, moves elements of bucket $j$ into its buffer block $j$, and flushes the buffer block back into its stripe in case of an overflow.
    In the permutation phase, the bucket boundaries are calculated and the blocks belonging into bucket $j$ are placed in the blocks after bucket boundary $j$ in two steps:
    First, the empty blocks are moved to the end of the bucket.
    Then, the misplaced blocks are moved into its bucket.
    The cleanup phase moves elements which remained in the buffer blocks and elements which overlap into the next bucket to their final positions.
    }
\end{figure}

A partitioning step consists of four phases, executed sequentially or by a (sub)set of the input threads.
{\bf Sampling} determines the bucket boundaries.
{\bf Classification} groups the input into blocks such that all elements in a block belong to the same bucket.
{\bf (Block) permutation} brings the blocks into the globally correct order.
Finally, we clean up blocks that cross bucket boundaries or remained partially filled in the {\bf cleanup} phase.
\Cref{fig:partitioning step overview} depicts an overview of a parallel partitioning step.
The following paragraphs will explain each of these phases in more detail.

\subsubsection{Sampling.}\label{sec:sampling}

Similar to the sampling in \compssssort, the sampling phase of \compiparassssort creates a branchless decision tree  -- the tree follows the description of the decision tree proposed by Sanders and Winkel, extended by equality buckets\footnote{The authors describe a similar technique for handling duplicates, but have not implemented the approach for their experiments.}.
For a description of the decision tree used in \compssssort including our refinements, we refer to \cref{sec:preliminaries}.
In \compiparassssort, the decision tree is used in the classification phase to assign elements to buckets.

The sampling phase performs four steps.
First, we sample $k\OversamplingFactor$ elements of the input.
We swap the samples to the front of the partitioning step to keep the in-place property even if the oversampling factor $\OversamplingFactor$ depends on $n$.
Second, $k-1$ splitters are picked equidistantly from the sorted sample.
Third, we check for and remove duplicates from the splitters.
This allows us to decrease the number of buckets $k$ if the input contains many duplicates.
Finally, we create the decision tree.
The strategy for handling identical keys is enabled conditionally:
The decision tree only creates equality buckets when there are several identical splitters.
Otherwise, we create a decision tree without equality buckets.
Having inputs with many identical keys can be a problem for samplesort, since this might move large fractions of the keys through many recursion levels.
The equality buckets turn inputs with many identical keys into ``easy'' instances as they introduce separate buckets for elements identical to splitters
(keys occurring more than $n/\VarBucketCount$ times are likely to become splitters).

\subsubsection{Classification}

\begin{figure}[tbp]
  \begin{center}
    \includegraphics[]{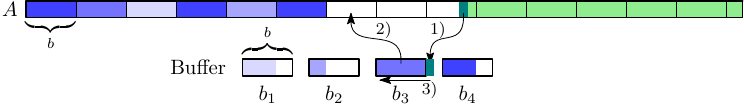}
  \end{center}
  \caption{\label{fig:block gen progress}
  Classification.
  Blue elements have already been classified, with different shades indicating different buckets.
  Unprocessed elements are green.
  Here, the next element (in dark green) has been determined to belong to bucket~$\VarBucket[3]$.
  As that buffer block is already full, we first write it into the array~$\VarArray$, then write the new element into the now empty buffer.
  }
\end{figure}

\begin{figure}[tbp]
  \begin{center}
  \includegraphics[]{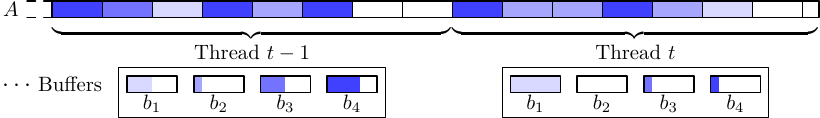}
  \end{center}
  \caption{\label{fig:block generation final}
  Input array and block buffers of the last two threads after classification.
  }
\end{figure}

The input array $\VarArray$ is viewed as an array of blocks each containing $\BlockSize$~elements (except possibly for the last one).
For parallel processing, we divide the blocks of $\VarArray$ into $\VarThreadCount$~stripes of equal size -- one for each thread.
Each thread works with a local array of $\VarBucketCount$ \emph{buffer blocks} -- one for each bucket.
A thread then scans its stripe.
Using the search tree created in the sampling phase, each element in the stripe is classified into one of the $\VarBucketCount$~buckets and
  then moved into the corresponding local buffer block.
If this buffer block is already full, it is first written back into the local stripe, starting at the front.
It is clear that there is enough space to write $\BlockSize$~elements into the local stripe,
  since at least $\BlockSize$ more elements have been scanned from the stripe than have been written back -- otherwise, no full buffer could exist.

In this way, each thread creates blocks of $\BlockSize$~elements belonging to the same bucket.
\Cref{fig:block gen progress} shows a typical situation during this phase.
To achieve the in-place property, we do not track which bucket each block belongs to.
However, we count how many elements are classified into each bucket, since we need this information in the following phases.
This information can be obtained almost for free as a side effect of maintaining the buffer blocks.
\Cref{fig:block generation final} depicts the input array after classification.
Each stripe contains full blocks, followed by empty blocks.
The remaining elements are still contained in the buffer blocks.

\subsubsection{Block Permutation}

In this phase, the blocks in the input array are rearranged such that they appear in the correct order.
From the classification phase we know, for each stripe, how many elements belong to each bucket.
We first aggregate the per-thread bucket sizes and then compute a prefix sum over the total bucket sizes. This yields the exact boundaries of the buckets in the output.
Roughly, the idea is then that each thread repeatedly looks for a misplaced block $B$ in some bucket $\VarBucket[i]$, finds
the correct destination bucket $\VarBucket[j]$ for $B$, and swaps $B$ with a misplaced block in $\VarBucket[j]$.
If $b_j$ does not contain a misplaced block, $B$ is moved to an empty block in $\VarBucket[j]$.
The threads are coordinated by maintaining atomic read and write pointers for each bucket.
Costs for updating these pointers are amortized by making blocks sufficiently large.

We now describe this process in more detail beginning with the preparations needed before starting the actual block permutation.
We mark the beginning of each bucket $\VarBucket[i]$ with a delimiter pointer~$\delimiterblock[i]$, rounded up to the next block.
We similarly mark the end of the last bucket~$\VarBucket[\VarBucketCount]$ with a delimiter pointer~$\delimiterblock[\VarBucketCount+1]$.
Adjusting the boundaries may cause a bucket to ``lose'' up to $\BlockSize - 1$ elements;
this doesn't affect us, since this phase only deals with full blocks, and
elements outside full blocks
remain in the buffers.
Additionally, if the input size is not a multiple of $\BlockSize$, some of the $\delimiterblock[i]$s may end up outside the bounds of $\VarArray$.
To avoid overflows, we allocate a single empty \emph{overflow block} which the algorithm will use instead of writing to the final (partial) block.

\begin{figure}[tbp]
  \begin{center}
    \includegraphics[]{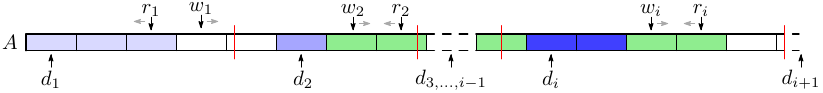}
  \end{center}
  \caption{\label{fig:block perm invariant}
    Invariant during block permutation.
    In each bucket~$\VarBucket[i]$, blocks in $[\delimiterblock[i],\,\writeblock[i])$ are already correct (blue),
      blocks in $[\writeblock[i],\,\readblock[i]]$ are unprocessed (green), and blocks in $[\max(\writeblock[i],\readblock[i]+1),\,\delimiterblock[i+1])$ are empty (white).
  }
\end{figure}

For each $\VarBucket[i]$, a write pointer~$\writeblock[i]$ and a read pointer~$\readblock[i]$ are introduced;
  these will be set such that all unprocessed blocks, i.e., blocks that still need to be moved into the correct bucket,
  are found between $\writeblock[i]$~and~$\readblock[i]$.
During the block permutation, we maintain the following invariant for each bucket $\VarBucket[i]$, visualized in \cref{fig:block perm invariant}:

\begin{itemize}
  \item Blocks to the left of $\writeblock[i]$ (exclusive) are correctly placed, i.e., contain only elements belonging to $\VarBucket[i]$.
  \item Blocks between $\writeblock[i]$ and $\readblock[i]$ (inclusive) are unprocessed, i.e., may need to be moved.
  \item Blocks to the right of $\max(\writeblock[i],\readblock[i]+1)$ (inclusive) are empty.
\end{itemize}

In other words, each bucket follows the pattern of correct blocks followed by unprocessed blocks followed by empty blocks,
with $\writeblock[i]$ and $\readblock[i]$ determining the boundaries.
In the sequential case, this invariant is already fulfilled from the beginning.
In the parallel case, all full blocks are at the beginning of each stripe, followed by its empty blocks.
This means that only the buckets crossing a stripe boundary need to be fixed.

To do so, each thread finds the bucket that starts before the end of its stripe but ends after it.
It then finds the stripe in which that bucket ends (which will be the following stripe in most cases) and
  moves the last full block in the bucket into the first empty block in the bucket.
It continues to do this until either all empty blocks in its stripe are filled or all full blocks in the bucket have been moved.

In rare cases, very large buckets exist that cross multiple stripes.
In this case, each thread will first count how many blocks in the preceding stripes need to be filled.
It will then skip that many blocks at the end of the bucket before starting to fill its own empty blocks.

The threads are then ready to start the block permutation.
Each thread maintains two local swap buffers that can hold one block each.
We define a \emph{primary} bucket~$\VarBucket[p]$ for each thread;
  whenever both its buffers are empty, a thread tries to read an unprocessed block from its primary bucket.
To do so, it decrements the read pointer~$\readblock[p]$ (atomically) and reads the block it pointed~to into one of its swap buffers.
If $\VarBucket[p]$ contains no more unprocessed blocks (i.e., $\readblock[p] < \writeblock[p]$), it switches its primary bucket to the next bucket (cyclically).
If it completes a whole cycle and arrives back at its initial primary bucket, there are no more unprocessed blocks and the whole permutation phase ends.
The starting points for the threads are distributed across that cycle to reduce contention.

\begin{figure}[tbp]
  \begin{center}
    \begin{subfigure}[t]{0.496\textwidth}
      \includegraphics[]{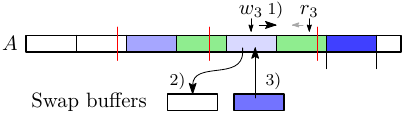}
      \caption{\label{fig:block perm a}\begin{minipage}[t]{0.85\textwidth}
        Swapping a block into its correct position.
      \end{minipage}}
    \end{subfigure}
    \hfill
    \begin{subfigure}[t]{0.496\textwidth}
      \includegraphics[]{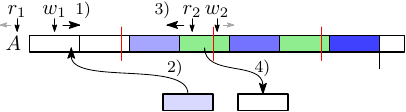}
      \caption{\label{fig:block perm b}\begin{minipage}[t]{0.85\textwidth}
        Moving a block into an empty position, followed by refilling the swap buffer.
      \end{minipage}}
    \end{subfigure}
  \end{center}
  \vspace{-10pt}
  \caption{\label{fig:block perm}
    Block permutation examples.
    The numbers indicate the order of the operations.
  }
\end{figure}

\begin{figure}[tbp]
  \begin{center}
    \includegraphics[]{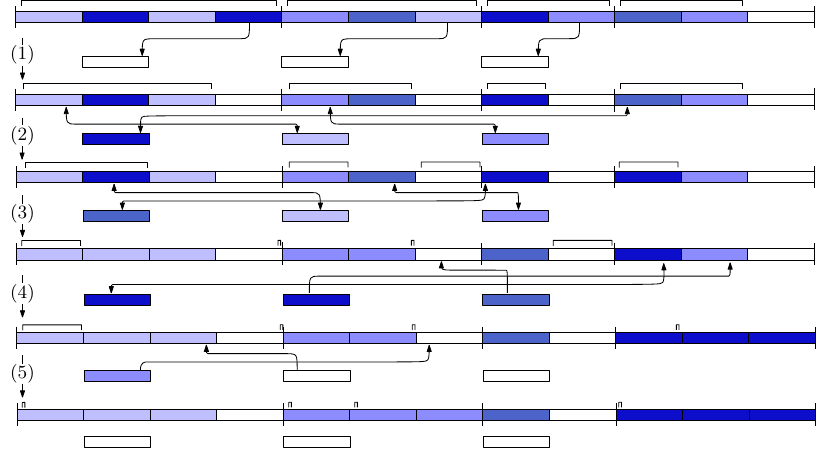}
  \end{center}
  \caption{\label{fig:parallel permutation}
    An example of a permutation phase with $k=4$ buckets and $t=3$ threads.
    The brackets above the buckets mark the unprocessed blocks.
    After five permutation steps, the blocks were moved into their target buckets.
    (1) The buffer blocks are filled with blocks.
    (2-3) Swap buffer block with the leftmost unprocessed block of the buffer block's buckets.
    (4) Thread $0$ and $1$ have a buffer block for the last bucket.
    They increase the write pointer of this bucket concurrently.
    Thread $0$ executes the fetch-and-add operation first, the thread swaps its buffer block with the second block (unprocessed block) of the last bucket.
    Thread $1$ writes its buffer block into the third block (empty block) of the last bucket.
    After step four, threads $1$ and $2$ finished a permutation chain, i.e., flushed their buffer block into an empty block.
    (5) Thread $0$ flushes its buffer block into an empty block.
    Thread $1$ classifies the last unprocessed block of the first bucket but this block is already in its target bucket.
  }
\end{figure}

Once it has a block, each thread classifies the first element of that block to find its destination bucket~$\VarBucket[\TargetBucketIndex]$.
There are now two possible cases, visualized in \cref{fig:block perm}:

\begin{itemize}
  \item As long as $\writeblock[\TargetBucketIndex] \leq \readblock[\TargetBucketIndex]$, write pointer $\writeblock[\TargetBucketIndex]$ still points to an unprocessed block in bucket $\VarBucket[\TargetBucketIndex]$. In this case, the thread increases $\writeblock[\TargetBucketIndex]$,
      reads the unprocessed block into its empty swap buffer, and writes the other one into its place.
  \item If $\writeblock[\TargetBucketIndex] > \readblock[\TargetBucketIndex]$, no unprocessed block remains in bucket $\VarBucket[\TargetBucketIndex]$ but $\writeblock[\TargetBucketIndex]$ now points to an empty block. In this case, the thread increases $\writeblock[\TargetBucketIndex]$, writes its swap buffer to the empty block, and then reads a new unprocessed block from its primary bucket.
\end{itemize}

We repeat these steps until all blocks are processed.
We can skip unprocessed blocks which are already correctly placed:
We simply classify blocks \emph{before} reading them into a swap buffer, and skip as needed.

It is possible that one thread wants to write to a block that another
thread is currently reading from (when the reading thread has just
decremented the read pointer but has not yet finished reading the
block into its swap buffer).  However, threads are only allowed to
write to empty blocks if no other threads are currently reading from
the bucket in question, otherwise, they must wait.  Note that this
situation occurs at most once for each bucket, namely when
$\writeblock[\TargetBucketIndex]$ and $\readblock[\TargetBucketIndex]$
cross each other.  We avoid these data races by keeping track of how
many threads are reading from each bucket.

When a thread fetches a new unprocessed block, it reads and modifies
either $\writeblock[i]$ or $\readblock[i]$.  The thread also needs to
read the other pointer for the case distinctions.  These operations
are performed simultaneously to ensure a consistent view of both
pointers for all threads.  \Cref{fig:parallel permutation} depicts an
example of the permutation phase with three threads and four
buckets.
\subsubsection{Cleanup}

After the block permutation, some elements may still be in incorrect positions
since blocks may cross bucket boundaries.
We call the partial block at the beginning of a bucket its \emph{head} and the partial block at its end its \emph{tail}.

Thread $i$ performs the cleanup for buckets $\excloset{\lfloor ki/t\rfloor}{\lfloor k(i+1)/t\rfloor}$.
Thread~$i$ first reads the head of the first bucket of thread $i+1$ into one of its swap buffers.
Then, each thread processes its buckets from left to right, moving incorrectly placed elements into empty array entries
The incorrectly placed elements of bucket~$\VarBucket[i]$ can be in four locations:
\begin{enumerate}
\item Elements may be in the head of $\VarBucket[i+1]$ if the last block belonging into bucket $\VarBucket[i]$ overlaps into bucket  $\VarBucket[i+1]$.
\item Elements may be in the partially filled buffers from the classification phase.
\item Elements of the last bucket (in this thread's area) may be in the swap buffer.
\item Elements of one bucket may be in the overflow buffer.
\end{enumerate}
Empty array entries consist of the head of $\VarBucket[i]$ and any (empty) blocks to the right of $\writeblock[i]$ (inclusive).
Although the concept is relatively straightforward, the implementation is somewhat involved, due to the many parts that have to be brought together.
\Cref{fig:cleanup} shows an example of the steps performed during the cleanup phase.
Afterwards, all elements are back in the input array and correctly partitioned, ready for recursion.

\begin{figure}[tbp]
  \begin{center}
    \includegraphics[]{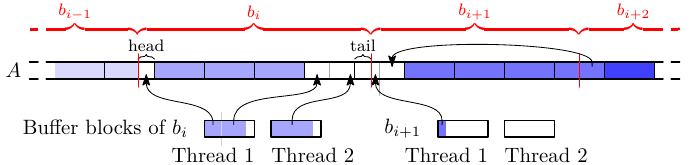}
  \end{center}
  \caption{\label{fig:cleanup}
    An example of the steps performed during cleanup.
  }
\end{figure}


\subsection{Task Scheduling}\label{sec:task scheduling}

In this section, we describe the task scheduling of \compiparassssort.
We also establish basic properties of the task scheduler.
The properties are used to understand how the task scheduler works.
The properties are also used later on in~\cref{sec:analysis} to analyze the parallel I/O complexity and the local work of \compiparassssort.

In general, \compiparassssort uses static scheduling to apply tasks to threads.
When a thread becomes idle, we additionally perform a dynamic rescheduling of sequential tasks to utilize the computation resources of the idle thread.
Unless stated otherwise, we exclude the dynamic rescheduling from the analysis of \compiparassssort and only consider the static load balancing.
We state that dynamic load balancing --  when implemented correctly -- cannot make things worse asymptotically.

Before we describe the scheduling algorithm, we introduce some definitions and derive some properties from these definitions to understand how the task scheduler works.
A task $T[l, r)$ either partitions the (sub)array $\VarArray[l, r - 1]$ with a partitioning step (\emph{partitioning task}) or sorts the base case $\VarArray[l, r - 1]$ with a base case sorting algorithm (\emph{base case task}).
A partitioning step performed by a group of threads (a \emph{thread group}) is a \emph{parallel partitioning task} and a partitioning step with one thread is a \emph{sequential partitioning task}.
Each thread has a \emph{local stack} to store \emph{sequential tasks}, i.e., sequential partitioning tasks and base cases.
Additionally, each thread $i$ stores a handle $G_i$ to its current thread group and has access to the handles stored by the other threads of the thread group.

To decide whether a task $T[l,r)$ is a parallel partitioning task or a sequential task, we denote $\tbegin$ as $\lfloor lt / n\rfloor$ and $\tend$ as $\lfloor rt/n\rfloor$.
The task $T[l, r)$ is a parallel partitioning task when $\tend - \tbegin > 1$.
In this case, the task is executed by the thread group $\excloset{\tbegin}{\tend}$.
Otherwise, the task is a sequential task processed by thread $\min (\tbegin, \tend - 1)$.
As a parallel partitioning task is the only task type executed in parallel, we use \emph{parallel task} as a synonym for parallel partitioning task.

We use a base case threshold $n_0$ to determine whether a sequential task is a sequential partitioning task or a base case task:
Buckets with at most $2\BaseCaseSize$ elements as well as buckets of a task with at most $k\BaseCaseSize$ elements are considered as base cases.
Otherwise, it is a sequential partitioning task.
We use an adaptive number of buckets $k$ for partitioning steps with less than $k^2n_0$ elements, such that the expected size of the base cases is between $0.5n_0$ and $n_0$ while the expected number of buckets remains equal or larger than $\sqrt{k}$.
To this end, for partitioning steps with $kn_0 < n' < k^2n_0$ elements ($n_0<n' \leq kn_0$ elements), we adjust $k$ to $2^{\lceil(\log_2(n'/n_0)+1)/2\rceil}$ (to $2^{\lceil\log_2(n'/n_0)\rceil}$).
This adaption of $k$ is
important for a robust running time of \compiparassssort.  In the
analysis of \compiparassssort, the adaption of $k$ will allow us to
amortize the sorting of samples.  In practice, our experiments
have shown that for fixed $k$, the running time per element
oscillates with maxima around $k^in_0$.

From these definitions, the \cref{lem:parallel task covers element,lem:sequential subgroup,lem:par task is bucket,lem:par task unique level} follow.
The lemmas allow a simple scheduling of parallel tasks and thread groups.

\begin{lemma}\label{lem:parallel task covers element}
  The parallel task $T[l,r)$ covers position $(i+1)n/t-1, i\in \excloset{0}{t}$ of the input array if and only if thread $i$ executes the parallel task $T[l,r)$.
\end{lemma}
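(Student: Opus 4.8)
The plan is to reduce both sides of the claimed equivalence to elementary inequalities between the integers $l$, $r$, $i$ and the single real quantity $(i+1)n/t$, and then to match them using the two standard floor identities $\lfloor x\rfloor \le m \iff x < m+1$ and $\lfloor x\rfloor \ge m \iff x \ge m$, both valid for every integer $m$. Since every step will be an equivalence, there is no need to argue the two directions of the ``if and only if'' separately; both follow at once.

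First I would unfold the right-hand side. By definition, thread $i$ executes the parallel task $T[l,r)$ exactly when $i$ lies in its thread group $\excloset{\tbegin}{\tend}$, that is, when $\tbegin \le i < \tend$ with $\tbegin = \lfloor lt/n\rfloor$ and $\tend = \lfloor rt/n\rfloor$. Applying the first identity to $\tbegin \le i$ yields $lt/n < i+1$, i.e. $l < (i+1)n/t$; applying the second identity to $i+1 \le \tend$ yields $rt/n \ge i+1$, i.e. $(i+1)n/t \le r$. Hence thread $i$ executes $T[l,r)$ if and only if $l < (i+1)n/t \le r$.

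Next I would unfold the left-hand side. The task $T[l,r)$ operates on $\VarArray\oset{l}{r-1}$, so it covers position $(i+1)n/t - 1$ precisely when $l \le (i+1)n/t - 1 \le r-1$, equivalently $l+1 \le (i+1)n/t \le r$. Comparing with the right-hand characterization, the upper bounds $(i+1)n/t \le r$ coincide verbatim, so it remains only to reconcile the two lower bounds $l < (i+1)n/t$ and $l+1 \le (i+1)n/t$.

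This last comparison is where the argument needs care, and I expect it to be the only real obstacle. For an integer $l$ the two lower bounds agree iff $(i+1)n/t$ does not lie strictly between $l$ and $l+1$, i.e. iff the stripe boundary $(i+1)n/t$ is itself an integer; then $l < (i+1)n/t \iff l \le (i+1)n/t - 1 \iff l+1 \le (i+1)n/t$, and the mismatch disappears. I would therefore invoke the partitioning convention that the input processed by $t$ threads is split into $t$ stripes at the boundaries $(i+1)n/t$ (equivalently, assume $t \mid n$, as is standard in this model), so that $(i+1)n/t$ is an integer array position. Under this convention both sides reduce to $l < (i+1)n/t \le r$, and the equivalence follows immediately.
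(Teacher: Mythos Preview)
Your argument is correct and, in fact, more transparent than the paper's own proof. The paper proves the two implications separately: for the forward direction it computes $i=\lfloor((i+1)n/t-1)t/n\rfloor$ and then uses monotonicity of the floor together with $l\le w<r$ to sandwich $i$ between $\tbegin$ and $\tend$; for the converse it bounds $(i+1)n/t-1$ between $(\tbegin+1)n/t-1$ and $\tend\,n/t$ and then argues that these in turn bound $l$ and $r$. Your route is different in organization: you reduce each side once and for all to the single pair of inequalities $l<(i+1)n/t\le r$ via the generic identities $\lfloor x\rfloor\le m\Leftrightarrow x<m+1$ and $\lfloor x\rfloor\ge m\Leftrightarrow x\ge m$, and then match. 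This buys you a shorter, symmetric argument with no case analysis. Crucially, you are also more honest about the one subtle point: the equivalence of $l<(i+1)n/t$ and $l+1\le(i+1)n/t$ requires $(i+1)n/t\in\mathbb{Z}$, i.e.\ essentially $t\mid n$. The paper's proof needs exactly the same assumption (its step $(\tbegin+1)n/t-1\ge l$ fails otherwise; e.g.\ with $n=5$, $t=2$ the parallel task $T[2,5)$ is executed by thread~$0$ but does not cover position $1.5$), but the paper leaves this implicit. So your explicit invocation of the divisibility convention is a feature, not a gap.
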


\begin{proof}  
  We first prove that thread $i$ executes the parallel task $T[l,r)$ if $T[l,r)$ covers position $(i+1)n/t-1, i\in \excloset{0}{t}$ of the input array.
  Let the parallel task $T[l,r)$ cover position $w=(i+1)n/t-1, i\in \excloset{0}{t}$ of the input array.
  From the inequalities
  \begin{align*}
    &i = \lfloor ((i+1)n/t - 1)t/n\rfloor \geq \lfloor l_s t/n\rfloor = \tbegin_s\\
    &i = \lfloor ((i+1)n/t-1)t/n \rfloor < \lfloor r_st/n\rfloor = \tend_s
  \end{align*}
  follows that thread $i$ executes the parallel task $T[l,r)$.
  For the ``$\geq$'' and the ``$<$'', we use that task $T[l,r)$ covers position $w$ of the input array, i.e., $l\leq w<r$.
  
  We now prove that a parallel task $T[l,r)$ must cover position $(i+1)n/t-1$ of the input array if it is executed by thread $i$.
  Let us assume that a parallel task $T[l,r)$ is executed by thread $i$.
  From the inequalities
  \begin{align*}
    & (i+1)n/t - 1 \geq (\tbegin + 1)n/t-1 \geq l_s\\
    & (i+1)n/t - 1 < \tend n/t \leq r_s
  \end{align*}
  follows that task $T[l, r)$ covers position $(i+1)n/t-1$ of the input array.
  For the second ``$\geq$'' and for the ``$\leq$'', we use the definition for the thread group of $T[l, r)$, i.e., $\excloset{\tbegin}{ \tend} = \excloset{\lfloor lt/n \rfloor}{\lfloor rt/n \rfloor}$.
\end{proof}

\begin{lemma}\label{lem:sequential subgroup}
  Let the sequential task $T[l_s, r_s)$, processed by thread $i$ be a bucket of a parallel task $T[l,r)$.
  Then, task $T[l,r)$ is processed by thread $i$ and others.
\end{lemma}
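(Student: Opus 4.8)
The plan is to translate the statement into arithmetic about the two pairs of stripe indices and then reduce it to \cref{lem:parallel task covers element}. Being a bucket of $T[l,r)$ means $[l_s,r_s)\subseteq[l,r)$, i.e.\ $l\le l_s<r_s\le r$. Since $x\mapsto\lfloor xt/n\rfloor$ is monotone, this immediately yields $\tbegin\le\tbegin_s$ and $\tend_s\le\tend$. First I would record a convenient normal form for the processing thread: because $T[l_s,r_s)$ is sequential we have $\tend_s-\tbegin_s\le 1$, while $l_s<r_s$ forces $\tbegin_s\le\tend_s$; hence $\tend_s\in\{\tbegin_s,\tbegin_s+1\}$, and in either case $\min(\tbegin_s,\tend_s-1)=\tend_s-1$. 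Thus the thread in question is simply $i=\tend_s-1$, which makes both of the bounds below easier to control.

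The goal is to show $i\in\excloset{\tbegin}{\tend}$, that is $\tbegin\le i\le\tend-1$. The upper bound is immediate from $\tend_s\le\tend$, since $i=\tend_s-1\le\tend-1$. For the lower bound I would invoke \cref{lem:parallel task covers element}: thread $i$ belongs to the thread group of the parallel task $T[l,r)$ exactly when $T[l,r)$ covers the stripe-end position $(i+1)n/t-1=\tend_s\,n/t-1$. The right inequality $\tend_s\,n/t-1<r$ is routine, because $\tend_s=\lfloor r_s t/n\rfloor$ gives $\tend_s\,n/t\le r_s\le r$. It therefore remains to establish the left inequality $l\le \tend_s\,n/t-1$, which is equivalent to the strict relation $\tbegin<\tend_s$.

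The step I expect to be the crux is exactly this strict inequality $\tbegin<\tend_s$. When $\tend_s=\tbegin_s+1$ it is free, since $\tbegin\le\tbegin_s<\tbegin_s+1=\tend_s$. The delicate case is $\tend_s=\tbegin_s$, i.e.\ the bucket lies entirely within a single stripe; then $i=\tbegin_s-1$ and I must show $\tbegin<\tbegin_s$, i.e.\ that the parent $T[l,r)$ opens in a strictly earlier stripe than the bucket. I would attempt to derive this from the parallel condition $\tend-\tbegin>1$ together with the placement of the bucket inside $[l,r)$, ruling out the degenerate possibility $\tbegin=\tbegin_s=\tend_s$ by a short case analysis against the definitions of $\tbegin$ and $\tend$. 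This compatibility between the sequential-task assignment $\min(\tbegin_s,\tend_s-1)$ and the parent's thread group is precisely what the lemma asserts, so I anticipate that checking this single-stripe boundary case is where the real work lies; everything else is monotonicity of the floor and one appeal to \cref{lem:parallel task covers element}.
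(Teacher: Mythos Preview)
Your plan has a genuine gap in the ``delicate case'' $\tend_s=\tbegin_s$. You correctly identify it as the crux, but it cannot be closed with the formula $i=\min(\tbegin_s,\tend_s-1)$. Take $n=100$, $t=10$, parent $T[0,100)$ (so $\tbegin=0$, $\tend=10$), and first bucket $T[0,5)$. Then $\tbegin_s=\tend_s=0$, your formula gives $i=-1$, and the required bound $\tbegin\le i$ fails outright. So the case analysis you propose cannot succeed; the degenerate configuration $\tbegin=\tbegin_s=\tend_s$ is \emph{not} ruled out by the definitions.

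The source of the problem is the formula for the processing thread. In the paper's scheduling algorithm (\cref{alg:task scheduler}) and in the proof of this lemma, the assignment is $i=\min(\lfloor l_s t/n\rfloor,\tend-1)$, where $\tend$ is the \emph{parent's} thread-group end, not the subtask's own $\tend_s$. (The prose description earlier in \cref{sec:task scheduling} is admittedly easy to misread on this point.) With the correct formula the proof becomes a two-line direct argument and does not need \cref{lem:parallel task covers element} at all: the bound $i<\tend$ is immediate from the $\min$, and $i\ge\tbegin$ holds because both arguments of the $\min$ are at least $\tbegin$---indeed $\lfloor l_s t/n\rfloor\ge\lfloor lt/n\rfloor=\tbegin$ from $l\le l_s$, and $\tend-1>\tbegin$ since $T[l,r)$ is parallel.
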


\begin{proof}
  Let the parallel task $T[l,r)$ be processed by threads $\excloset{\tbegin}{ \tend}$.
  Task $T[l_s, r_s)$ is processed by thread $i = \min (\lfloor l_st/n\rfloor , \tend - 1)$.
  We have to show that $\tbegin\leq i < \tend$ holds.
  Indeed, inequality $i =  \min (\lfloor l_st/n\rfloor , \tend - 1) < \tend$ holds.
  The inequality $i \geq \tbegin$ is only wrong if $\lfloor l_st/n\rfloor < \tbegin$ or if $\tend- 1 < \tbegin$.
  However, we have $l \leq l_s$ as task $T[l_s,r_s)$ is a bucket of its parent task $T[l,r)$.
  Thus, $ \lfloor l_st/n\rfloor\geq \lfloor lt/n\rfloor = \tbegin$ holds as the first thread $\tbegin$ of $T[l,r)$ is defined as $\lfloor lt/n\rfloor$.
  Also, we have $\tend - 1 > \tbegin$ as task $T[l, r)$ is a parallel task with at least two threads, i.e., $\tend - \tbegin > 1$.
\end{proof}

\begin{lemma}\label{lem:par task is bucket}
  Let the parallel subtask $T[l_s, r_s)$, processed by thread $i$ and others, be a bucket of a task $T[l,r)$.
  Then, task $T[l,r)$ is also a parallel task processed by thread $i$ and others.
\end{lemma}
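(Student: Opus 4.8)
The plan is to argue directly from the scheduling definitions by exploiting monotonicity of the floor function, exactly as in the proof of \cref{lem:sequential subgroup} but without the complication of the $\min$. Write $\tbegin = \lfloor lt/n\rfloor$, $\tend = \lfloor rt/n\rfloor$ for the parent task and $\tbegin_s = \lfloor l_st/n\rfloor$, $\tend_s = \lfloor r_st/n\rfloor$ for the subtask. Since $T[l_s,r_s)$ is a bucket of $T[l,r)$, the subarray $\VarArray[l_s,r_s-1]$ is contained in $\VarArray[l,r-1]$, so $l\leq l_s$ and $r_s\leq r$. Monotonicity of $\lfloor\cdot\rfloor$ then immediately yields $\tbegin\leq\tbegin_s$ and $\tend_s\leq\tend$.

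First I would establish that $T[l,r)$ is itself a parallel task. Because $T[l_s,r_s)$ is parallel, its thread group $\excloset{\tbegin_s}{\tend_s}$ contains more than one thread, i.e., $\tend_s-\tbegin_s>1$. Chaining this with the two inequalities above gives $\tend-\tbegin\geq\tend_s-\tbegin_s>1$, which is exactly the criterion for $T[l,r)$ to be a parallel partitioning task executed by the thread group $\excloset{\tbegin}{\tend}$. Note that this also takes care of the ``and others'' part of the claim, since a parallel group has at least two members.

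It remains to show that thread $i$ belongs to this group. Since the subtask is parallel and processed by thread $i$, thread $i$ lies in its group, i.e., $\tbegin_s\leq i<\tend_s$. Combining with $\tbegin\leq\tbegin_s$ and $\tend_s\leq\tend$ gives $\tbegin\leq i<\tend$, so $i\in\excloset{\tbegin}{\tend}$ and hence thread $i$ processes $T[l,r)$ together with the other threads of that group.

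I do not expect any genuine obstacle here: the whole argument reduces to a two-line floor-monotonicity computation, and it is in fact simpler than \cref{lem:sequential subgroup} because the parent is forced to be parallel, so no case distinction on $\min(\tbegin,\tend-1)$ is needed. The only point requiring care is to make the containment $l\leq l_s$ and $r_s\leq r$ explicit from the ``bucket of'' relationship before invoking monotonicity of the floor function.
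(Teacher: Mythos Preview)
Your proof is correct. The first half---showing that the parent is parallel via $\tend-\tbegin\geq\tend_s-\tbegin_s>1$---is identical to the paper's argument. For the second half, the paper takes a slightly different route: instead of the direct group-containment argument $\tbegin\leq\tbegin_s\leq i<\tend_s\leq\tend$, it invokes \cref{lem:parallel task covers element} twice, arguing that since thread $i$ processes $T[l_s,r_s)$ the subtask covers position $(i+1)n/t-1$, hence so does $T[l,r)$, hence thread $i$ processes $T[l,r)$. Your version is more elementary in that it avoids the auxiliary lemma altogether; the paper's version has the mild advantage of reusing the position-based characterization that is needed elsewhere anyway. Both are short and equally valid.
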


\begin{proof}
  Task $T[l,r)$ is a parallel task if $\lfloor rt/n \rfloor - \lfloor lt/n \rfloor>1$.
  This inequality is true as
  \begin{align*}
    \lfloor rt/n \rfloor - \lfloor lt/n \rfloor \geq \lfloor r_st/n \rfloor - \lfloor l_st/n \rfloor  > 1\punkt
  \end{align*}
  For the ``$\geq$'' we use that $T[l_l,r_l)$ is a bucket of $T[l,r)$ and for the ``$>$'' we use that $T[l_s, r_s)$ is a parallel task.

  As the parallel task $T[l_s, r_s)$ is processed by thread $i$,  $T[l_s, r_s)$ covers the position $(i+1)n/t - 1$ of the input array (see \cref{lem:parallel task covers element}).
  As task $T[l_s, r_s)$ is a bucket of $T[l,r)$, the parallel task $T[l,r)$ also covers the position $(i+1)n/t - 1$.
  From \cref{lem:parallel task covers element} follows that the parallel task $T[l,r)$ is processed by thread $i$.
\end{proof}

\begin{lemma}\label{lem:par task unique level}
On each recursion level, thread $i$ works on at most one parallel task.
\end{lemma}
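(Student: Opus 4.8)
The plan is to reduce the statement almost entirely to \cref{lem:parallel task covers element}, which already pins down exactly which parallel task (if any) a given thread executes. By that lemma, thread $i$ executes a parallel task $T[l,r)$ if and only if $T[l,r)$ covers the single position $p_i := (i+1)n/t - 1$. The key observation is that $p_i$ depends only on $i$ together with the global quantities $n$ and $t$, and is therefore the same fixed position no matter which recursion level we examine. So the whole question collapses to: on one recursion level, how many parallel tasks can simultaneously cover the point $p_i$?

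Next I would argue that the tasks present on a fixed recursion level occupy pairwise disjoint subintervals of $\excloset{0}{n}$. This follows by induction on the recursion depth: the root task is the whole interval $\excloset{0}{n}$, and every partitioning step replaces a task's interval by its $\VarBucketCount$ buckets, which are consecutive and pairwise disjoint by construction (each element of bucket $\VarBucket[i]$ is smaller than all elements of $\VarBucket[i+1]$). Hence, at a fixed depth, the intervals $\excloset{l}{r}$ of all tasks are pairwise disjoint, their union being a subset of $\excloset{0}{n}$ since some branches may already have terminated as base cases at a shallower level. Covering position $p_i$ means $l \le p_i < r$, so a single point lies in at most one of these disjoint half-open intervals.

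Combining the two observations finishes the proof. Among all parallel tasks on a fixed recursion level, at most one can cover $p_i$; and by \cref{lem:parallel task covers element}, the parallel tasks that thread $i$ executes are precisely those covering $p_i$. Therefore thread $i$ works on at most one parallel task per level.

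The only real obstacle is making the disjointness claim precise, i.e., formalizing that the collection of tasks on a fixed recursion level really does consist of disjoint intervals. This is intuitively immediate from the partitioning structure but merits the short inductive justification above; everything else is a direct consequence of the already-established \cref{lem:parallel task covers element}.
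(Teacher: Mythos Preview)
Your proposal is correct and follows essentially the same approach as the paper: reduce to \cref{lem:parallel task covers element} to identify the position $p_i=(i+1)n/t-1$, then use disjointness of tasks on a fixed recursion level to conclude that at most one parallel task covers $p_i$. The paper's proof is terser, simply asserting that ``tasks on the same recursion level are disjoint'' without the inductive justification you supply, but the argument is the same.
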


\begin{proof}
  Let $S_i^j, i\in \excloset{0}{t}$ be the set of parallel tasks on recursion level $j$ which cover the position $(i + 1) n/t - 1$ of the input array.
  From \cref{lem:parallel task covers element} follows that thread $i$ processes on recursion level $j$ only the tasks $S_i^j$.
  The set $S_i^j$ contains at most one task as tasks on the same recursion level are disjoint.
\end{proof}

\begin{figure}[tbp]
  \begin{center}
  \includegraphics[width=\textwidth]{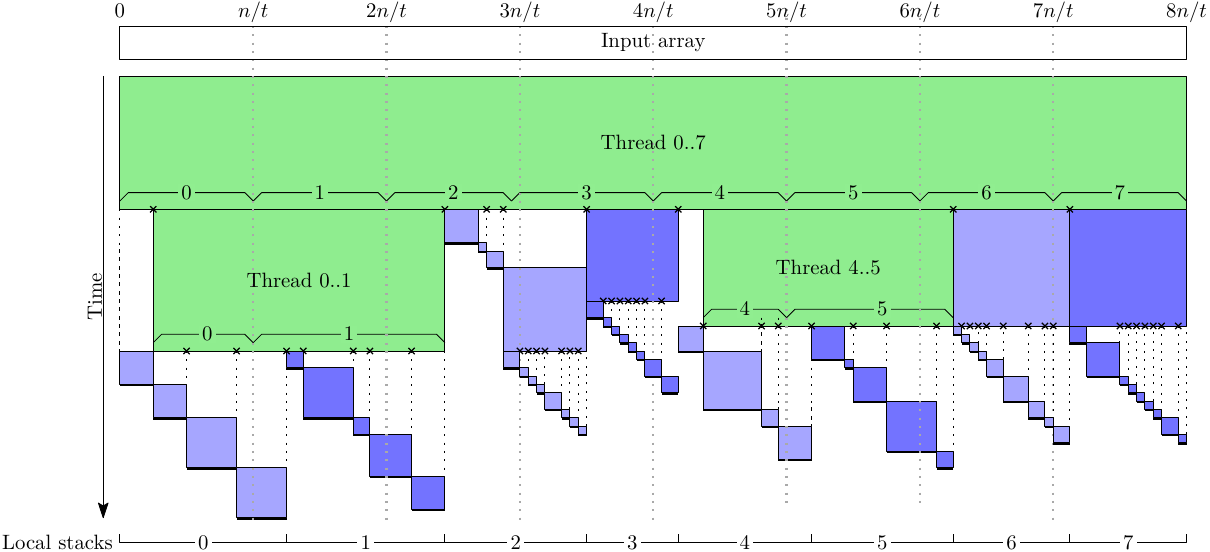}
  \end{center}
  \caption{\label{fig:job scheduling}
    Example schedule of a task execution in \compiparassssort with $8$ threads where partitioning steps split tasks into $8$ buckets.
    Each rectangle represents a task in execution.
    The height of a task is defined by the size of the task divided by the number of threads assigned to this task.
    For parallel tasks (green), the threads processing that task are shown in the rectangles.
    The sequential partitioning tasks (blue) are covered by the local stack which stores the task until processing.
    Base case tasks are omitted for the sake of simplicity.
    The crosses at the bottom of a rectangle indicate bucket boundaries.
    The brackets pointing downwards are used to decide in which local stack the sequential subtasks are inserted.
    Tasks stored in local stack $i$ are executed by thread $i$.
  }
\end{figure}

\subsubsection{Static Scheduling}

We start the description of the task scheduler by describing the static scheduling part.
The idea behind the static scheduling is that each thread executes its tasks in depth-first search order tracing parallel tasks first.
From \cref{lem:par task unique level,lem:par task is bucket} follows, keeping the order of execution in mind, that each thread first executes all of its parallel tasks before it starts to execute its sequential tasks.

\compiparassssort starts by processing a parallel task $T[0, n)$ with threads $\excloset{0}{t}$.
In general, when a parallel task $T[l, r)$ is processed by the thread group $G=\excloset{\tbegin }{ \tend}$, five steps are performed.
\begin{enumerate}
\item A parallel partitioning step is invoked on $\VarArray[l, r - 1]$.
\item The buckets of the partitioning step induce a set of subtasks $S$.
\item If subtask $T[l_s, r_s)\in S$ is a sequential task, thread $i =\min (l_st/n, \tend - 1)$ adds the subtask to its local stack.
From \cref{lem:sequential subgroup}, we know that thread $i$ is actually also processing the current task $T[l, r)$.
This allows threads to add sequential tasks exclusively to their own local stack, so no concurrent stacks are required.
\item Each thread $i\in\excloset{\tbegin}{\tend}$ extracts the subtask $T_s=T[l_s, r_s)$ from $S$ which covers position $(i+1)n/t-1$ of the input array $\VarArray$.
  Also, thread $i$ calculates $\tbegin_s=\lfloor l_s t/n\rfloor$ as well as $\tend_s = \lfloor r_s t/n\rfloor$ and continues with the case distinction  $\tend_s - \tbegin_s \leq 1$ and  $\tend_s - \tbegin_s > 1$.

  If $\tend_s - \tbegin_s \leq 1$, thread $i$ once synchronizes with $G$ and starts processing the sequential tasks on its private stack.
  
  Otherwise, $T_s$ is actually a parallel task that has to be processed by the threads $\excloset{\tbegin_s}{ \tend_s}$.
  From \cref{lem:parallel task covers element,lem:par task is bucket} follows that the threads $\excloset{\tbegin_s}{ \tend_s}$ are currently all processing $T[l,s)$ and exactly these threads selected the same task $T_s$.
  This allows setting up the threads $\excloset{\tbegin_s}{ \tend_s}$ for the next parallel task $T_s$ without keeping the threads waiting:
  The first thread $\tbegin_s$ of task $T_s$ creates the data structure representing the task's new thread group $G'=\excloset{\tbegin_s}{ \tend_s}$ and updates the thread group handles $\excloset{G_{\tbegin_s}}{G_{\tend_s}}$ of the threads $\excloset{\tbegin_s}{ \tend_s}$ to the new data structure.
  Afterwards, all threads of $\excloset{\tbegin_s}{ \tend_s}$ synchronize with thread group $G$ and access their new thread group  $G'$ using the updated thread group handles.
  Finally, the threads $\excloset{\tbegin_s}{ \tend_s}$ start processing task $T_s$ with thread group $G'$.
\end{enumerate}

If a thread no longer processes another parallel task, it starts processing the sequential tasks of its stack until the stack is empty.
Base cases are sorted right away.
When the next task $T[l, r)$ is a sequential partitioning task, three steps are performed.
First, a sequential partitioning step is executed on $\VarArray[l, r - 1]$.
Second, a new sequential subtask is created for each bucket.
Finally, the thread adds these subtasks to its local stack in sorted order.
\Cref{alg:task scheduler} shows the steps of the task scheduling algorithm in detail.
The scheduling algorithm is executed by all threads simultaneously.
\Cref{fig:job scheduling} shows an example schedule of a task execution in \compiparassssort.

\begin{algorithm}[t]
  \begin{algorithmic}
    \caption{Task Scheduler}\label{alg:task scheduler}
    \State \textbf{Input:} $\VarArray\oset{0}{n-1}$ array of $n$ input elements, $t$ number of threads, $i$ current thread
    \State $T[l, r)\gets T[0, n)$\Comment{Current task, initialized with $\VarArray\oset{1}{n}$}
    \State $G_{i}[\tbegin, \tend) = G[0, t)$\Comment{Initialize thread group containing thread $\tbegin=0$ to $\tend=t$ (excl.)}
    \State $D\gets\emptyset$\Comment{Empty local stack}
    \If{$\tend - \tbegin = 1$} \Call{D.pushFront}{$T[l,r)$}\Comment{Initial task is a sequential, go to sequential phase}
    \Else
    \While{{\bf true}}\Comment{Execute current parallel task}
    \State $\oset{b_0}{ b_{k-1}}\gets $\Call{partitionParallel}{$A[l,r-1], G_{i}$}\Comment{Partitioning step; returns buckets}
    \For{$\excloset{l_s }{ r_s}\Is b_{k-1}$ {\bf to} $b_0$}\Comment{Handle the buckets}
    \If{$(i+1) n/t-1 \in [l_s, r_s)$}\Comment{Update current task}
    \State $T[l,r)\gets T[l_s, r_s)$\Comment{It might be $i$'s next parallel task}
    \EndIf
    \If{$\lceil r_st/n\rceil - \lceil l_st/n\rceil \leq 1$ {\bf and} $i = \min (l_st/n, \tend - 1)$}
    \State \Call{D.push}{$\{T[l_s, r_s), l, r\}$}\Comment{Thread $i$ adds sequential task to its local stack}
    \EndIf
    \EndFor
    \State $\tbegin\gets l \cdot t/n$; $\tend\gets r \cdot t/n$\Comment{Range of threads used by current task}
    \If{$\tend - \tbegin \leq 1$} {\bf break}\Comment{Go to sequential phase as $T[l,r)$ is not a parallel task}
    \EndIf
    \If{$i = \tbegin$}\Comment{Thread $i$ creates the thread subgroup as it is the first thread}
    \State $G_{i}\gets$ \Call{createThreadGroup}{$\oset{\tbegin}{ \tend}$}
    \For{$j\Is \tbegin$ {\bf to} $\tend - 1$}\Comment{Set subgroup for all subgroup threads}
    \State $G_j\gets $ \Call{ReferenceOf}{$G_{i}$}
    \EndFor
    \EndIf
    \State \Call{WaitFor}{$\tbegin$}\Comment{Wait until thread subgroup is created}
    \State \Call{JoinThreadGroup}{$G_{i}$}\Comment{Join shared data structures}
    \EndWhile
    \EndIf
    \While{\Call{notEmpty}{$D$}}\Comment{Execute sequential tasks}
    \State $\{T[l_s, r_s), l, r\}\gets$ \Call{pop}{$D$}
    \If{$r_s-l_s \leq 2n_0$ {\bf or} $r-l \leq kn_0$}
    \State \Call{processBaseCase}{$A[l_s, r_s-1]$}
    \Else
    \State $\oset{b_0}{ b_{k-1}}\gets $\Call{partitionSequential}{$A[l_s, r_s-1]$}\Comment{Partitioning step -- returns buckets}
    \For{$b\Is b_{k-1}$ {\bf to} $b_0$}
    \State \textproc{D.push}{($\{T[$\textproc{begin}{(b)}$,\>$\textproc{end}{(b)}$), l_s, r_s\}$)}\Comment{Add seq.\ subtasks}
    \EndFor
    \EndIf
    \EndWhile
  \end{algorithmic}
\end{algorithm}

From \cref{lem:par task range,lem:seq task range} follows that the workload of sequential tasks and parallel tasks is evenly divided between the threads.
This property is used in~\cref{sec:analysis} to analyze the parallel I/O complexity and the local work.

\begin{lemma}\label{lem:par task range}
  Let $T[l,r)$ be a parallel task with thread group $\excloset{\tbegin}{\tend}$ and $t'= \tend - \tbegin$ threads.
  Then, $T[l,r)$ processes a consecutive sequence of elements which starts at position $l\in\oset{\tbegin n/t}{ (\tbegin + 1) n/t - 1}$ and which ends at position $r\in\oset{\tend n/t-1}{ (\tend + 1)n/t - 1}$ of the input array.
  This sums up to $\Th{t'n/t}$ elements in total.
\end{lemma}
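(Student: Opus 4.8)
The plan is to reduce the entire statement to unfolding the two floor definitions $\tbegin=\lfloor lt/n\rfloor$ and $\tend=\lfloor rt/n\rfloor$ that determine the thread group, and then to combine the resulting bounds on $l$ and $r$ into the element count. Consecutiveness is immediate and requires no argument: by construction a task $T[l,r)$ operates on the contiguous subarray $\VarArray\oset{l}{r-1}$.

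First I would pin down the starting position by inverting the floor. From $\tbegin=\lfloor lt/n\rfloor$ we get $\tbegin\leq lt/n<\tbegin+1$, i.e. $\tbegin n/t\leq l<(\tbegin+1)n/t$. Since positions are integral (here I assume $n$ is a multiple of $t$, consistent with the stripes of equal size, so that $(\tbegin+1)n/t$ is an integer), the strict upper inequality sharpens to $l\leq(\tbegin+1)n/t-1$, giving exactly $l\in\oset{\tbegin n/t}{(\tbegin+1)n/t-1}$. The end position is handled symmetrically: $\tend=\lfloor rt/n\rfloor$ yields $\tend n/t\leq r<(\tend+1)n/t$, hence $r\in\oset{\tend n/t}{(\tend+1)n/t-1}\subseteq\oset{\tend n/t-1}{(\tend+1)n/t-1}$, which is the claimed (slightly weaker) range.

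Finally I would read off the element count $r-l$ from these two intervals. For the upper bound, $r-l\leq\bigl((\tend+1)n/t-1\bigr)-\tbegin n/t=(t'+1)n/t-1\leq 2t'n/t$ using $t'=\tend-\tbegin\geq1$, so $r-l=\Oh{t'n/t}$. For the lower bound, $r-l\geq\tend n/t-\bigl((\tbegin+1)n/t-1\bigr)=(t'-1)n/t+1$. The one place that uses the structure of the problem is here: a parallel task has at least two threads by definition (it is parallel precisely when $\tend-\tbegin>1$), so $t'\geq2$ and therefore $t'-1\geq t'/2$, which gives $r-l\geq(t'/2)n/t=\Om{t'n/t}$. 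Combining the two bounds yields $r-l=\Th{t'n/t}$. I expect no real obstacle: the only mild points needing care are this use of $t'\geq2$ to turn $(t'-1)$ into a constant fraction of $t'$ for the lower bound, and the integrality caveat on the position intervals; otherwise the statement is a direct consequence of inverting the floor definitions of the thread-group endpoints.
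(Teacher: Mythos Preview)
Your proof is correct and essentially equivalent to the paper's, though the packaging differs. The paper derives the position bounds on $l$ and $r$ by invoking \cref{lem:parallel task covers element}: since thread $\tbegin$ executes $T[l,r)$ but thread $\tbegin-1$ does not, the task covers position $(\tbegin+1)n/t-1$ but not $\tbegin n/t-1$, and similarly at the right end. You instead invert the floor definitions $\tbegin=\lfloor lt/n\rfloor$ and $\tend=\lfloor rt/n\rfloor$ directly, which is the same underlying computation without the detour through the auxiliary lemma. Your route is slightly more self-contained; the paper's buys reuse of a lemma needed elsewhere. You also spell out the $\Theta(t'n/t)$ count explicitly, including the use of $t'\ge 2$ for the lower bound, which the paper leaves implicit. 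The integrality caveat (effectively assuming $t\mid n$) is present in both arguments and is consistent with the paper's treatment of equal-size stripes.
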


Thus, the size of a parallel task is proportional to the size of its thread group.

\begin{proof}[Proof of~\cref{lem:par task range}]
  From \cref{lem:parallel task covers element} follows that $T[l,r)$ covers position $(\tbegin + 1) n/t - 1$ but not position $\tbegin n/t - 1$ of the input array.
  It also follows, that $T[l,r)$ covers position $\tend n/t-1$ but not position $(\tend + 1)n/t - 1$ of the input array.
\end{proof}

\begin{lemma}\label{lem:seq task range}

  Thread $i$ processes sequential tasks only containing elements from $\VarArray[i n/t,(i + 2)n/t-1]$.
  This sums up to $\Oh{n/t}$ elements in total.
\end{lemma}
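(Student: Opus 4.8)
The plan is to reduce the whole statement to a single containment claim: every array position touched by a sequential task that thread $i$ processes lies in $\VarArray[in/t, (i+2)n/t - 1]$. Once this is shown, the ``sums up to'' part is immediate, since this interval spans exactly $2n/t = \Oh{n/t}$ positions, so the distinct elements involved in thread $i$'s sequential work number at most $2n/t$. (The case $t=1$ is trivial, as then $in/t=0$ and $(i+2)n/t-1 \geq n-1$, so I would assume $t>1$ throughout.)

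First I would observe that it suffices to consider the \emph{top-level} sequential tasks on thread $i$'s local stack, namely the sequential buckets that are pushed directly during the parallel phase. Every other sequential task thread $i$ handles is produced by recursively applying sequential partitioning steps to such a top-level task, and partitioning a task into buckets never introduces positions outside the task's own range; hence the union of positions touched by all of thread $i$'s sequential tasks equals the union over its top-level sequential tasks. By the push rule of the scheduler together with \cref{lem:sequential subgroup}, each such top-level task $T[l_s, r_s)$ is a bucket of a parallel task $T[l,r)$ with thread group $\excloset{\tbegin}{\tend}$ that thread $i$ itself participates in, and it lands on thread $i$'s stack precisely because $i = \min(\lfloor l_s t/n\rfloor, \tend - 1)$.

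The heart of the argument is then a case distinction on which argument attains this minimum. If $i = \lfloor l_s t/n\rfloor \leq \tend - 1$, then $l_s \geq in/t$, and since the bucket is a sequential task we have $\lfloor r_s t/n\rfloor \leq \lfloor l_s t/n\rfloor + 1 = i+1$, whence $r_s < (i+2)n/t$; so the positions $l_s, \dots, r_s - 1$ all lie in $\VarArray[in/t, (i+2)n/t - 1]$. If instead the minimum is attained by $i = \tend - 1 < \lfloor l_s t/n\rfloor$, then $\tend = i+1$ and $l_s \geq (i+1)n/t \geq in/t$ gives the lower bound, while the upper bound is imported from \cref{lem:par task range}: the parent task ends at $r \leq (\tend + 1)n/t - 1 = (i+2)n/t - 1$, and as $T[l_s, r_s)$ is one of its buckets we get $r_s \leq r \leq (i+2)n/t - 1$. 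Either way the bucket is contained in $\VarArray[in/t, (i+2)n/t - 1]$, which closes the containment.

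The hard part will not be depth but bookkeeping: one must keep in mind that the assignment expression $\min(\lfloor l_s t/n\rfloor, \tend - 1)$ is taken relative to the \emph{parent} parallel task (its $\tend$), and one must handle the boundary case $i = \tend - 1$ separately, because there the left endpoint no longer controls the right one and the bound on $r_s$ has to come from \cref{lem:par task range} rather than from the sequential condition. The floor conventions and the harmless off-by-one between $r_s - 1$ and $(i+2)n/t - 1$ need a quick check but do not affect the asymptotics. With containment established, the element count is simply the length of the interval, namely $\Oh{n/t}$.
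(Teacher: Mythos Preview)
Your proposal is correct and follows essentially the same route as the paper: reduce to the top-level sequential buckets pushed during the parallel phase, then split on whether $i=\lfloor l_s t/n\rfloor$ or $i=\tend-1$ attains the minimum, bounding $l_s$ from below via the assignment rule and $r_s$ from above via the sequential condition (first case) or the parent task's right endpoint (second case). The only cosmetic differences are that you derive the first-case upper bound directly from $\lfloor r_s t/n\rfloor\le i+1$ where the paper argues by contradiction, and you cite \cref{lem:par task range} for the second-case bound where the paper re-derives $r\le(\tend+1)n/t$ from $\tend=\lfloor rt/n\rfloor$.
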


This lemma shows that the load of sequential tasks is
evenly distributed among the threads.

\begin{proof}[Proof of~\cref{lem:seq task range}]
  We prove the following proposition:
  When a thread $i$ starts processing sequential tasks, the tasks only contain elements from $\VarArray[i n/t,(i + 2)n/t-1]$.
  From this proposition, \cref{lem:seq task range} follows directly as thread $i$ only processes sequential subtasks of these tasks.
  
  Let the sequential subtask $T[l_s, r_s)$ be a bucket of a parallel task $T[l, r)$ with threads $\excloset{\tbegin }{ \tend}$.
  Assume that $T[l_s, r_s)$ was assigned to the stack of thread $i$.
  We show $i n/t \leq l_s < r_s \leq (i + 2)n/t$ with the case distinction $i < \tend - 1$ and $i \geq \tend - 1$.

  Assume $i < \tend - 1$.
  From the calculation of $i$, we know that
  \begin{alignat*}{2}
    &i &&= \min(\lfloor l_s t/n\rfloor, \tend - 1) = \lfloor l_s t/n\rfloor \leq l_s t/n\\
    \implies & l_s &&\geq i n/t\punkt
  \end{alignat*}
  We show that $r_s \leq (i + 2)n/t$ with a proof by contradiction.
  For the proof, we need the inequality $l_s < (i+1) n/t$ which is true as
  \begin{align*}
    i = \min(\lfloor l_s t/n\rfloor, \tend - 1) = \lfloor l_s t/n\rfloor > l t/n - 1\punkt
  \end{align*}
  Now, we assume that $r_s > (i + 2)n/t$.
  As $T[l_s, r_s)$ is  a sequential task, we have $\lfloor r_s t/n\rfloor - \lfloor l_s t/n \rfloor=1$.
  However, this leads to the contradiction
  \begin{align*}
    1 = \lfloor r_s t/n\rfloor - \lfloor l_s t/n \rfloor \geq i + 2 - l_s t/n > (i + 2) - (i + 1) = 1\punkt
  \end{align*}
  Thus, we limited the end of the sequential task to $r_s \leq (i + 2)n/t$ and its start to $l_s \geq i n/t$ for $i < \tend - 1$.

  Assume $i \geq \tend - 1$.
  In this case, $i$ is essentially equal to $\tend - 1$ as \cref{lem:sequential subgroup} tells us that a sequential subtask of a parallel task is assigned to a thread of the parallel task.
  From the calculation of thread $i$, we know that
  \begin{alignat*}{2}
    &i &&= \min(\lfloor l_s t/n\rfloor, \tend - 1) = \tend - 1 \leq l t/n\\
    \implies & l_s &&\geq n/t(\tend - 1)\punkt    
  \end{alignat*}
  The end $r$ of the parent task can be bounded by
  \begin{alignat*}{2}
    &\tend &&= \lfloor r t/n\rfloor \geq r t/n - 1\\
    \implies &r &&\leq (\tend + 1)n/t
  \end{alignat*}
  We can use this inequality to bound the end $r_s$ of the sequential subtask $T[l_s, r_s)$ to
  \begin{align*}
    r_s \leq r \leq (\tend + 1)n/t
  \end{align*}
  as the subtask does not end after the parent task's end $r$.
  Thus, we limited the end of the sequential task to $r_s \leq (i + 2)n/t$ and its start to $l_s \geq i n/t$ for $i \geq \tend - 1$.
\end{proof}

\subsubsection{Dynamic Rescheduling}

The task scheduler is extended to utilize computing resources of threads that no longer have tasks.
We implement a simplified version of \emph{voluntary work sharing} proposed for parallel string sorting~\cite{bingmann2017engineering}.
A global stack is used to transfer sequential tasks to idle threads.
Threads without sequential tasks increase a global atomic counter which tracks the number of idle threads.
Threads with sequential tasks check the counter after each partitioning step and move one task to the global stack if the counter is larger than zero.
Then, an idle thread can consume the task from the global stack by decreasing the counter and processing the task.
The algorithm terminates when the counter is equal to $t$ which implies that no thread has a task left.
We expect that we are able to amortize the additional cost in most cases or even reduce the work on the critical execution path:
As long as no thread becomes idle, the counter remains valid in the thread's private cache and the threads only access their local stacks.
When a thread becomes idle, the local counter-copies of the other threads are invalidated and the counter value is reloaded into their private cache.
In most cases, we can amortize the counter reload by the previously processed task, as the task has typically more than $\Om{k\BaseCaseSize}$ elements.
When a thread adds an own task to the global stack, the task transfer is amortized by the workload reduction.

\section{Analysis of \compiparassssort}\label{sec:analysis}

In this section, we analyze the additional memory requirement (\cref{ss:inplace property}), the I/O complexity (\cref{ss:io efficiency}), and the local work (\cref{ss:total work}) of  \compiparassssort.
The analysis in \cref{ss:io efficiency,ss:total work} assumes the following constraints for \compiparassssort:

\begin{assumption}\label{as:1}Minimum size of a logical data block of \compiparassssort: $b\in\Om{tB}$\end{assumption}
\begin{assumption}\label{as:3} Minimum number of elements per thread: $n/t\in\Om{\max(M, bt)}$.\end{assumption}
\begin{assumption}\label{as:9} Restrict I/Os while sampling and buffers fit into private cache: $M\in\Om{Bk\log k + bk}$.\end{assumption}
\begin{assumption}\label{as:2} Oversampling factor: $\alpha\in\Th{\log k'}$ where $k'$ is the current number of buckets.\end{assumption}
\begin{assumption}\label{as:6} Restrict maximum size of base cases: $n_0\in\Om{\log k}\cap\Oh{M/k}$. \end{assumption}

Without loss of generality, we assume that an element has the size of one machine word.
In practice, we keep the block size $b$ the same, i.e., the number of elements in a block is inverse proportional to the element size.
In result, we can guarantee that the size of the buffer blocks does not exceed the private cache without adapting $k$.

\ifarxiv \else

We use the following theorem to bound the number of recursion levels of \compiparassssort. For the proof of this theorem, we refer to the extended version of this paper~\cite{axtmann2020ips4oarxiv}.

\begin{theorem}\label{thm:ips4olevel}
  Let $M\geq 1$ be a constant.
  Then, after $\Oh{\log_k \frac{n}{M}}$ recursion levels, all non-equality buckets of \compiparassssort have size $M$ with a probability of at least $1 - n/M$ for an oversampling ratio of $\alpha=\Th{c\log k}$.
\end{theorem}

\fi 

\subsection{Additional Memory Requirement}\label{ss:inplace property}

In this section, we show that \compiparassssort can be implemented either strictly in-place if the local task stack is implicitly represented or in-place if the tasks are stored on the recursion stack.

\begin{theorem}\label{thm:space}
  \compiparassssort\ can be implemented with $\Oh{kb}$ additional memory per thread.
\end{theorem}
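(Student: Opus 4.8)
The plan is to account for every piece of non-input memory that a single thread allocates and show it is bounded by $\Oh{kb}$. First I would enumerate the per-thread data structures used across all four phases of a partitioning step. In the classification phase, each thread maintains $k$ buffer blocks of $b$ elements each, contributing $kb$ words; this is clearly the dominant term. In the block permutation phase, each thread holds two swap buffers of one block each ($\Oh{b}$ words) plus a constant number of local pointers, and the shared read/write/delimiter pointers amount to $\Oh{k}$ words total (hence $\Oh{k/t}$ amortized, or simply absorbed into $\Oh{kb}$). The single overflow block adds $\Oh{b}$. The decision tree and splitter arrays from the sampling phase occupy $\Oh{k}$ words. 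Summing these gives $\Oh{kb}$ per thread, matching the claim.

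The second ingredient concerns the task stack, and here I would invoke the dichotomy stated in the theorem. For the \emph{in-place} variant I would argue that storing sequential tasks on the recursion stack costs $\Oh{\log n}$ words: by always recursing on the smaller bucket first (or by bounding recursion depth via \cref{thm:ips4olevel}, which gives $\Oh{\log_k (n/M)}$ levels), the stack depth is logarithmic, and each stack frame stores only a constant number of pointers. Since $\Oh{\log n}$ is dominated by (or at worst additive to) $\Oh{kb}$ under the standing assumptions, this stays within the in-place budget. For the \emph{strictly in-place} variant I would explain that the local task stack need not be stored explicitly at all: because buckets are partitioned in place and their boundaries are recoverable, the sequence of pending sequential tasks can be represented implicitly within the array itself, so no superconstant extra space is needed beyond the $\Oh{kb}$ buffers.

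The main obstacle I anticipate is the strictly-in-place claim for the task stack: making precise the sense in which the pending subtasks can be reconstructed without storing them, and verifying that recovering the next task's boundaries costs only constant extra space (for instance, by re-deriving bucket delimiters from the partitioned array or by a pivot-placement trick analogous to the strictly in-place quicksort variants cited in \cref{sec:related}). I would therefore treat the $\Oh{kb}$ buffer accounting as routine and devote the bulk of the argument to justifying that the scheduling metadata adds at most $\Oh{\log n}$ words in the in-place case and $\Oh{1}$ words in the strictly in-place case, so that the $\Oh{kb}$ term from the classification buffers dominates in both settings.
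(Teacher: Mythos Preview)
Your buffer accounting is essentially identical to the paper's: $k$ buffer blocks of size $b$, two swap buffers, an overflow block, the decision tree and splitters, and the read/write/delimiter pointers all sum to $\Oh{kb}$. One small item you omit is the sampling phase: since the oversampling factor $\alpha$ may depend on $n$, the paper notes explicitly that the sample is swapped to the front of the input array so that no extra space is consumed there.

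Where you diverge is in reading a dichotomy into the theorem that is not there. \Cref{thm:space} asserts only the $\Oh{kb}$ bound, and the paper obtains it precisely by \emph{dispensing} with the local stack via the implicit representation of sequential tasks (your strictly in-place branch). The variant that keeps an explicit stack is handled separately in \cref{thm:space stack}, and its bound is $\Ohbp{tk\log_k(n/n_0)}$, not $\Oh{\log n}$: with $k$-way partitioning each level can push up to $k$ subtasks, so the quicksort ``recurse on the smaller side'' trick does not give a logarithmic stack depth here. So your in-place branch both exceeds the scope of this theorem and carries an incorrect bound. If you drop that branch and keep only the implicit-stack argument, your proof coincides with the paper's.
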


\begin{proof}[Proof of \cref{thm:space}]
  Each thread has a space overhead of two swap buffers and $k$ buffer blocks of size $b$ (in total $\Oh{kb}$).
    This bound also covers smaller amounts of memory required for the partitioning steps.
  A partitioning step uses a search tree ($\Oh{k}$), an overflow buffer ($\Oh{b}$),
  read and write pointers ($\Oh{kB}$ if we avoid false sharing), end pointers, and bucket boundary pointers ($\Th{k}$ each).
  All of these data structures can be used for all recursion levels.

  The classification phase stores elements only in the buffer blocks and the overflow buffer.
  As each thread reads its elements sequentially into its buffer blocks, there is always an empty block in the input array when a buffer block is flushed.
  When the size of the input array is not a multiple of the block size, a single overflow buffer may be required to store the overflow.
  The permutation phase only requires the swap buffers and the read and write pointers to move blocks into their target bucket.
  In the sampling phase, we do not need extra space as we swap the sample to the front of the input array.
  Nor do we need the local stacks (each of size $\Ohbp{k\log_k \frac{n}{n_0}}$) since we can use an implicit representation of the sequential tasks as described in \cref{ss:strictly}.
\end{proof}

\begin{theorem}\label{thm:space stack}
  With a local stack, \compiparassssort\ can be implemented with $\Ohbp{tk\log_k \frac{n}{n_0}}$ additional memory per thread.
\end{theorem}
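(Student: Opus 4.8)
The plan is to build directly on \cref{thm:space}. That theorem already shows that every component of \compiparassssort\ \emph{other than} the local task stack fits into $\Oh{kb}$ additional memory per thread and can be reused across all recursion levels (search tree, swap and overflow buffers, read/write and delimiter pointers, etc.). The only difference in the present setting is that we materialize the sequential tasks on an explicit per-thread stack $D$ (as in \cref{alg:task scheduler}) instead of using the implicit representation of \cref{ss:strictly}. Hence it suffices to bound the maximum size of a single local stack and add it to the $\Oh{kb}$ already accounted for; indeed, the proof of \cref{thm:space} already asserts the figure $\Ohbp{k\log_k \frac{n}{n_0}}$ per stack, and the task here is essentially to justify it and aggregate.

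Next I would bound one stack. Each entry is a task descriptor $\{T[l_s,r_s),l,r\}$ of constant size, so it remains to count how many descriptors can be resident simultaneously. The sequential phase of \cref{alg:task scheduler} proceeds in depth-first order: a sequential partitioning step pushes its $k$ bucket-subtasks and then pops and expands one of them before touching its siblings. Thus at any instant the stack holds at most the $\Oh{k}$ siblings generated at each ancestor along the current DFS path, and the number of such levels is the recursion depth. Since a partitioning step splits a subproblem of size $n'$ into $k$ buckets whose sizes shrink by a factor of roughly $k$ per level until they reach the base-case threshold $n_0$, the depth is $\Oh{\log_k \frac{n}{n_0}}$, giving $\Ohbp{k\log_k \frac{n}{n_0}}$ descriptors per stack.

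I would then check the transition from the parallel to the sequential phase, since by the static schedule a thread first runs all of its parallel tasks and only afterwards drains its stack, so the sequential subtasks emitted across the $\Oh{\log_k t}$ parallel levels coexist on the stack with those of the sequential phase. Here \cref{lem:seq task range} is the key input: thread $i$'s sequential tasks cover only elements of $\VarArray[i n/t,(i+2)n/t-1]$, i.e.\ $\Oh{n/t}$ elements, so the parallel-phase contribution is subsumed by the same order and does not change the per-stack bound. Summing $\Ohbp{k\log_k \frac{n}{n_0}}$ over the $t$ stacks, and adding the $\Oh{kb}$ of \cref{thm:space}, yields the stated $\Ohbp{tk\log_k \frac{n}{n_0}}$ bound.

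I expect the main obstacle to be pinning the recursion depth down to $\Oh{\log_k \frac{n}{n_0}}$: for adversarial splitter choices a partitioning step need not shrink every bucket, so this rests on the balancing guarantee from the oversampling factor $\alpha\in\Th{\log k'}$ (\cref{as:2}) together with the equality-bucket mechanism, which removes duplicate-heavy buckets from the recursion so that progress is made at every level. A secondary, more routine subtlety is verifying that the DFS really keeps at most $\Oh{k}$ siblings per level resident (children fully pushed, then consumed one path at a time) and that the parallel-phase tasks interleave with the sequential stack without ever exceeding this order.
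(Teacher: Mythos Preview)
Your proposal follows the same skeleton as the paper: reuse \cref{thm:space} for everything except the stacks, then bound each local stack by $\Oh{k}$ entries per recursion level times the recursion depth $\Oh{\log_k\frac{n}{n_0}}$. The DFS sibling-counting argument you sketch is exactly the right picture, and your observation that the proof of \cref{thm:space} already asserts the per-stack figure is well spotted.

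The one substantive point you are missing is how the depth bound is actually obtained. You frame ``pinning the recursion depth down to $\Oh{\log_k\frac{n}{n_0}}$'' as something to be argued from the oversampling and equality-bucket mechanism, and you seem to be aiming for a deterministic bound. The paper does not do this: it simply invokes \cref{thm:ips4olevel}, which gives the depth bound only with probability at least $1-n_0/n$, and then handles the rare failure case by \emph{restarting the algorithm} when the preallocated stack memory is exhausted. Without that restart clause (or some other fallback), your argument has a genuine gap, because a single unlucky sample can make one bucket almost as large as its parent and the depth is then not bounded by $\Oh{\log_k\frac{n}{n_0}}$ in the worst case. Your paragraph on the parallel-to-sequential transition is more detail than the paper provides and is fine, but it is not where the real difficulty lies.
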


\begin{proof}[Proof of \cref{thm:space stack}]
Each recursion level stores at most $k$ tasks on the local stack.
 Only $\Ohbp{\log_k \frac{n}{n_0}}$ levels of parallel recursion are needed to get to the base cases with a probability of at least $1-n_0/n$ (see~\cref{thm:ips4olevel}).
 In the rare case that the memory is exhausted, the algorithm is restarted.
\end{proof}

\subsection{I/O Complexity}\label{ss:io efficiency}

Apart from the local work, the main issue of a sorting algorithm is the number of accesses to the main memory.
In this section, we analyze this aspect in the PEM model.
First, we show that \compiparassssort is I/O-efficient if the constraints we state at the beginning of this chapter apply. Then, we discuss how the I/O efficiency of \compiparassssort relates to practice.

\begin{theorem}\label{thm:io}
  \compiparassssort\ has an I/O complexity of
  $\Oh{\frac{n}{tB}\log_k\frac{n}{M}}$ memory block transfers with a probability of at least $1 - M/n$.
\end{theorem}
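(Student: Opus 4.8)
The plan is to bound the I/O complexity by accounting for the cost of a single partitioning step and then multiplying by the number of recursion levels. The target bound $\Oh{\frac{n}{tB}\log_k\frac{n}{M}}$ decomposes naturally: the factor $\log_k \frac{n}{M}$ is the recursion depth (guaranteed with high probability by \cref{thm:ips4olevel}), and $\frac{n}{tB}$ should be the per-level, per-thread I/O cost for scanning the whole input once in blocks of size $B$ across $t$ threads. So the core of the argument is to show that one partitioning step on an array of $N$ elements processed by $t'$ threads costs $\Oh{\frac{N}{t'B}}$ I/Os per thread, i.e., a constant number of streaming passes over the data.

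\textbf{Per-step I/O analysis.} First I would analyze each of the four phases of a partitioning step separately and show each is a constant number of scans. \emph{Sampling} touches only $\Oh{k\alpha}$ elements and sorts them; by \cref{as:9} ($M\in\Om{Bk\log k}$) the sample and the decision tree fit in cache, so this contributes $\Oh{\frac{N}{t'B}}$ amortized using the adaptive-$k$ argument promised in \cref{sec:task scheduling}. \emph{Classification} reads each element of a thread's stripe once and writes it back once via the buffer blocks; by \cref{as:9} ($M\in\Om{bk}$) all $k$ buffer blocks reside simultaneously in the private cache, so flushing a full buffer is a single sequential block write and the whole phase costs $\Oh{\frac{N}{t'B}}$ I/Os. \emph{Block permutation} moves each block a constant number of times (into and out of a swap buffer); since by \cref{as:1} $b\in\Om{t B}$, each block spans $\Om{t}$ memory blocks, so the atomic read/write pointer updates — which cost $t'$ I/Os each under the asynchronous PEM charging — are amortized against the $\Om{b/B}$ I/Os of actually moving the block, again giving $\Oh{\frac{N}{t'B}}$. \emph{Cleanup} touches only head/tail/buffer/overflow blocks, a total of $\Oh{k}$ partial blocks, which is dominated by the scan cost. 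Summing the four phases yields $\Oh{\frac{N}{t'B}}$ per thread for one partitioning step.

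\textbf{Summing over levels and the main obstacle.} Next I would use \cref{lem:par task range,lem:seq task range} to argue that on each recursion level the total work is evenly split, so that each thread handles $\Oh{n/t}$ elements per level regardless of whether the tasks are parallel or have already dropped to sequential; combined with the per-step bound this gives $\Oh{\frac{n}{tB}}$ I/Os per level per thread. Multiplying by the $\Oh{\log_k \frac{n}{M}}$ levels from \cref{thm:ips4olevel} produces the claimed bound, and the probability $1-M/n$ is inherited directly from that theorem (applied with the cutoff at bucket size $M$ rather than $n_0$). The main obstacle I anticipate is the block-permutation phase: making the amortization of the atomic pointer operations rigorous requires carefully invoking \cref{as:1} to ensure the $t'$-I/O cost of each shared-pointer access is genuinely dominated by the block-move cost, and also handling the boundary-fixing step (where buckets cross stripe boundaries and threads may skip blocks across several stripes) without incurring more than a constant number of extra scans. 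A secondary subtlety is ensuring the sampling cost truly telescopes: because $k$ is reduced adaptively for small subproblems, I must confirm that the $\Oh{k\alpha}=\Oh{k\log k}$ sample-sorting cost stays within the $\Oh{\frac{N}{t'B}}$ budget at every level, which is exactly what \cref{as:6} and the adaptive-$k$ rule are designed to guarantee.
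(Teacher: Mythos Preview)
Your overall strategy---bound the I/O cost of one partitioning step by $\Oh{N/(t'B)}$ and multiply by the number of levels---matches the paper's spirit, and your per-phase analysis of sampling, classification, and permutation is essentially what \cref{lem:io seq,lem:io par} do. However, there is a genuine gap in the ``multiply by $\log_k(n/M)$ levels'' step.

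\cref{thm:ips4olevel} does \emph{not} say that the recursion has depth $\Oh{\log_k(n/M)}$; it says that after that many levels all buckets have size at most $M$. The algorithm then continues recursing on those buckets down to the base-case size $n_0$, which takes roughly $\log_k(M/n_0)$ additional levels. Your accounting charges $\Oh{n/(tB)}$ per level and then simply stops counting at level $\log_k(n/M)$, without justifying why the remaining levels are free. The missing argument is a cache argument: once a sequential task has size $\Oh{M}$, its data is loaded into the private cache during its first classification scan, and because the task scheduler processes sequential tasks in depth-first order, that subarray \emph{stays} in cache while all descendant tasks are executed---so those descendants incur zero I/Os. The paper makes this explicit by abandoning the per-level view and instead doing a case distinction on task type: parallel tasks, large sequential tasks (size $\omega(M)$), and then three cases for tasks of size $\Oh{M}$ (``small,'' ``tiny,'' ``middle'') that together show each element is charged at most once after its ancestor first fits in cache. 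Your per-level decomposition can be repaired, but only by adding exactly this depth-first cache-retention argument.

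A smaller issue: your cleanup analysis (``$\Oh{k}$ partial blocks'') is too quick for the parallel case. Each of the $k$ buckets carries $t'$ buffer blocks from classification (one per thread), so cleaning one bucket touches $\Oh{t'b/B}$ memory blocks; you then need the case split $k\le t'$ versus $k>t'$ together with \cref{as:3,as:9} to conclude $\Oh{n/(tB)}$ per thread, as the paper does in \cref{lem:io par}.
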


Before we prove \cref{thm:io}, we prove that sequential partitioning steps exceeding the private cache are I/O-efficient (\cref{lem:io seq}) and that parallel partitioning steps are I/O-efficient (\cref{lem:io par}).

\begin{lemma}\label{lem:io seq}
  A sequential partitioning task with $n'\in \Om{M}$ elements transfers $\Th{n'/B}$ memory blocks.
\end{lemma}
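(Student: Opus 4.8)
Lemma~\ref{lem:io seq} asserts that a single sequential partitioning task operating on $n' \in \Om{M}$ elements incurs $\Th{n'/B}$ I/Os. Since a partitioning step must at minimum read every element once and write every element once, the lower bound $\Om{n'/B}$ is immediate; the real content is the matching upper bound $\Oh{n'/B}$, i.e.\ that the blockwise in-place partitioning scheme of \compiparassssort touches each memory block only a constant number of times despite being in-place.

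**Plan of attack.** I would bound the I/Os phase by phase, following the four phases of a partitioning step (sampling, classification, block permutation, cleanup), and show each contributes $\Oh{n'/B}$. The key structural facts I will lean on are Assumptions~\ref{as:9} and~\ref{as:6}: the former guarantees $M \in \Om{Bk\log k + bk}$, so that the entire search tree, all $k$ buffer blocks, and the sampled/splitter data fit simultaneously in private cache; the latter restricts $n_0 \in \Oh{M/k}$. Because a sequential task uses a single thread, the asynchronous-sharing penalty of charging $t$ I/Os never applies here, which simplifies the accounting considerably. For \textbf{sampling}, the sample has size $k\OversamplingFactor = \Th{k\log k}$ by Assumption~\ref{as:2}, which fits in cache by Assumption~\ref{as:9}, so sorting the sample and building the tree costs $\Oh{M/B} \subseteq \Oh{n'/B}$ (using $n' \in \Om{M}$). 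For \textbf{classification}, each element is scanned once sequentially, costing $\Oh{n'/B}$ streaming reads; the $k$ buffer blocks reside in cache (Assumption~\ref{as:9}), so flushing a full buffer back into the stripe writes $b$ contiguous elements, amortizing to $\Oh{n'/B}$ block writes total since each element is written back at most once. For \textbf{block permutation}, the crucial observation is that every block is read into a swap buffer and written out at most a constant number of times — each block move transfers $b$ contiguous elements, and the invariant in \cref{fig:block perm invariant} ensures monotone progress of the read/write pointers, so the number of block moves is $\Oh{n'/b}$, each costing $\Oh{b/B}$ I/Os, giving $\Oh{n'/B}$. For \textbf{cleanup}, only $\Oh{k}$ partial (head/tail) blocks plus the buffer and overflow blocks need repair, each of size $\Oh{b}$, so the cost is $\Oh{kb/B} \subseteq \Oh{M/B} \subseteq \Oh{n'/B}$ again via Assumption~\ref{as:9} and $n' \in \Om{M}$.

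**Main obstacle.** The delicate part is the \textbf{block permutation} bound. I must argue that each physical block participates in only $\Oh{1}$ block moves even though the permutation is carried out in-place via swaps rather than by copying into scratch space. The argument should be a potential/amortization argument: each time the algorithm decrements a read pointer or increments a write pointer it does so at most once per block position, and every swap-or-move operation advances exactly one of these pointers toward its terminal position; hence the total number of block transfers is linear in the number of block slots, $\Oh{n'/b}$. I would also need to verify that the "wait" case (when $\writeblock[\TargetBucketIndex]$ and $\readblock[\TargetBucketIndex]$ cross) does not induce extra I/Os — since it occurs at most once per bucket, it adds only $\Oh{k}$ transfers, absorbed by the $\Oh{M/B}$ slack. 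A secondary subtlety is ensuring the buffer blocks genuinely stay resident in cache throughout classification so that repeated writes into a given buffer block are free (no eviction); this is exactly what Assumption~\ref{as:9}'s $bk$ term buys, and I would cite it explicitly at that step.
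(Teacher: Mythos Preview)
Your proposal is correct and follows essentially the same phase-by-phase decomposition as the paper. The paper is much terser on the permutation phase (a single sentence ``the same asymptotic cost occurs for moving blocks''), whereas you spell out the pointer-monotonicity argument; conversely, the paper accounts for cleanup slightly differently --- it amortizes the full memory-block writes against the classification phase and charges only $\Oh{1}$ truncated-block I/Os per bucket boundary, arriving at an $\Oh{k}$ bound (with $k \in \Oh{n'/B}$ via $n' \in \Om{M} \in \Om{Bk}$), while you use the coarser but equally valid $\Oh{kb/B} \subseteq \Oh{M/B}$ estimate. One minor caveat: your sampling justification (``fits in cache, so costs $\Oh{M/B}$'') glosses over the fact that gathering the sample requires $\Th{k\log k}$ \emph{random} accesses rather than a scan; the bound still holds because Assumption~\ref{as:9} gives $k\log k \in \Oh{M/B}$, which is exactly how the paper phrases it.
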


\begin{proof}
  A sequential partitioning task performs a partitioning step with one thread.
  The \emph{sampling phase} of the partitioning step requires $\Th{k\log k}$ I/Os for sorting the random sample (\cref{as:2}).
  We have $k\log k\in\Oh{n'/B}$ as $M \in \Om{Bk\log k}$ (\cref{as:9}).
  During the \emph{classification phase}, the thread reads $\Oh{n'}$ consecutive elements, writes them to its local buffer blocks, and eventually moves them blockwise back to the main memory.
  This requires $\Oh{n'/B}$ I/Os in total.
  As $M\in\Om{kb}$, the local buffer blocks fit into the private cache.
  The same asymptotic cost occurs for moving blocks during the \emph{permutation phase}.
  In the \emph{cleanup phase}, the thread has to clean up $k$ buckets.
  To clean up bucket $i$, the thread moves the elements from buffer block $i$ and, if necessary, elements from a block which overlaps into bucket $i+1$ to bucket boundary $i$.
  The elements from these two blocks are moved consecutively.
  We can amortize the transfer of full memory blocks with the I/Os from the classification phase as these blocks have been filled in the classification phase.
  We account $\Oh{1}$ I/Os for potential truncated memory blocks at the ends of the consecutive sequences.
  For $k$ bucket boundaries, this sums up to $\Oh{k}\in \Oh{n'/B}$ as $n' \in \Om{M}\in\Om{Bk}$ (\cref{as:1,as:9}).
\end{proof}

\begin{lemma}\label{lem:io par}
  A parallel task with $\Th{t'n/t}$ elements and $t'$ threads transfers $\Th{\frac{n}{tB}}$ memory blocks per thread.
\end{lemma}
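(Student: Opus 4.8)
The plan is to decompose the parallel partitioning step into its four phases and bound the I/Os of each phase separately, just as was done for the sequential case in \cref{lem:io seq}, but now tracking the per-thread cost and paying attention to the shared-memory accesses that the asynchronous PEM model charges at rate $t$. By \cref{lem:par task range}, a parallel task with $t'$ threads processes $\Th{t'n/t}$ consecutive elements, and the static scheduler (via \cref{lem:parallel task covers element,lem:par task range}) splits this range into $t'$ stripes of $\Th{n/t}$ elements, one per thread. So the natural target is to show each phase costs $\Oh{\frac{n}{tB}}$ I/Os per thread.

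First I would handle the \textbf{sampling phase}: sorting $k\alpha$ samples costs $\Th{k\log k}$ I/Os by \cref{as:2}, and since $M\in\Om{Bk\log k}$ (\cref{as:9}) and each thread owns $\Th{n/t}\in\Om{M}$ elements (\cref{as:3}), this is absorbed into $\Oh{\frac{n}{tB}}$. Next, the \textbf{classification phase}: each thread scans its own stripe of $\Th{n/t}$ elements sequentially, writing into $k$ local buffer blocks that fit into private cache because $M\in\Om{bk}$ (\cref{as:9}); flushing full blocks back to the stripe is also sequential, so the per-thread cost is $\Th{\frac{n}{tB}}$, exactly as in \cref{lem:io seq}. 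The \textbf{block permutation phase} is the one that genuinely differs from the sequential analysis and is where I expect the main obstacle. Here each block move touches shared read/write pointers $\writeblock[i]$ and $\readblock[i]$ through atomic fetch-and-add, and the asynchronous model charges $t$ I/Os per shared access. The key is the amortization argument already flagged in the text: \cref{as:1} guarantees $b\in\Om{tB}$, so each block is large enough that the $\Th{t}$ I/Os for a pointer update are amortized against the $\Th{b/B}=\Om{t}$ I/Os needed to actually transfer the block's data. Since a thread moves $\Oh{n/(tb)}$ blocks, the total is $\Oh{\frac{n}{tb}\cdot\frac{b}{B}}=\Oh{\frac{n}{tB}}$ for the data, plus an amortized-equal amount for the pointers. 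Finally the \textbf{cleanup phase} cleans $\Oh{k}$ bucket boundaries per thread (by the scheduler's assignment of $\excloset{\lfloor ki/t\rfloor}{\lfloor k(i+1)/t\rfloor}$), and as in \cref{lem:io seq} the full-block transfers amortize against classification while the truncated pieces contribute $\Oh{k}\in\Oh{n'/B/t}\subseteq\Oh{\frac{n}{tB}}$ via \cref{as:1,as:3,as:9}. Summing the four phases gives $\Th{\frac{n}{tB}}$ per thread, and the lower bound follows simply because each thread must read its $\Th{n/t}$ elements at least once.

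The hard part will be making the permutation-phase amortization rigorous under the asynchronous PEM charging, in particular arguing that the occasional contention events (the at-most-once-per-bucket crossing of $\writeblock[i]$ and $\readblock[i]$, and threads waiting before writing to empty blocks) do not inflate the bound; these should contribute only $\Oh{k}$ extra shared accesses, hence $\Oh{kt}$ I/Os, which is again $\Oh{\frac{n}{tB}}$ by \cref{as:1,as:3}. I would also need to verify that false sharing is avoided so that the one-shared-variable-per-block convention from the model setup applies to the $\Oh{kB}$ pointer array, as noted in the proof of \cref{thm:space}.
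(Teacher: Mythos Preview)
Your plan for the sampling, classification, and permutation phases matches the paper's proof closely, including the amortization of the $\Th{t}$-cost shared-pointer accesses against the $\Th{b/B}$ block transfers via \cref{as:1}, and the $\Oh{kt}$ bound for the once-per-bucket crossing events. The gap is in the cleanup phase.

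You argue that in cleanup ``full-block transfers amortize against classification'' as in \cref{lem:io seq}, leaving only $\Oh{1}$ truncation cost per bucket. In the sequential case this works because the same thread both fills and later empties each buffer block. In the parallel case it fails: the single thread assigned to clean bucket~$i$ must read and flush the partially filled buffer blocks for bucket~$i$ from \emph{all} $t'$ participating threads (see the cleanup description: ``elements may be in the partially filled buffers from the classification phase''). That is $\Oh{t'b/B}$ I/Os incurred by one thread, and they cannot be amortized against that thread's own classification work, which touched only its own stripe and its own $k$ buffers. Your $\Oh{1}$-per-bucket accounting therefore undercounts by a factor of $\Th{t'b/B}$.

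The paper fixes this by charging $\Oh{t'b/B}$ per bucket directly and doing a case split. If $k\le t'$, each thread cleans at most one bucket, costing $\Oh{t'b/B}\in\Oh{\frac{n}{tB}}$ since $n/t\in\Om{tb}\supseteq\Om{t'b}$ by \cref{as:3}. If $k>t'$, each thread cleans $\Oh{k/t'}$ buckets, for a total of $\Oh{kb/B}\in\Oh{\frac{n}{tB}}$ since $n/t\in\Om{M}\in\Om{kb}$ by \cref{as:3,as:9}. Note that the case split is essential: pairing your loose $\Oh{k}$ buckets-per-thread with the correct $\Oh{t'b/B}$ per-bucket cost would give $\Oh{kt'b/B}$, which is not in general $\Oh{\frac{n}{tB}}$ (take $k\approx n/(tb)$ and $t'\approx t$). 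So you need either the tighter $\Oh{\max(1,k/t')}$ bucket count or an equivalent balancing argument.
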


\begin{proof}
  A parallel task performs a partitioning step.
  The \emph{sampling phase} of the partitioning step requires $\Oh{k\log k}$ I/Os for loading the random sample (\cref{as:2}).
  We have $k\log k\in\Oh{\frac{n}{tB}}$ as $n/t\in\Om{Bk\log k}$ (\cref{as:3,as:9}).
  During the \emph{classification phase}, each thread reads $\Oh{n/t}$ consecutive elements, writes them to its local buffer blocks, and eventually moves them blockwise back to the main memory.
  As $M\in\Omega(kb)$, the local buffer blocks fit into the private cache.
    In total, the classification phase transfers $\Oh{\frac{n}{tb}}$ logical data blocks causing $\Oh{\frac{n}{tB}}$ I/Os per thread.
  
  The same asymptotic cost occurs for moving blocks during the \emph{permutation phase}.
  Each thread performs $\Oh{\frac{n}{tb}}$ successful acquisitions of the next block in a bucket.
  The successful acquisitions require $\Oh{\frac{n}{tb}}$ reads and writes of the read pointers $r_i$ and the write pointers $w_i$ -- for each read and write, we charge $\Oh{t}$ I/Os for possible contention with other threads.
  Thus, the successful acquisitions sum up to $\Oh{t\cdot\frac{n}{tb}}\in \Oh{\frac{n}{tB}}$ I/Os (\cref{as:1}).
 For the block permutations of $\Oh{\frac{n}{tb}}$ elements, we get the overall cost of $\Oh{\frac{n}{tB}}$ I/Os.
 Furthermore, an additional block is loaded for each of the $k$ buckets to recognize that the bucket does not contain any unprocessed blocks.
   Similar to the successful acquisitions we charge an overall cost of $\Oh{kt}$ I/Os.
   Since $n/t\in\Om{bk}$ (\cref{as:3,as:9}), we have $k\in\Ohbp{\frac{n}{t^2B}}$ and hence $\Oh{kt}\in\Oh{\frac{n}{tB}}$.

  In the \emph{cleanup phase}, $t'$ threads have to clean up $k$ buckets.
  To clean up a single bucket, elements from $t'+2$ buffer blocks and bucket boundaries are moved.
  This takes $\Oh{t'b/B}$ I/Os for cleaning a bucket.
  We consider a case distinction with respect to $k$ and $t'$.
  If $k\leq t'$, then each thread cleans at most one bucket.
  This amounts to a cost of $\Oh{t'b/B}\in\Oh{\frac{n}{tB}}$ since $n/t\in \Om{tb}$ (\cref{as:3}).
  If $k>t'$, then each thread cleans $\Oh{k/t'}$ buckets with a total cost
  of $\Oh{k/t'\cdot t'b/B}\in\Oh{kb/B}$ I/Os.
  We have $\Oh{kb/B}\in\Oh{\frac{n}{tB}}$ since $\Th{n/t}\in\Om{kb}$ (\cref{as:3,as:9}).
\end{proof}

Now, we can prove that \compiparassssort is I/O-efficient if the constraints we state at the beginning of this chapter apply.

\begin{proof}[Proof of \cref{thm:io}]
  In this proof, we can assume that \compiparassssort\ performs $\Oh{\log_k\frac{n}{M}}$ recursion levels until the tasks have at most $M$ elements.
  According to \cref{thm:ips4olevel}, this assumption holds with a probability of at least $1 - M/n$.
  We do not consider the situation of many identical keys since the elements with these identical keys will not be processed at later recursion levels anymore.
  
  From \cref{thm:space} we know that \compiparassssort uses additional data structures that require $\Oh{kb}$ additional memory.
  In addition to the accesses to these data structures, a task $T[l,r)$ only accesses $\VarArray\oset{l}{r-1}$.
  As $M\in\Om{bk}$ (\cref{as:9}) we can keep the additional data structures in the private cache.
  Thus, we only have to count the memory transfers of tasks from and to the input array.

  In this analysis, we consider a case distinction with respect to the task type, its size, and the size of its parent task.
  Each configuration requires at most $\Oh{\frac{n}{tB}\log_k \frac{n}{M}}$ I/Os per thread.

  \emph{Parallel tasks:}
  \compiparassssort\ processes the parallel tasks first.
  Parallel tasks transfer $\Th{\frac{n}{tB}}$ memory blocks per thread (see \cref{lem:io par}).
  As a thread performs at most one parallel task on each recursion level (see \cref{lem:par task unique level}), parallel tasks on the first $\Oh{\log_k\frac{n}{M}}$ recursion levels perform $\Oh{\frac{n}{tB}\log_k \frac{n}{M}}$ I/Os per thread.
  On subsequent recursion levels, no parallel tasks are executed:
  After $\Oh{\log_k\frac{n}{M}}$ recursion levels, the size of tasks is at most $M$.
  However, parallel tasks have more than $M$ elements.
  This follows from \cref{lem:par task range} and \cref{as:3}.

  \emph{Large tasks (Sequential partitioning task with $\om{M}$ elements):}
  A large task with $n'$ elements takes $\Th{n'/B}$ I/Os (see \cref{lem:io seq}).
  A thread processes sequential tasks covering a continuous stripe of $\Oh{n/t}$ elements of the input array (see \cref{lem:seq task range}).
  Thus, the large tasks of a thread transfer $\Oh{\frac{n}{tB}}$ memory blocks on each recursion level.
  This sums up to $\Oh{\frac{n}{tB}\log_k\frac{n}{M}}$ I/Os per thread for the first $\log_k\frac{n}{M}$ recursion levels.
After $\Oh{\log_k\frac{n}{M}}$ recursion levels, the size of tasks fits into the main memory, i.e., their size is $\Oh{M}$.

  \emph{Small tasks (Sequential tasks containing $\Oh{M}\cap \Om{B}$ elements with parent tasks having $\om{M}$ elements or with parallel parent tasks having $\Om{M}$ elements):}
  In the first step of small tasks, the classification phase, the thread reads the elements of the task from left to right.
  As the task fits into the private cache, the task does not perform additional I/Os after the classification phase.
  For a small task of size $n'$, we have $\lfloor n'/B\rfloor$ I/Os as $n'\in\Om{B}$.
  Buckets of sequential partitioning tasks are sequential subtasks which again have $\Oh{M}$ elements.
  Thus, each input element is only once part of a small task and small tasks cover disjoint parts of the input array.
  Additionally, we know from \cref{lem:seq task range} that the sequential tasks of a thread contain $\Oh{n/t}$ different elements.
  From this follows that a thread transfers $\Oh{\frac{n}{tB}}$ memory blocks for small tasks.

  \emph{Tiny tasks (Sequential tasks with $\Oh{B}$ elements whose parent tasks have $\om{M}$ elements or with parallel parent tasks having $\Om{M}$ elements):}
  A tiny task needs $\Oh{1}$ I/Os.
  We account these I/Os to its parent task.
  A parent task gets at most $\Oh{k}$ additional I/Os in the worst-case, $\Oh{1}$ for each bucket.
  The parent task has $\Om{tBk}$ elements:
  By definition, the parent task has $\Om{M}$ elements and we have $M\in \Om{tBk}$ (\cref{as:1,as:9}).
  We have already accounted $\Om{tBk/B}$ I/Os ($\Om{Bk/B}$ I/Os) for this sequential (parallel) parent task previously (see I/Os of large tasks and parallel tasks).
  Thus, the parent task can amortize the I/Os of its tiny subtasks.

  \emph{Middle tasks (Sequential tasks with sequential parent tasks containing $\Oh{M}$ elements):}
  Let the middle task $T[l_s, r_s)$ processed by thread $i$ be a bucket of a sequential task $T[l,s)$ contained $\Oh{M}$ elements.
  When thread $i$ processed task $T[l, r)$, the subarray $\VarArray\oset{l}{r-1}$ was loaded into the private cache of thread $i$.
  As the thread processes the sequential tasks from its local stack in depth-first search order,  $\VarArray\oset{l}{r-1}$ remains in the thread's private cache until the middle task $T[l_s, r_s)$ is executed.
  The middle task does not require any memory transfers -- it only accesses $\VarArray\oset{l_s}{r_s-1}$ which is a subarray of $\VarArray\oset{l}{r-1}$.
\end{proof}

\ifarxiv In \cref{app:io volume analysis} \else In the extended version of this paper~\cite{axtmann2020ips4oarxiv}\fi, we analyze the constant factors of
the I/O volume (i.e., data flow between cache and main memory) for the
sequential algorithms \compissssort~(\compiparassssort\ with $t=1$)
and \compssssort. To simplify the discussion, we assume a single recursion level,
$k=256$ and $8$-byte elements.  We show that
\compissssort needs about $48n$ bytes of I/O volume, whereas
\compssssort needs between $67n$ and $84n$, depending on whether we
use a conservative calculation or not.  This is surprising since, at
the first glance, the partitioning algorithm of \compissssort\ writes
the data twice, whereas \compssssort does this only once.  However,
this is more than offset by ``hidden'' overheads of \compssssort like
memory management, allocation misses, and associativity misses.

\subsection{Branch Mispredictions and Local Work}\label{ss:total work}

Besides the latency of loading and writing data, which we analyze in the previous section, branch mispredictions and the (total) work of an algorithm can limit its performance.
In the next paragraph, we address branch mispredictions of \compiparassssort and afterwards, we analyze the total work of \compiparassssort.

Our algorithm \compiparassssort has virtually no branch mispredictions during element classification.
The main exception is when the algorithm detects that a bucket has to be flushed.
A bucket is flushed on average after $b$ element classifications (after $b\log k$ element comparisons).

We now analyze the local work of \compiparassssort.
We neglect delays introduced by thread synchronizations and accesses to shared variables as we accounted for those in the I/O analysis in the previous section.
For the analysis, we assume that the base case algorithm performs partitioning steps with $k=2$ and a constant number of samples until at most one element remains.
Thus, the local work of the base case algorithm is identical to the local work of quicksort.
We also assume that the base case algorithm is used to sort the samples.

Actually, our implementation of \compiparassssort sorts the base cases with insertion sort.
The reason is that a base case with more than $2n_0$ elements can only occur if its parent task has between $n_0$ and $kn_0$ elements.
In this case, the average base case size is between $0.5n_0$ and $n_0$.
Our experiments have shown that insertion sort is more efficient than quicksort for these small inputs.
Also, base cases with much more than $2n_0$ elements are very rare.
To sort the samples, our implementation recursively invokes \compiparassssort.

\begin{theorem}\label{thm:local work}
  When using quicksort to sort the base cases and the samples, \compiparassssort\ has a local work of $\Oh{n/t\log n}$ with a probability of at least $1-n^{-4}$.
\end{theorem}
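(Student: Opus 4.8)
The plan is to split the local work of \compiparassssort\ into three parts and bound each separately: the work inside the $k$-way partitioning steps, the work inside the quicksort base cases, and the work spent sorting the samples. I would argue that the first part already matches the claimed bound and that the other two are amortized against it.

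For a single partitioning step on $n'$ elements the work is deterministic once $n'$ and the current number of buckets are fixed: classification traverses the branchless decision tree with $\Oh{\log k}$ comparisons per element, for $\Oh{n'\log k}$ in total, whereas block permutation and cleanup each touch only a subset of the $n'$ elements and hence cost $\Oh{n'}$. Invoking the load-balancing guarantees — \cref{lem:par task range} for parallel tasks, \cref{lem:seq task range} for sequential tasks, and \cref{lem:par task unique level} to ensure a thread handles at most one parallel task per level — every thread processes $\Oh{n/t}$ elements on each recursion level, so its partitioning work per level is $\Oh{(n/t)\log k}$. By \cref{thm:ips4olevel}, with the oversampling constant chosen large enough, the number of recursion levels is $\Oh{\log_k n}$ with the required probability, and multiplying the two bounds gives $\Oh{(n/t)\log k \cdot \log_k n} = \Oh{(n/t)\log n}$ of partitioning work per thread.

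For the base cases I would use that the base cases assigned to one thread cover disjoint pieces of its stripe (again \cref{lem:seq task range}), hence at most $\Oh{n/t}$ elements in total; since each base case has $\Oh{n_0}$ elements (up to $\Oh{kn_0}$ in the rare large-base-case situation) and quicksort sorts $m$ elements in $\Oh{m\log m} = \Oh{m\log n}$ comparisons, summing over a thread's base cases yields $\Oh{(n/t)\log n}$. For the samples I would invoke the adaptive choice of $k$ from \cref{sec:task scheduling}: a step on $n'$ elements draws $\Th{k\log k}$ samples, and the adaption keeps $k$ small enough (roughly $\sqrt{n'/n_0}$, using $n_0\in\Om{\log k}$ from \cref{as:6}) that quicksorting them costs $\oh{n'}$, so the sampling work of a step is dominated by its own classification work and is therefore already accounted for.

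The main obstacle is the high-probability statement, namely combining the two independent sources of randomness into a single $1-n^{-4}$ bound. The partitioning part is deterministic once the level bound of \cref{thm:ips4olevel} holds, and I would make that event fail with probability at most $n^{-5}$ by enlarging the oversampling constant. The delicate point is the quicksort work in the base cases and sample sorts: there are up to $\Oh{n}$ such instances and most of them are small, so a per-instance high-probability bound is too weak. I would instead bound the total quicksort comparisons as a sum of independent costs on disjoint inputs of total size $\Oh{n}$, whose expectation is $\Oh{n\log n}$; using the exponentially decaying upper tail of the quicksort comparison count — equivalently, bounding each element's quicksort depth by $\Oh{\log n}$ except with probability $n^{-5}$ and taking a union bound over the $n$ elements — this sum exceeds $\Oh{n\log n}$ with probability at most $n^{-5}$. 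A final union bound over the level event and the quicksort event then gives the total local work $\Oh{(n/t)\log n}$ with probability at least $1-n^{-4}$.
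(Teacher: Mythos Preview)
Your three–part decomposition (partitioning steps / quicksort base cases / sample sorting) and the per-level $\Oh{(n/t)\log k}$ partitioning bound via \cref{lem:par task range}, \cref{lem:seq task range}, \cref{lem:par task unique level} match the paper's skeleton (\cref{lem:work single partitioning step}--\cref{lem:work sampling}). Two points where you and the paper diverge are worth noting.

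First, your sample-sorting argument leans on the adaptive choice of $k$ (``roughly $\sqrt{n'/n_0}$'') to get the sample-sort cost down to $\oh{n'}$. But the adaptation of $k$ only kicks in for tasks with $n'<k^2n_0$; for \emph{large} tasks $k$ stays at its maximal value and the sample has size $\Th{k\log k}$ independent of $n'$. The paper therefore treats small and large tasks with separate lemmas (\cref{lem:local work small task,lem:local work 2n0} versus \cref{lem:work all partitioning steps,lem:work sampling}), and the large-task sample analysis (\cref{lem:work sampling}) is the technically heaviest piece: it bounds the number $r=\Oh{\frac{n\log n}{tk\log^2 k}}$ of large tasks a thread can see, models quicksort on each $\Oh{k\log k}$-sized sample as a Bernoulli trial with failure probability $k^{-3}\log^{-3}k$, and uses a binomial tail estimate (with a ``restart quicksort after $\Oh{k\log^2k}$ work'' trick to make the trials comparable). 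Your o$(n')$ shortcut bypasses all of this; it can be made to work deterministically for the expected cost (since $k\log^2k\le n'$ when $n'\ge kn_0$ and $n_0\in\Om{\log k}$), but you should state that explicitly rather than invoking the adaptation.

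Second, your high-probability route is genuinely different. The paper settles for four events each failing with probability $\le n^{-1}$ (see the final union bound in the proof of \cref{thm:local work}, which actually yields only $1-4n^{-1}$, not $1-n^{-4}$). You instead propose to push the oversampling constant so that \cref{thm:ips4olevel} fails with probability $\le n^{-5}$, and to control all quicksort work by a single union bound on per-element recursion depths. The oversampling part is plausible from the proof of \cref{thm:ips4olevel} in the appendix (the exponent in \cref{lem:probmultilevel} grows with $c$), and the element-depth bound is a clean way to handle the many small quicksort instances at once. If you go this way, be careful that the ``$n$ elements'' you union-bound over must include every element-occurrence in every quicksort call, in particular the sample elements across all levels; their total count is $\Oh{(n/t)\log_k n}$ per thread, not $\Oh{n/t}$, so you need the per-element tail to decay like $n^{-c}$ for $c$ large enough to absorb that extra $\log_k n$ factor.
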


For the proof of \cref{thm:local work}, we need \crefrange{lem:work single partitioning step}{lem:work base cases}.
These lemmas use the term \emph{small task} for tasks with at most $kn_0$ elements and the term \emph{large task} for tasks with at least $kn_0$ elements.

\begin{lemma}\label{lem:work single partitioning step}
  A partitioning task with $n'$ elements and $t'$ threads has a local work of $\Oh{n'/t'\log k}$ excluding the work for sorting the samples.
\end{lemma}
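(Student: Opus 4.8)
The plan is to analyze the four phases of a partitioning step separately and to show that the \textbf{classification} phase dominates, contributing the full $\Oh{(n'/t')\log k}$ term, while every other phase costs only $\Oh{n'/t'}$ (plus lower-order terms already absorbed by the bound). Throughout I use that each participating thread touches only $\Oh{n'/t'}$ input elements: for a parallel task this follows from \cref{lem:par task range}, which gives $n'=\Th{t'n/t}$ together with the equal-sized stripe division performed during classification, and for a sequential task ($t'=1$) the single thread trivially handles all $n'$ elements.

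First I would bound the sampling work. Excluding the sorting of the sample (as the statement permits), the remaining cost is drawing and swapping the $\Th{k\log k}$ samples to the front and building the decision tree in $\Oh{k}$ time, i.e.\ $\Oh{k\log k}$ in total. Since the distribution degree is chosen adaptively so that every partitioning step operates on $\Om{kn_0}=\Om{k}$ elements, and a parallel task even satisfies $n'/t'=\Th{n/t}\in\Om{bk}$ by \cref{as:3,as:9}, this sampling cost is $\Oh{(n'/t')\log k}$. Next, the classification phase classifies each of the $\Oh{n'/t'}$ elements per thread with one traversal of the depth-$\log k$ branchless decision tree of \cref{alg:element classification1}, costing $\Oh{\log k}$ comparisons per element, plus $\Oh{1}$ work to buffer it and to flush full buffer blocks; this produces the dominant $\Oh{(n'/t')\log k}$ term.

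I would then argue that the remaining two phases are cheaper. In the block permutation phase each thread moves $\Oh{n'/(t'b)}$ blocks at cost $\Oh{b}$ each, for $\Oh{n'/t'}$ work, and classifies the head element of each such block once, adding $\Oh{(n'/(t'b))\log k}\subseteq\Oh{(n'/t')\log k}$. For the cleanup phase I would reuse the accounting of \cref{lem:io seq,lem:io par} at the granularity of elements rather than I/Os: cleaning one bucket moves the elements of $t'+2$ partial buffer/overlap blocks, i.e.\ $\Oh{t'b}$ work, and a thread cleans $\Oh{1}$ buckets when $k\le t'$ and $\Oh{k/t'}$ buckets when $k>t'$, giving a per-thread cost of $\Oh{t'b}$ or $\Oh{kb}$ respectively. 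Both are $\Oh{n'/t'}$: for parallel tasks because $n'/t'=\Th{n/t}\in\Om{tb}\cap\Om{bk}$ by \cref{as:3,as:9}, and for the sequential case ($t'=1$) because the partial buffers are a subset of the $n'$ input elements, so their total size is $\Oh{\min(kb,n')}=\Oh{n'}$. Summing the four phase costs yields $\Oh{(n'/t')\log k}$, as claimed.

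The main obstacle is the cleanup phase: unlike classification, its cost is not uniformly $\Oh{n'/t'}$ per bucket, so the argument needs the case distinction $k\le t'$ versus $k>t'$ and must invoke the block-size assumptions \cref{as:1,as:3,as:9} to show that both $t'b$ and $kb$ are $\Oh{n'/t'}$; some care is also needed to verify that the per-thread distribution of buckets in cleanup does not concentrate more than $\Oh{n'/t'}$ work on a single thread.
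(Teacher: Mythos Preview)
Your proposal is correct and follows essentially the same phase-by-phase decomposition as the paper: classification dominates at $\Oh{(n'/t')\log k}$, permutation is $\Oh{n'/t'}$, and cleanup is handled separately for the sequential and parallel cases. The only cosmetic difference is that for the parallel cleanup the paper takes a shortcut by invoking the I/O bound of \cref{lem:io par} and converting $\Oh{n'/(t'B)}$ I/Os into $\Oh{n'/t'}$ work (each accessed element moves at most once), whereas you re-do the $k\le t'$ versus $k>t'$ case split directly; both arrive at the same bound.
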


\begin{proof}
  In the \emph{classification phase} of the partitioning step, the comparisons in the branchless decision tree dominate.
  Each thread classifies $n/t'$ elements with takes $\Oh{\log k}$ comparisons each.
  This sums up to $\Oh{n'/t'\log k}$.
  The element classification dominates the remaining work of this phase, e.g., the work for loading each element once, and every $b$ elements, the work for flushing a local buffer.

  In the \emph{permutation phase}, each block in the input array is swapped into a buffer block once and swapped back into the input array once.
  As each thread swaps at most $\lfloor \frac{n'}{tb}\rfloor$ blocks of size $b$, the phase has $\Oh{n'/t}$ local work.

  When the \emph{cleanup phase} is executed sequentially, the elements of the local buffers are flushed into blocks that overlap into the next bucket.
  This may displace elements stored in these blocks.
  The displaced elements are written into empty parts of blocks.
  Thus, each element is moved at most once which sums up to $\Oh{n'}$ work.
  For the cleanup phase of a parallel partitioning step, we conclude from the proof of \compiparassssort's I/O complexity (see \cref{thm:io}) that the local work is in $\Oh{n'/t'}$:
  The proof of the I/O complexity shows that a parallel cleanup phase is bounded by $\Oh{\frac{n'}{t'B}}$ I/Os.
  Also, each element that is accessed in the cleanup phase is moved at most once and no additional work is performed.
  We account a local work of $B$ for each memory block which a thread accesses.
  Thus, we can derive from $\Oh{\frac{n'}{t'B}}$ I/Os a local work of $\Oh{n'/t'}$ for the parallel cleanup phase.
\end{proof}

\begin{lemma}\label{lem:num parallel small tasks}
  At most one parallel task which is processed by a thread is a small task.
\end{lemma}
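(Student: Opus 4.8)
The plan is to show that a \emph{small} parallel task can have no parallel descendant in the recursion, from which the statement follows immediately: if a thread processed two small parallel tasks, one would be a parallel descendant of the other.

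First I would pin down how the parallel tasks seen by a single thread are related. By \cref{lem:par task unique level}, thread~$i$ executes at most one parallel task per recursion level, so any two distinct parallel tasks it processes lie on different levels. By \cref{lem:parallel task covers element}, every parallel task executed by thread~$i$ covers the boundary position $(i+1)n/t-1$. Because the buckets of a task partition its range, the ranges of all tasks form a laminar family; hence two parallel tasks of thread~$i$ that both cover this common position must be nested, with the one on the deeper level contained in the one on the shallower level.

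Next I would invoke the base-case rule of the scheduler: a task with at most $kn_0$ elements has \emph{all} of its buckets declared base cases, so every subtask of such a task is sequential. Thus a small task (at most $kn_0$ elements) has no parallel child. Combining this with \cref{lem:par task is bucket}, which propagates parallelness from a bucket up to its parent, it follows that a small parallel task has no parallel descendant at all. Now suppose for contradiction that thread~$i$ processes two small parallel tasks $T$ and $T'$, with $T'$ on the deeper level, so that $T'$ is a proper descendant of $T$. Walking the recursion path $T=U_0\supset U_1\supset\cdots\supset U_s=T'$, where each $U_m$ is a bucket of $U_{m-1}$, repeated application of \cref{lem:par task is bucket} to the parallel task $U_s=T'$ forces each $U_m$ (in particular the bucket $U_1$ of $T$) to be parallel. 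But $T$ is small, so by the base-case rule $U_1$ is a base case and therefore sequential, contradicting that $U_1$ is parallel.

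The main obstacle is the bookkeeping in the second paragraph: one must justify that the two candidate small parallel tasks are genuinely nested, rather than merely both being small, since the propagation argument only works along an ancestor--descendant path. This step is where \cref{lem:parallel task covers element,lem:par task unique level} and the laminar structure of task ranges do the real work; once nesting is secured, the remainder is the short upward propagation via \cref{lem:par task is bucket} together with the definition of base cases.
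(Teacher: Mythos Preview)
Your proposal is correct and follows essentially the same contradiction argument as the paper: two small parallel tasks of one thread must lie on different levels and be nested (the paper gets nesting by iterating \cref{lem:par task is bucket}, you by \cref{lem:parallel task covers element} plus the laminar structure of task ranges), and then the deeper one contradicts the fact that buckets of a small task are base cases. Your path-walk through $U_0,\dots,U_s$ is slightly more than needed---since base cases are leaves of the recursion, a small task has no grandchildren, so necessarily $s=1$---but the logic is sound and matches the paper's.
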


\begin{proof}
  Assume that a thread processes at least two small parallel tasks $p_1$ and $p_2$.
  According to \cref{lem:par task unique level}, a thread processes at most one of these tasks per recursion level.
  A parallel subtask of thread $i$ is a subtask of a parallel task of thread $i$ and represents a bucket of this task (see \cref{lem:par task is bucket}).
  Thus, $p_1$ and $p_2$ must be on different recursion levels, and $p_1$ processes a subset of elements processed by $p_2$ or vice versa.
  However, this is a contradiction as buckets of small tasks are base cases.
\end{proof}

\begin{lemma}\label{lem:local work small task}
  The local work for all small partitioning tasks is in total $\Oh{n/t\log k}$ excluding the work for sorting the samples.
\end{lemma}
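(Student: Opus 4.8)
The plan is to split the small partitioning tasks into the (at most one) parallel small task and the sequential small tasks, and to bound each contribution separately by $\Oh{n/t\log k}$. For the parallel part I would invoke \cref{lem:num parallel small tasks}: a thread processes at most one small parallel task. Such a task has $\Th{t'n/t}$ elements and $t'$ threads, so by \cref{lem:work single partitioning step} its per-thread local work is $\Oh{\frac{t'n/t}{t'}\log k}=\Oh{n/t\log k}$. Since there is at most one such task per thread, this immediately settles the parallel contribution.

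For the sequential part, the key observation is that the distinct sequential small partitioning tasks processed by a thread cover pairwise disjoint element ranges. A small partitioning task has at most $kn_0$ elements, so by the scheduler's base-case rule (a bucket becomes a base case whenever its parent task has at most $kn_0$ elements) all of its buckets are base cases. Hence a small partitioning task has no partitioning-task descendants and can never be a proper ancestor of another partitioning task; two distinct small partitioning tasks are therefore never in an ancestor--descendant relation and thus cover disjoint parts of the input. By \cref{lem:seq task range}, every element touched by the sequential tasks of thread $i$ lies in $\VarArray[in/t,(i+2)n/t-1]$, a range of only $2n/t\in\Oh{n/t}$ elements. Because the sequential small tasks are disjoint, the sum of their sizes is therefore $\Oh{n/t}$.

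Finally I would apply \cref{lem:work single partitioning step} with $t'=1$ to each sequential small task of size $n'$, giving local work $\Oh{n'\log k}$, and sum over the disjoint tasks so that $\sum n'\log k=\Oh{n/t\log k}$. Adding the parallel and the sequential contributions then yields the claimed total local work of $\Oh{n/t\log k}$, excluding the sample-sorting work (which is exactly what is excluded in \cref{lem:work single partitioning step}).

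I expect the main obstacle to be the disjointness argument for the sequential small tasks. One has to be careful that the size bound is obtained from the \emph{distinct} elements counted by \cref{lem:seq task range} (which bounds a range of width $2n/t$) rather than from a naive summation over recursion levels, which would double-count elements shared between a task and its descendants. The argument that small partitioning tasks genuinely sit at the bottom of the partitioning recursion -- so that none is an ancestor of another and their element ranges do not overlap -- is precisely what makes the per-task bounds of \cref{lem:work single partitioning step} summable without overcounting.
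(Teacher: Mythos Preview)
Your proposal is correct and follows essentially the same approach as the paper: split into the at most one small parallel task (bounded via \cref{lem:num parallel small tasks} and \cref{lem:par task range}) and the sequential small tasks, then use that buckets of small tasks are base cases to conclude disjointness, combine with \cref{lem:seq task range} to bound the total size by $\Oh{n/t}$, and apply \cref{lem:work single partitioning step}. Your write-up is slightly more explicit than the paper's about why disjointness holds (no ancestor--descendant relation among small partitioning tasks), but the structure and the invoked lemmas are identical.
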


\begin{proof}
  In this proof, we neglect the work for sorting the sample.
  \Cref{lem:work single partitioning step} tells us that a partitioning task with $n'$ elements and $t'$ threads has a local work of $\Oh{n'/t'\log k}$.
  Also, parallel tasks with $t'$ threads have $\Th{t'n/t}$ elements (see \cref{lem:par task range}).
  Thus, we have $\Oh{n/t\log k}$ local work for a small parallel task.
  Overall, we have $\Oh{n/t\log k}$ local work for small parallel tasks as each thread processes at most one of these tasks (see \cref{lem:num parallel small tasks}).

  We now consider small sequential partitioning tasks.
  small sequential partitioning tasks that are processed by a single thread cover in total at most $\Oh{n/t}$ different elements (see \cref{lem:seq task range}).
  Each of these elements passes at most one of the small partitioning tasks since buckets of these tasks are base cases.
  Thus, a thread processes small partitioning tasks of size $\Oh{n/t}$ in total.
  As small partitioning tasks with $n'$ elements require $\Oh{n'\log k}$ local work (see \cref{lem:work single partitioning step}), the local work of all small partitioning tasks is $\Oh{n/t\log k}$.
\end{proof}

\begin{lemma}\label{lem:local work one rec level}
  The partitioning tasks of one recursion level require $\Oh{n/t\log k}$ local work excluding the work for sorting the samples.
\end{lemma}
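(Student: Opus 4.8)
The plan is to fix a single recursion level and bound the total local work that an arbitrary thread $i$ performs on that level, since the local work of a recursion level is the maximum of the per-thread work over all threads. On a fixed level, thread $i$ contributes work of exactly two kinds: work on parallel tasks and work on sequential partitioning tasks. I would handle these two contributions separately, show that each is $\Oh{n/t\log k}$, and conclude that their sum is again $\Oh{n/t\log k}$.

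For the parallel contribution, I would invoke \cref{lem:par task unique level}, which guarantees that thread $i$ works on at most one parallel task per recursion level. By \cref{lem:par task range}, a parallel task with $t'$ threads has $\Th{t'n/t}$ elements, and by \cref{lem:work single partitioning step} such a task costs $\Oh{(t'n/t)/t'\cdot\log k}=\Oh{n/t\log k}$ local work per thread, excluding the work for sorting the samples. Hence thread $i$'s single parallel task on this level already fits the target bound, and the factor $t'$ from the task size cancels against the division by the thread-group size.

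For the sequential contribution, the key observation is that the sequential partitioning tasks that thread $i$ processes on a fixed level are pairwise disjoint subarrays of the input, since tasks on the same recursion level partition disjoint buckets, and by \cref{lem:seq task range} they all lie within the stripe $\VarArray[in/t,(i+2)n/t-1]$ of size $2n/t$. Therefore their sizes sum to $\Oh{n/t}$ on that level. Since a sequential partitioning task of size $n'$ costs $\Oh{n'\log k}$ local work by \cref{lem:work single partitioning step} (taking $t'=1$), summing over these disjoint tasks yields $\Oh{n/t\log k}$. Adding the parallel and sequential contributions gives $\Oh{n/t\log k}$ for thread $i$, and taking the maximum over all threads establishes the lemma.

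The main obstacle I anticipate is the sequential bound: one must argue carefully that, although \cref{lem:seq task range} bounds the elements touched by thread $i$ across \emph{all} recursion levels, the quantity relevant here is the per-level total, which remains $\Oh{n/t}$ precisely because same-level sequential tasks are disjoint and jointly contained in the stripe of size $2n/t$. Everything else is a direct application of the per-task estimate in \cref{lem:work single partitioning step} combined with the structural lemmas about how parallel and sequential tasks are distributed across threads and levels.
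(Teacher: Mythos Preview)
Your proposal is correct and follows essentially the same approach as the paper: split into the parallel contribution (one parallel task per level via \cref{lem:par task unique level}, size $\Th{t'n/t}$ via \cref{lem:par task range}, cost $\Oh{n/t\log k}$ via \cref{lem:work single partitioning step}) and the sequential contribution (total size $\Oh{n/t}$ on a fixed level via \cref{lem:seq task range} and disjointness, hence cost $\Oh{n/t\log k}$). Your explicit remark that same-level sequential tasks are pairwise disjoint subarrays of the $2n/t$-size stripe is exactly the point the paper's proof uses implicitly, so there is no gap.
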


\begin{proof}
  \Cref{lem:work single partitioning step} tells us that a partitioning task with $n'$ elements and $t'$ threads has a local work of $\Oh{n'/t'\log k}$ excluding the work for sorting the samples.
  Parallel tasks with $t'$ threads have $\Th{t'n/t}$ elements (see \cref{lem:par task range}).
  Thus, we have $\Oh{n/t\log k}$ local work on a recursion level for parallel tasks.
  A thread processes sequential partitioning tasks covering a continuous stripe of $\Oh{n/t}$ elements of the input array (see \cref{lem:seq task range}).
  Thus, we also have $\Oh{n/t\log k}$ local work on a recursion level for sequential partitioning tasks.
\end{proof}

\begin{lemma}\label{lem:work all partitioning steps}
  All large partitioning tasks have in total $\Oh{n/t\log n}$ local work with a probability of at least $1-n^{-1}$ excluding the work for sorting the samples.
\end{lemma}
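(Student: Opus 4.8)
The plan is to combine a per-level work bound with a high-probability bound on the number of recursion levels that still contain large tasks. \cref{lem:local work one rec level} already tells us that, excluding sample sorting, the partitioning tasks on any single recursion level cost $\Oh{n/t\log k}$ local work per thread. Hence, if I can show that large tasks (those with at least $kn_0$ elements) can only occur on the first $\Oh{\log_k \frac{n}{n_0}}$ recursion levels with probability at least $1-n^{-1}$, the result follows by multiplication: $\Oh{\log_k \frac{n}{n_0}}\cdot\Oh{n/t\log k}=\Oh{\frac{n}{t}\log\frac{n}{n_0}}\subseteq\Oh{\frac nt\log n}$, where I used the identity $\log_k x\cdot\log k=\log x$ and $n/n_0\le n$.

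The core of the proof is therefore the recursion-depth bound, which I would establish via the same sampling analysis underlying \cref{thm:ips4olevel}, but with the base-case threshold $kn_0$ in place of $M$ and with the oversampling constant tuned to drive the failure probability down to $n^{-1}$. First I would fix an element $e$ and follow the chain of tasks $T_0\supset T_1\supset\cdots$ that contain it. For a large task the adaptive degree satisfies $k'=\Th{k}$, so by \cref{as:2} the oversampling is $\alpha=\Th{\log k}$; a Chernoff bound on how many of the $\alpha k'$ sampled elements fall into any splitter interval then shows that the bucket containing $e$ shrinks by a factor of at least, say, $k/2$ except with some probability $p\le k^{-\Omega(1)}$ that can be made a small constant by choosing the oversampling constant large enough. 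Because each level draws a fresh, independent sample, these ``good-shrink'' events are independent across the levels of $e$'s chain, so the depth of $e$ exceeds $c\log_k\frac{n}{n_0}$ only if an $\Omega(c)$-fraction of the first $c\log_k\frac{n}{n_0}$ levels are bad. A binomial tail bound makes this probability at most $2^{-\Omega(c\log_k(n/n_0))}$, which is at most $n^{-2}$ for a suitable constant $c$ in the relevant regime $n_0=\Oh{\sqrt n}$ (otherwise the depth is trivially $\Oh{\log n}$ and the claim is immediate).

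Finally I would take a union bound over the at most $n$ elements (equivalently, over the $\Oh{n/n_0}$ root-to-base-case chains), so that with probability at least $1-n\cdot n^{-2}=1-n^{-1}$ every chain reaches a small task within $\Oh{\log_k\frac{n}{n_0}}$ levels; equivalently, no large task survives beyond that depth. Conditioned on this event, summing the per-level bound of \cref{lem:local work one rec level} over the $\Oh{\log_k\frac{n}{n_0}}$ relevant levels yields the claimed $\Oh{\frac nt\log n}$ total local work for all large partitioning tasks (the per-task cost itself coming from \cref{lem:work single partitioning step}).

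The main obstacle will be the recursion-depth bound, and specifically pushing the success probability from the ``$1-M/n$''-type guarantee of \cref{thm:ips4olevel} up to $1-n^{-1}$. The delicate points are (i) verifying that the per-level shrink probability $p$ is a small constant under $\alpha=\Th{\log k}$ even when $k$ is itself constant, so that the binomial tail is genuinely exponential in the depth, and (ii) handling the dependence structure correctly: the chains share prefixes, but since I bound each \emph{fixed} element's depth using only the (independent) samples along its own path before union-bounding over elements, the shared randomness causes no trouble. Everything else -- the per-level work accounting and the identity $\log_k x\cdot\log k=\log x$ -- is routine given the earlier lemmas.
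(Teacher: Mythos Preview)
Your approach is correct but works harder than necessary. You and the paper agree on the skeleton: multiply the per-level bound of \cref{lem:local work one rec level} by a high-probability bound on the number of levels that can still contain large tasks. Where you diverge is in how you obtain the $1-n^{-1}$ probability.

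You aim for the ``natural'' depth $\Oh{\log_k(n/n_0)}$ and propose to re-run the sampling analysis behind \cref{thm:ips4olevel} with sharper constants so that the per-element failure probability drops to $n^{-2}$, then union-bound over all elements. This is doable (and you correctly flag the two delicate points), but it essentially means re-proving a strengthened version of \cref{thm:ips4olevel} from scratch.

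The paper sidesteps all of this with a one-line observation: the target work bound is $\Oh{\frac{n}{t}\log n}$, so you can afford up to $\Oh{\log_k n}$ levels, not just $\Oh{\log_k(n/(kn_0))}$. Applying \cref{thm:ips4olevel} with $M=1$ (rather than $M=kn_0$) directly gives depth $\Oh{\log_k n}$ with probability at least $1-1/n$; after that many levels every bucket has at most one element, so in particular no large task survives. No new Chernoff or binomial-tail argument is needed. What your approach would buy is a tighter depth bound (and hence a $\log(n/n_0)$ rather than $\log n$ in the work), but since the lemma only claims $\Oh{\frac{n}{t}\log n}$, that gain is not used.
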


\begin{proof}
  In this proof, we neglect the work for sorting the samples of a partitioning task.
  Large partitioning tasks create between $\sqrt{k}$ and $k$ buckets (see \cref{sec:sampling}).
  According to \cref{thm:ips4olevel}, \compiparassssort\ performs at most $\Thbp{\log_{\sqrt{k}}\frac{n}{kn_0}} = \Thbp{\log_k\frac{n}{kn_0}}$ recursion levels with a probability of at least $1-kn_0/n$ until all partitioning tasks have less than $kn_0$ elements.
  However, this probability is not tight enough.
  Instead, we can perform up to $\Oh{\log_k n}$ recursion levels and still have $\Oh{n/t\log n}$ local work as each recursion level requires $\Oh{n/t\log k}$ local work (see \ref{lem:local work one rec level}).
  In this case, \cref{thm:ips4olevel} tells us that all partitioning tasks have at most one element with a probability of $1-n^{-1}$.
  This probability also holds if we stop partitioning buckets with less than $kn_0$ elements.
\end{proof}

\begin{lemma}\label{lem:work base cases}
  The local work of all base case tasks is in total $\Oh{n/t\log n}$ with a probability of at least $1-n^{-1}$.
\end{lemma}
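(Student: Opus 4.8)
The plan is to reduce the base-case work entirely to randomized quicksort and then to charge it to individual elements. By the assumption of \cref{thm:local work}, every base case of size $m$ is sorted by quicksort (two-way partitioning with a constant-size sample down to single elements), so its local work is proportional to the total number of element/pivot comparisons. This number equals the sum, over the elements of the base case, of the depth at which each element is finally isolated in the quicksort recursion tree, plus $\Oh{1}$ partitioning overhead per element and level. Hence it suffices to bound, for every element, the depth of its leaf in the quicksort tree of the base case that contains it, and then to add these per-element costs up thread by thread.

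The key step is a depth bound strong enough to survive a union bound over all $n$ elements. For a fixed element inside a quicksort instance of size $m\le n$, each recursion level shrinks the subproblem containing it to at most $3/4$ of its current size with probability at least $1/2$ (the pivot falls in the middle half), independently across levels. Thus the element can survive more than $D$ levels only if fewer than $\log_{4/3} m \le \log_{4/3} n$ of the first $D$ levels were ``shrinking''. Taking $D = c\log n$ and applying a Chernoff bound to this sum of independent indicators shows that the element reaches depth greater than $c\log n$ with probability at most $n^{-\Omega(c)}$. A union bound over all $n$ elements then yields that, with probability at least $1-n^{-1}$ for a sufficiently large constant $c$, every element is isolated within depth $\Oh{\log n}$ in its base-case quicksort. (The quicksort randomness is fresh and independent of the sampling randomness of \compiparassssort, so this bound holds conditionally on any base-case partition and hence unconditionally.)

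Conditioned on this event, each element contributes $\Oh{\log n}$ comparisons to the base-case work. It remains to sum these contributions per thread. Base cases are sequential tasks, and by \cref{lem:seq task range} the sequential tasks of thread $i$ cover only $\Oh{n/t}$ distinct elements of $\VarArray$; since the base cases handled by one thread are pairwise disjoint, thread $i$ runs quicksort on a total of $\Oh{n/t}$ elements. Multiplying by the $\Oh{\log n}$ per-element cost gives $\Oh{n/t\log n}$ local work for all base-case tasks of a thread, which is the claim. The main obstacle is precisely this probabilistic step: the textbook per-instance bound gives depth $\Oh{\log m}$ with a failure probability only polynomial in $m$, which is far too weak for the many tiny base cases (for which $m$ may be constant); scaling the depth threshold to $c\log n$ and unioning over the $n$ elements globally, rather than over base cases of varying and possibly constant size, is what makes an overall failure probability of $n^{-1}$ attainable.
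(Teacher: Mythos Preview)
Your proof is correct, but it takes a genuinely different route from the paper's. The paper argues by a monotonicity/coupling reduction: it first cites the standard fact that quicksort on $\Oh{n}$ elements has recursion depth $\Oh{\log n}$ with probability at least $1-n^{-1}$, then observes that quicksort on $\Oh{n/t}\le n$ elements enjoys the same depth bound with at least the same probability, hence does $\Oh{n/t\log n}$ work. Finally it notes that sorting a thread's collection of already-prepartitioned base cases (totalling $\Oh{n/t}$ elements by \cref{lem:seq task range}) is stochastically no harder than a single fresh quicksort on those $\Oh{n/t}$ elements, so the same bound transfers.

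Your approach instead works element by element: you bound the quicksort depth of \emph{every} element by $c\log n$ via a Chernoff argument on the number of ``good'' pivot levels, union-bound over all $n$ elements, and then sum the $\Oh{\log n}$ per-element cost over the $\Oh{n/t}$ elements a thread handles. This is more self-contained (you do not need to invoke the coupling that prepartitioned input is easier, nor an external reference for the depth bound) and it makes explicit the point you correctly flag as the main obstacle, namely that a per-instance $\Oh{\log m}$ bound with failure probability polynomial in $m$ would be useless for the many constant-size base cases. The paper's argument sidesteps this same obstacle by collapsing all base cases of a thread into one virtual quicksort instance of size $\Oh{n/t}$; your argument sidesteps it by scaling the depth threshold to $\log n$ globally. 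Both work; yours is arguably more rigorous as written, while the paper's is more concise but leans on an intuitive monotonicity step.
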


\begin{proof}
  Sorting $\Oh{n}$ elements with the base case algorithm quicksort does not exceed $\Oh{\log n}$ recursion levels with probability $1-n^{-1}$~\cite{jaja2000perspective}.
  Thus, an execution of quicksort with $\Oh{n/t}$ elements requires $\Oh{n/t\log n}$ local work with a probability of at least $1-n^{-1}$ as it would also not exceed $\Oh{\log n}$ recursion levels with at least the same probability.
  Sorting all base case of a thread is asymptotically at least as efficient as sorting $\Oh{n/t}$ elements at once:
  The base cases have in total at most $\Oh{n/t}$ elements (see \cref{lem:seq task range}) but the input is already prepartitioned.
\end{proof}

\begin{lemma}\label{lem:local work 2n0}
  The local work for sorting the samples of all small partitioning tasks is $\Oh{n/t\log n}$ in total with a probability of at least $1-n^{-1}$
\end{lemma}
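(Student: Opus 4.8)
The plan is to reduce the statement to two facts: first, that all samples drawn by the small partitioning tasks handled by a single thread contain only $\Oh{n/t}$ elements in total, and second, that sorting $\Oh{n/t}$ elements with quicksort costs $\Oh{n/t\log n}$ local work with high probability. The first fact exploits the adaptive choice of $k$ for small tasks together with \cref{as:6}; the second is a quicksort bound in the spirit of \cref{lem:work base cases}, but applied to a whole family of independent sorts rather than a single prepartitioned one.

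For the sample volume I would argue as follows. A small task has at most $kn_0$ elements and creates all its buckets as base cases, so no small task is an ancestor of another. Consider a small task of size $n'$ (with $n_0<n'\le kn_0$, hence in the regime where the adaptive rule sets $k'=2^{\lceil\log_2(n'/n_0)\rceil}=\Th{n'/n_0}$ with $k'\le 2k$). With oversampling factor $\alpha=\Th{\log k'}$ (\cref{as:2}) its sample has size $s=k'\alpha=\Oh{(n'/n_0)\log k}$, and since $n_0\in\Om{\log k}$ (\cref{as:6}) this simplifies to $s=\Oh{n'}$. The small \emph{sequential} tasks of a thread are pairwise disjoint and lie inside its stripe, so by \cref{lem:seq task range} their sizes, and therefore their samples, sum to $\Oh{n/t}$. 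By \cref{lem:num parallel small tasks} a thread is involved in at most one small \emph{parallel} task; its bucket count is still capped at $k'=\Th{n'/n_0}\le\Th{k}$, so its sample has size $\Oh{k\log k}$, and since $k\log k\in\Oh{M}\subseteq\Oh{n/t}$ (\cref{as:9,as:3}) this is again $\Oh{n/t}$. Thus each thread sorts $N=\Oh{n/t}$ sample elements in total.

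It remains to bound the quicksort work on these $N$ elements, which are spread over many independent samples. I would bound the total number of comparisons by (maximal recursion depth)$\,\times N$ and show that, with probability at least $1-n^{-1}$, \emph{every} element in \emph{every} sample has recursion depth $\Oh{\log n}$. The key is to fix the depth threshold at $c\log n$ in terms of the global $n$ rather than the individual sample size: a sample of size at most $c\log n$ has depth at most $c\log n$ trivially, while for a larger sample a Chernoff bound on the number of balanced pivot splits among $c\log n$ levels gives a failure probability of $n^{-\Om{c}}$, uniformly in the sample size (because the number of balanced splits needed, $\Oh{\log n}$, lies far below the mean $\tfrac12 c\log n$). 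A union bound over the $N\le n$ sample elements then yields overall failure probability $n^{-1}$ for a suitable constant $c$, so the total work is $\Oh{\log n\cdot N}=\Oh{n/t\log n}$, as claimed.

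The hard part will be exactly this uniform high-probability bound across samples of vastly different sizes: the textbook per-instance guarantee that quicksort on $m$ elements runs in $\Oh{m\log m}$ time fails with probability only $m^{-\Om{1}}$, which is useless for the many tiny samples and cannot be rescued by a naive union bound. Phrasing the concentration in terms of $n$ (not $m$) and combining the trivial depth bound for small samples with a Chernoff bound for large ones is what makes the union bound go through. A secondary point to verify carefully is that all small tasks indeed fall into the regime $n_0<n'\le kn_0$ with $k'=\Th{n'/n_0}$, so that the cancellation $s=\Oh{n'}$ through \cref{as:6} is legitimate.
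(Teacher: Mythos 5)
Your first half — bounding the total sample volume per thread by $\Oh{n/t}$ — is essentially identical to the paper's: the same cancellation $k'\alpha=\Oh{(n'/n_0)\log(n'/n_0)}=\Oh{n'}$ via \cref{as:6}, the same appeal to \cref{lem:seq task range} for the sequential small tasks, and the same use of \cref{lem:num parallel small tasks} to add one parallel small task whose sample is again $\Oh{n/t}$ (you route this through \cref{as:9,as:3} where the paper cites \cref{as:3,as:6}; both work). Where you genuinely diverge is the probabilistic step. The paper simply refers to the proof of \cref{lem:work base cases}, whose argument is a coupling: sorting the many disjoint samples is ``at least as efficient'' as running a single quicksort on their union of $\Oh{n/t}$ elements, because the split into samples can be regarded as a prepartitioning, and one quicksort run on $\Oh{n/t}$ elements stays within $\Oh{\log n}$ levels with probability $1-n^{-1}$. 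You instead give a direct per-element recursion-depth bound: calibrate the depth threshold to $c\log n$ in the global $n$, dispose of samples of size at most $c\log n$ trivially, apply a Chernoff bound on good splits for the rest, and union-bound over at most $n$ sample elements. Your route is more self-contained and makes explicit exactly the point the paper's one-line delegation glosses over — that the per-instance $m^{-\Om{1}}$ quicksort guarantee is useless for tiny samples — at the cost of a somewhat longer argument; the paper's coupling is shorter but relies on the reader accepting that independent quicksorts on a prepartitioned family are dominated by one quicksort on the union. Both are valid proofs of the lemma.
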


\begin{proof}
  Small partitioning tasks with $n'$ elements have $n'/n_0$ buckets and a sample of size $\Ohbp{n'/n_0 \log \frac{n'}{n_0}}$ (see \cref{as:2} for the oversampling factor).
  The size of a sample is in particular bounded by $\Oh{n'}$.
  For this, we use $n_0 \in \Om{\log k}$ (\cref{as:6}) and $k \geq n'/n_0$ from which follows that $n_0 \in \Ombp{\log \frac{n'}{n_0}}$ holds.
  Furthermore, small sequential partitioning tasks which are processed by a single thread cover in total at most $\Oh{n/t}$ different elements (see \cref{lem:seq task range}).
  Thus, the total size of all samples from small sequential partitioning tasks sorted by a thread is limited to $\Oh{n/t}$.
  We count one additional sample from a potential small parallel task (see \cref{lem:num parallel small tasks}).
  We can also limit its size to $\Oh{n/t}$ using \cref{as:3,as:6}.
  We can prove that the work for sorting samples of a total size of $\Oh{n/t}$ is in $\Oh{n/t\log n}$ with a probability of at least $1-n^{-1}$.
  We refer to the proof of \cref{lem:work base cases} for details.
\end{proof}

\begin{lemma}\label{lem:work sampling}
  The local work for sorting the samples of all large partitioning tasks is $\Oh{n/t\log n}$ in total with a probability of at least $1-n^{-1}$
\end{lemma}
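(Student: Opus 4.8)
The plan is to mirror the structure of \cref{lem:work all partitioning steps}: bound the sample-sorting work contributed on a single recursion level, then multiply by the number of levels, which is $\Oh{\log_k n}$ with probability at least $1-n^{-1}$ by \cref{thm:ips4olevel}. The starting point is the cost of sorting the sample of one large task. A large task with $n'$ elements uses $k'$ buckets with $\sqrt{k}\leq k'\leq k$ and, by \cref{as:2}, draws a sample of size $\alpha k'=\Th{k'\log k'}$, so sorting it with quicksort costs $\Oh{k'\log^2 k'}$ (using $\log(\alpha k')=\Oh{\log k}$). I would first show that the adaptive choice of $k'$ makes this $\Oh{n'}$, which is the quantitative role of the adaptive $k$ announced earlier: for $n'\geq k^2 n_0$ we have $k'=k$ and $n'\geq k^2 n_0\geq k^2\log k\geq k\log^2 k$ by $n_0\in\Om{\log k}$ (\cref{as:6}); for $kn_0\leq n'<k^2 n_0$ we have $k'=\Th{\sqrt{n'/n_0}}$, whence $k'\log^2 k'\leq\sqrt{n'/n_0}\,\log^2 k\leq n'$, again because $\log^2 k\leq\sqrt{n' n_0}$ by \cref{as:6}.

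Next I would aggregate over one recursion level, handling sequential and parallel tasks separately. For the sequential tasks of a fixed thread, \cref{lem:seq task range} bounds their total size on each level by $\Oh{n/t}$, so the per-task bound $\Oh{n'}$ makes their sample-sorting work sum to $\Oh{n/t}$ per level and thus $\Oh{\frac{n}{t}\log_k n}\subseteq\Oh{\frac{n}{t}\log n}$ over all levels. For parallel tasks I would deliberately \emph{not} use the $\Oh{n'}$ bound, since a single thread sorting the sample of a task with $n'=\Th{t'n/t}$ elements (\cref{lem:par task range}) could cost up to $\Oh{n}$; instead I use the size-independent bound $\Oh{k'\log^2 k'}=\Oh{k\log^2 k}$. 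Because a thread runs at most one parallel task per level (\cref{lem:par task unique level}), the parallel contribution is $\Oh{k\log^2 k\cdot\log_k n}=\Oh{k\log k\cdot\log n}$, which is $\Oh{\frac{n}{t}\log n}$ since $n/t\in\Om{M}$ and $M\in\Om{Bk\log k}$ (\cref{as:3,as:9}) give $n/t\in\Om{k\log k}$. Adding the two contributions yields the claimed $\Oh{\frac{n}{t}\log n}$.

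The probability then assembles from two sources. The bound of $\Oh{\log_k n}$ levels (after which no large task remains) holds with probability at least $1-n^{-1}$ by \cref{thm:ips4olevel}, exactly as in \cref{lem:work all partitioning steps}. The remaining randomness lives in the quicksorts that sort the individual samples, which I would control along the lines referenced in \cref{lem:work base cases}: once a quicksort on $s$ elements has recursion depth $\Oh{\log s}$, its work is $\Oh{s\log s}$. I would then combine the failure events by a union bound and, in the exponentially unlikely event that some bound is violated, fall back on the restart mechanism already invoked in \cref{thm:space stack}, so that the overall guarantee remains $1-n^{-1}$ after adjusting constants.

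I expect this last, probabilistic step to be the \emph{main obstacle} rather than the counting. The per-task estimate $\Oh{k'\log^2 k'}$ crucially needs each sample quicksort to reach depth $\Oh{\log k'}$ — so that every $\log s_i=\Oh{\log k}$ and the totals telescope to $\log n$ instead of $\log^2 n$ — yet a single quicksort on a small sample only attains this with probability $1-s_i^{-\Om{1}}$, which is far weaker than $1-n^{-1}$ and cannot be naively union-bounded across the up to $\Oh{\frac{n\log_k n}{kn_0}}$ samples. A clean resolution therefore requires either a concentration argument over the many essentially independent sample sorts to bound their aggregate deviation, or — more in keeping with the paper's style — bounding the \emph{expected} total sample-sorting work by $\Oh{\frac{n}{t}\log n}$ conditioned on the recursion-depth event and boosting to high probability via the restart.
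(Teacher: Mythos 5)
You set up the deterministic accounting correctly (and in a slightly different way than the paper, aggregating per recursion level via \cref{lem:seq task range} and \cref{lem:par task unique level} instead of bounding the number $r$ of large tasks per thread by charging against their non-sampling work), but you stop exactly where the proof actually has to be done. The statement you need is not ``each sample quicksort costs $\Oh{k'\log^2 k'}$ if it behaves well''; it is that the \emph{total} sample-sorting work is $\Oh{n/t\log n}$ with probability $1-n^{-1}$, and as you yourself observe, a per-sample guarantee of $1-s_i^{-\Om{1}}$ with $s_i=\Oh{k\log k}$ cannot be union-bounded over the many samples. Naming two candidate strategies (``a concentration argument'' or ``bound the expectation and boost via the restart'') without executing either leaves the central claim unproven; in particular, bounding the expectation and appealing to the whole-algorithm restart of \cref{thm:space stack} would only give a constant success probability via Markov, not $1-n^{-1}$.

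The paper closes this gap with a concrete construction you would need to reproduce. First it bounds the number of sample sorts per thread by $r=l\frac{n\log n}{tk\log^2 k}$, by observing that each large task has $\Om{t'k\log k}$ elements and hence $\Om{k\log^2 k}$ non-sampling local work (\cref{lem:work single partitioning step}), while the total non-sampling work is $\Oh{n/t\log n}$ (\cref{lem:work all partitioning steps}). It then caps each sample quicksort at $\Oh{k\log^2 k}$ work and \emph{restarts that individual quicksort} on failure (a per-sample restart, not the global restart you cite), so that the sequence of attempts becomes Bernoulli trials with per-trial failure probability $p\le k^{-3}\log^{-3}k$. A binomial tail estimate (essentially $\binom{xr}{j}\le(exr/j)^j$ plus a geometric series) shows that $xr$ attempts contain $r$ successes except with probability $\Oh{n^{-(x-1)/2}}$, and the $xr$ capped attempts are paid for by the budget $\Oh{rk\log^2 k}=\Oh{n/t\log n}$. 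This is precisely the ``concentration over many independent sample sorts'' you gesture at; without carrying out this (or an equivalent) calculation, the lemma's probability bound is not established.
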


\begin{proof}
  A sequential large partitioning task with $\Om{kn_0}$ elements has $\Om{k\log k}$ elements (see \cref{as:6}) and a parallel large partitioning task has $\Om{t'n/t}$ elements (see \cref{lem:par task range}) with $n/t\in \Om{k\log k}$ which is a result from \cref{as:3,as:9}.
  Thus, a large partitioning task with $t'$ threads has $\Om{t'k\log k}$ elements.
  From \cref{lem:work single partitioning step} follows that a large partitioning task excluding the work for sorting the samples has $\Om{k\log^2 k}$ local work.

  Each thread invokes at most $r=l\frac{n\log n}{tk\log^2 k}$ large partitioning tasks for a constant $l$:
  These tasks require $\Om{k\log^2 k}$ local work each and all partitioning tasks performed by a single thread require $\Oh{n/t\log n}$ local work in total (see ~\cref{lem:work all partitioning steps}) -- excluding the work for sorting the samples.
  Thus, when we execute large partitioning tasks, each thread performs at most $r$ sample sorting routines -- one for each task.

  For the local work analysis, we consider a modified sample sorting algorithm.
  Instead of using quicksort, we use an adaption that restarts quicksort when it exceeds $\Oh{k\log^2k}$ local work until the sample is finally sorted.
  The bounds for the local work which we obtain from this variant also hold when \compiparassssort executes quicksort until success instead of restarting the algorithm:
  A restart means to neglect the prepartitioned buckets which makes the problem unnecessarily difficult.

  For the sample sorting routines of large partitioning tasks, each thread can spend $\Oh{rk\log^2 k}$ local work in total.
  As we restart quicksort after $\Oh{k\log^2k}$ local work, we can amortize even $xr$ (re)starts of quicksort for any constant $x$.
  We show that $xr$ (re)starts are sufficient to successfully sort $r$ samples with a probability of at least $1-n^{-1}$.

  We observe that one execution of quicksort unsuccessfully sorts a sample with a probability of at most $p=k^{-3}\log_2^{-3}k$ as the size of the samples is bounded by $\Oh{k\log k}$.
  For this approximation, we use that sorting $n$ elements with quicksort takes $\Oh{n\log n}$ work with high probability~\cite{jaja2000perspective}.
  Each execution of quicksort is a Bernoulli trial as we have exactly two possible outcomes, ``successful sorting in time'' and ``unsuccessful sorting in time'', and the probability of failure is bounded by $p$ each time.
  When we consider all quicksort invocations of \compiparassssort, we need $r$ successes.
  We define a binomial experiment which repeatedly invokes quicksort on the sample of the first large partitioning task until success and then continues with the second large partitioning task, until the sample of each partitioning step of a thread is sorted.
  Asymptotically, we can spend $xr$ (re)starts of quicksort for any constant $x\geq 1$ such that the binomial experiment does not exceed $\Oh{n/t\log n}$ local work.

  Let the random variable $X$ be the number of unsuccessful sample sorting executions and assume that $x\geq 2$.
    Then, the probability $I$
  \begin{equation}
    \begin{aligned}
      I &= \probability{X > (x-1)r}\\
      &\leq \sum_{j>(x-1)r} \binom{xr}{j}p^j\left(1-p\right)^{xr - j}
      \leq \sum_{j>(x-1)r} \left(\frac{xre}{j}\right)^jp^j\\
      &\leq \sum_{j>(x-1)r} \left(\frac{xre}{(x-1)r}\right)^jp^j
      = \sum_{j>(x-1)r} \left(\frac{pex}{x-1}\right)^j\\
      &= \frac{\left(\frac{pex}{x-1}\right)^{(x-1)r + 1}}{1 - \frac{pex}{x-1}}\\
      &\leq \left(\frac{pex}{x-1 - pex}\right)\left(\frac{pex}{x-1}\right)^{\frac{(x-1)lk\log k\log n}{k\log^2 k}}\\
      &\leq \left(\frac{pex}{x-1 - pex}\right)\left(n\right)^{\frac{(x-1)l\log \left(\frac{pex}{x-1}\right)}{\log k}}\\
      &\leq 2.13n^{-\frac{1}{2}\left(x-1\right)}
    \end{aligned}
  \end{equation}
  defines an upper bound of the probability that $xr$ (re)starts of the the sample sorting algorithm execute less than $r$ successful runs.
  The second ``$\leq$'' uses $\binom{n}{k} \leq \left(en/k\right)^k$, the third ``$=$'' uses $\sum_{k=n}^\infty r^k= \frac{r^n}{1-r}$, derived from the geometric series, the second ``$\leq$'' and the third ``$=$'' use $\frac{pex}{x-1} < 1$, and the last ``$\leq$'' uses $\frac{\log \left(\frac{pex}{x-1}\right)}{\log k}<-1/2$ and $\frac{pex}{x-1 - pex}<2.13$.
\end{proof}

\begin{proof}[Proof of~\cref{thm:local work}]
  According to \cref{lem:local work small task}, the small partitioning tasks excluding the work for sorting the samples require $\Oh{n/t\log k}$ local work.
  For large partitioning tasks, we have $\Oh{n/t\log n}$ local work with a probability of at least $1-n^{-1}$ (see \cref{lem:work all partitioning steps}).
  The same holds for the base cases (see \cref{lem:work base cases}).
  \Cref{lem:local work 2n0,lem:work sampling} bound the local work for sorting the samples of small and large partitioning tasks to $\Oh{n/t\log n}$, each with a probability of at least $1-n^{-1}$.
  This sums up to a total local work of $\Oh{n/t\log n}$ with a probability of at least  $1-4/n$.
\end{proof}

\section{Implementation Details}\label{s:details}

\compiparassssort has several parameters that can be used for tuning and adaptation.
We performed our experiments using (up to) $\VarBucketCount=256$ buckets,
an oversampling factor $\OversamplingFactor=0.2 \log n$,
a base case size $\BaseCaseSize=16$ elements,
and a block size of $\BlockSize=\max(1, 2^{\lfloor11 - \log_2 D\rfloor})$ elements, where $D$ is the size of an element in bytes (i.e., about 2\;KiB).
In the sequential case, we avoid the use of atomic operations on pointers and we use the recursion stack to store the tasks.
On the last level, we perform the base case sorting immediately after the bucket
has been completely filled in the cleanup phase, before processing the other buckets.
This is more cache-friendly, as it eliminates the need for another sweep over the data.
Furthermore, we use insertion sort as the base case sorting algorithm.

\compiparassssort (\compissssort) detects sorted inputs.
If these inputs are detected, our algorithm reverses the input in the case that the input was sorted in decreasing order, and returns afterward.
Note that such heuristics for detecting ``easy'' inputs are quite common \cite{obeya2019theo,reinders2007intel}.

For parallelization, we support OpenMP or \texttt{std::thread} transparently.
If the application is compiled with OpenMP support, \compiparassssort employs the existing OpenMP threads.
Otherwise, \compiparassssort uses C++ threads and determines $t$ by invoking the function \texttt{std::thread::hardware\_concurrency}.
Additionally, the application can use its own custom threads to execute \compiparassssort.
For that, the application creates a thread pool object provided by \compiparassssort, adds its threads to the thread pool, and passes the thread pool to \compiparassssort.

We store each read pointer~$\readblock[i]$ and its corresponding write
pointer~$\readblock[i]$ in a single $128$-bit word which we read and
modify atomically.  We use $128$-bit atomic compare-and-swap
operations from the GNU Atomic library \texttt{libatomic} if the CPU
supports these operations.  Otherwise, we guard the pointer pair with
a mutex.  We did not measure a difference in running time between these
two approaches except for some very special corner cases.
We align the thread-local data to 4\;KiB which is typically the memory page size in systems.
The alignment avoids false sharing and simplifies the migration of memory pages if a thread is moved to a different NUMA node.

We decided to implement our own version of the (non-in-place) algorithm \compssssort from Sanders and Winkel~\cite{sanders2004super}.
This has two reasons.
First, the initial implementation is only of explorative nature, e.g., the implementation does not handle duplicate keys, and, the implementation is highly tuned for outdated hardware architectures.
Second, the reimplementation \compssssschneider~\cite{Lorenz2016ssss}
seemed to be unreasonably slow.
We use \compiparassssort as an algorithmic framework to implement \compmyparassssaxtmann, our version of \compssssort.
For \compmyparassssaxtmann, we had to implement the three main parts of \compssssort: (1) the partitioning step, (2) the decision tree, and (3), the base case algorithm and additional parameters of the algorithm, e.g., for the maximum base case size, the number of buckets, and the oversampling factor.
We replace the partitioning step of \compiparassssort with the one described by Sanders and Winkel.
For the element classification, we reuse the branchless decision tree of \compiparassssort.
We also reuse the base case algorithm and the parameters from \compiparassssort which seem to work very well for \compmyparassssaxtmann.
As we use \compiparassssort as an algorithmic framework, we can execute \compmyparassssaxtmann in parallel or with only one thread.
If we refer to \compmyparassssaxtmann in its sequential form, we use the term \compmyssssaxtmann.

We also use \compiparassssort as an algorithmic framework to implement \emph{In-place Parallel Super Scalar Radix Sort} (\compiparassrsort).
For \compiparassrsort, we replaced the branchless decision tree of \compiparassssort with a simple radix extractor function that accepts unsigned integer keys.
For tasks with less than $2^{12}$ elements, we use \radixsska as a base case sorting algorithm.
For small inputs ($n \leq 2^7$), \radixsska then falls back to quicksort which again uses insertion sort for $n\leq 2^5$.
If we refer to \compiparassrsort in its sequential form, we use the term \compissrsort.
\compiparassrsort only sorts data types with unsigned integer keys. The author of \radixsska~\cite{skarupke2016github,skarupke2016skasort} demonstrates
that a radix sorter can be extended to sort inputs with
floating-point keys and even compositions of primitive data types.  We
note that \compissrsort can be extended to sort these data types as
well.

Our algorithms \compiparassssort, \compiparassrsort , and  \compmyparassssaxtmann are written in C++ and the implementations can be found on the official website \url{https://github.com/ips4o}.
The website also contains the benchmark suite used for this publication and a description of how the experiments can be reproduced.

\section{Experimental Results}\label{s:experiments}

In this section, we present results from ten data distributions,
generated for four different data types obtained on four different
machines with one, two, and four processors and $21$ different sorting algorithms.  We extensively compare
our in-place parallel sorting algorithms \compiparassssort and \compiparassrsort as well as their
sequential counterparts \compissssort and \compissrsort to various
competitors\footnote{Since several algorithms were implemented by
  third parties, we may cite their publication and implementation
  separately.}:

\begin{itemize}
\item {\bf Parallel in-place comparison-based sorting}
    \begin{itemize}
    \item \comppbalancedsort (OpenMP): Two
      implementations (balanced and unbalanced) from the GCC
      STL~library~\cite{putze2007mcstl} based on quicksort proposed by
      Tsigas~and~Zhang~\cite{TsiZha03}.
    \item \compptbb~\cite{wenzel2019tbb} (TBB): Quicksort from the
      Intel\textregistered~TBB library~\cite{reinders2007intel}.
    \end{itemize}
\item {\bf Parallel non-in-place comparison-based sorting}
    \begin{itemize}
    \item \compppbbs~\cite{shun2012brief} (Cilk): $\sqrt{n}$-way
      samplesort~\cite{blelloch2010low} implemented in the so-called problem based
      benchmark suite.
    \item \comppsort (OpenMP): Stable multiway mergesort from the GCC
      STL~library~\cite{putze2007mcstl}.
    \item \compmyparassssaxtmann~\cite{GithubMichael2020ssss} (OpenMP or
      \texttt{std::thread}): Our parallel and stable implementation of
      \compssssort from \cref{s:details}.
    \item \comppaspas~\cite{synergy2019aspas} (POSIX Threads): Stable
      mergesort which vectorizes the merge function with
      AVX2~\cite{hou2015aspas}. \comppaspas only sorts \texttt{int},
      \texttt{float}, and \texttt{double} inputs and uses the
      comparator function~``$<$''.
    \end{itemize}
\item {\bf Parallel in-place radix sort}
  \begin{itemize}
    \item \radixregion~\cite{obeya2019github} (Cilk): \emph{Most
        Significant Digit} (MSD) radix sort~\cite{obeya2019theo} which
      only sorts keys of unsigned integers. \radixregion skips the
      most significant bits which are zero for all keys.
    \item \imsdradix~\cite{GithubOrestis2020comprehensive} (POSIX
      Threads): MSD radix sort~\cite{polychroniou2014comprehensive}
      from Orestis~and~Ross with blockwise redistribution.  The
      implementation, published by Orestis~and~Ross, however, requires
      $20$\,\% of additional memory in addition to the input array and
      is very explorative.
    \end{itemize}
\item {\bf Parallel non-in-place radix sort}
    \begin{itemize}
    \item \radixppbbr~\cite{shun2012brief} (Cilk): A simple
      implementation of stable MSD radix sort from the so-called problem based
      benchmark suite.
    \item \radixraduls~\cite{refresh2017raduls2}
      (\texttt{std::thread}): MSD radix sort which uses non-temporal
      writes to avoid write allocate misses~\cite{kokot2017even}.
      \radixraduls requires $256$-bit array alignment. Both keys and
      objects have to be aligned at 8-byte boundaries. We partially
      compiled the code with the flag \texttt{-O1} as recommended by
      the authors. The algorithm does not compile with the Clang compiler.
    \end{itemize}
\item {\bf Sequential in-place comparison-based sorting}
    \begin{itemize}
    \item \compblock~\cite{edelkamp2016github}: An implementation of
      BlockQuicksort~\cite{edelkamp2016blockquicksort} provided by the
      authors of the sorting algorithm publication.
    \item \compspdq~\cite{peters2015pdq}: Pattern-Defeating Quicksort
      which integrated the approach of BlockQuicksort in
      2016. \compspdq has similar running times as BlockQuicksort
      using Lomuto's Partitioning~\cite{aumuller2018simple}, published
      in 2018.
    \item \compsyaros~\cite{edelkamp2016yaroslavskiy}: A \texttt{C++}
      port of Yaroslavskiy's Dual-Pivot
      Quicksort~\cite{yaroslavskiy2009dual}.  Yaroslavskiy's
      Dual-Pivot Quicksort is the default sorting routine for
      primitive data types in Oracle's Java runtime library since
      version~7.
    \item \compssort: Introsort from the GCC STL~library.
    \item \compswiki~\cite{mcfadden2014wikisort}: An implementation of
      stable in-place mergesort~\cite{pokson2008patio}.
    \end{itemize}
\item {\bf Sequential non-in-place comparison-based sorting}
    \begin{itemize}
    \item \compstim~\cite{goro2011timsort}: A \texttt{C++} port of
      Timsort~\cite{peters2015timsort}. Timsort is an implementation
      of stable mergesort which takes advantage of presorted sequences
      of the input. \compstim is part of Oracle's Java runtime
      library since version~7 to sort non-primitive data types.
    \item \compssssschneider~\cite{Lorenz2016ssss}: A recent
      implementation of non-in-place
      \compssssort~\cite{sanders2004super} optimized for modern
      hardware.
    \item \compmyssssaxtmann~\cite{GithubMichael2020ssss}: Our
      implementation of \compssssort which we describe in
      \cref{s:details}.
    \end{itemize}
\item {\bf Sequential in-place radix sort}
    \begin{itemize}
    \item \radixsska~\cite{skarupke2016github}: MSD radix
      sort~\cite{skarupke2016skasort} which accepts a key-extractor
      function returning primitive data types or pairs, tuples,
      vector, and arrays containing primitive data types. The latter
      ones are sorted lexicographically.
    \end{itemize}
\item {\bf Sequential non-in-place radix sort}
    \begin{itemize}
    \item \radixipp~\cite{Intel2020Ipp}: Radix sort from the Intel®
      Integrated Performance Primitives library optimized with the
      AVX2 and AVX-512 instruction set.
    \end{itemize}
\item {\bf Sequential non-in-place sorting with machine learning models}
    \begin{itemize}
    \item \radixlearned~\cite{kristo2020caseimpl}: An algorithm~\cite{kristo2020case} for sorting numeric data
      using a learned representation of the cumulative key distribution
      function as a piecewise linear function. This can be viewed as a generalization of
      radix sort.
    \end{itemize}
\end{itemize}

\begin{figure}[!t]
\begin{minipage}[t]{0.48\linewidth}
  \begin{algorithm}[H]
      \caption{\quartet comparison}\label{alg:comp quartet}
    \begin{algorithmic}
      \Function{lessThan}{$l$, $r$}
      \If{$l.a \neq r.a$}
      \State \Return $l.a < r.a$
      \ElsIf{$l.b \neq r.b$}
      \State \Return $l.b < r.b$
      \Else~\Return $l.c < r.c$
      \EndIf
      \EndFunction
    \end{algorithmic}
  \end{algorithm}
\end{minipage}
\hfill
\begin{minipage}[t]{0.48\linewidth}
  \begin{algorithm}[H]
      \caption{\bytes comparison}\label{alg:com bytes}
    \begin{algorithmic}
      \Function{lessThan}{$l$, $r$}
      \For{$i\gets 0$ {\bf to} $9$}
      \If {$l.k[i] \neq r.k[i]$}
      \State \Return $l.k[i] < r.k[i]$
      \EndIf
      \EndFor
      \State \Return \textbf{False};
      \EndFunction
    \end{algorithmic}
  \end{algorithm}
\end{minipage}
\hfill
\end{figure}

We do not compare our algorithm to \radixparadis as its source code is
not publicly available.  However, Omar~et.~al.~\cite{obeya2019theo}
compare \radixregion to the numbers reported in the publication of
\radixparadis and conclude that \radixregion is faster than
\radixparadis.  Additionally, \radixregion and our algorithm have
stronger theoretical guarantees (see~\cref{sec:related}).

Most radix sorters, i.e., \radixlearned, \radixipp,
\imsdradix, \radixraduls, \radixppbbr, and
\radixregion, do not support all data
types used in our experiments.  Only the radix sorter
\radixsska supports all data types as the types used here are either
primitives or compositions of primitives, which are sorted
lexicographically.  All algorithms are written in C++ and compiled
with version 7.5.0 of the GNU compiler collection, using the
optimization flags ``\texttt{-march=native -O3}''. We have not
found a more recent compiler that
supports Cilk threads, needed for \radixregion, \compppbbs, and
\radixppbbr.

We ran benchmarks with $64$-bit floating-point elements, $64$-bit unsigned integers, $32$-bit unsigned integers, and \emph{\pair}, \emph{\quartet}, and \emph{\bytes} data types.
\pair (\quartet) consists of one (three) $64$-bit unsigned integers as key and one $64$-bit unsigned integer of associated information.
\bytes consists of $10$~bytes as key and $90$~bytes of associated information.
The keys of \quartet and \bytes are compared lexicographically.
\Cref{alg:comp quartet,alg:com bytes} show the lexicographical compare function which we used in our benchmarks.
We want to point out that lexicographical comparisons can be implemented in different ways.
We also tested \texttt{std::lexicographical\_compare} for \quartet and \texttt{std::memcmp} for \bytes.
However, it turned out that these compare functions are (much) less efficient for all competitive algorithms.
\radixsska is the only radix sorter that is able to sort keys lexicographically.
For \quartet and \bytes data types, we invoke \radixsska with a key-extractor function that returns the values of the key stored in a \texttt{std::tuple} object.

We ran benchmarks with ten input distributions: Uniformly
distributed~(\emph{\distuniform}), exponentially
distributed~(\emph{\distexpo}), and almost
sorted~(\emph{\distalmostsorted}), proposed by
Shun~et.~al.~\cite{shun2012brief}; \emph{\distduplicatesroot},
\emph{\distduplicatestwice}, and \emph{\distduplicateseight} from
Edelkamp~et.~al.~\cite{edelkamp2016blockquicksort}; and
\emph{\distzipf} (Zipf distributed input), \emph{\distsorted} (sorted
\distuniform input), \emph{\distreversesorted}, and \emph{\distones}
(just zeros).  The input distribution \distexpo generates and hashes
numbers selected uniformly at random from $[2^i, 2^{i+1})$ with
$i\in \mathbb{N} \wedge i \leq \log n$, \distduplicatesroot sets
$\VarArray[i] = i \mod \lfloor\sqrt{n}\rfloor$, \distduplicatestwice
sets $\VarArray[i] = i^2 + n/2 \mod n$, and
\distduplicateseight sets $\VarArray[i] = i^8+n/2 \mod n$.
The input distribution \distzipf generates the integer number
$k\in[1,10^2]$ with probability proportional to $1/k^{0.75}$.
\Cref{fig:distr} illustrates the nontrivial input distributions
\distduplicatesroot, \distzipf, \distexpo, \distduplicatestwice,
\distduplicateseight, and \distalmostsorted.

\input{extern/ips4o-benchmark-suite-plots/benchmark/distributions/dist.tex}

We ran our experiments on the following machines:

\begin{itemize}
\item Machine \emph{\pcamd} with one AMD Ryzen $9$ $3950X$ 16-core processor and $32$~GiB of memory.
\item Machine \emph{\pcintelfour} with one AMD EPYC Rome 7702P 64-core processor and $1024$~GiB of memory.
\item Machine \emph{\pcinteltwo} with two Intel Xeon E5-2683 v4 16-core processors and $512$~GiB of memory.
\item Machine \emph{\pcintellargefour} with four Intel Xeon Gold 6138 20-core processors and $768$~GiB of memory.
\end{itemize}


Each algorithm was executed on all machines with all input
distributions and data types.  The parallel (sequential) algorithms
were executed for all input sizes with $n=2^i, i \in \mathbb{N}^+$,
until the input array exceeds $128$~GiB ($32$~GiB).  
For $n<2^{33}$ ($n<2^{30}$), we perform each parallel (sequential)
measurement $15$ times and for $n\geq 2^{33}$ ($n\geq 2^{30}$), we
perform each measurement twice.  Unless stated otherwise, we report
the average over all runs except the first one\footnote{The first run is excluded because we do not want to overemphasize time effects introduced by memory management, instruction cache warmup, side effects of different benchmark configurations, \ldots}.
We note
that non-in-place sorting algorithms which require an additional array
of $n$ elements will not be able to sort the largest inputs on
\emph{\pcamd}, the machine with $32$~GiB of memory.  We also want to
note that some algorithms did not support all data types because their
interface rejects the key.
In our figures, we emphasize algorithms that are ``not general-purpose''
with red lines, i.e., because they assume integer keys
(\compissrsort, \radixipp, \radixraduls, \radixregion, and
\radixppbbr), make additional assumptions on the data type
(\radixraduls), or at least because they do not accept a comparator
function (\radixsska).
All non-in-place algorithms except
\radixraduls return the sorted output in the input array.  These
algorithms copy the input back into the input array if the algorithm
has executed an even number of recursion levels.  Only \radixraduls
returns the output in a second ``temporary'' array.  To be fair, we
copy the data back into the input array in parallel and include the
time in the measurement.

We tested all parallel algorithms on \distuniform input with and
without hyper-threading.  Hyper-threading did not slow down any
algorithm.  Thus, we give results of all algorithms with
hyper-threading.  Overall, we executed more than $500\,000$
combinations of different algorithms, input distributions, input sizes,
data types, and machines.  We now present an interesting selection of our
measurements and discuss our results.

This chapter is divided as follows.
\Cref{sec:statistical evaluation} introduces and discusses the statistical measurement \emph{average slowdown} which we use to compare aggregated measurements.
We present the results of \compissssort and \compissrsort and their sequential competitors in \cref{sec:sequential comparison}.
In \cref{sec:numa allocation policy}, we discuss the influence of different memory allocation policies on the running time of parallel algorithms.
\Cref{sec:results task scheduler} compares our parallel algorithm \compiparassssort to its implementation presented in the conference version of this article~\cite{axtmann2017confplace}.
We compare the results of \compiparassssort and \compiparassrsort to their parallel competitors in \cref{sec:parallel comparison}.
Finally, \cref{sec:algorithm phases} evaluates the subroutines of \compiparassssort, \compiparassrsort, and their sequential counterparts.

\subsection{Statistical Evaluation}\label{sec:statistical evaluation}
Many methods are available to compare algorithms. In our case, the cross product of machines, input distributions, input sizes, data types, and array types describes the possible inputs of our benchmark.
In this work we consider the result of a benchmark input always averaged over all executions of the input, using the arithmetic mean.
A common approach of presenting benchmark results is to fix all but two variables of the benchmark set and show a plot for these two variables, e.g., plot the running time of different algorithms over the input size in a graph for a specific input distribution, data type, and array type, executed on a specific machine.
Often, an interesting subset of all possible graphs is presented as the benchmark instances have too many parameters.
However, in this case, a lot of data is not presented at all and general propositions require further interpretation and aggregation of the presented, and possibly incomplete, data.
Besides running time graphs and speedup graphs, we use average slowdown factors (\emph{average slowdowns}) and \emph{performance profiles} to present our benchmark results.\\

Let $\mathcal{A}$ be a set of algorithms, let $\mathcal{I}$ be a set of inputs, let $S_A(\mathcal{I})$ be the inputs of $\mathcal{I}$ which algorithm $A$ sorts successfully, and let $r(A,I)$ be the running time of algorithm $A$ for input $I$.
Furthermore, let $r(A,I,T)$ be the running time of an algorithm $A$ for an input $I$ with array type $T$.
Note that $A$ might not sort $I$ successfully i.e., because its interface does not accept the data type or because $A$ does not return sorted input.
In this case, the running time of $A$ is not defined.

To obtain \textbf{average slowdowns}, we first define the \emph{slowdown factor} of an algorithm $A\in\mathcal{A}$ to the algorithms $\mathcal{A}$ for the input $I$

\[
  f_{\mathcal{A},I}(A)=
  \begin{cases}
    r(A,I)/\min(\{r(A',I)\,|\,A'\in\mathcal{A}\}) &\quad\text{$I\in S_A(\mathcal{I})$, i.e., $A$ successfully sorts $I$}\\
    \infty &\quad\text{otherwise.}
  \end{cases}
\]

as the slowdown using algorithm $A$ to sort input $I$ instead of using the fastest algorithm for $I$ from the set of algorithms $\mathcal{A}$.
Then, the \emph{average slowdown of algorithm $A\in\mathcal{A}$ to the algorithms $\mathcal{A}$ for the inputs $\mathcal{I}$}

\[
  s_{\mathcal{A},\mathcal{I}}(A)=\sqrt[|S_A(\mathcal{I})|]{\prod_{I\in S_A(\mathcal{I})}f_{\mathcal{A},I}(A)}
\]

is the geometric mean of the slowdown factors of algorithm $A$ to the algorithms $\mathcal{A}$ for the inputs of $\mathcal{I}$ which $A$ sorts successfully.

Besides the average slowdown of algorithms, we present average slowdowns of an input array type to compare its performance to a set $\mathcal{T}$ of array types.
The slowdown factor of an array $T\in\mathcal{T}$ to the arrays $\mathcal{T}$ for the input $I$ and an algorithm $A$

\[
  f_{\mathcal{T},A,I}(T) =
  \begin{cases}
    r(A,I,T)/\min(\{r(A,I,T')\,|\,T'\in\mathcal{T}\}) &\quad\text{$I\in S_A(\mathcal{I})$, i.e., $A$ successfully sorts $I$}\\
    \infty &\quad\text{otherwise.}
  \end{cases}
\]

is defined as the slowdown of using array type $T$ to sort input $I$ with algorithm $A$ instead of using the best array from the set of array types $\mathcal{T}$.

Then, the \emph{average slowdown of an array $T\in\mathcal{T}$ to the array types $\mathcal{T}$ for the inputs $\mathcal{I}$ and the algorithm $A$}

\[
  s_{A,\mathcal{T},\mathcal{I}}(T)=\sqrt[|S_A(\mathcal{I})|]{\prod_{I\in S_A(\mathcal{I})}f_{\mathcal{T},A,I}(T)}
\]

is the geometric mean of the slowdown factors of $T$ to the arrays $\mathcal{T}$ for the inputs $\mathcal{I}$ and algorithm $A$ which $A$ sorts successfully.

Average slowdown factors are heavily used by Timo Bingmann~\cite{Bingmann2018diss} to compare parallel string sorting algorithms.
We want to note that the average slowdown could also be defined as the arithmetic mean of the slowdown factors, instead of using the geometric mean.
In this case, the average slowdown would have a very strong meaning:
The average slowdown of an algorithm $A$ over a set of inputs is the expected average slowdown of A when an input of the benchmark set is picked at random to the fastest algorithm for this particular input.
However, Timo Bingmann used in his work the geometric mean for the average slowdowns to ``emphasize small relative differences of the fastest algorithms''.
Additionally, the geometric mean is more robust against outliers and skewed measurements than the arithmetic mean~\cite[p.~229]{mcgeoch2012guide}.
Furthermore, the arithmetic mean of ratios is ``meaningless'' in the general case.
For example, Fleming and Wallance~\cite{fleming1986not} state that the arithmetic mean is meaningless when different machine instances are compared relative to a baseline machine.
In this case, ratios smaller than one and larger than one can occur.
However, combining those numbers is ``meaningless'' as ratios larger than one depend linearly on the measurements but ratios smaller than one do not.
Note that the slowdown factors in this work will never be smaller than one.\\

A \textbf{performance profile}~\cite{dolan2002benchmarking} is the cumulative distribution function of an algorithm for a specific performance metric.
We use the slowdown factors to quantify the relative performance distribution of an algorithm to a set of competitors on a set of inputs.
The \emph{performance profile of algorithm $A\in\mathcal{A}$ to the algorithms $\mathcal{A}$ for the inputs $\mathcal{I}$}

$$p_{\mathcal{A}, \mathcal{I}, A}(\tau) = \frac{|\{I\in\mathcal{I}\,|\,f_{\mathcal{A}}(A, I)\leq \tau\}|}{|\mathcal{I}|}$$
is the probability that algorithm $A$ sorts a random input $I\in\mathcal{I}$ at most a factor $\tau$ slower than the fastest algorithm for input $I$.\\

To avoid skewed measurements we only use input sizes above a certain threshold for the calculation of average slowdowns and performance profiles.
This is an obvious decision as it is very common that algorithms switch to a base case algorithm for small inputs.
By restricting the input size, the results are not affected by the choice of different thresholds and the choice of the algorithm for small inputs.
Additionally, the algorithms generally sort ``very small'' inputs inefficiently (except algorithms with are designed for small inputs) and we do not want those measurements to dominate the average performance.
For sequential algorithms, we use inputs with at least $2^{18}$ bytes and for parallel algorithms, we use inputs with at least $2^{21}t$ bytes.

When an algorithm uses a heuristic to detect easy inputs and quickly transforms these inputs into sorted output, the slowdown factors of algorithms that do not use such heuristics are usually orders of magnitude larger than the remaining ratios of our benchmark.
When aggregating those large ratios with ratios of inputs that are not easy, the large ratios would dominate the result.
Therefore, we exclude the inputs \distones, \distsorted, and \distreversesorted when we average over all input distributions to obtain average slowdowns and performance profiles.

\subsection{Sequential Algorithms}\label{sec:sequential comparison}

In this section, we compare sequential algorithms for different
machines, input distributions, input sizes, and data types.  We begin
with a comparison of the average slowdowns of \compissrsort,
\compissssort, and their competitors for ten input distributions
executed with six different data types (see \cref{sec:seq slowdown}).
This gives a first general view of the performance of our algorithms
as the presented results are aggregated across all machines.
Afterwards, we compare our algorithms to their competitors on
different machines by scaling with input sizes for input distribution
\distuniform and data type \ulong (see \cref{sec:seq uniform input}).
Finally, we discuss the performance profiles of the algorithms
in \cref{sec:seq perf profiles}.

\subsubsection{Comparison of Average Slowdowns}\label{sec:seq slowdown}

\begin{table}
  \centering
  \resizebox*{!}{0.93\textheight}{

\begin{tabular}{ll|rrrrrrrrr|rrrr}
  Type
  & Distribution
  & \rotatebox[origin=c]{90}{\compissssort}
  & \rotatebox[origin=c]{90}{\compspdq}
  & \rotatebox[origin=c]{90}{\compblock}
  & \rotatebox[origin=c]{90}{\compmyssssaxtmann}
  & \rotatebox[origin=c]{90}{\compsyaros}
  & \rotatebox[origin=c]{90}{\compssort}
  & \rotatebox[origin=c]{90}{\compstim}
  & \rotatebox[origin=c]{90}{\compsmergequick}
  & \rotatebox[origin=c]{90}{\compswiki}
  & \rotatebox[origin=c]{90}{\radixsska}
  & \rotatebox[origin=c]{90}{\radixipp}
  & \rotatebox[origin=c]{90}{\radixlearned}
  & \rotatebox[origin=c]{90}{\compiparassrsort}\\\hline
  \double &        \distsorted &          1.05 & 1.70 & 25.24 & \textbf{1.05} & 12.90 & 20.49 & 1.09 & 62.42 & 2.81 & 21.83 & 62.61 & 56.19 &  \\
  \double & \distreversesorted & \textbf{1.04} & 1.71 & 14.28 &          1.06 &  5.09 &  5.93 & 1.07 & 25.34 & 5.89 &  9.41 & 25.22 & 21.24 &  \\
  \double &          \distones & \textbf{1.07} & 1.77 & 21.20 &          1.10 &  1.20 & 14.98 & 1.08 &  2.72 & 3.58 & 16.36 & 24.23 & 15.08 &  \\

  \hline\hline
  
  \double &            \distexpo & \textbf{1.02} & 1.13 & 1.28 & 1.27 & 2.30 & 2.57 &          4.23 & 4.04 & 4.20 &          1.29 & 1.38 & 2.08 &  \\
  \double &            \distzipf & \textbf{1.08} & 1.25 & 1.42 & 1.37 & 2.66 & 2.87 &          4.63 & 4.21 & 4.72 &          1.17 & 1.28 & 2.30 &  \\
  \double &  \distduplicatesroot & \textbf{1.10} & 1.50 & 1.83 & 1.65 & 1.44 & 2.30 &          1.32 & 6.01 & 3.12 &          1.90 & 2.69 & 3.18 &  \\
  \double & \distduplicatestwice &          1.17 & 1.33 & 1.37 & 1.41 & 2.48 & 2.65 &          2.96 & 3.42 & 3.20 & \textbf{1.07} & 1.22 & 2.52 &  \\
  \double & \distduplicateseight & \textbf{1.01} & 1.13 & 1.41 & 1.30 & 2.42 & 2.84 &          4.43 & 4.69 & 4.40 &          1.31 & 1.60 & 2.46 &  \\
  \double &    \distalmostsorted &          2.33 & 1.15 & 2.21 & 2.99 & 1.68 & 1.80 & \textbf{1.14} & 6.76 & 2.57 &          2.39 & 4.53 & 4.88 &  \\
  \double &         \distuniform &          1.08 & 1.21 & 1.22 & 1.28 & 2.35 & 2.43 &          3.59 & 2.98 & 3.58 & \textbf{1.08} & 1.29 & 2.07 &  \\

  \hline
  Total  & &

  \textbf{1.20} & 1.24 & 1.50 & 1.54 & 2.15 & 2.47 & 2.81 & 4.42 & 3.61 & 1.40 & 1.77 & 3.00 &  \\

  Rank & &
  1 & 2 & 4 & 5 & 7 & 8 & 9 & 12 & 11 & 3 & 6 & 10 &  \\\hline\hline
           
  \ulong &        \distsorted &          1.17 & 1.78 & 23.69 & \textbf{1.02} & 11.94 & 19.96 & 1.11 & 55.40 & 2.88 & 26.64 & 76.86 & 95.80 & 13.33 \\
  \ulong & \distreversesorted & \textbf{1.03} & 1.63 & 12.93 &          1.04 &  4.47 &  5.51 & 1.04 & 21.01 & 5.93 & 10.46 & 28.99 & 34.39 &  5.97 \\
  \ulong &          \distones &          1.17 & 1.69 & 21.43 & \textbf{1.06} &  1.14 & 14.02 & 1.11 &  2.42 & 3.74 & 17.40 & 25.30 & 14.90 &  1.35 \\

  \hline\hline
  
  \ulong &            \distexpo & 1.06 &          1.22 & 1.37 & 1.37 & 2.28 & 2.64 & 4.52 & 3.82 & 4.51 & 1.21 & 1.74 & 2.09 & \textbf{1.05} \\
  \ulong &            \distzipf & 1.53 &          1.86 & 2.13 & 2.06 & 3.62 & 4.04 & 6.65 & 5.53 & 6.79 & 1.73 & 1.99 & 2.56 & \textbf{1.01} \\
  \ulong &  \distduplicatesroot & 1.25 &          1.73 & 2.19 & 2.07 & 1.60 & 2.60 & 1.70 & 6.34 & 3.91 & 2.08 & 2.88 & 3.60 & \textbf{1.13} \\
  \ulong & \distduplicatestwice & 1.73 &          2.07 & 2.11 & 2.17 & 3.56 & 3.88 & 4.54 & 4.65 & 4.93 & 1.58 & 2.66 & 3.32 & \textbf{1.00} \\
  \ulong & \distduplicateseight & 1.26 &          1.39 & 1.74 & 1.64 & 2.75 & 3.29 & 5.46 & 5.12 & 5.38 & 1.71 & 2.97 & 2.40 & \textbf{1.02} \\
  \ulong &    \distalmostsorted & 2.34 & \textbf{1.11} & 2.19 & 3.28 & 1.68 & 1.81 & 1.24 & 6.11 & 2.79 & 2.79 & 6.67 & 8.55 &          1.28 \\
  \ulong &         \distuniform & 1.35 &          1.60 & 1.60 & 1.71 & 2.85 & 3.02 & 4.62 & 3.49 & 4.63 & 1.20 & 2.19 & 4.46 & \textbf{1.04} \\

  \hline
  Total  & &

  1.46 & 1.54 & 1.88 & 1.97 & 2.51 & 2.95 & 3.56 & 4.90 & 4.56 & 1.69 & 2.74 & 4.87 & \textbf{1.07} \\

  Rank & &
  2 & 3 & 5 & 6 & 7 & 9 & 10 & 13 & 11 & 4 & 8 & 12 & 1 \\\hline\hline

  \uint &        \distsorted & 2.44 & 3.89 & 57.73 & 2.42 & 28.63 & 53.53 & \textbf{1.96} & 139.13 & 6.34 & 46.41 & 44.93 & 275.54 & 29.91 \\
  \uint & \distreversesorted & 1.40 & 2.06 & 17.70 & 1.47 &  6.09 &  8.37 & \textbf{1.03} &  29.37 & 5.57 & 10.08 & 20.92 &  53.61 &  7.28 \\
  \uint &          \distones & 2.30 & 3.71 & 59.44 & 2.28 &  2.28 & 37.19 & \textbf{2.06} &   6.19 & 8.98 & 24.29 & 14.03 &  40.53 &  3.05 \\

  \hline\hline
  
  \uint &            \distexpo & 1.49 & 1.77 & 2.03 & 1.82 & 3.66 & 4.04 &          6.67 & 5.91 & 6.51 & 1.38 & \textbf{1.08} &  3.44 &          1.09 \\
  \uint &            \distzipf & 1.82 & 2.33 & 2.75 & 2.37 & 4.97 & 5.46 &          8.68 & 7.55 & 8.81 & 1.41 &          1.27 &  3.93 & \textbf{1.12} \\
  \uint &  \distduplicatesroot & 1.41 & 1.92 & 2.46 & 2.15 & 1.84 & 2.97 &          1.48 & 7.54 & 3.78 & 1.58 &          1.77 &  4.43 & \textbf{1.18} \\
  \uint & \distduplicatestwice & 2.09 & 2.56 & 2.67 & 2.52 & 4.82 & 5.11 &          5.59 & 5.94 & 5.95 & 1.34 &          1.44 &  5.08 & \textbf{1.09} \\
  \uint & \distduplicateseight & 1.40 & 1.68 & 2.09 & 1.76 & 3.67 & 4.19 &          6.47 & 6.23 & 6.45 & 1.35 &          1.77 &  3.05 & \textbf{1.02} \\
  \uint &    \distalmostsorted & 3.07 & 1.45 & 2.79 & 4.24 & 2.15 & 2.58 & \textbf{1.06} & 8.24 & 2.97 & 2.66 &          5.45 & 12.04 &          1.51 \\
  \uint &         \distuniform & 1.67 & 2.01 & 2.05 & 2.04 & 3.85 & 4.02 &          5.92 & 4.55 & 5.79 & 1.39 & \textbf{1.08} &  5.22 &          1.20 \\

  \hline
  Total  & &

  1.78 & 1.93 & 2.39 & 2.32 & 3.37 & 3.93 & 4.09 & 6.47 & 5.45 & 1.54 & 1.67 & 6.71 & \textbf{1.16} \\

  Rank & &
  4 & 5 & 7 & 6 & 8 & 9 & 10 & 12 & 11 & 2 & 3 & 13 & 1 \\\hline\hline
  
  \pair &        \distsorted & 1.06 & 1.62 & 16.88 & \textbf{1.04} & 9.36 & 14.67 & 1.04 & 34.54 & 2.30 & 17.51 &  &  & 10.48 \\
  \pair & \distreversesorted & 1.13 & 1.21 &  8.47 & \textbf{1.08} & 3.65 &  4.19 & 1.12 & 13.71 & 6.60 &  6.86 &  &  &  4.87 \\
  \pair &          \distones & 1.09 & 1.61 & 13.30 & \textbf{1.03} & 1.07 & 11.63 & 1.08 &  1.94 & 2.71 & 11.09 &  &  &  1.21 \\

  \hline\hline
  
  \pair &            \distexpo & 1.10 &          1.92 & 1.20 & 1.36 & 1.84 & 2.12 & 3.87 & 3.11 & 4.14 & 1.16 &  &  & \textbf{1.05} \\
  \pair &            \distzipf & 1.48 &          2.72 & 1.64 & 1.86 & 2.64 & 2.83 & 5.02 & 3.87 & 5.50 & 1.46 &  &  & \textbf{1.01} \\
  \pair &  \distduplicatesroot & 1.27 &          1.44 & 1.78 & 1.84 & 1.42 & 2.16 & 1.83 & 4.70 & 4.05 & 1.69 &  &  & \textbf{1.03} \\
  \pair & \distduplicatestwice & 1.63 &          2.81 & 1.69 & 1.92 & 2.71 & 2.84 & 3.62 & 3.45 & 4.35 & 1.41 &  &  & \textbf{1.01} \\
  \pair & \distduplicateseight & 1.27 &          2.19 & 1.45 & 1.59 & 2.14 & 2.47 & 4.50 & 3.95 & 4.81 & 1.56 &  &  & \textbf{1.00} \\
  \pair &    \distalmostsorted & 3.20 & \textbf{1.01} & 2.79 & 4.00 & 2.18 & 2.39 & 2.34 & 6.56 & 4.55 & 3.24 &  &  &          1.74 \\
  \pair &         \distuniform & 1.37 &          2.46 & 1.45 & 1.66 & 2.40 & 2.45 & 3.89 & 2.88 & 4.26 & 1.17 &  &  & \textbf{1.03} \\

  \hline
  Total  & &

  1.52 & 1.97 & 1.66 & 1.91 & 2.15 & 2.45 & 3.40 & 3.94 & 4.50 & 1.57 &  &  & \textbf{1.10} \\

  Rank & &
  2 & 6 & 4 & 5 & 7 & 8 & 9 & 10 & 11 & 3 &  &  & 1 \\\hline\hline
  
  \quartet & \distuniform & 1.14 & 1.85 & 1.29 & 1.49 & 1.89 & 1.86 & 3.14 & 2.15 & 3.52 & \textbf{1.02} &  &  &  \\

  \hline

  Rank & &
  2 & 5 & 3 & 4 & 7 & 6 & 9 & 8 & 10 & 1 &  &  &  \\\hline\hline
           
  \bytes & \distuniform & 1.41 & 1.27 & 1.27 & 1.64 & 1.83 & 1.33 & 2.22 & 1.78 & 3.17 & \textbf{1.06} &  &  &  \\

  \hline

  Rank & &
  5 & 2 & 3 & 6 & 8 & 4 & 9 & 7 & 10 & 1 &  &  &  \\\hline\hline
\end{tabular}

  }
  \caption{
    Average slowdowns of sequential algorithms for different data types and input distributions.
    The slowdowns average over the machines and input sizes with at least $2^{18}$ bytes.
    \label{tab:slowdown seq all}
  }
\end{table}

In this section, we discuss the average slowdowns of sequential
algorithms for different data types and input distributions aggregated
over all machines and input sizes with at least $2^{18}$ bytes shown
in \cref{tab:slowdown seq all}. The results indicate that
a sorting algorithm performs similarly good for inputs with
``similar'' input distributions. Thus, we divide the inputs
into four groups: The largest group, \emph{Skewed inputs}, contains
inputs with duplicated keys and skewed key occurrences, i.e.,
\distexpo, \distzipf, \distduplicatesroot, \distduplicatestwice, and
\distduplicateseight. The second group, \emph{Uniform inputs},
contains \distuniform distributed inputs. For these inputs,
each bit of the key has maximum entropy.  Thus, we can expect that
radix sort performs the best for these inputs.  The third group,
\emph{Almost Sorted inputs}, are \distalmostsorted distributed inputs.
The last group, \emph{(Reverse) Sorted inputs}, contains
``easy inputs'', i.e., \distsorted, \distreversesorted, and
\distones.
\ifarxiv For the average slowdowns separated by machine, we
refer to \cref{app:more measurements},
\cref{tab:slowdown seq
  128,tab:slowdown seq 132,tab:slowdown seq 133,tab:slowdown seq 135}.
\else For the average slowdowns separated by machine, we
refer to the extended version of this paper~\cite{axtmann2020ips4oarxiv}.
\fi

In this section, an \emph{instance} describes the inputs of a
specific data type and input distribution.
We say that ``algorithm A is faster than algorithm B
(by a factor of C) for some instances'' if the average slowdown
of B is larger than the average slowdown of A (by a factor of C) for these instances.

The subsequent paragraph summarizes the performance of our algorithms.
Then, we
compare our competitors to our radix sorter
\compissrsort. Finally, we compare our competitors to
our samplesort algorithm \compissssort.

Overall,  \compissrsort is significantly
faster than our fastest radix sort competitor \radixsska. For example,
\compissrsort is for all  instances at least a factor of $1.10$
faster than \radixsska and for even $63$\,\% of the instances more
than a factor of $1.40$. The radix sorter \radixipp is faster
than \compissrsort in some special cases.
However, \radixipp is even
slower than our competitor \radixsska for the remaining inputs.  Our algorithm
\compissrsort also outperforms the comparison-based sorting algorithms
for Uniform inputs and Skewed inputs significantly. For example,
\compissrsort is faster for all of these  instances and for
$56$\,\% of these instances even a factor of $1.20$ or more. Only for
Almost Sorted inputs and the ``easy'' (Reverse) Sorted inputs, the
comparison-based algorithms \compspdq and \compstim are faster than
\compissrsort.  For the remaining inputs -- Uniform inputs and Skewed
inputs -- not only our radix sorter \compissrsort but also our
samplesort algorithm \compissssort is faster than all comparison-based
competitors (except for one  instance). For example,
\compissssort is faster than our fastest comparison-based competitor,
\compspdq by a factor of $1.10$ and $1.20$ for $25$
respectively $15$ out of $26$ instances with Uniform and Skewed input.
\compissssort is on average
also faster than the fastest radix sorter.\\

\emph{Comparison to \compissrsort.}
\compissrsort outperforms \textbf{\radixsska} by a factor of $1.10$,
$1.20$, $1.30$, and $1.40$ for respectively $100$\,\%, $83$\,\%,
$73$\,\%, and $63$\,\% of the instances.  Also, \compissrsort
performs much better than \textbf{\radixlearned} for any data type and
input distribution.
\textbf{\radixipp} is the only non-comparison-based algorithm
that is able to outperform
\compissrsort for at least one  instance, i.e., \radixipp is
faster by a factor of $1.01$ (of $1.11$) for \distexpo (\distuniform) distributed inputs with the \uint data type.  However,
\radixipp is (significantly) slower than \compissrsort for other \distexpo (\distuniform)
 instances, e.g., a factor of $1.66$ (of $2.11$) for
the \ulong data type.  For the remaining instances, \radixipp is (much)
slower than \compissrsort.

For the comparison of \compissrsort to comparison-based algorithms, we
first consider Uniform and Skewed inputs. For these  instances,
\compissrsort is significantly faster than all comparison-based
algorithms (including \compissssort).  For example,
\compissrsort is faster than any of these algorithms by a factor of
more than $1.00$, $1.10$, $1.20$, and $1.30$ for respectively
$100$\,\%, $89$\,\%, $78$\,\%, and $56$\,\% of the instances.
We now consider Almost Sorted inputs which are sorted the
fastest by \textbf{\compspdq}. For all $3$ instances,
\compissrsort is slower than \compspdq, i.e., by a factor of
respectively $1.04$, $1.16$, and $1.72$.
The reason is that \compspdq
heuristically detects and skips presorted input sequences.
Even though \compspdq is our fastest comparison-based competitor, it
is significantly slower than \compissrsort for
many instances which are not (almost) sorted.  For example, \compspdq
is for $8$ of these  instances even more than a factor of $2.00$
slower than \compissrsort.  For (Reverse) Sorted inputs, \compissrsort
is slower than at least one comparison-based sorting algorithm for all
$9$  instances.  However, \compissrsort could easily detect these
instances by scanning the input array once. We want to note that
\compissrsort already scans the input array
to detect the significant bits of the input keys.\\

\emph{Comparison to  \compissssort.} Our algorithm \compissssort is faster than
any comparison-based competitor for $28$  instances and slower
for only $15$  instances.  However, when we exclude Almost Sorted
inputs and (Reverse) Sorted inputs, \compissssort is still faster for
the same number of instances but the number of instances for which
\compissssort is slower drops to one  instance.  When we only
exclude (Reverse) Sorted inputs, \compissssort is still only slower
for $5$  instances.

\textbf{\compspdq} is a factor of $1.10$, $1.15$, $1.20$, and $1.25$ slower than
\compissssort for respectively $100$\,\%, $71.43$\,\%, $42.85$\,\%,
and $28.57$\,\% out of $21$ instances with Uniform input and Skewed
input.
\compspdq is also much slower for (Reverse) Sorted inputs.
Only for Almost Sorted inputs, \compspdq is significantly
faster than \compissssort, e.g., by a factor of $2.03$ to $3.17$. Again, the
reason is that \compspdq takes advantage of presorted sequences in the
input.

\textbf{\compblock} shows similar performance as \compspdq for \distuniform
inputs.  The reason is that \compspdq reimplemented the partitioning
routine proposed \compblock~\cite{edelkamp2016blockquicksort}.
However, \compblock does not take advantage of presorted sequences and
\compblock handles duplicate keys less efficient.  Thus, \compblock is
slower than \compspdq for (Reverse) Sorted inputs and Almost Sorted
inputs.

\compissssort outperforms \textbf{\radixsska}
for Skewed inputs by a factor of at least $1.10$ for $50$\,\% out of
$20$  instances whereas \radixsska is faster by a factor of at
least $1.10$ for only $15$\,\% of the instances. \compissssort is also
faster than \radixsska for all $12$ (Reverse) Sorted inputs.  For
Almost Sorted inputs, both algorithms are for one  instance at
least a factor of $1.10$ faster than the other algorithm (out of $4$  instances).  Only
for Uniform inputs, \radixsska is the better algorithm.  I.e.,
\radixsska is faster by a factor of at least $1.10$ on $83$\,\% out of
$6$ Uniform  instances whereas \compissssort is not faster on one
of these  instances.

As expected, \textbf{\compmyssssaxtmann} is slower than
\compissssort for all  instances except (Reverse) Sorted
inputs. For (Reverse) Sorted inputs, both algorithms execute the
same heuristic to detect and sort ``easy'' inputs. Also, as
\compmyssssaxtmann is not in-place, \compmyssssaxtmann can sort only
about half the input size as \compissssort can sort.  The results
strongly indicate that the I/O complexity of \compissssort has smaller constant factors than the I/O complexity of
\compmyssssaxtmann as both algorithms share the same sorting framework
including the same sampling routine, branchless decision tree, and base
case sorting algorithm.

The algorithms \textbf{\compssort}, \textbf{\compsyaros}, and
\compspdq are adaptions of quicksort.  However, \compssort
and \compsyaros do not avoid branch mispredictions by classifying and
moving blocks of elements.  The result is that these algorithms are
always significantly slower than \compspdq.

We also compare \compissssort to the mergesort algorithms
\compstim, \compsmergequick, and
\compswiki.  The in-place versions of mergesort,
\textbf{\compsmergequick} and \textbf{\compswiki}, are significantly slower than
\compissssort for all  instances.
\textbf{\compstim} is also much slower than \compissssort for almost all input
distributions -- in most cases even more than a factor of three.  Only
for Almost Sorted inputs, \compstim is faster than \compissssort and
for (Reverse) Sorted inputs, \compstim has similar running times as
\compissssort.
\\ 

We did not present the results of \compssssschneider, an implementation of
Super Scalar Samplesort~\cite{sanders2004super}.  We made this
decision as \compssssschneider is for all instances except \bytes
instances slower or significantly slower than \compmyssssaxtmann, our
implementation of Super Scalar Samplesort.  For further details, we
refer to \ifarxiv \cref{tab:slowdown seq s4o ssss} in \cref{app:more
  measurements} \else the extended version of this paper~\cite{axtmann2020ips4oarxiv} \fi which shows average slowdowns of
\compmyssssaxtmann and \compssssschneider for different data types and
input distributions. We did not the present results of the sequential
version of \comppaspas for three reasons.  First, \comppaspas performs
worse than \compmyssssaxtmann for all instances.  Second, \comppaspas
only sorts inputs with the data type \double.  Finally, \comppaspas
returns unsorted output for inputs with at least $2^{31}$
elements.

\input{extern/ips4o-benchmark-suite-plots/benchmark/running_times/running_time_random_sequential_some_machines_ulong.tex}

\subsubsection{Running Times for \distuniform Input}\label{sec:seq uniform input}

In this section, we compare \compissrsort and \compissssort to their
closest sequential competitors for \distuniform distributed \ulong
inputs.
\Cref{fig:rt seq ulong} depicts the running times of our algorithms \compissssort and \compissrsort as well as their fastest competitors \compspdq and \radixsska separately
for each machine.
Additionally, we include measurements obtained from \compmyssssaxtmann (which we used as a starting point to develop \compissssort) and \radixipp (which is fast for \uint data types with \distuniform distribution).
We decided to present results for \ulong inputs as
our radix sorter does not support \double inputs. We note that this
decision is not a disadvantage for our fastest competitors as they
show similar running times relative to our algorithms for
both data types (see slowdowns in \cref{tab:slowdown seq all}, \cref{sec:seq
  slowdown}).

Overall, \compissrsort outperforms its radix sort competitors on all
but one machine, and \compissssort significantly outperforms its
comparison-based competitors.  In particular, \compissrsort is a
factor of up to $1.40$ faster than \radixsska, and \compissssort is a
factor of up to $1.44$ ($1.60$) faster than \compspdq
(\compmyssssaxtmann) for the largest input size.  As expected,
\compissrsort is in almost all cases significantly faster than
\compissssort on all machines, e.g., a factor of $1.10$ to $1.52$ for
the largest input size.  \compissrsort shows the fastest running times
on the two machines with the most recent CPUs, \pcamd and
\pcintelfour.  On these two machines, the gap between \compissrsort
and \compissssort is larger than on the other machines.  This
indicates that sequential comparison-based algorithms are not memory
bound in general, and, on recent CPUs, radix sorters may benefit even
more from their reduced number of instructions (for
uniformly distributed inputs).  In the following, we compare our algorithms to their competitors in more detail.\\

\emph{Comparison to \compissrsort.} Our algorithm \compissrsort outperforms
\textbf{\radixsska} on two machines significantly (\pcamd and \pcintelfour), on
one machine slightly (\pcintellargefour), and on one machine (\pcinteltwo),
\compissrsort is slightly slower than \radixsska.  For example,
\compissrsort is on average a factor of respectively
$2.06$, $2.13$, $1.12$, and $0.82$ faster
than \radixsska for $n\geq 2^{15}$ on \pcamd, \pcintelfour, \pcintellargefour, and \pcinteltwo.
According to the
performance measurements, obtained with the Linux tool \texttt{perf},
\radixsska performs more cache misses (factor $1.25$) and
significantly more branch mispredictions (factor $1.52$
for $n=2^{28}$ on \pcintellargefour).

On machine \pcamd, \pcinteltwo, and \pcintelfour, we see that the running times
of \radixsska and \compissrsort vary -- with peaks at $2^{15}$, $2^{23}$ and $2^{31}$.
We assume that the running time peaks as these radix
sorters perform an additional $k$-way partitioning step with
$k=256$.  We have seen the same behavior with our algorithm
\compissssort when we do not adjust $k$ at the last recursion levels.
However, with our adjustments, the large running time peaks disappear
for \compissssort.

We also compare our algorithm against \textbf{\radixipp} which
takes advantage of the \emph{Advanced Vector Extensions} (AVX).  All
machines support the instruction extension \texttt{AVX2}.
\pcintellargefour additionally provides \texttt{AVX-512} instructions.
We expected that \radixipp is competitive, at least on
\pcintellargefour.  However, \radixipp is significantly slower than
\compissrsort on all machines.  For example, \compissrsort outperforms
\radixipp by a factor of $1.76$ to $1.88$ for the largest input size
on \pcintelfour, \pcamd, and \pcintellargefour.  On
\pcinteltwo, \compissrsort is even a factor of $3.00$ faster.
We want to note that
\radixipp is surprisingly fast for (mostly small) \distuniform
distributed inputs with \ifarxiv data type \uint (see \cref{fig:rt seq uint}
in \cref{app:more measurements}). \else data type \uint. For running times obtained with the data type \uint, we refer to the extended version of this paper~\cite{axtmann2020ips4oarxiv}. \fi
Unfortunately, \radixipp fails
to sort \uint inputs with more than $2^{28}$ elements.  In conclusion,
it seems that \texttt{AVX} instructions only help for inputs whose
data type size is very small, i.e., $32$-bit unsigned integers in our
case.

Contrary to the experiments presented by
Kristo~et~al.~\cite{kristo2020case}, the running times of
\radixlearned are very large and would break the running time limits
of \cref{fig:rt seq ulong}.  Our experiments have shown that the
performance of \radixlearned degenerates by orders of magnitude for
input sizes which are not a multiple of $10^6$.  This problem has been
identified by others and was still an open
issue~\cite{kristo2020caseimpl} at the time when we finished our experiments.\\

\emph{Comparison to \compissssort.} For most medium and large
input sizes, \textbf{\compspdq} and \textbf{\compmyssssaxtmann} are significantly slower
than \compissssort.  For example, on \pcamd,
\compissssort is a factor of $1.29$ faster than \compmyssssaxtmann and
a factor of $1.44$ faster than \compspdq for the largest
input size ($n=2^{32}$).  On the other machines, \compspdq is our
closest competitor: \compissssort is a factor of $1.23$ to $1.44$ (of $1.29$ to $1.60$)
faster than \compspdq (\compmyssssaxtmann) for $n=2^{32}$.
According to the performance measurements,
obtained with the Linux tool \texttt{perf}, there may be several
reasons why \compissssort outperforms \compspdq
and \compmyssssaxtmann. Consider the machine \pcintellargefour
and $n=2^{38}$: \compspdq performs
significantly more instructions (factor $1.30$), more cache
misses (factor $2.05$), and more branch mispredictions (factor $1.69$) compared to \compissssort.  Also, \compmyssssaxtmann performs significantly more total
cache misses (factor $1.79$), more L3-store operations (factor
$2.68$), and more L3-store misses (factor $9.71$).
We note that the comparison-based competitors \compsyaros, \compssort,
\compstim, \compsmergequick, and \compswiki perform significantly more
branch mispredictions than \compmyssssaxtmann,
\compspdq, and \compblock. We think that this is the reason for their
poor performance.

\subsubsection{Comparison of Performance Profiles}\label{sec:seq perf profiles}

\begin{figure}[tbp]
  \begin{tikzpicture}
    \begin{groupplot}[
      group style={
        group size=4 by 1,
        x descriptions at=edge bottom,
        y descriptions at=edge left,
        horizontal sep=0.4cm,
      },
      width=0.315\textwidth,
      height=0.2\textheight,
      xmin=1,
      xmax=3,
      ymin=0.0,
      ymax=1.05,
      plotstylesequentialperf,
      xmajorgrids=true,
      ymajorgrids=true,
      ytick = {0, 0.25, 0.5, 0.75, 1},
      xtick = {1, 1.5, 2, 2.5, 3},
      ]
      
      \nextgroupplot[
      ylabel={Fraction of inputs},
      ]
      \addplot coordinates { (1,0.795487) (1.125,0.843912) (1.25,0.863658) (1.375,0.87118) (1.5,0.879643) (1.625,0.882464) (1.75,0.886695) (1.875,0.890926) (2,0.897508) (2.125,0.90362) (2.25,0.915374) (2.375,0.930889) (2.5,0.948284) (2.625,0.955806) (2.75,0.96615) (2.875,0.973672) (3,0.976963) (4,5) };
      \addlegendentry{\compissssort};
      \addplot coordinates { (1,0.204513) (1.125,0.381288) (1.25,0.595205) (1.375,0.673249) (1.5,0.74283) (1.625,0.800658) (1.75,0.845322) (1.875,0.890926) (2,0.932299) (2.125,0.961918) (2.25,0.975552) (2.375,0.989187) (2.5,0.995769) (2.625,0.99859) (2.75,0.99953) (2.875,1) (3,1) (4,5) };
      \addlegendentry{\compspdq};
      \addplot coordinates { (4,5) };
      \addlegendentry{\radixsska};
      \addplot coordinates { (4,5) };
      \addlegendentry{\compissrsort};

      \legend{}
      
      \coordinate (c1) at (rel axis cs:0,1);
      
      \nextgroupplot[
      every legend/.append style={at=(ticklabel cs:1.1)},
      legend style={at={($(0,0)+(1cm,1cm)$)},legend columns=6,fill=none,draw=black,anchor=center,align=center},
      ]
      \addplot coordinates { (1,0.544899) (1.125,0.702398) (1.25,0.832628) (1.375,0.916314) (1.5,0.946874) (1.625,0.962858) (1.75,0.974142) (1.875,0.985896) (2,0.992478) (2.125,0.996709) (2.25,0.99906) (2.375,0.99953) (2.5,1) (2.625,1) (2.75,1) (2.875,1) (3,1) (4,5) };
      \addlegendentry{\compissssort};
      \addplot coordinates { (4,5) };
      \addlegendentry{\compspdq};
      \addplot coordinates { (1,0.455101) (1.125,0.630465) (1.25,0.763987) (1.375,0.841561) (1.5,0.889516) (1.625,0.923366) (1.75,0.947344) (1.875,0.961918) (2,0.972732) (2.125,0.981194) (2.25,0.986366) (2.375,0.991537) (2.5,0.993888) (2.625,0.995769) (2.75,0.996709) (2.875,0.99859) (3,0.99906) (4,5) };
      \addlegendentry{\radixsska};
      \addplot coordinates { (4,5) };
      \addlegendentry{\compissrsort};

      \legend{}
      
      \nextgroupplot[
      ]
        \addplot coordinates { (4,5) };
      \addlegendentry{\compissssort};
      \addplot coordinates { (1,0.152247) (1.125,0.199866) (1.25,0.260228) (1.375,0.312542) (1.5,0.37827) (1.625,0.43662) (1.75,0.498323) (1.875,0.551979) (2,0.604963) (2.125,0.652582) (2.25,0.693494) (2.375,0.733065) (2.5,0.768612) (2.625,0.804158) (2.75,0.834339) (2.875,0.871227) (3,0.895372) (4,5) };
      \addlegendentry{\compspdq};
      \addplot coordinates { (4,5) };
      \addlegendentry{\radixsska};
      \addplot coordinates { (1,0.847753) (1.125,0.885983) (1.25,0.918176) (1.375,0.936955) (1.5,0.95171) (1.625,0.965124) (1.75,0.973172) (1.875,0.977197) (2,0.979879) (2.125,0.984574) (2.25,0.988598) (2.375,0.99061) (2.5,0.991952) (2.625,0.992622) (2.75,0.993293) (2.875,0.994634) (3,0.995305) (4,5) };
      \addlegendentry{\compissrsort};

      \legend{}
      
      \nextgroupplot[
      every legend/.append style={at=(ticklabel cs:1.1)},
      legend style={at={($(0,0)+(1cm,1cm)$)},legend columns=6,fill=none,draw=black,anchor=center,align=center},
      legend to name=legendplotscseq
      ]
      \addplot coordinates { (4,5) };
      \addlegendentry{\compissssort};
      \addplot coordinates { (4,5) };
      \addlegendentry{\compspdq};
      \addplot coordinates { (1,0.122736) (1.125,0.215292) (1.25,0.336687) (1.375,0.452716) (1.5,0.535211) (1.625,0.62173) (1.75,0.706908) (1.875,0.761234) (2,0.810865) (2.125,0.847082) (2.25,0.894702) (2.375,0.926224) (2.5,0.94165) (2.625,0.948357) (2.75,0.957747) (2.875,0.967807) (3,0.975855) (4,5) };
      \addlegendentry{\radixsska};
      \addplot coordinates { (1,0.877264) (1.125,0.948357) (1.25,0.976526) (1.375,0.989269) (1.5,0.992622) (1.625,0.993293) (1.75,0.993964) (1.875,0.997317) (2,1) (2.125,1) (2.25,1) (2.375,1) (2.5,1) (2.625,1) (2.75,1) (2.875,1) (3,1) (4,5) };
      \addlegendentry{\compissrsort};

      \coordinate (c2) at (rel axis cs:1,1);

    \end{groupplot}

    \coordinate (c23s) at ($(group c2r1.south)!.5!(group c3r1.south)$);
    \node[yshift=-0.75cm] at (c23s) {{Running time ratio relative to best $-$ $1$}};

    \coordinate (c12n) at ($(group c1r1.north)!.5!(group c2r1.north)$);
    \node[yshift=0.3cm] at (c12n) {{All Data Types}};

    \coordinate (c34n) at ($(group c3r1.north)!.5!(group c4r1.north)$);
    \node[yshift=0.3cm] at (c34n) {{Unsigned Integer Key}};

    \coordinate (c3) at ($(c1)!.5!(c2)$);
    \node[below] at (c3 |- current bounding box.south)
    {\pgfplotslegendfromname{legendplotscseq}};
  \end{tikzpicture}
  \caption{Pairwise performance profiles of our algorithms
    \compissssort and \compissrsort to 
    \compspdq and \radixsska. The performance plots with \compissssort
    use all data types. The performance plots with the radix sort
    algorithm \compissrsort use inputs with with unsigned
    integer keys (\uint, \ulong, and \pair data types). The results
    were obtained on all machines for all input distributions with at
    least $2^{18}$ bytes except \distsorted, \distreversesorted, and
    \distones.}\label{fig:perf seq}
\end{figure}
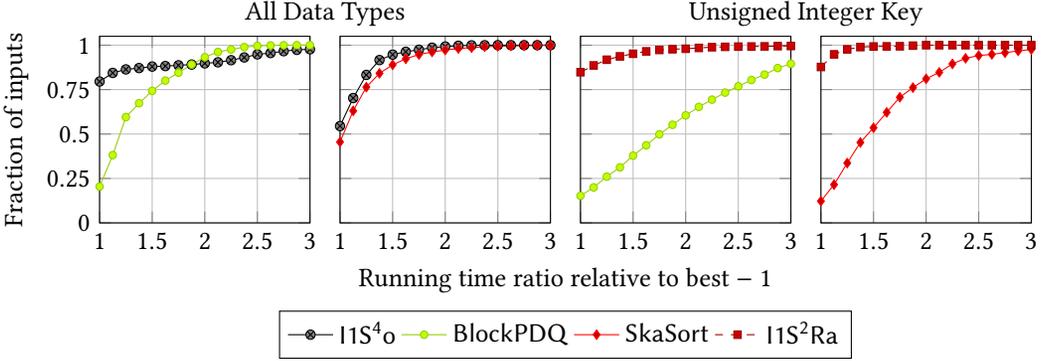

In this section, we discuss the pairwise performance profiles, shown
in \cref{fig:perf seq}, of our algorithms \compissssort and
\compissrsort to the fastest comparison-based competitor (\compspdq)
and the fastest radix sort competitor (\radixsska).

Overall, \compissrsort has a significantly better profile than
\compspdq and \radixsska.  The performance profile of \compissssort is
slightly better than the profile of \radixsska and significantly
better than the one of \compspdq. Exceptions are \distalmostsorted
inputs for which \compissssort is much slower than \compspdq.

\emph{Comparison to \compissrsort.} For the profiles containing
\compissrsort, we used only inputs with unsigned integer keys. The
performance profile of \compissrsort is significantly better than the
profile of \textbf{\radixsska}.  \compissrsort is much faster for most
of the inputs and for the remaining inputs only slightly
slower.  For example, \compissrsort sorts $84$\,\% of the
inputs faster than \radixsska.  Also, \compissrsort sorts $91$\,\%
of the inputs at least a factor of $1.25$ faster than
\radixsska.  \radixsska on the other hand sort only $34$\,\% of the
inputs at most a factor of $1.25$ faster.
The performance profile of \textbf{\compspdq} is even worse than the profile of
\radixsska.  For example, \compissrsort sorts $97$\,\% of the
inputs at least a factor of $1.25$ faster than \compspdq.
\compspdq on the other hand sort only $26$\,\% of the inputs
at most a factor of $1.25$ faster.

\emph{Comparison to \compissssort.} The performance profile of
\compissssort is in most ranges significantly better than the profile
of \textbf{\compspdq}.  For example, \compissssort sorts $79$\,\% of
the inputs faster than \compspdq.  Also, \compspdq sorts only
$60$\,\% of the inputs at least a factor of $1.25$ faster
than \compissssort whereas \compissssort sorts $86$\,\% of the
inputs at least a factor of $1.25$ faster than \compspdq.  We note
that \compissssort is significantly slower than \compspdq for some
inputs.  These inputs are \distalmostsorted inputs.  The
performance profile of \compissssort is slightly better than the
profile of \textbf{\radixsska}.  For example, \compissssort sorts
$54$\,\% of the inputs faster than \radixsska.  Also,
\compissssort (\radixsska) sorts $83$\,\% ($77$\,\%) of the
inputs at least a factor of $1.25$ faster.

\subsection{Influence of the Memory Allocation Policy}\label{sec:numa allocation policy}

On NUMA
machines, access to memory attached to the local NUMA node is
faster than memory access to other nodes.  Thus, the memory access
pattern of a shared-memory algorithm may highly influence its
performance.  For example, an algorithm can greatly benefit by minimizing the
memory access of its threads to other NUMA nodes.  However, we
cannot avoid access to non-local NUMA nodes for shared-memory
sorting algorithms: For example, when the input array is distributed among
the NUMA nodes, the input- and output-position of elements may be on
different nodes.  In this case, it can be an advantage to distribute
memory access evenly across the NUMA nodes to utilize the full
memory bandwidth.  Depending on the access patterns of an algorithm, a
memory layout may suit a parallel algorithm better than another.  If
we do not consider different memory layouts of the input array in our
benchmark, the results may wrongly indicate that one algorithm is
better than another.

The memory layout of the input array depends on the \emph{NUMA
  allocation policy} of the input array and former access to the
array.
The \emph{local allocation policy} allocates memory pages at the
thread's local NUMA node if memory is available.  This memory policy
is oftentimes the default policy.  Note that after the user has
allocated memory with this policy, the actual memory pages are not
allocated until a thread accesses them the first time.  A memory page
is then allocated on the NUMA node of the accessing thread.  This
principle is called \emph{first touch}.  The \emph{interleaved
  allocation policy} pins memory pages round-robin to (a defined set
of) NUMA nodes.  The \emph{bind allocation policy} binds memory to a
defined set of NUMA nodes and \emph{preferred allocation} allocates
memory on a preferred set of NUMA nodes.  For example, a user could
create an array with the bind allocation policy such that the $i$'th
stripe of the array is pinned to NUMA node $i$.

Benchmarks of sequential algorithms usually allocate and initialize
the input array with a single thread with the default allocation
policy (local allocation).  The memory pages of the array are thus all
allocated on a single NUMA node (\emph{local arrays}).  Local arrays
are slow for many parallel algorithms because the NUMA node holding
the array becomes a bottleneck.  It is therefore recommended to use a
different layout for parallel (sorting) algorithms.  For example, the
authors of \radixraduls recommend to use an array where the $i$'th
stripe of the array is first touched by thread $i$ (\emph{striped
  array}).  Another example is \radixregion for which the authors
recommend to invoke the application with the interleaved allocation
policy.  We call arrays of those applications \emph{interleaved
  arrays}.  Orestis~and~Ross allocate for their
benchmarks~\cite{polychroniou2014comprehensive} on machines with $m$
NUMA nodes $m$ subarrays where subarray $i$ is pinned to NUMA node
$i$.

We execute the benchmark of each algorithm with the following four
input array types.
\begin{itemize}
\item For the \emph{local array}, we allocate the array with the
  function \texttt{malloc} and a single thread initializes the array.
\item For the \emph{striped array}, we allocate the array with
  \texttt{malloc} and thread $i$ initializes the $i$'th stripe of the
  input array.
\item For the \emph{interleaved array}, we activate the process-wide
  interleaved allocation policy using the Linux tool \texttt{numactl}.
\item The \emph{NUMA array}~\cite{Michael2020Numa} uses a refined
  NUMA-aware array proposed by Lorenz
  Hübschle-Schneider\footnote{\href{https://gist.github.com/lorenzhs}{https://gist.github.com/lorenzhs}}. The
  NUMA array pins the stripe $i$ of the array to NUMA node $i$. This
  approach is similar to the array used by Orestis~and~Ross except
  that the NUMA array is a continuous array.
\end{itemize}

\Cref{tab:numa comparison} shows the average slowdown of each array
type for each algorithm on our machines.  As expected, the machines
with a single CPU, \pcamd and \pcintelfour, do not benefit from NUMA
allocations.  On the NUMA machines, the local array performs
significantly worse than the other arrays.  Depending on the
algorithm, the average slowdown of the local array is a factor of up
to $1.49$ larger than the average slowdown of the
respectively best array on \pcinteltwo.  On \pcintellargefour, the
local array performs even worse: Depending on the algorithm, the
average slowdown of the local array is a factor of $1.12$ to $4.88$
larger.

The interleaved array (significantly) outperforms the other arrays for
most algorithms or shows similar slowdown factors ($\pm 0.02$) on the
NUMA machines, i.e. \pcinteltwo and \pcintellargefour.  Only
\comppaspas is on these machines with the striped array noticeable
faster than with the interleaved array.  However, \comppaspas shows
large running times in general.  On \pcinteltwo, the NUMA machine with
$32$~cores, the average slowdown ratios of the striped array and the
NUMA array to the interleaved array are relatively small (up to
$1.12$).  On \pcintellargefour, which is equipped with
$80$~cores, the average slowdown ratio of the striped array (NUMA
array) to the interleaved array increases in the worst case to
$1.44$ (to $2.83$).

Our algorithm \compiparassssort has almost the same average slowdowns
when we execute the algorithm with the interleaved or the NUMA array.
Other algorithms, e.g., our closest competitors \radixraduls and
\radixregion, are much slower on \pcintellargefour when executed with
the NUMA array.  The reason is that a thread of our algorithm
predominantly works on one stripe of the input array allocated on a
single NUMA node.

In conclusion, the local array should not be used on NUMA machines.
The interleaved array is the best array on these machines with just a
few minor exceptions.  The NUMA array and the striped array perform
better than the local array on NUMA machines and in most cases worse
than the interleaved array.  Unless stated otherwise, we report
results obtained with the interleaved input array.

\begin{table}
  \centering
  \resizebox{\textwidth}{!}{%
    \begin{tabular}{@{}l@{}|@{}c@{}|@{}c@{}|@{}c@{}|@{}c@{}}
  & \pcamd & \pcintelfour & \pcinteltwo & \pcintellargefour \\ \hline
  \begin{tabular}{l}
    \\
               \comppaspas \\
                \comppsort \\
        \comppbalancedsort \\
         \compiparassssort \\
                \compppbbs \\
    \compmyparassssaxtmann \\
                 \compptbb \\
         \compiparassrsort \\
               \radixppbbr \\
              \radixraduls \\
              \radixregion \\
    \end{tabular}
    &
    \begin{tabular}{rrrr}
      LA & IA & SA & NA \\ 
           1.01 &          1.00 &          1.00 & \textbf{1.00} \\
  \textbf{1.00} &          1.01 &          1.01 &          1.01 \\
           1.02 &          1.03 &          1.02 & \textbf{1.01} \\
           1.01 & \textbf{1.00} &          1.01 &          1.03 \\
           1.01 & \textbf{1.00} &          1.00 &          1.00 \\
           1.01 & \textbf{1.00} &          1.01 &          1.02 \\
           1.02 &          1.02 & \textbf{1.01} &          1.02 \\
           1.01 &          1.01 & \textbf{1.01} &          1.02 \\
  \textbf{1.00} &          1.01 &          1.01 &          1.00 \\
  \textbf{1.00} &          1.01 &          1.01 &          1.10 \\
           1.00 &          1.01 &          1.01 & \textbf{1.00} \\
  \end{tabular}
   & 
    \begin{tabular}{rrrr}
      LA & IA & SA & NA \\ 
  \textbf{1.00} &          1.01 &          1.01 &          1.01 \\
  \textbf{1.00} &          1.02 &          1.01 &          1.02 \\
           1.04 &          1.05 & \textbf{1.02} &          1.04 \\
           1.01 &          1.01 &          1.01 & \textbf{1.00} \\
           1.01 &          1.00 & \textbf{1.00} &          1.01 \\
  \textbf{1.00} &          1.00 &          1.01 &          1.01 \\
           1.02 & \textbf{1.01} &          1.01 &          1.02 \\
           1.01 & \textbf{1.00} &          1.01 &          1.02 \\
           1.03 & \textbf{1.01} &          1.01 &          1.02 \\
           1.08 &          1.09 &          1.09 & \textbf{1.00} \\
  \textbf{1.00} &          1.01 &          1.01 &          1.01 \\
  \end{tabular}
     & 
    \begin{tabular}{rrrr}
      LA & IA & SA & NA \\ 
  1.22 &          1.05 & \textbf{1.01} &          1.02 \\
  1.13 & \textbf{1.03} &          1.06 &          1.03 \\
  1.49 & \textbf{1.00} &          1.08 &          1.02 \\
  1.27 & \textbf{1.00} &          1.12 &          1.01 \\
  1.09 & \textbf{1.00} &          1.03 &          1.01 \\
  1.13 & \textbf{1.00} &          1.11 &          1.04 \\
  1.10 &          1.02 &          1.09 & \textbf{1.01} \\
  1.45 &          1.02 &          1.14 & \textbf{1.00} \\
  1.11 & \textbf{1.01} &          1.03 &          1.04 \\
  1.23 & \textbf{1.01} &          1.03 &          1.09 \\
  1.28 & \textbf{1.00} &          1.07 &          1.05 \\
  \end{tabular}
     & 
    \begin{tabular}{rrrr}
      LA & IA & SA & NA \\ 
  4.59 &          1.11 & \textbf{1.00} & 1.39 \\
  2.28 & \textbf{1.02} &          1.16 & 1.18 \\
  3.67 & \textbf{1.01} &          1.28 & 1.30 \\
  3.43 & \textbf{1.00} &          1.32 & 1.08 \\
  1.47 & \textbf{1.00} &          1.17 & 1.12 \\
  2.28 & \textbf{1.01} &          1.19 & 1.23 \\
  1.12 & \textbf{1.03} &          1.12 & 1.05 \\
  4.88 & \textbf{1.01} &          1.45 & 1.04 \\
  2.33 & \textbf{1.01} &          1.27 & 1.44 \\
  4.80 & \textbf{1.01} &          1.53 & 2.86 \\
  4.18 & \textbf{1.04} &          1.22 & 1.36 \\
  \end{tabular}

\end{tabular}

  }
  \caption{\label{tab:numa comparison}%
    Average slowdowns of the local array (LA), the interleaved array
    (IA), the striped array (SA), and the NUMA array (NA) for
    different parallel sorting algorithms on different machines. 
    We only consider \ulong data types with at least
    $2^{21}t$~bytes and input distribution \distuniform.}
\end{table}
  
\subsection{Evaluation of the Parallel Task Scheduler}\label{sec:results task scheduler}

The version of \compiparassssort proposed in the
conference article of this publication
(\compiparassssortnts)~\cite{axtmann2017confplace} uses a very simple
task scheduling. I.e., tasks with more than $n/t$ elements are all
executed with $t$ threads (so-called parallel tasks) and sequential
tasks are assigned to threads greedily in descending order according
to their size.  The task scheduler of \compiparassssort, described
in~\cref{sec:task scheduling}, has three advantages. First, the
number of threads processing a parallel task decreases as the size of
the task decreases.  This means that we can process small parallel
subtasks more efficiently.  Second, voluntary work sharing is used
to balance the load of sequential tasks between threads.  Finally,
thread $i$ predominantly accesses elements from
$\VarArray\oset{i n/t }{ (i + 2)n/t - 1}$ in sequential tasks and in
classification phases (see \cref{lem:seq task range,lem:par
  task range}).
Thus, the access
pattern of \compiparassssort significantly reduces memory access to
the nonlocal NUMA nodes when the striped array or the NUMA array is
used.

\Cref{tab:ips4o comparison} compares \compiparassssort with
\compiparassssortnts.
On machines with multiple NUMA nodes, i.e., \pcinteltwo and
\pcintellargefour, both algorithms are much slower when the local
array is used. This is not surprising as the input is read in this
case from a single NUMA node.  On machine \pcintellargefour,
\pcinteltwo, and \pcamd, \compiparassssort shows a slightly smaller
average slowdown than \compiparassssortnts for the same array
type. The improvements come from the voluntary work sharing.  Both
algorithms do not execute parallel subtasks as $t \ll k$.


In the remainder of this section, we discuss the results obtained on
machine \pcintellargefour. These results are perhaps the most
interesting: Compared to the other machines, on \pcintellargefour
tasks with more than $n/t$ elements occur regularly on the second
recursion level of the algorithms as the number of threads is only
slightly smaller than $k$.  Thus, both algorithms actually perform
parallel subtasks.  In contrast to \compiparassssortnts,
\compiparassssort uses thread groups whose size is proportional to the
size of parallel tasks. Thus, we expect \compiparassssort to be
faster than \compiparassssortnts for any array type.

We want to point out that the advantage of \compiparassssort is caused
by the handling of its parallel subtasks, not by the voluntary work
sharing: When no parallel subtasks are executed, the running times
do not differ much.  However, our experiments show that
\compiparassssort performs much better than \compiparassssortnts in
cases where parallel tasks occur on the second recursion level.  We
now discuss the running time improvements separately for each array
type.

With the interleaved array, \compiparassssort reports the fastest
running times. For this array, the average slowdown ratio of
\compiparassssortnts to \compiparassssort is $1.13$. For
the interleaved array, we expect that parallel subtasks oftentimes
cover multiple memory pages. Thus, both algorithms can utilize the
bandwidth of multiple NUMA nodes when executing parallel subtasks. We
assume that this is the reason that \compiparassssortnts is not much
slower than \compiparassssort with interleaved arrays. For the NUMA
array the average slowdown ratio increases to $2.54$ --
\compiparassssortnts becomes much slower.  The reason for this
slowdown is that the subarray associated with a parallel subtask will
often reside on a single NUMA node. \compiparassssortnts executes such tasks with all $t$
threads which then leads to a severe memory bottleneck. Additionally, subtasks of this task can be assigned to any of these threads.
\compiparassssort on the other hand executes the task with a thread group of appropriate size.
Threads of this thread group also process resulting subtasks (unless they are rescheduled to other threads).

Let us now compare the striped array with the NUMA array.  While
\compiparassssortnts exhibits about the same (bad) performance with
both arrays, \compiparassssort becomes $22$\,\% slower when executed
with the striped array (but still almost twice as fast as
\compiparassssortnts). A reason for the slowdown of \compiparassssort
might be that the striped array does not pin memory pages. Thus,
during the block permutation, many memory pages are moved to
other NUMA nodes. This is counterproductive since they are later
accessed by threads on yet another NUMA node.

If a local array is used, the NUMA node holding it becomes a severe
bottleneck -- both \compiparassssortnts and \compiparassssort become
several times slower.  \compiparassssort suffers less from this
bottleneck (slowdown factor $1.09$ rather than $1.51$ for
\compiparassssortnts), possibly because a thread $i$ of
\compiparassssort accesses a similar array stripe in a child task $T'$
as in a parent task $T$.  Thus, during the execution of $T$, some memory
pages used by $T'$ might be migrated to the NUMA node of $i$ (recall
that local arrays are not pinned).

In conclusion, \compiparassssort is (much) faster than
\compiparassssortnts for any array type tested here.
\compiparassssort shows the best performance for the interleaved array
and the NUMA array, with the interleaved array performing slightly better. Both arrays allocate memory pages distributed
among the NUMA nodes, and, compared to the striped array, pin the
memory pages to NUMA nodes. For these arrays, the average slowdown
ratio of \compiparassssortnts to \compiparassssort is between $1.13$
and $2.54$.

\begin{table}
  \centering
  \centering
  \resizebox{\textwidth}{!}{%
      \begin{tabular}{l|ccl|ccl|ccl|ccl}
    & \multicolumn {3}{c|}{local array} &  \multicolumn {3}{c|}{interleaved array} &  \multicolumn {3}{c|}{striped array} &  \multicolumn {3}{c}{NUMA array} \\ 
    & ips$^4$oNT & \multicolumn {2}{c|}{ips4o} & ips$^4$oNT &  \multicolumn {2}{c|}{ips4o} & ips$^4$oNT & \multicolumn {2}{c|}{ips4o} & ips$^4$oNT &  \multicolumn {2}{c}{ips4o} \\\hline
               \pcamd & 1.07 & 1.00 &  \hspace{-0.8em}{(}3.62{)} & 1.06 & 1.00 & \hspace{-0.8em}{(}3.59{)} & 1.06 & 1.00 & \hspace{-0.8em}{(}3.61{)} & 1.05 & 1.00 & \hspace{-0.8em}{(}3.67{)} \\
         \pcintelfour & 1.04 & 1.00 &  \hspace{-0.8em}{(}4.47{)} & 1.04 & 1.00 & \hspace{-0.8em}{(}4.46{)} & 1.04 & 1.00 & \hspace{-0.8em}{(}4.47{)} & 1.05 & 1.00 & \hspace{-0.8em}{(}4.44{)} \\
          \pcinteltwo & 1.03 & 1.01 &  \hspace{-0.8em}{(}5.65{)} & 1.02 & 1.01 & \hspace{-0.8em}{(}4.47{)} & 1.04 & 1.04 & \hspace{-0.8em}{(}5.01{)} & 1.02 & 1.01 & \hspace{-0.8em}{(}4.52{)} \\
    \pcintellargefour & 1.51 & 1.09 & \hspace{-0.8em}{(}17.46{)} & 1.13 & 1.00 & \hspace{-0.8em}{(}5.22{)} & 1.84 & 1.00 & \hspace{-0.8em}{(}6.90{)} & 2.54 & 1.00 & \hspace{-0.8em}{(}5.66{)} \\
\end{tabular}

  }
  \caption{\label{tab:ips4o comparison}%
    This table shows the average slowdown of \compiparassssort and
    \compiparassssortnts to the best of both algorithms for different
    array types and machines. The
    numbers in parentheses show the average running times of
    \compiparassssort divided by $n/t \log_2 n$ in
    nanoseconds.  We only consider \ulong data types with at least
    $2^{21}t$~bytes and input distribution \distuniform.}
\end{table}

\subsection{Parallel Algorithms}\label{sec:parallel comparison}

In this section, we compare parallel algorithms for different
machines, input distributions, input sizes, and data types.  We begin
with a comparison of the average slowdowns of \compiparassssort,
\compiparassrsort, and their competitors for ten input distributions
executed with six different data types (see \cref{sec:par slowdown}).
This gives a first general view of the performance of our algorithms
as the presented results are aggregated across all machines.
Afterwards, we compare the algorithms for input distribution
\distuniform with data type \ulong on different machines: We consider
scaling with input sizes in \cref{sec:par uniform input} and scaling
with the number of utilized cores in \cref{sec:par uniform speedup}.
Then, we discuss in \cref{sec:par distr types} the running times for
an interesting set of input distributions and data types, again by
scaling the input size. In \cref{sec:par perf profiles}, we discuss
the performance profiles of our algorithms and their most promising
competitors. Finally, we separately compare \compiparassssort to
\imsdradix, which is only implemented in a very explorative manner and
thus only works in some special cases (see \cref{sec:comp inpl MSD
  radix sort}).

\subsubsection{Comparison of Average Slowdowns}\label{sec:par slowdown}

\begin{table}
  \resizebox*{!}{0.93\textheight}{

\begin{tabular}{ll|rrrrrr|rrrrrrr}
    Type
  & Distribution
  & \rotatebox[origin=c]{90}{\compiparassssort} 
  &  \rotatebox[origin=c]{90}{\compppbbs}
  & \rotatebox[origin=c]{90}{\compmyparassssaxtmann} 
  & \rotatebox[origin=c]{90}{\comppsort}
  & \rotatebox[origin=c]{90}{\comppbalancedsort} 
  & \rotatebox[origin=c]{90}{\compptbb} 
  & \rotatebox[origin=c]{90}{\radixregion}  
  & \rotatebox[origin=c]{90}{\radixppbbr}
  & \rotatebox[origin=c]{90}{\radixraduls}
  & \rotatebox[origin=c]{90}{\comppaspas}
  & \rotatebox[origin=c]{90}{\compiparassrsort} \\\hline
  \double &        \distsorted &          1.42 & 10.96 & 2.02 & 15.47 &  13.36 & \textbf{1.06} &  &  &  & 42.23 &  \\
  \double & \distreversesorted & \textbf{1.06} &  1.34 & 1.98 &  1.76 &  11.00 &          3.01 &  &  &  &  5.34 &  \\
  \double &          \distones &          1.54 & 12.83 & 1.80 & 14.55 & 166.67 & \textbf{1.06} &  &  &  & 41.78 &  \\

  \hline\hline
  
  \double &            \distexpo & \textbf{1.00} & 1.82 & 1.97 & 2.60 & 3.20 & 10.77 &  &  &  & 4.97 &  \\
  \double &            \distzipf & \textbf{1.00} & 1.96 & 2.12 & 2.79 & 3.55 & 11.56 &  &  &  & 5.33 &  \\
  \double &  \distduplicatesroot & \textbf{1.00} & 1.54 & 2.22 & 2.52 & 3.88 &  5.54 &  &  &  & 6.28 &  \\
  \double & \distduplicatestwice & \textbf{1.00} & 1.93 & 1.88 & 2.45 & 2.99 &  5.52 &  &  &  & 4.44 &  \\
  \double & \distduplicateseight & \textbf{1.00} & 1.82 & 2.01 & 2.48 & 3.19 & 10.37 &  &  &  & 5.02 &  \\
  \double &    \distalmostsorted & \textbf{1.00} & 1.73 & 2.40 & 5.12 & 2.18 &  3.54 &  &  &  & 6.37 &  \\
  \double &         \distuniform & \textbf{1.00} & 2.00 & 1.85 & 2.53 & 2.99 &  9.16 &  &  &  & 4.39 &  \\

  \hline
  Total  & &

  \textbf{1.00} & 1.82 & 2.06 & 2.83 & 3.10 & 7.46 &  &  &  & 5.21 &  \\

  Rank & &
  1 & 2 & 3 & 4 & 5 & 7 &  &  &  & 6 &  \\\hline\hline
  
  \ulong &        \distsorted &          1.45 & 10.56 & 1.80 & 15.65 &  13.50 & \textbf{1.09} & 6.72 & 56.24 & 33.08 &  & 8.83 \\
  \ulong & \distreversesorted & \textbf{1.17} &  1.42 & 2.23 &  2.01 &  12.27 &          3.40 & 1.34 &  8.07 &  4.65 &  & 1.76 \\
  \ulong &          \distones &          1.69 & 13.58 & 1.87 & 15.02 & 171.86 & \textbf{1.13} & 1.36 & 51.61 & 32.50 &  & 1.16 \\

  \hline\hline
  
  \ulong &            \distexpo & \textbf{1.04} & 1.74 & 2.10 & 2.62 & 3.41 & 10.38 & 1.79 &  1.58 & 2.58 &  &          1.20 \\
  \ulong &            \distzipf & \textbf{1.00} & 1.82 & 2.16 & 2.69 & 3.60 & 10.48 & 1.61 & 16.80 & 6.04 &  &          1.68 \\
  \ulong &  \distduplicatesroot & \textbf{1.00} & 1.47 & 2.24 & 2.52 & 3.84 &  5.78 & 1.59 &  9.89 & 7.00 &  &          1.54 \\
  \ulong & \distduplicatestwice & \textbf{1.07} & 1.91 & 2.04 & 2.54 & 3.20 &  5.83 & 1.30 & 10.00 & 3.89 &  &          1.34 \\
  \ulong & \distduplicateseight & \textbf{1.02} & 1.69 & 2.06 & 2.42 & 3.25 &  9.54 & 1.37 & 12.45 & 5.00 &  &          1.44 \\
  \ulong &    \distalmostsorted & \textbf{1.11} & 1.88 & 2.73 & 5.75 & 2.54 &  4.15 & 1.36 &  9.84 & 5.87 &  &          1.55 \\
  \ulong &         \distuniform &          1.13 & 2.10 & 2.14 & 2.80 & 3.32 &  9.57 & 1.59 &  1.41 & 1.49 &  & \textbf{1.03} \\

  \hline
  Total  & &

  \textbf{1.05} & 1.79 & 2.20 & 2.91 & 3.28 & 7.54 & 1.51 & 6.17 & 4.07 &  & 1.38 \\

  Rank & &
  1 & 4 & 5 & 6 & 7 & 10 & 3 & 9 & 8 &  & 2 \\\hline\hline
  
  \uint &        \distsorted & \textbf{1.77} & 10.03 & 2.77 & 11.64 &  14.68 &          1.91 & 5.28 &  7.86 &  &  &          4.98 \\
  \uint & \distreversesorted &          1.51 &  1.84 & 2.46 &  2.03 &  11.96 &          5.17 & 1.22 &  1.44 &  &  & \textbf{1.17} \\
  \uint &          \distones &          1.59 & 15.94 & 1.95 & 19.35 & 286.17 & \textbf{1.18} & 1.50 & 73.11 &  &  &          1.20 \\

  \hline\hline
  
  \uint &            \distexpo &          1.31 & 2.85 & 2.34 & 3.68 & 4.55 & 17.62 & 1.57 & 2.02 &  &  & \textbf{1.02} \\
  \uint &            \distzipf & \textbf{1.05} & 2.54 & 2.06 & 3.22 & 4.05 & 15.68 & 1.33 & 6.39 &  &  &          1.41 \\
  \uint &  \distduplicatesroot & \textbf{1.09} & 1.78 & 2.26 & 2.62 & 3.92 &  6.16 & 1.37 & 7.50 &  &  &          1.42 \\
  \uint & \distduplicatestwice &          1.40 & 3.18 & 2.32 & 3.59 & 4.35 &  9.10 & 1.24 & 1.83 &  &  & \textbf{1.02} \\
  \uint & \distduplicateseight &          1.23 & 2.84 & 2.26 & 3.41 & 4.24 & 16.24 & 1.33 & 1.84 &  &  & \textbf{1.08} \\
  \uint &    \distalmostsorted &          1.38 & 2.08 & 2.63 & 5.66 & 3.22 &  4.54 & 1.32 & 1.62 &  &  & \textbf{1.08} \\
  \uint &         \distuniform &          1.41 & 3.26 & 2.28 & 3.68 & 4.45 & 14.52 & 1.36 & 1.61 &  &  & \textbf{1.03} \\

  \hline
  Total  & &

  1.26 & 2.59 & 2.30 & 3.60 & 4.09 & 10.75 & 1.36 & 2.49 &  &  & \textbf{1.14} \\

  Rank & &
  2 & 6 & 4 & 7 & 8 & 9 & 3 & 5 &  &  & 1 \\\hline\hline
  
  \pair &        \distsorted &          1.39 &  9.38 & 1.82 & 15.05 &  15.50 & \textbf{1.03} & 5.75 & 20.15 & 52.30 &  & 8.02 \\
  \pair & \distreversesorted & \textbf{1.09} &  1.47 & 2.06 &  2.22 &  10.46 &          3.15 & 1.35 &  3.21 &  8.24 &  & 1.77 \\
  \pair &          \distones &          1.66 & 14.10 & 1.77 & 15.21 & 118.30 & \textbf{1.08} & 1.21 & 11.71 & 54.52 &  & 1.16 \\

  \hline\hline
  
  \pair &            \distexpo &          1.12 & 1.77 & 2.22 & 2.76 & 3.09 & 6.92 & 1.92 & \textbf{1.07} &  9.52 &  &          1.39 \\
  \pair &            \distzipf & \textbf{1.00} & 1.62 & 2.04 & 2.53 & 2.79 & 6.30 & 1.62 &          7.35 &  9.87 &  &          1.77 \\
  \pair &  \distduplicatesroot & \textbf{1.01} & 1.58 & 2.08 & 2.81 & 3.84 & 4.88 & 1.58 &          4.35 & 11.76 &  &          1.52 \\
  \pair & \distduplicatestwice & \textbf{1.02} & 1.67 & 2.02 & 2.44 & 2.96 & 4.10 & 1.43 &          4.88 &  7.54 &  &          1.48 \\
  \pair & \distduplicateseight & \textbf{1.02} & 1.59 & 2.05 & 2.41 & 2.83 & 6.01 & 1.40 &          6.98 &  8.81 &  &          1.57 \\
  \pair &    \distalmostsorted & \textbf{1.05} & 1.95 & 2.69 & 5.67 & 3.24 & 3.88 & 1.37 &          4.27 & 10.94 &  &          1.65 \\
  \pair &         \distuniform &          1.08 & 1.81 & 2.12 & 2.62 & 2.93 & 6.15 & 1.67 &          1.20 &  5.36 &  & \textbf{1.04} \\

  \hline
  Total  & &

  \textbf{1.04} & 1.71 & 2.16 & 2.90 & 3.08 & 5.35 & 1.56 & 3.46 & 8.87 &  & 1.47 \\\hline\hline

  Rank & &
  1 & 4 & 5 & 6 & 7 & 9 & 3 & 8 & 10 &  & 2 \\
  
  \quartet & \distuniform & \textbf{1.01} & 1.29 & 2.08 & 2.40 & 2.93 & 4.42 &  &  &  &  &  \\

  \hline

  Rank & &
  1 & 2 & 3 & 4 & 5 & 6 &  &  &  &  &  \\\hline\hline
  
  \bytes & \distuniform & \textbf{1.05} & 1.14 & 2.14 & 2.35 & 3.18 & 3.55 &  &  &  &  &  \\

  \hline

  Rank & &
  1 & 2 & 3 & 4 & 5 & 6 &  &  &  &  &  \\\hline\hline
\end{tabular}

  }
  \caption{ Average slowdowns of parallel algorithms for different
    data types and input distributions.  The slowdowns average
    over the machines and input sizes with at least $2^{21}t$
    bytes.
  }
  \label{tab:slowdown par}
\end{table}

\Cref{tab:slowdown par} shows average slowdowns of parallel algorithms
for different data types and input distributions aggregated over all
machines and input sizes with at least $2^{21}t$ bytes.
\ifarxiv For the average slowdowns separated by machine, we
refer to \cref{app:more measurements},
\cref{tab:slowdown par 128,tab:slowdown par
  132,tab:slowdown par 133,tab:slowdown par 135}.
\else For the average slowdowns separated by machine, we
refer to the extended version of this paper~\cite{axtmann2020ips4oarxiv}.
\fi
In this section, an \emph{instance} describes the inputs of a
specific data type and input distribution.
We say that ``algorithm A is faster than algorithm B
(by a factor of C) for some instances'' if the average slowdown
of B is larger than the average slowdown of A (by a factor of C) for these instances.

Overall, the results
show that \compiparassssort is much faster than its competitors in
most cases except some ``easy'' instances and some instances with \uint data
types.  Except for some \uint instances, \compiparassssort is even
significantly faster than its fastest radix sort competitor
\radixregion.  This indicates that parallel sorting algorithms are
memory bound for most inputs, except for data types that only have a
few bytes.  In most cases, \compiparassssort also outperforms our
radix sorter \compiparassrsort.  \compiparassrsort is faster
for some instances with \uint data types and, as expected,
\compiparassrsort is faster for \distuniform instances.
\compiparassrsort has a better ranking than our fastest in-place radix
sort competitor \radixregion.  Thus, our approach of sorting data with
parallel block permutations seems to perform better than the
graph-based approach of \radixregion.\\

\emph{Comparison to \compiparassssort.}
\compiparassssort is the fastest algorithm for $30$ out of $42$ instances.
\compiparassssort is outperformed for $8$ instances having
``easy'' input distributions, i.e, \distsorted,
\distreversesorted and \distones. For now on, we consider only these instances:
\compptbb detects \distsorted and
\distones inputs as sorted and returns immediately. \radixregion
detects that the elements of \distones inputs only have zero bits, and
thus, also return immediately for \distones inputs. It is therefore
not surprising that \compptbb and \radixregion sort easy inputs very
fast. \compptbb (\radixregion) is for $4$ (for $3$) \distones 
instances better than \compiparassssort. \compptbb is also better for
$3$ \distsorted instances, i.e., with \double, \ulong, and
\pair data types. Also, \radixregion is faster
than \compiparassssort for the \distreversesorted \uint instance. Our
algorithm also detects these  instances but with a slightly
larger overhead.

In this paragraph, we do not consider ``easy'' instances. \compiparassssort
is significantly faster than our competitors for $23$ out of $30$ instances ($>1.15$). For $3$
instances, \compiparassssort performs similar ($\pm 0.06$) to
\textbf{\radixregion} (\distalmostsorted distributed \uint instance and
\distuniform distributed \uint instance) and \textbf{\radixppbbr} (\distexpo
distributed \pair instance). \radixregion is the only competitor that is noticeably faster than \compiparassssort, at least for one instance, i.e., \distduplicatestwice distributed \uint
inputs (factor $ 1.13$).
Overall, \compiparassssort is faster than its respectively fastest competitor
by a factor of $1.2$, $1.4$, $1.6$, and $1.8$ for
$22$, $13$, $8$, and $5$ noneasy  instances, respectively. If we
only consider comparison-based competitors, \compiparassssort is
faster by a factor of $1.2$, $1.4$, $1.6$, and $1.8$ for $29$, $28$,
$22$, and $10$ noneasy  instances, respectively. The values become
even better when we only consider in-place comparison-based
competitors. In this case, the \compiparassssort is faster by a factor of
$2.15$ for all noneasy  instances.

\compiparassssort is much faster than \textbf{\compmyparassssaxtmann}.
The only difference between these
algorithms is that \compiparassssort implements the partitioning
routine in-place whereas \compmyparassssaxtmann is non-in-place.  We
note that the algorithms share most of their code, even the decision
tree is the same.  The reason why \compmyparassssaxtmann is slower
than \compiparassssort is that \compiparassssort is more cache
efficient than \compmyparassssaxtmann: For example, \compmyparassssaxtmann
has about $46$\,\% more L3-cache misses than \compiparassssort for
\distuniform distributed \ulong inputs with $2^{27}$ elements whereas
the number of instructions and the number of branch (misses) of
\compmyparassssaxtmann are similar to the ones of \compiparassssort.
The sequential results presented in \cref{sec:sequential comparison}
support this conjecture as the gap between the sequential versions is
smaller than the gap between the parallel versions.

\emph{Comparison to \compiparassrsort.}
Our in-place radix sorter \compiparassrsort performs slightly better than our
fastest competitor, \textbf{\radixregion}. \compiparassrsort is faster than \radixregion for $11$ out of
$21$ noneasy instances.  In particular, \compiparassrsort is faster by a
factor of $1.2$, $1.4$, and $1.6$ for $9$, $4$, and $1$ noneasy instances
and the factor is never smaller than $0.8$.

\compiparassrsort outperforms \textbf{\compiparassssort} for instances
with \uint data types and some \distuniform distributed instances.
For \uint instances, \compiparassrsort is faster than \compiparassssort 
by a factor of $1.37$.  Interestingly, \compiparassrsort is not
much faster than \compiparassssort for \distuniform instances with
more than $32$-bit elements.  This indicates that the evaluation of
the branchless decision tree is not a limiting factor for these data
types in \compiparassssort.  For the remaining instances (data types with more
than $32$-bit elements and instances which noneasy distributions other than
\distuniform) \compiparassssort is significantly faster than
\compiparassrsort.
\\

From now on, we do not present results for \comppaspas, \compptbb,
\compmyparassssaxtmann, and \comppsort.  In regard to
non-in-place comparison-based competitors, the
algorithms \comppaspas, \compmyparassssaxtmann, and \comppsort perform
worse than \compppbbs.  For non-in-place comparison-based competitors,
the parallel quicksort algorithm \compptbb is
for noneasy  instances slower than the quicksort implementation
\comppbalancedsort.

\input{extern/ips4o-benchmark-suite-plots/benchmark/running_times/running_time_random_parallel_pretty.tex}

\subsubsection{Running Times for \distuniform Input}\label{sec:par uniform input}

In this section,
we compare \compiparassssort and \compiparassrsort to their closest
parallel competitors for \distuniform distributed \ulong inputs.
\Cref{fig:rt rand par} depicts the running times separately for each
machine.  The results of the algorithms obtained for \double inputs
are similar to the running times obtained for \ulong inputs.  We
decided to present results for \ulong inputs as our closest parallel
competitors for data types with ``primitive'' keys, i.e.,
\radixregion, \radixppbbr, and \radixraduls, do not support \double
inputs.

We outperform all comparison-based algorithms significantly
for medium and large input sizes, e.g., by a factor of $1.49$ to
$2.45$ for the largest inputs depending on the machine.  For in-place
competitors, the factor is even $2.55$ to $3.71$.  In general, all
competitors are very inefficient for small input sizes except the
non-in-place competitors \compppbbs and \radixppbbr.  However, the
performance of \compppbbs and \radixppbbr significantly decreases for
larger inputs.  Exploiting integers in \compiparassrsort slightly
improves the performance for medium and large input sizes compared to
\compiparassssort.  For small input sizes, exploiting integers makes
\compiparassrsort more efficient than \compiparassssort.  Our radix
sorters \radixraduls and \radixregion are only competitive for large
input sizes. Still, they are very inefficient even for these input
sizes on \pcintellargefour, our largest machine. In particular, they
are $2.72$ respectively $3.08$ times slower than \compiparassrsort for
the largest input size on this machine.

We sort twice as much data as our
non-in-place competitors (\compppbbs, \radixraduls, and \radixppbbr)
which run out of memory for $2^{32}$ elements on \pcamd.  Also, the
results in \cref{tab:slowdown par}, \cref{sec:par slowdown}, show that
inputs with \ulong \distuniform inputs are ``best case'' inputs
for \radixraduls.  Other input distributions and data types are sorted
by \radixraduls much less efficient.

\input{extern/ips4o-benchmark-suite-plots/benchmark/running_times/speedup_random_parallel_pretty.tex}

\subsubsection{Speedup Comparison and Strong Scaling}\label{sec:par uniform speedup}

The goal of the speedup benchmark is to examine the performance of
the parallel algorithms with increasing availability of cores.
Benchmarks with $2i$ threads are executed on the first $i$ cores,
starting at the first NUMA node until it is completely used. Then we
continue using the cores of the next NUMA node, and so on.  Here, we
mean by cores ``physical cores'' that run two hardware threads on our
machines and we use NUMA nodes as a synonym for CPUs.\footnote{Many
Linux tools interpret a CPU with two hardware threads per core as two
distinct NUMA nodes -- one contains the first hardware thread of each
core and the other contains the second hardware threads of each core.%
}  In result, the benchmark always takes advantage
of the ``full capacity'' of a core with hyper-threading.  Preliminary
experiments have shown that all algorithms are slowed down when we use
only one thread per core.


\Cref{fig:par speedup} depicts the speedup of parallel algorithms
executed on different numbers of cores relative to our sequential
implementation \compissssort on our machines for \distuniform inputs.
We first compare our algorithms to the non-in-place radix sorter
\textbf{\radixraduls}.  This competitor is fast for \distuniform
inputs but it is slow for inputs with skewed key distributions and
inputs with duplicated keys (see \cref{tab:slowdown par} in
\cref{sec:par slowdown}).  On the machines with one CPU, \pcamd and
\pcintelfour, \radixraduls is faster when we use only a fraction of
the available cores.
When we further increase the available cores on these machines, the
speedup of \radixraduls stagnates and our algorithms,
\compiparassssort and \compiparassrsort, catch up until they have
about the same speedup.
\radixraduls also outperforms all algorithms on
our machine with four CPUs, \pcintellargefour, when the algorithms use
only one CPU.
On the same machine, the performance of \radixraduls stagnates when we expand the algorithm to more than one CPU. When \radixraduls uses all CPUs, it is even a factor of $2.54$ slower than our algorithm \compiparassssort.
We have seen the same performance characteristics when we executed \compiparassssortnts on this machine.
\compiparassssort solved this problem of \compiparassssortnts with a more sophisticated memory and task management.
Thus, we conclude that the same problems also result in performance problems for \radixraduls.
Our
algorithms \compiparassssort and \compiparassrsort use the memory on
this machine more efficiently and do not get memory-bound  -- the
speedup of our algorithms increases on \pcintellargefour linearly.


The in-place radix sorter \textbf{\radixregion} seems to have similar problems as \radixraduls
on \pcintellargefour.
Even worse, the
speedup of \radixregion stagnates on three out of four machines when
the available cores increase.  When all cores are used, the speedup of
\radixregion is a factor of $1.11$ to $2.70$ smaller than the speedup
of \compiparassrsort.  On three out of four machines, our radix sorter
\textbf{\compiparassrsort} has a larger speedup than our samplesort
algorithm \textbf{\compiparassssort} when we use only a few cores.  For more
cores, their speedups converge on two machines, even though \compiparassrsort performs
significantly fewer instructions.

On the machines with one CPU, \pcamd and \pcintelfour,
\compiparassssort has a speedup of $8.37$ respectively $40.92$.  This
is a factor of $1.46$ respectively $1.85$ more than the fastest
\textbf{comparison-based competitor}.  On the machine with four CPUs,
\pcintellargefour, and on the machine with two CPUs, \pcinteltwo, the
speedup of \compiparassssort is $20.91$ respectively $17.49$. This is even a
factor of $2.27$ respectively $2.17$ more than the fastest comparison-based
competitor.

In conclusion, our in-place algorithms outperform their
comparison-based competitors significantly on all machines
independently of the number of assigned cores. For example,
\compiparassssort yields a speedup of $40.92$ on the machine
\pcintelfour whereas \compppbbs only obtains a speedup of $22.17$. As
expected, the fastest competitors for the (\distuniform) input used in
this experiment are radix sorters.  The fastest 
radix sort competitor, non-in-place \radixraduls, starts with a very large speedup
when only a few cores are in use.  For more cores, \radixraduls
remains faster than our algorithms on one machine (\pcinteltwo).  On
two machines (\pcintelfour and \pcamd), the speedup of \radixraduls
converges to the speedups of our algorithms.  And, on our largest
machine with four CPUs (\pcintellargefour), the memory management of
\radixraduls seems to be not practical at all.  On this machine,
\radixraduls is even a factor of $2.54$ slower than \compiparassssort.
The in-place radix sort competitor
\radixregion is in all cases significantly slower than our algorithms.
The speedup of \compiparassrsort is larger
than the one of \compiparassssort when they
use only a few cores of the machine.  However, the speedup levels out
when the number of cores increases in most cases.

\subsubsection{Input Distributions and Data Types}\label{sec:par distr types}

\begin{figure}[tbp]
\input{extern/ips4o-benchmark-suite-plots/benchmark/running_times/running_time_parallel_allgen_one_machine_pretty_i10pc136.tex}
  \caption{
    Running times of parallel algorithms on different input distributions and data types of size $D$ executed on machine \pcintelfour.
    The radix sorters \radixppbbr, \radixraduls, \radixregion, and \compiparassrsort does not support the data types \double and \bytes.
  }
  \label{fig:par rt distr types}
\end{figure}

In this section, we compare our algorithms to our competitors for different input distributions and data types
by scaling the input size.  We show results of \distuniform inputs for
the data types \double, \pair, and \bytes. For a discussion of
\distuniform distributed \ulong data types, we refer to \cref{sec:par
  uniform input}. For the remaining input distributions, we use the
data type \ulong as a convenient example: In contrast to \double, \ulong is
supported by all algorithms in \cref{fig:par rt distr
  types}.  Additionally, we assume that \ulong is more interesting
than \uint in practice.  We decided to present in \cref{fig:par rt
  distr types} results obtained on machine \pcintelfour as our
competitors have the smallest absolute running time on this
machine.
\ifarxiv For more
details, we refer to \cref{fig:par rt distr types 128,fig:par rt distr
  types 132,fig:par rt distr types 133,fig:par rt distr types 135} in
\cref{app:more measurements} which report the results separately for each machine.
\else For more
details, we refer to the extended version of this
paper~\cite{axtmann2020ips4oarxiv} which reports the
results separately for each machine.
\fi

For many inputs, our \compiparassssort is faster than \compiparassrsort.  For most inputs,
\compiparassssort (and to some extend \compiparassrsort) is much
faster than \radixregion, our closest competitor.  For example, \compiparassssort is up to a
factor of $1.61$ faster for the largest inputs ($n=2^{37}/D$) and up
to a factor of $1.78$ for inputs of medium size ($n=2^{29}/D$).
The results show that radix sorters are often slow for
inputs with many duplicates or skewed key distributions (i.e.,
\distzipf, \distexpo, \distduplicateseight, \distduplicatesroot).
Yet, our algorithm seems to be the least affected by this.
Our algorithms outperform their comparison-based
competitors significantly for all input distributions and data types with $n\geq 2^{28}/D$.
For example, \compiparassssort outperforms \compppbbs by a
factor of $1.25$ to $2.20$ for the largest inputs.
Only for small inputs, where the algorithms are inefficient anyway, our algorithms are consistently outperformed by one algorithm (non-in-place \compppbbs).
The remainder of this section compares our algorithms and their competitors in detail.\\

The non-in-place comparison-based \textbf{\compppbbs} is slower than \compiparassssort for small inputs ($n\leq 2^{27}/D$). We note that all
algorithms are inefficient for these small inputs.  However, for
inputs where the algorithms become efficient and for large inputs,
\compiparassssort significantly outperforms \compppbbs. For example,
\compppbbs is a factor of $1.25$ to $2.20$ slower than
\compiparassssort for the largest input size.  The difference between
\compppbbs and \compiparassssort is the smallest for \bytes inputs.
This input has very large elements which are moved only twice by
\compppbbs due to its $\sqrt{n}$-way partitioning strategy.  We see
this as an important turning point.  While the previous
state-of-the-art comparison-based algorithm worked non-in-place, it is
now robustly outperformed by our in-place algorithms for inputs that
are sorted efficiently by parallel comparison-based sorting
algorithms.

The in-place comparison-based \textbf{\comppbalancedsort} is significantly slower than our
algorithms for all inputs.  For example, \comppbalancedsort
is a factor of $2.46$ to $3.87$ slower than \compiparassssort for the
largest input size. We see this improvement as a major contribution of our paper.

The non-in-place radix sorter \textbf{\radixppbbr} is tremendously slow for
all inputs with skewed inputs and inputs with identical keys.
In particular, its running times exceed the limits of \cref{fig:par rt distr
  types} for \distalmostsorted, \distduplicatesroot,
\distduplicatestwice, and \distzipf inputs.
Exceptions are
\distuniform inputs with \pair data type: For these inputs, \radixppbbr is
faster than our algorithms for small input sizes and performs similar
for medium and large inputs. However, this advantage disappears for other uniformly distributed
inputs (see \cref{tab:slowdown par} in \cref{sec:par slowdown}).
  
The non-in-place radix sorter \textbf{\radixraduls} is a factor of $2.20$ to $2.72$ slower than
\compiparassssort for the largest input size. For smaller inputs, its
performance is even worse for almost all inputs.

Even though \compiparassrsort
outperforms the in-place radix sorter \textbf{\radixregion} for almost all inputs,
\compiparassssort is even faster.  Thus, we concentrate our analysis
on comparing \radixregion to \compiparassssort rather than
\compiparassrsort.
For input data types supported by \radixregion, i.e., integer keys, it
is our closest competitor.  Overall, we see that the efficiency of
\radixregion slightly degenerates for inputs larger than $n>2^{32}$.
The performance of \compiparassssort remains the same for these large
input sizes.  \radixregion performs the best for \distalmostsorted and
\distduplicatestwice distributed inputs.  For these inputs,
\radixregion is competitive to \compiparassssort in most cases.
However, \radixregion performs much worse than \compiparassssort for
the remaining inputs, e.g., random inputs (\distuniform),
skewed inputs (\distexpo and \distzipf), and inputs with many
duplicates (e.g., \distduplicatesroot).  For these distributions, \radixregion is
slower than \compiparassssort by a factor of $1.17$ to $1.61$ for the
largest input size and becomes even less efficient for smaller inputs,
e.g., \radixregion is slower than \compiparassssort by factors
of $1.29$ to $1.68$ for $n=2^{27}$.

\textbf{\compiparassssort} is
competitive or faster than \textbf{\compiparassrsort} for all inputs.
\compiparassssort and \compiparassrsort perform similarly for inputs
of medium input size which are \distuniform, \distduplicatestwice,
\distduplicatesroot, and \distalmostsorted distributed.  Still, for
these inputs, the performance of \compiparassrsort (significantly)
decreases for large inputs ($n>2^{32}$) in most cases.  For inputs
with very skewed key distributions, i.e., \distexpo and \distzipf,
\compiparassssort is significantly faster than \compiparassrsort.

\begin{figure}[tbp]
  \begin{tikzpicture}
    \begin{groupplot}[
      group style={
        group size=4 by 1,
        x descriptions at=edge bottom,
        y descriptions at=edge left,
        horizontal sep=0.4cm,
      },
      width=0.315\textwidth,
      height=0.2\textheight,
      xmin=1,
      xmax=3,
      ymin=0.0,
      ymax=1.05,
      plotstyleparallelperf,
      xmajorgrids=true,
      ymajorgrids=true,
      subtitle/.style={title=#1},
      ]
      
      \nextgroupplot[
      ylabel={Fraction of inputs},
      ]
      \addplot coordinates { (1,0.975793) (1.25,0.997496) (1.5,0.999165) (1.75,1) (2,1) (2.25,1) (2.5,1) (2.75,1) (3,1) (4.5,1) };
      \addlegendentry{\compiparassssort};
      \addplot coordinates { (1,0.024207) (1.25,0.123539) (1.5,0.338898) (1.75,0.516694) (2,0.656928) (2.25,0.77379) (2.5,0.872287) (2.75,0.912354) (3,0.937396) (4.5,1) };
      \addlegendentry{\compppbbs};
      \addplot coordinates { (4.5,1) };
      \addlegendentry{\comppbalancedsort};
      \addplot coordinates { (4.5,1) };
      \addlegendentry{\radixregion};
      \addplot coordinates { (4.5,1) };
      \addlegendentry{\compiparassrsort};

      \legend{}
      
      \coordinate (c1) at (rel axis cs:0,1);
      
      \nextgroupplot[
      every legend/.append style={at=(ticklabel cs:1.1)},
      legend style={at={($(0,0)+(1cm,1cm)$)},legend columns=6,fill=none,draw=black,anchor=center,align=center},
      ]
      \addplot coordinates { (1,0.998331) (1.25,1) (1.5,1) (1.75,1) (2,1) (2.25,1) (2.5,1) (2.75,1) (3,1) (4.5,1) };
      \addlegendentry{\compiparassssort};
      \addplot coordinates { (4.5,1) };
      \addlegendentry{\compppbbs};
      \addplot coordinates { (1,0.00166945) (1.25,0.00500835) (1.5,0.0175292) (1.75,0.033389) (2,0.0542571) (2.25,0.102671) (2.5,0.199499) (2.75,0.373957) (3,0.530885) (4.5,1) };
      \addlegendentry{\comppbalancedsort};
      \addplot coordinates { (4.5,1) };
      \addlegendentry{\radixregion};
      \addplot coordinates { (4.5,1) };
      \addlegendentry{\compiparassrsort};

      \legend{}
      
      \nextgroupplot[
      ]
      \addplot coordinates { (1,0.741667) (1.25,0.913095) (1.5,0.969048) (1.75,0.986905) (2,0.99881) (2.25,1) (2.5,1) (2.75,1) (3,1) (4.5,1) };
      \addlegendentry{\compiparassssort};
      \addplot coordinates { (4.5,1) };
      \addlegendentry{\compppbbs};
      \addplot coordinates { (4.5,1) };
      \addlegendentry{\comppbalancedsort};
      \addplot coordinates { (1,0.258333) (1.25,0.554762) (1.5,0.710714) (1.75,0.791667) (2,0.829762) (2.25,0.852381) (2.5,0.882143) (2.75,0.913095) (3,0.936905) (4.5,1) };
      \addlegendentry{\radixregion};
      \addplot coordinates { (4.5,1) };
      \addlegendentry{\compiparassrsort};

      \legend{}
      
      \nextgroupplot[
      every legend/.append style={at=(ticklabel cs:1.1)},
      legend style={at={($(0,0)+(1cm,1cm)$)},legend columns=6,fill=none,draw=black,anchor=center,align=center},
      legend to name=legendparpair
      ]
      \addplot coordinates { (1,0.62381) (1.25,0.840476) (1.5,0.941667) (1.75,0.983333) (2,0.996429) (2.25,1) (2.5,1) (2.75,1) (3,1) (4.5,1) };
      \addlegendentry{\compiparassssort};
      \addplot coordinates { (4.5,1) };
      \addlegendentry{\compppbbs};
      \addplot coordinates { (4.5,1) };
      \addlegendentry{\comppbalancedsort};
      \addplot coordinates { (4.5,1) };
      \addlegendentry{\radixregion};
      \addplot coordinates { (1,0.37619) (1.25,0.685714) (1.5,0.846429) (1.75,0.869048) (2,0.880952) (2.25,0.890476) (2.5,0.90119) (2.75,0.915476) (3,0.919048) (4.5,1) };
      \addlegendentry{\compiparassrsort};

      \coordinate (c2) at (rel axis cs:1,1);

    \end{groupplot}

    \coordinate (c23s) at ($(group c2r1.south)!.5!(group c3r1.south)$);
    \node[yshift=-0.75cm] at (c23s) {{Running time ratio relative to best}};

    \coordinate (c12n) at ($(group c1r1.north)!.5!(group c2r1.north)$);
    \node[yshift=0.3cm] at (c12n) {{All Data Types}};

    \coordinate (c34n) at ($(group c3r1.north)!.5!(group c4r1.north)$);
    \node[yshift=0.3cm] at (c34n) {{Unsigned Integer Key}};

    \coordinate (c3) at ($(c1)!.5!(c2)$);
    \node[below] at (c3 |- current bounding box.south)
    {\pgfplotslegendfromname{legendparpair}};
  \end{tikzpicture}
  \caption{Pairwise performance profiles of \compiparassssort to
    \compppbbs, \comppbalancedsort, \radixregion, and
    \compiparassrsort. The performance plots with \compppbbs and
    \comppbalancedsort use all data types. The performance plots with
    \radixregion and \compiparassrsort use inputs with
    unsigned integer keys (\uint, \ulong, and \pair data types). The
    results were obtained on all machines for all input distributions
    with at least $2^{21}t$ bytes except \distsorted,
    \distreversesorted, and \distones.}\label{fig:perf par}
\end{figure}

\subsubsection{Comparison of Performance Profiles}\label{sec:par perf profiles}

In this section, we compare the pairwise performance profiles of \compiparassssort with the (non)in-place comparison-based \comppbalancedsort (\compppbbs), and the radix sorter \radixregion as well as the pairwise performance profiles of \compiparassrsort and \radixregion. The profiles are shown in \cref{fig:perf par}. We do not compare our algorithms to the radix sorters \radixppbbr and \radixraduls as these are non-in-place and as their profiles are much worse than the profiles of the in-place radix sorter \radixregion.
Overall, the performance of \compiparassssort is much better than the performance of any other sorting algorithm. When we only consider radix sorters, the performance profile of \compiparassrsort is better than the one of \radixregion.

\compiparassssort performs significantly better than \textbf{\compppbbs}.
For example, \compppbbs sorts only  $2.4$\,\% of the inputs at least at fast as \compiparassssort.
Also, there is virtually no input for which \compppbbs is at least a factor of $1.50$ faster than \compiparassssort.
In contrast, \compiparassssort sorts $66$\,\% of the inputs at least a factor of $1.50$ faster than \compppbbs.

The performance profile of \textbf{\comppbalancedsort} is even worse than the one of \compppbbs.
\compiparassssort is faster than \comppbalancedsort for virtually any inputs.
\compiparassssort is even three times faster than \comppbalancedsort for almost $50$\,\% of the inputs.

The performance of \compiparassssort is also
significantly better the performance of \textbf{\radixregion}. 
For example, \compiparassssort sorts $74$\,\% of the inputs faster than \radixregion.
Also, \radixregion sorts only $9$\,\% of the inputs at least a factor of $1.25$ faster than \compiparassssort.
In contrast, \compiparassssort sorts $44$\,\% of the inputs at least a factor of $1.25$ faster than \radixregion.

Among all pairwise performance profiles, the profiles of \textbf{\compiparassssort and \compiparassrsort} are the closest.
Still, \compiparassssort performs better than \compiparassrsort.
For example, \compiparassssort sorts $62$\,\% of the inputs faster than \compiparassrsort.
Also, \compiparassrsort outperforms \compiparassssort for $16$\,\% of the inputs by a factor of $1.25$ or more.
On the other hand, \compiparassssort outperforms \compiparassrsort for $31$\,\% of the inputs by a factor of $1.25$ or more.

\begin{table}
  \resizebox*{0.40\textwidth}{!}{

\begin{tabular}{|@{}l@{}|@{}c@{}|}
\hline
  \begin{tabular}{l}
    \multirow{2}{*}{Distribution}\\
    \\\hline
  
         \distsorted \\
  \distreversesorted \\
           \distones \\

    \hline\hline
    
             \distexpo \\
             \distzipf \\
   \distduplicatesroot \\
  \distduplicatestwice \\
  \distduplicateseight \\
     \distalmostsorted \\
          \distuniform \\
    \hline
    Total\\
    Rank\\\hline
    
  \end{tabular}
                &
  \begin{tabular}{cc}
  \multicolumn{2}{c}{\pcinteltwo}\\
  \compiparassssort & \imsdradix
  \\\hline
  
  \textbf{1.00} & 77.00 \\
  \textbf{1.00} &  7.23 \\
  \textbf{1.00} &       \\

  \hline\hline
  
  \textbf{1.00} &  2.31 \\
  \textbf{1.00} & 35.64 \\
  \textbf{1.00} &  7.10 \\
  \textbf{1.00} & 41.03 \\
  \textbf{1.00} & 44.58 \\
  \textbf{1.00} & 10.47 \\
  \textbf{1.00} &  1.68 \\

  \hline

  \textbf{1.00} & 10.95 \\

  1 & 2 \\\hline
  \end{tabular}
\end{tabular}

  }
  \caption{
    Average slowdowns of \compiparassssort and \imsdradix for \pair data types, different input distributions, and machine \pcinteltwo with inputs containing at least $2^{21}t$ bytes.
    \imsdradix breaks for \distones input.
  }
  \label{tab:slow ips4o vs inplace msd radix}
\end{table}

\subsubsection{Comparison to \imsdradix}\label{sec:comp inpl MSD radix sort}

We compare our algorithm \compiparassssort to the in-place radix sorter
\imsdradix~\cite{polychroniou2014comprehensive}
separately as the available implementation works only in rather
special circumstances -- $64$ threads, $n>2^{26}$, integer key-value
pairs with values stored in a separate array.  Also, that
implementation is not in-place and requires a very specific input
array: On a machine with $m$ NUMA nodes, the input array must consist
of $m$ subarrays with a capacity of $1.2n/m$ each. The experiments in
\cite{GithubOrestis2020comprehensive} pin subarray $i$ to NUMA node $i$.
We nevertheless see the comparison as important since
\imsdradix uses a similar basic approach to block
permutation as \compiparassssort.

\Cref{tab:slow ips4o vs inplace msd radix} shows the average slowdowns
of \compiparassssort and \imsdradix for different input
distributions executed on \pcinteltwo.  We did not run
\imsdradix on \pcintelfour, \pcintellargefour, and \pcamd as
these machines do not have $64$ hardware theads.  The results show
that \compiparassssort is much faster than \imsdradix for
all input distributions.  For example, the average slowdown ratio of
\imsdradix to \compiparassssort is $7.10$ for
\distduplicatesroot input on \pcinteltwo.  Note that
\imsdradix breaks for \distones input and its average
slowdown are between $35.64$ and $44.58$ for some input distributions
with duplicated keys (\distduplicatestwice, \distduplicateseight, and
\distzipf) and with a skewed key distribution (\distzipf).  For
\distsorted input, \imsdradix is also much slower because
\compiparassssort detects sorted inputs.

\input{extern/ips4o-benchmark-suite-plots/benchmark/running_times/subroutine_times.tex}

\subsection{Phases of Our Algorithms}\label{sec:algorithm phases}

\Cref{fig:subroutines} shows the running times of the sequential samplesort algorithm \compissssort and the sequential radix sorter \compissrsort as well as their parallel counterparts \compiparassssort and \compiparassrsort.
The running times are split into the four phases of the partitioning step (sampling,  classification, permutation, and cleanup), the time spent in the base case algorithm, and  overhead for the remaining components of the algorithm such as initialization and scheduling.
In the following discussion of the sequential and parallel execution times,
we report numbers for the largest input size unless stated otherwise.

\subsubsection*{Sequential Algorithms}

The running time curves of  \compissrsort are less smooth than those for \compissssort because this code currently lacks the same careful adaptation of the distribution degree $k$

The time for \textbf{sampling} in the partitioning steps of \compissssort is relatively small, i.e., $7.88$\,\% of the total running time.
For \compissrsort, no sampling is performed.

The \textbf{classification phase} of \compissssort takes up about half of the total running time.
The \textbf{permutation phase} is a factor of about eight faster than its classification phase.
As both phases transfer about the same data volume and as the classification phase performs a factor of $\Th{\log k}$ more local work ($\log k = 8$), we conclude that the classification phase is bounded by its local work. It is interesting to note that this was very different in 2004. In the 2004 paper \cite{sanders2004super}, data distribution dominated element classification.
Since then, peak memory bandwidth of high-performance processors has increased much faster than
internal speed of a single core. The higher investment in memory bandwidth was driven by the need to accommodate the memory traffic of multiple cores. Indeed, we will see below that for parallel execution, memory access once more becomes crucial.
Classification and permutation phases of \compissrsort behave similarly as for \compissssort. Since the local work for classification is much lower for radix sort, the running time ratio between these two phases is smaller yet, with $2.43$ still quite high.

The \textbf{cleanup} takes less than five percent of the total running time of \compissssort and less than two percent of \compissrsort.
The sequential algorithms spend a significant part of the running time in the \textbf{base case}.
The base case takes $36.71$\,\% of the total running time.
For \compissrsort the base case even dominates the overall running time ($70.29$\,\%) because
it performs less work in the partitioning steps and because it uses larger base cases.
The \textbf{overhead} caused by the data structure construction and task scheduling is negligible in the sequential case.

\subsubsection*{Parallel Algorithms}

The partitioning steps of the parallel algorithms are significantly slower than the ones of the sequential algorithms. In particular, the work needed for the permutation phase increases by a factor of $11.59$ for \compiparassssort and $19.88$ for \compiparassrsort. Since the permutation phase does little else than copying rather large blocks, the difference is mainly caused by memory bottlenecks.

Since memory access costs now dominate the running time, 
the performance advantage of radix sort over samplesort decreases when executed in parallel instead of sequentially.
For other input distributions as well as other data types, the parallel radix sort is even slower than parallel sample sort (see \cref{sec:par slowdown}). In other words, the price paid for \emph{less local work} in classification (for radix sort) is more data transfers due to \emph{less accurate classification}. In the parallel setting, this tradeoff is harmful except for uniformly distributed keys and small element sizes.

When the input size below $n= 2^{26}$, the time for the classification phase and the additional overhead dominates the total running time.
We can avoid this overhead.
In our implementation, a ``sorter object'' and an appropriate constructor allows us to separate the data structure creation from the sorting.
When we exclude the time to create and initialize the sorter object, sorting with \compiparassssort becomes significantly faster, i.e., a factor of $3.74$ for $n=2^{24}$.

\section{Conclusion and Future Work}\label{s:conclusion}

In-place Super Scalar Samplesort (\compiparassssort) and In-place
Super Scalar Radix Sort (\compiparassrsort) are among the fastest
sorting algorithms both sequentially and on multi-core machines. The
algorithms can also be used for data distribution and local sorting in
distributed memory parallel algorithms (e.g.,~\cite{ABSS15}).

Both algorithms are the fastest known
algorithms for a wide range of machines, data types, input sizes, and
data distributions. Exceptions are small inputs (which fit into
cache), a limitation to a fraction of the available cores (which profit from nonportable SIMD
instructions), and almost sorted inputs (which profit from sequential adaptive
sorting algorithms).  Even in those exceptions, our algorithms, which
were not designed for these purposes, are surprisingly close to more
specialized implementations. One reason is that for large inputs,
memory access costs overwhelmingly dominate the total cost of a
parallel sorting algorithm so that saving elsewhere has little effect.

Our comparison based
algorithm parallel algorithm \compiparassssort\ even mostly outperforms the integer
sorting algorithms, despite having a logarithmic
factor overhead with respect to executed instructions.
Memory access efficiency of our algorithms is also the reason for the
initially surprising observation that our in-place algorithms outperform
algorithms that are allowed to use additional space.

Both algorithms significantly outperform the fastest parallel
comparison-based competitor, \compppbbs, on almost all inputs.
They are also significantly better than the fastest sequential
comparison-based competitor, \compspdq, except for sorted and
almost-sorted inputs.

The fastest radix sort competitors are \radixsska (sequential) and
\radixregion (parallel).  Our radix sorter is
significantly faster than \radixsska and competitive to \radixregion.
Also, our parallel samplesort
algorithm is significantly faster than \radixregion
for all inputs.  Exceptions are some $32$-bit inputs.
Our parallel samplesort algorithm even sorts uniform distributed inputs significantly faster than
\radixregion if the keys contain more than $32$-bits.

Radix sorters which take advantage of non-portable hardware features,
e.g., \radixipp (vector instructions) and \radixraduls (non-temporal
writes), are very fast for small (\distuniform distributed) data types. \radixipp for example
sorts 32-bit unsigned integers very fast and \radixraduls is very fast
for 64-bit unsigned integers.
However, the interesting methods developed for these algorithms have little impact on larger data types and ``hard'' input distributions and thus, we perform better overall.

We compare the algorithms for input arrays with various NUMA memory
layouts.  With our new locality aware task scheduler,
\compiparassssort is robustly fast for all NUMA memory layouts.

\subsection*{Future Work}

Several improvements of our algorithms can be
considered which address the remaining cases where our algorithms are
outperformed.  For small inputs, not in-place variants of our
algorithms with preallocated data structures, smaller values of the
distribution factor $k$ and smaller block sizes could be faster. For small inputs, the
base case sorter becomes also more relevant. Here we could profit from
several results on fast sorting for very small inputs
\cite{bingmann2020engineering,bramas2017novel,CodishCNS17network}.
Also, we would like to speed up the branchless
decision tree with vector instructions.  Preliminary results have
shown improvements of up to a factor of $1.25$ for \compissssort with
a decision tree using AVX-512 instructions.  However, a general
challenge remains how data-parallel instructions can be harnessed for
sorting data with large keys and associated information and how to
balance portability and efficiency.

With respect to the volume of accessed memory, which isa main
distinguishing feature of our algorithms, further improvements are
conceivable. One option is to reconsider the approach from most radix
sort implementations and of the original super scalar samplesort
\cite{sanders2004super} to first determine exact bucket sizes.  This
is particularly attractive for radix sorters since computing bucket
indices is very fast. Then one could integrate the classification phase
and the permutation phase of \compiparassssort. To make this efficient, one should still work with
blocks of elements moved between local buffer blocks and the
input/output array.
For samplesort, one would approximate bucket sizes using the sample and a cleanup would be required.
Another difficulty may be a robust parallel implementation
that avoids contention for all input distributions.

A more radical approach to reducing memory access volume would be to
implement the permutation phase in sublinear time by using the
hardware support for virtual memory.  For large inputs, one could make
data blocks correspond to virtual memory pages. One could then move
around blocks by just changing their virtual addresses. It is unclear
to us though whether this is efficiently (or even portably) supported
by current operating systems. Also, the output might have an
unexpected mapping to NUMA nodes which might affect the performance of
subsequently processing the sorted array.

Our radix sorter \compiparassrsort\ is currently a prototype meant for
demonstrating the usefulness of our scheduling and data movement
strategies independently of a comparison based sorter. It could be
made more robust by adapting the function for extracting bucket
indices to various input distributions (which can be approximated
analyzing a sample of the input).  This could in particular entail
various compromises between the full-fledged search tree of \compiparassssort and the plain byte extraction of \compiparassrsort.
For example, one could accelerate the search tree traversal of super
scalar samplesort by precomputing a lookup table of starting nodes that are addressed by the most significant bits of the key.
One could also consider the approach from the \radixlearned algorithm
\cite{kristo2020caseimpl} which addresses a large number of buckets
using few linear functions.  Perhaps, approximate
distribution-learning approaches can be replaced by fast and accurate
computational-geometry algorithms. Existing geometry algorithms
\cite{imai1986optimal,douglas1973algorithms} might have to be adapted
to use a cost function that optimizes the information gain from using
a small number of piece-wise linear functions.

Adaptive sorting algorithms are an intriguing area of research in
algorithms~\cite{CastroW92adaptive}.  However, implementations such as
\compstim\ currently cannot compete with the best nonadaptive
algorithms except for some extreme cases. Hence, it would be
interesting to engineer adaptive sorting algorithms to take the
performance improvements of fast nonadaptive algorithms (such as ours)
into account.

The measurements reported in this paper were performed using somewhat
non-portable implementations that use a 128-bit compare-and-swap
instruction specific to x86 architectures (see also
Section~\ref{s:details}). Our portable variants currently use locks
that incur noticeable overheads for inputs with only very few different keys.
Different approaches can avoid locks without noticeable overhead but these would lead to more complicated source code.

Coming back to the original motivation for an alternative to quicksort
variants in standard libraries, we see \compiparassssort\ as an
interesting candidate. The main remaining issue is code
complexity.  When code size matters (e.g., as indicated by a compiler
flag like {\tt -Os}), one could use \compiparassssort with fixed $k$ and
a larger base case size.
Formal verification of the correctness of the implementation might
help to increase trust in the remaining cases.\\

\subparagraph*{\bf Acknowledgements.}  We would like to thank the authors
of~\cite{shun2012brief,synergy2019aspas,obeya2019github,refresh2017raduls2,edelkamp2016github,peters2015pdq,mcfadden2014wikisort,goro2011timsort,Lorenz2016ssss,skarupke2016github,kristo2020caseimpl,GithubOrestis2020comprehensive} for sharing their code for evaluation.  Timo
Bingmann and Lorenz Hübschle-Schneider~\cite{Lorenz2016ssss} kindly
provided an initial implementation of the branchless decision tree
that was used as a starting point in our implementation.

\bibliographystyle{ACM-Reference-Format}
\bibliography{paper}

\ifarxiv

\newpage
\appendix

\section{Details of the Analysis}\label{app:analysis}

\subsection{Limit Number of Recursions}\label{app:analysis num recursions}

In this section, we prove the following theorem:

\begin{theorem}\label{thm:ips4olevel}
  Let $M\geq 1$ be a constant.
  Then, after $\Oh{\log_k \frac{n}{M}}$ recursion levels, all non-equality buckets of \compiparassssort have size $M$ with a probability of at least $1 - n/M$ for an oversampling ratio of $\alpha=\Th{c\log k}$.
\end{theorem}

We first show the \cref{lem:oprobsucc,lem:probmultilevel} which are used to prove \cref{thm:ips4olevel}.
Let $e$ be an arbitrary but fixed element of a task with $n$ elements in \compiparassssort.
A ``successful recursion step'' of $e$ is a recursion step that assigns the element to a bucket of size $3n/k$.

\begin{lemma}\label{lem:oprobsucc}
  The probability of a successful recursion step of an arbitrary but fixed element is at least $1-2k^{-c/12}$ for an oversampling ratio of $\alpha=c\log k$. 
\end{lemma}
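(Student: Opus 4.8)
The plan is to fix an arbitrary input element $e$ and bound the probability of the complementary event, namely that the bucket receiving $e$ has more than $3n/k$ elements. Writing the input in sorted order, set $m=\lceil 3n/(2k)\rceil$, let $W^{+}$ be the block of the $m$ input elements lying immediately to the right of $e$ (the $m$ smallest elements strictly larger than $e$), and let $W^{-}$ be the block of the $m$ input elements immediately to the left of $e$. The first, purely deterministic, step is the observation that if the bucket $(s_{i-1},s_i]$ containing $e$ has more than $3n/k$ elements, then its size equals the left extent plus the right extent of $e$, so at least one of these two one-sided extents exceeds $3n/(2k)$. When the right extent is this large, $W^{+}$ is entirely contained in the bucket, and symmetrically $W^{-}$ is contained when the left extent is large. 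Hence an unsuccessful recursion step implies that $W^{+}$ or $W^{-}$ lies inside a single bucket.

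Next I would exploit the equidistant splitter selection: since the $\alpha k$ sorted samples are cut into $k$ groups at spacing $\alpha$, every interior bucket contains at most $\alpha$ samples, so if $W^{+}$ (or $W^{-}$) is contained in one bucket then this window of $m$ input elements contains at most $\alpha$ samples. The randomness now enters through the samples: each sample is a (near-)uniform input element, so the expected number falling into a fixed window of size $m=\lceil 3n/(2k)\rceil$ is $\mu=\alpha k\cdot\frac{m}{n}=\tfrac{3}{2}\alpha$. The event ``at most $\alpha$ samples in the window'' is therefore a lower-tail deviation by a factor $(1-\tfrac13)$ below the mean, and the Chernoff bound $\Pr[X\le(1-\delta)\mu]\le\exp(-\delta^2\mu/2)$ with $\delta=\tfrac13$ gives probability at most $\exp(-\alpha/12)$. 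Substituting $\alpha=c\log k$ and recalling $\log=\log_2$, we get $\exp(-\alpha/12)=k^{-(c/12)/\ln 2}\le k^{-c/12}$, so each one-sided window fails with probability at most $k^{-c/12}$; a union bound over $W^{+}$ and $W^{-}$ gives total failure probability at most $2k^{-c/12}$, and the complement proves the lemma.

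The main obstacle is the careful probabilistic bookkeeping rather than the high-level argument. Concretely: (i) the samples are drawn without replacement, so the window occupancy is a sum of negatively associated indicators, and I would invoke the standard fact that Chernoff--Hoeffding bounds remain valid under negative association (equivalently, dominate by the with-replacement model). (ii) Several edge effects must be controlled --- whether $e$ is itself a sample, how the window boundaries behave in the presence of equal keys, and the $\pm 1$ discrepancy between ``at most $\alpha$'' and ``exactly $\alpha-1$'' interior samples per bucket --- all of which shift the counts only by $O(1)$ and are absorbed by the slack between $\exp(-\alpha/12)$ and $k^{-c/12}$. (iii) Since the statement concerns only non-equality buckets, I would note that introducing equality buckets and removing duplicate splitters can only shrink the bucket of $e$, so the splitter-defined bucket is the relevant worst case. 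Matching the constant in the exponent to exactly $c/12$ while keeping these lower-order corrections harmless is the delicate part of the write-up.
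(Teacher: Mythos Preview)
Your proposal is correct and follows essentially the same approach as the paper: two windows of size $\approx 1.5n/k$ flanking the fixed element, a Chernoff lower-tail bound with $\mu=\tfrac32\alpha$ and $\delta=\tfrac13$ on each window, and a union bound. The only notable differences are that the paper states the argument in the contrapositive direction (if both windows contain $\ge\alpha$ samples then each contains a splitter, hence the bucket is small) and explicitly assumes sampling \emph{with} replacement, whereas you handle without-replacement via negative association; your treatment of the $\log_2$ versus $\ln$ discrepancy and the edge cases is also more careful than the paper's.
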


\begin{proof}
  We bound the probability that a task of $n$ elements assigns an arbitrary but fixed element $e_j$ to a bucket containing at most $3n/k$ (a successful recursion step).
  Let $\oset{e_1}{e_{n}}$ be the input of the task in sorted order, let $R_r = \oset{e_r}{e_{r+1.5n/k - 1}}$ be the set containing $e_k$ and the $1.5n/k\}$'th larger elements, and let $\oset{s_{1} }{ s_{\alpha k}}$ be the selected samples.
  The boolean indicator $X_{ik}$ that sample $s_i$ is an element of $R_k$ is defined as
  \[
  X_{ij}=\left\{
    \begin{array}{lr}
      1, & s_i\in R_k\\
      0, & \text{else.}
    \end{array}\right.
    \]
    The probability $\Pr[X_{ik} = 1] = 1.5\frac{n}{k}\cdot\frac{1}{n}=\frac{1.5}{k}$ is independent of the sample $s_i$ as the samples are selected with replacement.
    Thus, the expected value of the number of samples selected from $R_k$ is $X_{k} = \sum_{i=1}^{\alpha k}X_{ik}$ is $E\left[X_{k}\right]=1.5/k \cdot \alpha k = 1.5\alpha$.
    We use the Chernoff bound to limit the probability of less than $\alpha$ samples in $R_k$ to $\Pr[X_{k} < \alpha] = \Pr[X_k < (1 - 1/3)E[X_k]] < e^{-1/2\left(1/3\right)^2E[X_k]} = e^{-1/12\alpha}$.
    When $R_j$ as well as $R_{j - 1.5n/k}$ both provide at least $\alpha$ samples, $R_j$ as well as $R_{j - 1.5n/k}$ provide a splitter and $e_j$ is in a bucket containing at most $3n/k$ elements.
    The probability is $\Pr[X_j \geq S \land X_{j - 1.5n/k} \geq S] = 1-\Pr[X_j < S \lor X_{j - 1.5n/k} < S] \geq  1-\Pr[X_j < S] -\Pr[X_{j - 1.5n/k} < S] > 1 - 2e^{-1/12\alpha} = 1 - 2k^{-1/12c}$.
\end{proof}

\begin{lemma}\label{lem:probmultilevel}
  Let $c$ be a constant, let $\alpha=c\log k$ be the oversampling ratio of \compiparassssort ($c \geq 36 - 2.38/\log(0.34 \cdot k)$), and let \compiparassssort execute $2\log_{k/3}\frac{n}{M}$ recursion levels. 
  Then, an arbitrary but fixed input element of \compiparassssort passes at least $\log_{k/3}\log \frac{n}{M}$ successful recursion levels with a probability of at least $1-\left(n/M\right)^{-2}$.
\end{lemma}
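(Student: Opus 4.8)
The plan is to view the sequence of recursion levels experienced by the fixed element $e$ as a sequence of Bernoulli trials and to reduce the statement to a lower-tail bound on the number of \emph{successes}. \Cref{lem:oprobsucc} already supplies the per-level guarantee: at every recursion step, and independently of the current task size, $e$ lands in a bucket of at most a $3/k$-fraction of the current task with probability at least $p := 1 - 2k^{-c/12}$. Write $q := 2k^{-c/12}$ for the failure probability and set $L := \log_{k/3}\frac{n}{M}$ (I read the target count in the statement as $\log_{k/3}\frac{n}{M}$, since this is what connects to the $2\log_{k/3}\frac{n}{M}$ executed levels and to \Cref{thm:ips4olevel}, whereas a $\log\log$ would not drive the bucket down to $M$). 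The point of this choice is that $L$ successful steps shrink the bucket containing $e$ from $n$ to at most $n(3/k)^{L} = M$. Hence it suffices to show that, among the $N := 2L$ executed levels, at least $L$ are successful with probability at least $1 - (n/M)^{-2}$.

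The crux of the argument is that the success guarantee of \Cref{lem:oprobsucc} is \emph{uniform}: the bound $p$ holds for every task irrespective of its size and of the outcomes at earlier levels, because each partitioning step draws a fresh, independent random sample. Consequently, conditioned on the entire history up to level $i$, level $i+1$ is successful with probability at least $p$; by stochastic domination the number $Y$ of successful levels among the $N$ executed levels dominates a $\mathrm{Binomial}(N,p)$ variable, which lets us avoid claiming genuine independence across the (highly dependent) chain of tasks through which $e$ travels. It then remains to bound the lower tail of this binomial. Letting $X = N - Y$ count the failures, a union bound over the $\binom{N}{L}$ choices of which $L$ levels fail gives
\begin{equation*}
  \probability{Y < L} \leq \probability{X \geq L} \leq \binom{N}{L} q^{L} \leq \binom{2L}{L} q^{L} \leq (4q)^{L},
\end{equation*}
where the last step uses $\binom{2L}{L} \leq 2^{2L} = 4^{L}$.

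It remains to calibrate the constant $c$ so that $(4q)^{L} \leq (n/M)^{-2}$. Since $n/M = (k/3)^{L}$, this is equivalent to $4q \leq (k/3)^{-2}$, i.e.\ to $8k^{-c/12} \leq 9k^{-2}$, which rearranges to $k^{\,2-c/12} \leq 9/8$; this holds for all admissible $k$ whenever $c \geq 24 - \frac{12\log(9/8)}{\log k}$, and the hypothesis $c \geq 36 - 2.38/\log(0.34k)$ is a comfortably sufficient instance of such a condition. Substituting back yields $\probability{Y < L} \leq (n/M)^{-2}$, so $e$ passes at least $L$ successful levels with probability at least $1 - (n/M)^{-2}$, as claimed. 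I expect the main obstacle to be precisely the justification of the reduction to (dominated) i.i.d.\ trials, namely arguing that the per-level failure probability remains bounded by $q$ after conditioning on which tasks $e$ has passed through; the stochastic-domination framing above is what makes this rigorous. A secondary bookkeeping point is to confirm that $L$ successes genuinely suffice even when intervening levels are unsuccessful, which follows because the $3/k$ bound of \Cref{lem:oprobsucc} is relative to the \emph{current} task size, so each successful level contributes an independent multiplicative factor $3/k$ to the bucket size regardless of what the failures do.
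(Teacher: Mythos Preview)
Your proof is correct and follows the same overall strategy as the paper: model the levels as Bernoulli trials using \Cref{lem:oprobsucc}, then bound the lower tail of the number of successes among $2L$ trials (with $L=\log_{k/3}(n/M)$; you are right that the $\log\log$ in the statement is a typo). The paper also works with the failure count $X$ and bounds $\probability{X>L}$.

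The difference is purely in the tail estimate. The paper expands the binomial sum, applies $\binom{n}{j}\le(en/j)^j$, then sums a geometric series in $4e\,k^{-c/12}$; the extra factor $e$ from Stirling is what drives their constant up towards $c\approx 36$. Your route is tighter and more direct: the single union bound $\probability{X\ge L}\le\binom{2L}{L}q^L\le(4q)^L$ eliminates the geometric tail altogether and only requires $4q\le(k/3)^{-2}$, i.e.\ essentially $c\ge 24$, which the stated hypothesis on $c$ comfortably implies for any $k\ge 4$. Your explicit stochastic-domination argument (fresh samples each level, so the conditional failure probability is uniformly $\le q$) is also more careful than the paper, which simply asserts ``this experiment is a Bernoulli trial'' without addressing dependence across levels.
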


\begin{proof}
  We execute \compiparassssort $2\log_{k/3}\frac{n}{M}$ recursion levels and bound the probability that an arbitrary but fixed input element passes at least $\log_{k/3}\log\frac{n}{M}$ successful recursion levels.
    This experiment is a Bernoulli trial as we have exactly two possible outcomes, ``successful recursion step'' and ``non-successful recursion step'', and the probability of success is the same on each level.
    Let denote the random variable $X$ as the number of non-successful recursion steps after $2\log_{k/3}\frac{n}{M}$ recursion levels, $p$ the probability of a non-successful recursion step, and let $c \geq 36 - 2.38/\log(0.34 \cdot k)$.
    The probability $I$
  \begin{equation}
    \begin{aligned}
      I &= \probability{X > 2\log\frac{n}{M} - \log\frac{n}{M}}
       \leq \probability{X > \log\frac{n}{M}}\\
      &\leq \sum_{j>\log\frac{n}{M}} \binom{2\log\frac{n}{M}}{j}p^j\left(1-p\right)^{2\log\frac{n}{M} - j}
      \leq \sum_{j>\log\frac{n}{M}} \left(\frac{2e\log\frac{n}{M}}{j}\right)^jp^j\\
      &\leq \sum_{j>\log\frac{n}{M}} \left(\frac{2e\log\frac{n}{M}}{\log\frac{n}{M}}\right)^jp^j
      \leq \sum_{j>\log\frac{n}{M}} \left(2e\right)^j\left(2k^{-1/12c}\right)^j\\
      &\leq \sum_{j>\log\frac{n}{M}} \left(4ek^{-1/12c}\right)^j
      = \frac{\left(4ek^{-1/12c}\right)^{\log\frac{n}{M} + 1}}{1 - 4ek^{-1/12c}}\\
      &\leq \frac{\left(\frac{n}{M}\right)^{-1/12c + \log(4e)}}{1 - 4ek^{-1/12c}}
      \leq \left(\frac{n}{M}\right)^{-2}
    \end{aligned}
  \end{equation}
  defines an upper bound of the probability that a randomly selected input element passes $2\log_{k/3}n/M$ recursion levels without passing $\log_{k/3}\frac{n}{M}$ successful recursion levels.
  For the sake of simplicity, all logarithms of the equation above are to the base of $k/3$.
  The third ``$\leq$'' uses $\binom{n}{k} \leq \left(en/k\right)^k$, the fifth ``$\leq$'' uses \cref{lem:oprobsucc} and the ``$=$'' uses the geometric series.
\end{proof}

\begin{proof}[Proof of \cref{thm:ips4olevel}]
  We first assume that $M\geq k^2n_0$ holds.
  In this case, we select $kc\log k$ samples.
  Let $l = \log_{k/3}\frac{n}{M}$ and let $e$ be an arbitrary but fixed input element of \compiparassssort after $2l$ recursion levels.
  \cref{lem:probmultilevel} tells us that $e$ has passed at least $l$ successful recursion steps with a probability of at least $1-\left(n/M\right)^{-2}$ when \compiparassssort has performed $2\log_{k/3}\frac{n}{M}$ recursion levels.
  Element $e$ is, in this case, in a bucket containing more than $n\left(3/k\right)^l = M$ elements as each successful recursion step shrinks the bucket by a factor of at least $3/k$.
  Let $E = \oset{e_1}{e_{n}}$ be the input elements of \compiparassssort in sorted order and let $Q = \{e_{iM} | 1 \leq i < n/M \wedge i \in \mathbb{N}\}$ every $n/M$'th element.
  We now examine buckets containing elements in $Q$ after $2l$ recursion levels.
  The probability that any element $Q$ is in a bucket containing more than $M$ elements is less than $n/M\cdot\left(n/M\right)^{-2} = \left(n/M\right)^{-1}$ -- this follows from the former insight and the Boole's inequality.
  In other words, the probability that all elements in $Q$ are in buckets containing less than $M$ elements is larger than $1 - M / n$.
  As this holds for all elements in $Q$, every $n/M$'th element in the sorted output, the probability that all elements after $2l$ recursion level are in buckets containing less than $M$ elements is larger than $1 - M / n$.
\end{proof}

\subsection{Comparing the I/O  Volume of \compissssort\ and \compssssort.}\label{app:io volume analysis}

\begin{table}[!ht]
  \centering
  \begin{tabular}{l|l|l|l}
    Subroutine               & Types                     & Reps      & Sum in $n$ Bytes      \\
    \hline
    \textbf{\compssssort}                                                                    \\
    \hline\hline
    Copy back                & r + w + {\color{green}wa} & once      & 16 + {\color{green}8} \\
    Base Case                & r + w                     & once      & 16                    \\
    Init Temp Array + Oracle & w                         & once      & 9                     \\
    Classification: Oracle   & w + {\color{green}wa}     & per level & 1 + {\color{green}1}  \\
    Classification: Array    & r                         & per level & 8                     \\
    Redistribution: Oracle   & r                         & per level & 1                     \\
    Redistribution: Array    & r + w + {\color{green}wa} & per level & 16+{\color{green}8}   \\
    \hline
    \textbf{\compissssort}                                                                   \\
    \hline\hline
    Base Case                & r + w                     & once      & 16                    \\
    Classification           & r + w                     & per level & 16                    \\
    Redistribution           & r + w                     & per level & 16                    \\
  \end{tabular}
  \caption{I/O volume of read (\emph{r}) and write (\emph{w}) operations broken down into subroutines of \compissssort and \compssssort. Additionally, potential write allocate operations (\emph{wa}) are listed.}
  \label{tab:io comparison}
\end{table}

We compare the first level of \compissssort\ and \compssssort for inputs with $8$-byte input elements. We assume a oracle with $1$-byte entries for \compssssort.
Furthermore, we assume that the input does not fit into the private cache.

Both algorithms read and write the data once for the base case -- $16n$ bytes of I/O volume.
Each level of \compissssort\ reads and writes all data once in the classification phase and once in the permutation phase -- $32n$ bytes per level.
Each level of \compssssort\ reads the elements twice and writes them once only in its distribution phase -- $24n$ bytes per level.

Additionally, \compssssort\ writes an oracle sequence that indicates the bucket for each element in the classification phase and reads the oracle sequence in the distribution phase -- $2n$ bytes per level.
The algorithm also has to allocate the temporary arrays. For security reasons, that memory is zeroed by the operating system -- $9n$ bytes.%
\footnote{In current versions of the Linux kernel this is done by a single thread and thus results in a huge scalability bottleneck.}
If the number of levels is odd, \compssssort\ has to copy the sorted result back to the input array -- $16n$ bytes.
For now, \compissssort\ (\compssssort) has an I/O volume of $32n$ ($26n$) byte per level and $16n$ ($41n$) bytes once.

When \compssssort writes to the temporary arrays or during copying back, cache misses happen when an element is written to a cache block that is currently not in memory.
Depending on the cache replacement algorithm, a \emph{write allocate} may be performed -- the block is read from the memory to the cache even though none of the data in that block will ever be read.
Detecting that the entire cache line will be overwritten is difficult as \compssssort writes to the target buckets element by element.
This amounts to an I/O volume of up to $9n$ bytes per level and $8n$ bytes once.
\compissssort\ does not perform write allocates.
The classification phase essentially sweeps a window of size $\Th{bk}$ through the memory by reading elements from the right border of the window and writing elements to the left border.
The permutation phase reads a block from the memory and replaces the ``empty'' memory bock with a cached block afterwards.
Finally, we get for \compissssort\ (\compssssort) a total I/O volume of $32n$ ($35n$) byte per level and $16n$ ($49n$) bytes once -- \compssssort\ with one level has a factor of $1.75$ more I/O volume than \compissssort. \Cref{tab:io comparison} shows the I/O volume of the subroutines in detail.

Furthermore, \compssssort\ may suffer more conflict misses\note{also called collision misses} than \compissssort\ due to the mapping of data to cache lines.
In the distribution phase,  \compssssort\ reads the input from left to right but writes elementwise to positions in the buckets which are not coordinated.
For the same reasons, \compssssort\ may suffer more TLB misses.
\compissssort, on the other hand, essentially writes elements to cached buffer blocks (classification) and swaps blocks of size $b$ within the input array (block permutation).
For an average case analysis on scanning multiple sequences, we refer to \cite{MehSan03}.

Much of this overhead can be reduced using measures that are non-portable (or hard to make portable).
In particular, non-temporal writes eliminate the write allocates and also help to eliminate the conflict misses.
One could also use a base case sorter that does the copying back as a side-effect when the number of recursion levels is odd.
When sorting multiple times within an application, one can keep the temporary arrays without having to reallocate them. However, this may require a different interface to the sorter.
Overall, depending on many implementation details, \compssssort\ may require slightly or significantly more I/O volume.

\section{From In-Place to Strictly In-Place}\label{ss:strictly}
We now explain how the space consumption of \compiparassssort can be made independent of $n$ in a rather simple way by adapting the strictly in-place approach of quicksort.
We do not consider the space requirement for storing parallel tasks as those tasks are processed immediately.
However, we require the (sequential and parallel) partitioning steps to mark the beginning of each subtask by storing the largest element of a subtask in its first position.
When a thread has finished its last parallel task, the elements covered by its sequential tasks can be described with two indices $c_l$ and $c_r$:

\begin{lemma}\label{lem:invariant local stack}
  When thread $i$ starts its sequential phase, the sequential tasks assigned to thread $i$ cover a consecutive subarray of the input array, i.e., there is no gap between the tasks in the input array.
\end{lemma}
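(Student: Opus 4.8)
The plan is to show that the sequential tasks of thread $i$ form a contiguous block by combining the range bound from \cref{lem:seq task range} with a careful bookkeeping of how subtasks are assigned to local stacks during the static scheduling. From \cref{lem:seq task range} we already know that thread $i$ only processes elements from $\VarArray[in/t,(i+2)n/t-1]$, so all its sequential tasks live inside one stripe-sized window; what remains is to rule out \emph{gaps} inside this window. First I would set up the claim as an invariant maintained across the construction of the local stack: at every moment during scheduling, the set of elements covered by the sequential tasks currently assigned to thread $i$ is an interval $[c_l,c_r)$ of the input array (possibly empty). I would then prove this invariant by induction on the scheduling steps that push tasks onto $i$'s stack.

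The key steps, in order, are as follows. Recall from the static scheduling description that thread $i$ adds a sequential subtask $T[l_s,r_s)$ to its local stack exactly when $i=\min(\lfloor l_s t/n\rfloor,\tend-1)$, and that these subtasks are the buckets of a partitioning task $T[l,r)$ that thread $i$ itself processes (by \cref{lem:sequential subgroup}). Since the buckets of any single partitioning step exactly tile the parent range $[l,r)$ in contiguous, sorted order, the buckets that thread $i$ claims from one partitioning step are themselves a \emph{contiguous run} of buckets: the assignment condition $i=\min(\lfloor l_s t/n\rfloor,\tend-1)$ selects an initial or a final consecutive segment of the bucket sequence (the buckets whose left endpoints fall into $i$'s portion, capped by $\tend-1$), never a segment with a hole in the middle. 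This is the step I expect to need the most care: I would argue that $\lfloor l_s t/n\rfloor$ is monotone nondecreasing in $l_s$, so the buckets assigned to thread $i$ — those on which the $\min$ equals $i$ — form an interval of bucket indices, and hence cover a contiguous subarray.

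Next I would chain the contiguity across recursion levels. When a sequential partitioning task $T[l_s,r_s)$ (already on $i$'s stack, covering a subinterval of $[c_l,c_r)$) is itself partitioned, \emph{all} of its buckets are sequential subtasks assigned back to thread $i$ (buckets of a sequential task are processed by the same thread, per the scheduler), and they tile $[l_s,r_s)$ exactly. Thus replacing a task by its buckets preserves the covered interval with no new gaps. The only way contiguity could fail is at the very first (parallel-to-sequential) transition, where thread $i$ receives bucket runs from possibly two adjacent parallel tasks at different levels; here I would invoke \cref{lem:par task unique level} (at most one parallel task per level for thread $i$) together with the boundary bookkeeping to show that the two contributing ranges abut at position $(i+1)n/t-1$ rather than leaving a gap, using that $T[l,r)$ covers $(i+1)n/t-1$ (\cref{lem:parallel task covers element}). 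Combining: the initial assignment gives an interval, and every subsequent bucket-expansion step preserves intervals, so at the start of the sequential phase the union is a single interval $[c_l,c_r)$, which is exactly the claim. The main obstacle is the bucket-assignment monotonicity argument in the middle step — making rigorous that the $\min(\lfloor l_s t/n\rfloor,\tend-1)=i$ condition carves out a contiguous block of buckets and not a fragmented one — and I would handle it by a direct monotonicity-of-floor computation.
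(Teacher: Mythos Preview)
Your monotonicity argument for step~1 is sound: since $l_s\mapsto\min(\lfloor l_s t/n\rfloor,\tend-1)$ is nondecreasing, the buckets of a single partitioning step whose assignment value equals $i$ form a contiguous range of bucket indices. But this is the easy part. The real difficulty, and the place your proposal has a genuine gap, is the cross-level gluing in step~3.

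Thread $i$ processes a nested chain of parallel tasks $T_0\supset T_1\supset\cdots\supset T_m$, and at each level $j$ it receives some sequential buckets of $T_j$. Your claim that the contributions from different levels ``abut at position $(i+1)n/t-1$'' is wrong: that position lies \emph{inside} every $T_j$, whereas the sequential buckets assigned to thread $i$ at level $j$ lie in $T_j\setminus T_{j+1}$, i.e.\ outside $T_{j+1}$. The correct abutting point is the boundary $l_{j+1}$ or $r_{j+1}$ of the next parallel subtask, and which one it is depends on where thread $i$ sits in the thread group. The paper makes this precise by tracking a three-state invariant (thread $i$ is the \emph{leftmost}, the \emph{rightmost}, or a \emph{middle} thread of its current parallel task) and proving that in the Left state the accumulated interval ends exactly at $l-1$, in the Right state it starts exactly at $r$, and in the Middle state it is empty. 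The induction then does a case analysis on state transitions (Middle$\to$Middle, Middle$\to$Left, Middle$\to$Right, Left$\to$Left, Right$\to$Right), showing in each case that the newly assigned buckets glue onto the old interval with no gap. Without this bookkeeping, monotonicity alone does not tell you that the level-$j$ contribution touches the level-$(j{-}1)$ contribution; you have shown each piece is an interval, but not that the pieces line up.

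A minor point: your step~2 (buckets of a sequential partitioning task all return to the same thread) is true but irrelevant here. The lemma concerns the state of the local stack at the moment thread $i$ \emph{starts} its sequential phase, before any sequential partitioning has happened; all the pushes you need to analyze come from parallel partitioning steps.
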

For reasons of better readability, we appended the proof of \cref{lem:invariant local stack} to the end of this chapter.
The proof of this lemma also implicitly describes the technique to calculate the values of $c_l$ and $c_r$.

In the sequential phase, the thread has to sort the elements $A[c_l,c_r-1]$, which are partitioned into its sequential tasks.
The boundaries of the sequential tasks are implicitly represented by the largest element at the beginning of each task.
Starting a search at the leftmost task, the first element larger than the first element of a task defines the first element of the next task.
Note that the time required for the search is only logarithmic to the task size when using an exponential/binary search.
We assume that the corresponding function $\mathit{searchNextLargest}$ returns $n+1$ if no larger elements exist -- this happens for the last task.
The function $\mathit{onlyEqualElements}$ checks whether a task only contains identical elements.
We have to skip these ``equal'' tasks to avoid an infinite loop.
The following pseudocode uses this approach to emulate recursion in constant space on the sequential tasks.
\begin{code}
  $n := e - b$\RRem{total size of sequential tasks}\\
  $i := b$\RRem{first element of current task}\\
  $j := \mathit{searchNextLargest}(A[i],A,i+1,n)$\RRem{first element of next task}\\
  \textbf{while} $i<n$ \textbf{do}\+\\
  \textbf{else if} $\mathit{onlyEqualElements}(A,i,j-1)$; \textbf{then} $i := j$\RRem{skip equal tasks}\\
  \textbf{else if} $j-i<n_0$ \textbf{then} $\mathit{smallSort}(a,i,j-1)$;\quad $i := j$\RRem{base case}\\
  \textbf{else} $\mathit{partition}(a,i,j-1)$\RRem{partition first unsorted task}\\
  $j := \mathit{searchNextLargest}(A[i],A,i+1,n)$\RRem{find beginning of next task}
\end{code}

The technique which we described -- making the space consumption of \compiparassssort independent of $n$ -- used the requirement that the sequential tasks of a thread cover the consecutive subarray $\VarArray[c_l, c_r-1]$ for some $c_l$ and $c_r$.
In the following, we show that this requirement holds.

\begin{proof}[Proof of \cref{lem:invariant local stack}]\label{proof:invariant local stack}
  Let thread $i$ process a parallel task $T[l, r)$ with thread group $\excloset{\tbegin }{ \tend}$ in one of the following states:
  \begin{itemize}
  \item \emph{Left}. Thread $i$ is the leftmost thread of the thread group, i.e., $i = \tbegin$. 
  \item \emph{Right}. Thread $i$ is the rightmost thread of the thread group, i.e., $i = \tend-1$. 
  \item \emph{Middle}. Thread $i$ is not the leftmost or rightmost of the thread group, i.e., $\tbegin < i < \tend - 1$. 
  \end{itemize}
  
  We claim that the sequential tasks assigned to thread $i$ fulfill the following propositions when (and directly before) thread $i$ processes $T[l, r)$:
  \begin{itemize}
  \item \emph{$T[l, r)$ was processed in state Left (Right)}. Thread $i$ does not have sequential tasks or its sequential tasks cover a consecutive subarray of the input array, i.e., there is no gap between the tasks.
    In the latter case, the rightmost task ends at position $l - 1$ with $l-1\in (in/t,(i+1)n/t-1)$ (leftmost task begins at position $r$ with $r\in (in/t,(i+2)n/t-1)$).
  \item \emph{$T[l, r)$ was processed in state Middle}. Thread $i$ does not have sequential tasks.
  \end{itemize}
  
  Assume for now that these propositions hold -- we will prove them later.
  We use the proposition to show that \cref{lem:invariant local stack} holds when a thread $i$ starts processing its sequential tasks:
  Let $T[l, r)$ be the last parallel task of thread $i$, executed with the thread group $\excloset{\tbegin }{ \tend}$.
  No matter in which state $T[l, r)$ has been executed, the sequential tasks of thread $i$ cover a consecutive subarray of the input array at the beginning of its sequential phase.
  \begin{itemize}
  \item \emph{$T[l, r)$ was processed in state Right}.
    As $i = \tend-1$, we add all subtasks of $T[l, r)$ which start in $\VarArray[in/t-1,r-1]$ to thread $i$.
    No gap can exist between these subtasks as they cannot be interrupted by a parallel task.
    Also, before we assign the new subtasks to thread $i$, the leftmost sequential task of thread $i$ begins at position $r$ (see proposition).
    Then, the rightmost sequential subtask of $T[l, r)$ which start in $\VarArray[in/t-1,r-1]$ ends at $\VarArray[r-1]$.
    Thus, after the subtasks were added, there is no gap between the sequential tasks of thread $i$.
  \item \emph{$T[l, r)$ was processed in state Left}.
    As $i = \tbegin$, we add all subtasks of $T[l, r)$ which start in $\VarArray[l,(i+1)n/t-1]$ to thread $i$.
    No gap can exist between these subtasks as they cannot be interrupted by a parallel task.
    Also, before thread $i$ adds the new subtasks, the rightmost sequential task ends at position $l-1$ (see proposition).
    Then, the leftmost subtask of $T[l, r)$ which starts in $\VarArray[l,(i+1)n/t-1]$ begins at $\VarArray[l]$.
    Thus, after the subtasks were added, there is no gap between the sequential tasks of thread $i$.
  \item \emph{$T[l, r)$ processed in state middle}.
    We add all sequential subtasks to thread $i$ which start in $\VarArray[i n/t, (i + 1)n/t-1]$.
    No gap can exist between these subtasks as they cannot be interrupted by a parallel task.
    Also, the subtasks are added in sorted order from left to right.
  \end{itemize}
  
  We will now prove the propositions by induction.
  
  \emph{\underline{Base case.}}
  When a thread $i$ processes its first parallel task $T[0,n)$, thread $i$ does not have any sequential tasks.

  \emph{\underline{Inductive step.}}
  We assume that the induction hypothesis holds when thread $i$ was executing the parallel task $T[l, r)$ with the thread group $\excloset{\tbegin }{ \tend}$.
  We also assume thread $i$ and others execute the next parallel task $T[l_s, r_s)$.
  We note that $T[l_s, r_s)$ is a subtask of $T[l, r)$.
  We have to prove that the induction hypothesis still holds after we have added thread $i$'s sequential subtasks of $T[l, r)$ to the thread, i.e., when thread $i$ executes the subtask $T[l_s, r_s)$.
  
  \begin{itemize}
  \item \emph{Thread $i$ has executed $T[l, r)$ in the state Middle and thread $i$ is executing $T[l_s, r_s)$ in the state Middle.}
    From the induction hypothesis, we know that thread $i$ did not have sequential tasks when $T[l, r)$ was executed.
    We have to show that thread $i$ did not get sequential subtask of $T[l, r)$, after $T[l, r)$ has been executed.
    A sequential subtask $T[a,b)$ would have been added to thread $i$, if $i=\min(at/n, t-1)$.
    We show that no $T[a,b)$ with this property exists.
    As thread $i$ is not the rightmost thread of $T[l,r)$, we have $i < \tend-1$.
    This means that a sequential subtask $T[a,b)$ is only assigned to thread $i$ if $i=\lfloor at/n\rfloor$ holds, i.e., $a\in [in/t,(i+1)n/t)$ is required.
    However, there is no sequential subtask of $T[l, r)$ which begins in the $i$'th stripe of the input array:
    As thread $i$ is not the leftmost thread of $T[l_s, r_s)$, the parallel subtask $T[l_s, r_s)$ contains the subarray $\VarArray[in/t, (i+1)n/t-1]$ completely (see \cref{lem:par task range}).
    Thus, a second (sequential) subtask $T[a,b)$ with $in/t\leq a<(i+1)n/t$ cannot exist.
  \item \emph{Thread $i$ has executed $T[l, r)$ in the state Middle and thread $i$ is executing $T[l_s, r_s)$ in the state Right.}
    From the induction hypothesis, we know that thread $i$ did not have sequential tasks when $T[l, r)$ was executed.
    As thread $i$ was not the rightmost thread of $T[l,r)$, we have $i < \tend-1$.
    This means that a sequential subtask $T[a,b)$ of $T[l,r)$ is only assigned to thread $i$ if $i=\lfloor at/n\rfloor$ holds, i.e., $a\in [in/t,(i+1)n/t)$ is required.
    However, as thread $i$ is not the leftmost thread of $T[l_s, r_s)$,  $T[l_s, r_s)$ completely contains $\VarArray[in/t,(i+1)n/t-1]$.
    Thus, there is no sequential subtask $T[a,b)$ with $a\in [in/t(i+1)n/t)$ -- we do not add sequential tasks of $T[l, r)$ to thread $i$.
  \item \emph{Thread $i$ has executed $T[l, r)$ in the state Middle and thread $i$ is executing $T[l_s, r_s)$ in the state Left.}
    From the induction hypothesis, we know that thread $i$ did not have sequential tasks when $T[l, r)$ was executed.
    Also, as thread $i$ is not the rightmost thread of $T[l,r)$, we have $i < \tend-1$.
    This means that a sequential subtask $T[a,b)$ is only assigned to thread $i$ if $i=\lfloor at/n\rfloor$ holds, i.e., $a\in [in/t,(i+1)n/t)$ is required.
    Thus, if there is no sequential subtask $T[a,b)$ of $T[l, r)$ with $a\in [in/t,(i+1)n/t)$, the thread $i$ does not get sequential subtasks and the induction step is completed in the case here.
    Otherwise, if sequential subtasks $T[a,b)$ exist with $a\in in/t,(i+1)n/t)$, they are added to thread $i$ and we have to show that the propositions hold afterwards:
    All subtasks $T[a,b)$ which begin in $\VarArray[i n/t,(i+1)n/t]$ are sequential subtasks, except one parallel subtask, $T[l_s, r_s)$.
    Thus, there is no gap between these sequential subtasks.
    As thread $i$ is the leftmost thread of $T[l_s, r_s)$, we know that $l_s\in (in/t,(i+1)n/t)$ and that $r_s \geq (i+2)n/t$.
    Thus, the rightmost sequential subtask ends at $\VarArray[l_s - 1]$ with $l_s-1\in (in/t,(i+1)n/t-1)$.
  \item \emph{Thread $i$ has executed $T[l, r)$ in the state Left (Right) and thread $i$ executes $T[l_s, r_s)$ in the state Left (Right).}
    From the induction hypothesis, we know that $l-1$ (that $r$) is narrowed by $l-1\in (in/t,(i+1)n/t)$ (by $r\in [in/t,(i+2)n/t)$) before tasks of $T[l, r)$ are added to thread $i$.
    As thread $i$ is the leftmost (rightmost) thread of $T[l_s, r_s)$, we can narrow the begin $l_s$ (end $r_s$) of $T[l_s, r_s)$ also by $l_s-1\in (in/t,(i+1)n/t)$ (by $r_s\in [in/t,(i+2)n/t)$).
    Thus,  $T[l, r)$ creates subtasks whereof one subtask starts at $\VarArray[l]$ (at $\VarArray[r_s]$), one subtask ends at $\VarArray[l_s-1]$ (at $\VarArray[r-1]$), and subtasks cover the remaining elements in between without gaps.
    These subtasks are sequential subtasks as $\lfloor (l_s-1)t/n\rfloor -\lfloor lt/n \rfloor \leq \lfloor ((i+1)n/t-1)t/n\rfloor -\lfloor (in/t) t/n \rfloor = 0$
    (as $\lfloor (r-1)t/n\rfloor -\lfloor r_st/n \rfloor \leq \lfloor ((i+2)n/t-1)t/n\rfloor -\lfloor (in/t) t/n \rfloor = 1$).
    And, these sequential subtasks are all added to thread $i$, as they start in the subarray $\VarArray[in/t,(i+1)n/t-1]$ (in the subarray $\VarArray[in/t,(i+2)n/t-1]$, the $+2$ is used as thread $i$ is the rightmost thread of $T[l_s, r_s)$).
    Note that, in the penultimate sentence, we used the inequality $l\leq in/t$ (the inequality $r \leq (i+2)n/t$) from the induction hypothesis.
    When these subtasks were added to thread $i$, the sequential tasks of thread $i$ still cover a consecutive sequence of elements:
    On the one hand, the leftmost (rightmost) sequential subtask starts at $\VarArray[l]$ (ends at $\VarArray[r-1]$,) and the new sequential subtasks have no gaps in between.
    On the other hand, we know from the induction hypothesis that the rightmost (leftmost) sequential task of thread $i$ had ended at position $l-1$ (had started at position $r$) and that the old sequential tasks of thread $i$ had not had gaps in between.
  \end{itemize}
\end{proof}

\section{More Measurements}\label{app:more measurements}


\input{extern/ips4o-benchmark-suite-plots/benchmark/running_times/running_time_random_sequential_some_machines_uint.tex}


\begin{table}[!ht]
  \resizebox*{!}{0.93\textheight}{

\begin{tabular}{ll|rrrrrrrrr|rrrr}
  Type
  & Distribution
  & \rotatebox[origin=c]{90}{\compissssort}
  & \rotatebox[origin=c]{90}{\compspdq}
  & \rotatebox[origin=c]{90}{\compblock}
  & \rotatebox[origin=c]{90}{\compmyssssaxtmann}
  & \rotatebox[origin=c]{90}{\compsyaros}
  & \rotatebox[origin=c]{90}{\compssort}
  & \rotatebox[origin=c]{90}{\compstim}
  & \rotatebox[origin=c]{90}{\compsmergequick}
  & \rotatebox[origin=c]{90}{\compswiki}
  & \rotatebox[origin=c]{90}{\radixsska}
  & \rotatebox[origin=c]{90}{\radixipp}
  & \rotatebox[origin=c]{90}{\radixlearned}
  & \rotatebox[origin=c]{90}{\compiparassrsort}\\\hline
  \double &        \distsorted &          1.11 & 1.77 & 20.31 & 1.04 & 11.54 & 17.14 & \textbf{1.03} & 51.78 & 2.90 & 16.90 & 49.53 & 46.50 &  \\
  \double & \distreversesorted & \textbf{1.04} & 1.87 & 15.20 & 1.08 &  5.57 &  6.40 &          1.07 & 25.97 & 6.08 &  9.18 & 24.82 & 22.38 &  \\
  \double &          \distones &          1.14 & 1.93 & 17.48 & 1.11 &  1.29 & 13.56 & \textbf{1.02} &  2.93 & 3.72 & 13.09 & 18.38 & 13.69 &  \\

  \hline\hline
  
  \double &            \distexpo & \textbf{1.05} &          1.10 & 1.24 & 1.28 & 2.31 & 2.56 & 4.15 & 3.79 & 3.98 &          1.27 &          1.12 & 2.33 &  \\
  \double &            \distzipf &          1.19 &          1.32 & 1.44 & 1.45 & 2.86 & 3.07 & 4.77 & 4.23 & 4.84 &          1.26 & \textbf{1.14} & 3.14 &  \\
  \double &  \distduplicatesroot & \textbf{1.10} &          1.39 & 1.73 & 1.62 & 1.51 & 2.50 & 1.50 & 5.43 & 2.81 &          1.70 &          2.43 & 3.33 &  \\
  \double & \distduplicatestwice &          1.21 &          1.33 & 1.40 & 1.42 & 2.50 & 2.73 & 2.93 & 3.27 & 3.12 & \textbf{1.11} &          1.14 & 3.15 &  \\
  \double & \distduplicateseight & \textbf{1.02} &          1.08 & 1.36 & 1.27 & 2.46 & 2.78 & 4.28 & 4.29 & 4.07 &          1.21 &          1.35 & 2.38 &  \\
  \double &    \distalmostsorted &          2.22 & \textbf{1.05} & 1.87 & 2.79 & 1.57 & 1.62 & 1.25 & 5.85 & 2.26 &          2.05 &          4.22 & 4.39 &  \\
  \double &         \distuniform &          1.13 &          1.24 & 1.27 & 1.32 & 2.47 & 2.57 & 3.62 & 2.94 & 3.56 &          1.14 & \textbf{1.07} & 2.13 &  \\

  \hline
  Total  & &

  1.23 & \textbf{1.21} & 1.46 & 1.53 & 2.19 & 2.51 & 2.89 & 4.15 & 3.43 & 1.36 & 1.55 & 3.04 &  \\

  Rank & &
  2 & 1 & 4 & 5 & 7 & 8 & 9 & 12 & 11 & 3 & 6 & 10 &  \\\hline\hline
           
  \ulong &        \distsorted &          1.10 & 1.77 & 17.70 & 1.06 & 8.93 & 16.60 & \textbf{1.06} & 42.12 & 2.82 & 17.88 & 61.80 & 79.74 & 11.03 \\
  \ulong & \distreversesorted & \textbf{1.02} & 1.73 & 14.04 & 1.08 & 4.92 &  6.25 &          1.03 & 22.35 & 6.16 & 10.28 & 35.04 & 40.14 &  6.71 \\
  \ulong &          \distones &          1.10 & 1.50 & 15.49 & 1.04 & 1.08 & 12.05 & \textbf{1.04} &  2.35 & 3.34 & 12.65 & 16.71 & 15.46 &  1.29 \\

  \hline\hline
  
  \ulong &            \distexpo &          1.09 &          1.22 & 1.35 & 1.40 & 2.65 & 2.84 & 4.69 & 3.74 & 4.62 & 1.23 & 1.37 & 2.50 & \textbf{1.05} \\
  \ulong &            \distzipf &          1.45 &          1.71 & 1.92 & 1.93 & 3.54 & 3.82 & 6.08 & 5.02 & 6.23 & 1.60 & 1.50 & 3.07 & \textbf{1.04} \\
  \ulong &  \distduplicatesroot & \textbf{1.06} &          1.44 & 1.77 & 1.70 & 1.43 & 2.55 & 1.64 & 5.16 & 3.24 & 1.70 & 2.28 & 3.33 &          1.08 \\
  \ulong & \distduplicatestwice &          1.55 &          1.84 & 1.89 & 2.00 & 3.34 & 3.57 & 4.02 & 4.07 & 4.29 & 1.37 & 2.32 & 3.07 & \textbf{1.00} \\
  \ulong & \distduplicateseight &          1.20 &          1.32 & 1.56 & 1.58 & 2.77 & 3.15 & 5.07 & 4.76 & 5.03 & 1.49 & 2.68 & 2.71 & \textbf{1.02} \\
  \ulong &    \distalmostsorted &          2.13 & \textbf{1.06} & 1.85 & 3.03 & 1.52 & 1.71 & 1.35 & 5.36 & 2.41 & 2.37 & 6.14 & 8.39 &          1.23 \\
  \ulong &         \distuniform &          1.28 &          1.47 & 1.51 & 1.63 & 2.84 & 2.97 & 4.24 & 3.18 & 4.32 & 1.17 & 1.57 & 4.86 & \textbf{1.05} \\

  \hline
  Total  & &

  1.36 & 1.42 & 1.68 & 1.84 & 2.45 & 2.86 & 3.42 & 4.40 & 4.14 & 1.52 & 2.24 & 4.99 & \textbf{1.06} \\

  Rank & &
  2 & 3 & 5 & 6 & 8 & 9 & 10 & 12 & 11 & 4 & 7 & 13 & 1 \\\hline\hline

  \uint &        \distsorted & 2.84 & 4.29 & 49.59 & 2.87 & 24.36 & 60.07 & \textbf{1.94} & 121.52 & 6.43 & 35.09 & 48.04 & 263.84 & 27.84 \\
  \uint & \distreversesorted & 1.55 & 2.27 & 20.24 & 1.46 &  6.38 & 11.16 & \textbf{1.01} &  31.74 & 5.86 &  9.79 & 29.71 &  61.66 &  8.44 \\
  \uint &          \distones & 2.56 & 3.97 & 48.98 & 2.53 &  2.26 & 33.81 & \textbf{1.94} &   6.54 & 9.05 & 20.41 & 12.16 &  37.89 &  3.12 \\

  \hline\hline
  
  \uint &            \distexpo & 1.54 & 1.85 & 2.07 & 1.89 & 4.37 & 4.57 &          7.00 & 5.93 & 6.71 & 1.47 & \textbf{1.03} &  4.73 &          1.18 \\
  \uint &            \distzipf & 1.89 & 2.31 & 2.65 & 2.40 & 5.27 & 5.67 &          8.57 & 7.40 & 8.91 & 1.33 &          1.20 &  5.23 & \textbf{1.18} \\
  \uint &  \distduplicatesroot & 1.19 & 1.55 & 1.97 & 1.85 & 1.63 & 2.76 &          1.44 & 5.98 & 3.15 & 1.23 &          1.52 &  3.86 & \textbf{1.11} \\
  \uint & \distduplicatestwice & 1.93 & 2.46 & 2.50 & 2.46 & 5.05 & 5.07 &          5.07 & 5.20 & 5.47 & 1.22 &          1.46 &  5.22 & \textbf{1.10} \\
  \uint & \distduplicateseight & 1.34 & 1.64 & 1.99 & 1.77 & 4.17 & 4.56 &          6.50 & 5.74 & 6.43 & 1.22 &          1.83 &  3.19 & \textbf{1.01} \\
  \uint &    \distalmostsorted & 2.65 & 1.25 & 2.21 & 3.50 & 1.83 & 2.74 & \textbf{1.14} & 6.79 & 2.45 & 2.08 &          4.92 & 10.36 &          1.33 \\
  \uint &         \distuniform & 1.75 & 2.05 & 2.06 & 2.04 & 4.10 & 4.23 &          5.89 & 4.55 & 5.91 & 1.41 & \textbf{1.00} &  5.72 &          1.32 \\

  \hline
  Total  & &

  1.70 & 1.83 & 2.19 & 2.21 & 3.46 & 4.09 & 4.09 & 5.88 & 5.15 & 1.40 & 1.58 & 6.56 & \textbf{1.17} \\

  Rank & &
  4 & 5 & 6 & 7 & 8 & 10 & 9 & 12 & 11 & 2 & 3 & 13 & 1 \\\hline\hline
  
  \pair &        \distsorted & 1.12 & 1.57 & 13.51 &          1.04 & 7.57 & 12.35 & \textbf{1.02} & 28.08 & 2.31 & 13.08 &  &  & 8.61 \\
  \pair & \distreversesorted & 1.11 & 1.41 &  9.31 & \textbf{1.01} & 3.78 &  4.63 &          1.05 & 14.28 & 7.20 &  6.49 &  &  & 5.00 \\
  \pair &          \distones & 1.16 & 1.65 & 10.91 &          1.05 & 1.08 & 10.21 & \textbf{1.03} &  1.97 & 2.74 &  9.02 &  &  & 1.22 \\

  \hline\hline
  
  \pair &            \distexpo & 1.15 &          2.05 & 1.29 & 1.38 & 2.11 & 2.45 & 4.26 & 3.19 & 4.18 & 1.25 &  &  & \textbf{1.05} \\
  \pair &            \distzipf & 1.45 &          2.75 & 1.67 & 1.82 & 2.69 & 2.84 & 4.82 & 3.73 & 5.27 & 1.48 &  &  & \textbf{1.02} \\
  \pair &  \distduplicatesroot & 1.20 &          1.46 & 1.68 & 1.71 & 1.44 & 2.30 & 1.86 & 4.39 & 3.78 & 1.60 &  &  & \textbf{1.03} \\
  \pair & \distduplicatestwice & 1.74 &          3.04 & 1.83 & 2.01 & 2.90 & 3.10 & 3.74 & 3.56 & 4.41 & 1.47 &  &  & \textbf{1.00} \\
  \pair & \distduplicateseight & 1.30 &          2.39 & 1.53 & 1.65 & 2.28 & 2.65 & 4.51 & 3.97 & 4.93 & 1.46 &  &  & \textbf{1.01} \\
  \pair &    \distalmostsorted & 2.73 & \textbf{1.02} & 2.29 & 3.40 & 1.86 & 2.06 & 2.29 & 5.47 & 3.91 & 2.58 &  &  &          1.48 \\
  \pair &         \distuniform & 1.41 &          2.54 & 1.47 & 1.71 & 2.46 & 2.48 & 3.82 & 2.88 & 4.22 & 1.24 &  &  & \textbf{1.00} \\

  \hline
  Total  & &

  1.50 & 2.06 & 1.66 & 1.88 & 2.20 & 2.53 & 3.43 & 3.81 & 4.36 & 1.54 &  &  & \textbf{1.07} \\

  Rank & &
  2 & 6 & 4 & 5 & 7 & 8 & 9 & 10 & 11 & 3 &  &  & 1 \\\hline\hline
  
  \quartet & \distuniform & 1.06 & 1.91 & 1.26 & 1.39 & 1.92 & 1.78 & 3.08 & 2.01 & 3.22 & \textbf{1.04} &  &  &  \\

  \hline

  Rank & &
  2 & 6 & 3 & 4 & 7 & 5 & 9 & 8 & 10 & 1 &  &  &  \\\hline\hline
           
  \bytes & \distuniform & 1.21 & 1.16 & 1.13 & 1.51 & 1.52 & 1.21 & 2.02 & 1.55 & 2.65 & \textbf{1.09} &  &  &  \\

  \hline

  Rank & &
  4 & 3 & 2 & 6 & 7 & 5 & 9 & 8 & 10 & 1 &  &  &  \\\hline\hline
\end{tabular}

  }
  \caption{
    Average slowdowns of sequential algorithms for different data types and input distributions on \pcintellargefour.
    The slowdowns average over input sizes with at least $2^{18}$ bytes.
    \label{tab:slowdown seq 135}
  }
\end{table}
\begin{table}[!ht]
  \resizebox*{!}{0.93\textheight}{

\begin{tabular}{ll|rrrrrrrrr|rrrr}
  Type
  & Distribution
  & \rotatebox[origin=c]{90}{\compissssort}
  & \rotatebox[origin=c]{90}{\compspdq}
  & \rotatebox[origin=c]{90}{\compblock}
  & \rotatebox[origin=c]{90}{\compmyssssaxtmann}
  & \rotatebox[origin=c]{90}{\compsyaros}
  & \rotatebox[origin=c]{90}{\compssort}
  & \rotatebox[origin=c]{90}{\compstim}
  & \rotatebox[origin=c]{90}{\compsmergequick}
  & \rotatebox[origin=c]{90}{\compswiki}
  & \rotatebox[origin=c]{90}{\radixsska}
  & \rotatebox[origin=c]{90}{\radixipp}
  & \rotatebox[origin=c]{90}{\radixlearned}
  & \rotatebox[origin=c]{90}{\compiparassrsort}\\\hline
  \double &        \distsorted & \textbf{1.01} & 1.76 & 35.11 &          1.09 & 15.38 & 21.09 & 1.10 & 76.15 & 2.58 & 28.91 & 72.53 & 65.62 &  \\
  \double & \distreversesorted &          1.02 & 1.87 & 15.19 & \textbf{1.01} &  5.04 &  5.53 & 1.10 & 26.62 & 6.54 & 10.39 & 25.48 & 22.04 &  \\
  \double &          \distones & \textbf{1.03} & 1.83 & 31.36 &          1.09 &  1.25 & 17.36 & 1.10 &  2.50 & 3.36 & 21.07 & 32.62 & 18.57 &  \\

  \hline\hline
  
  \double &            \distexpo & \textbf{1.02} & 1.29 & 1.59 & 1.23 & 2.45 & 2.69 &          4.36 & 4.42 & 4.48 & 1.46 &          1.38 & 1.86 &  \\
  \double &            \distzipf & \textbf{1.05} & 1.46 & 1.73 & 1.26 & 2.70 & 2.86 &          4.64 & 4.43 & 4.86 & 1.34 &          1.27 & 2.00 &  \\
  \double &  \distduplicatesroot & \textbf{1.13} & 1.99 & 2.51 & 1.75 & 1.61 & 2.43 &          1.34 & 6.93 & 3.54 & 2.44 &          2.98 & 3.22 &  \\
  \double & \distduplicatestwice &          1.11 & 1.38 & 1.46 & 1.27 & 2.40 & 2.46 &          2.82 & 3.45 & 3.11 & 1.11 & \textbf{1.08} & 2.27 &  \\
  \double & \distduplicateseight & \textbf{1.00} & 1.38 & 1.73 & 1.25 & 2.60 & 3.09 &          4.62 & 5.25 & 4.82 & 1.60 &          1.70 & 3.02 &  \\
  \double &    \distalmostsorted &          2.20 & 1.48 & 2.47 & 2.90 & 1.58 & 1.58 & \textbf{1.01} & 6.79 & 2.58 & 2.51 &          3.53 & 4.61 &  \\
  \double &         \distuniform &          1.06 & 1.25 & 1.34 & 1.22 & 2.36 & 2.39 &          3.58 & 3.00 & 3.61 & 1.18 & \textbf{1.05} & 2.03 &  \\

  \hline
  Total  & &

  \textbf{1.18} & 1.45 & 1.79 & 1.48 & 2.20 & 2.45 & 2.78 & 4.69 & 3.77 & 1.58 & 1.66 & 2.90 &  \\

  Rank & &
  1 & 2 & 6 & 3 & 7 & 8 & 9 & 12 & 11 & 4 & 5 & 10 &  \\\hline\hline
           
  \ulong &        \distsorted & 1.27 & 1.84 & 35.98 & \textbf{1.00} & 15.29 & 23.94 &          1.40 & 70.86 & 3.38 & 39.54 & 100.69 & 121.56 & 16.53 \\
  \ulong & \distreversesorted & 1.01 & 1.79 & 13.80 &          1.01 &  4.27 &  5.30 & \textbf{1.01} & 20.68 & 6.63 & 11.26 &  28.76 &  34.85 &  6.19 \\
  \ulong &          \distones & 1.24 & 1.79 & 35.08 & \textbf{1.00} &  1.19 & 15.68 &          1.41 &  2.31 & 4.18 & 24.10 &  36.80 &  17.26 &  1.35 \\

  \hline\hline
  
  \ulong &            \distexpo & 1.06 & 1.42 & 1.72 & 1.31 & 2.32 & 2.72 &          4.82 & 4.20 & 4.88 & 1.29 & 1.79 & 2.13 & \textbf{1.04} \\
  \ulong &            \distzipf & 1.78 & 2.52 & 3.14 & 2.24 & 4.21 & 4.77 &          7.84 & 6.78 & 8.28 & 2.37 & 2.38 & 3.43 & \textbf{1.00} \\
  \ulong &  \distduplicatesroot & 1.62 & 2.81 & 3.93 & 2.90 & 2.23 & 3.47 &          2.32 & 9.39 & 5.50 & 3.38 & 3.92 & 4.93 & \textbf{1.00} \\
  \ulong & \distduplicatestwice & 2.05 & 2.81 & 3.01 & 2.44 & 4.45 & 4.86 &          5.68 & 6.00 & 6.44 & 2.14 & 3.06 & 4.39 & \textbf{1.00} \\
  \ulong & \distduplicateseight & 1.42 & 1.72 & 2.48 & 1.70 & 3.15 & 3.97 &          6.55 & 6.13 & 6.40 & 2.25 & 4.15 & 2.71 & \textbf{1.02} \\
  \ulong &    \distalmostsorted & 2.19 & 1.26 & 2.45 & 3.20 & 1.56 & 1.65 & \textbf{1.13} & 5.94 & 2.94 & 2.89 & 6.59 & 8.09 &          1.18 \\
  \ulong &         \distuniform & 1.44 & 1.95 & 2.07 & 1.74 & 3.10 & 3.43 &          5.37 & 4.19 & 5.44 & 1.38 & 2.09 & 5.14 & \textbf{1.02} \\

  \hline
  Total  & &

  1.61 & 1.98 & 2.60 & 2.13 & 2.84 & 3.37 & 4.11 & 5.88 & 5.47 & 2.13 & 3.13 & 5.44 & \textbf{1.03} \\

  Rank & &
  2 & 3 & 6 & 4 & 7 & 9 & 10 & 13 & 12 & 5 & 8 & 11 & 1 \\\hline\hline

  \uint &        \distsorted & 2.15 & 3.19 & 67.26 & \textbf{2.10} &         31.01 & 47.01 &          2.18 & 149.90 & 5.53 & 62.99 & 38.38 & 262.40 & 32.72 \\
  \uint & \distreversesorted & 1.24 & 2.08 & 18.46 &          1.32 &          6.12 &  7.30 & \textbf{1.07} &  29.95 & 6.22 & 11.91 & 16.88 &  49.89 &  7.38 \\
  \uint &          \distones & 2.32 & 3.12 & 81.68 &          2.37 & \textbf{2.02} & 33.24 &          2.34 &   5.00 & 8.11 & 28.89 & 16.21 &  37.73 &  2.66 \\

  \hline\hline
  
  \uint &            \distexpo & 1.49 & 1.99 & 2.59 & 1.91 & 3.63 & 4.11 &          7.14 &  6.41 & 7.00 & 1.55 & \textbf{1.05} &  3.52 &          1.05 \\
  \uint &            \distzipf & 1.93 & 3.06 & 3.89 & 2.60 & 5.45 & 5.94 &          9.91 &  8.60 & 9.80 & 2.04 &          1.28 &  4.50 & \textbf{1.06} \\
  \uint &  \distduplicatesroot & 1.74 & 3.34 & 4.51 & 3.14 & 2.60 & 4.03 &          2.14 & 11.16 & 5.37 & 2.89 &          2.20 &  5.97 & \textbf{1.00} \\
  \uint & \distduplicatestwice & 2.27 & 3.18 & 3.51 & 2.88 & 5.32 & 5.86 &          6.77 &  7.21 & 7.00 & 1.69 &          1.24 &  6.96 & \textbf{1.02} \\
  \uint & \distduplicateseight & 1.55 & 2.17 & 2.84 & 1.93 & 3.92 & 4.67 &          7.66 &  7.48 & 7.41 & 1.82 &          2.13 &  4.46 & \textbf{1.02} \\
  \uint &    \distalmostsorted & 2.82 & 1.69 & 2.91 & 4.26 & 1.96 & 1.92 & \textbf{1.00} &  7.71 & 2.97 & 2.72 &          4.37 & 10.75 &          1.39 \\
  \uint &         \distuniform & 1.75 & 2.33 & 2.66 & 2.31 & 4.27 & 4.50 &          6.87 &  5.22 & 6.57 & 1.67 & \textbf{1.02} &  5.87 &          1.16 \\

  \hline
  Total  & &

  1.89 & 2.46 & 3.21 & 2.62 & 3.67 & 4.21 & 4.74 & 7.50 & 6.25 & 2.00 & 1.66 & 7.24 & \textbf{1.09} \\

  Rank & &
  3 & 5 & 7 & 6 & 8 & 9 & 10 & 13 & 11 & 4 & 2 & 12 & 1 \\\hline\hline
  
  \pair &        \distsorted & 1.03 & 1.77 & 23.06 & \textbf{1.01} & 12.03 & 17.28 & 1.04 & 44.20 & 2.29 & 24.60 &  &  & 13.90 \\
  \pair & \distreversesorted & 1.05 & 1.18 &  8.67 & \textbf{1.04} &  3.83 &  4.29 & 1.04 & 13.90 & 7.30 &  7.52 &  &  &  5.51 \\
  \pair &          \distones & 1.02 & 1.66 & 18.17 & \textbf{1.02} &  1.09 & 14.29 & 1.03 &  2.20 & 2.84 & 15.14 &  &  &  1.27 \\

  \hline\hline
  
  \pair &            \distexpo & 1.13 &          2.04 & 1.28 & 1.17 & 1.92 & 2.18 & 3.83 & 3.30 & 4.46 & 1.14 &  &  & \textbf{1.08} \\
  \pair &            \distzipf & 1.46 &          2.80 & 1.74 & 1.56 & 2.64 & 2.91 & 5.04 & 3.91 & 5.82 & 1.54 &  &  & \textbf{1.02} \\
  \pair &  \distduplicatesroot & 1.49 &          1.78 & 2.44 & 1.96 & 1.77 & 2.60 & 2.04 & 5.89 & 4.98 & 2.24 &  &  & \textbf{1.00} \\
  \pair & \distduplicatestwice & 1.65 &          2.98 & 1.81 & 1.67 & 2.85 & 2.96 & 3.82 & 3.65 & 4.76 & 1.52 &  &  & \textbf{1.00} \\
  \pair & \distduplicateseight & 1.32 &          2.28 & 1.63 & 1.40 & 2.29 & 2.61 & 4.91 & 4.22 & 5.01 & 1.74 &  &  & \textbf{1.00} \\
  \pair &    \distalmostsorted & 3.63 & \textbf{1.00} & 3.48 & 4.50 & 2.54 & 2.64 & 2.41 & 7.49 & 5.13 & 3.90 &  &  &          2.09 \\
  \pair &         \distuniform & 1.42 &          2.63 & 1.65 & 1.53 & 2.60 & 2.65 & 4.23 & 3.21 & 4.77 & 1.22 &  &  & \textbf{1.05} \\

  \hline
  Total  & &

  1.60 & 2.10 & 1.91 & 1.78 & 2.34 & 2.64 & 3.58 & 4.32 & 4.97 & 1.75 &  &  & \textbf{1.14} \\

  Rank & &
  2 & 6 & 5 & 4 & 7 & 8 & 9 & 10 & 11 & 3 &  &  & 1 \\\hline\hline
  
  \quartet & \distuniform & 1.15 & 1.89 & 1.46 & 1.34 & 1.91 & 1.98 & 3.37 & 2.38 & 3.89 & \textbf{1.01} &  &  &  \\

  \hline

  Rank & &
  2 & 5 & 4 & 3 & 6 & 7 & 9 & 8 & 10 & 1 &  &  &  \\\hline\hline
           
  \bytes & \distuniform & 1.52 & 1.35 & 1.45 & 1.54 & 2.17 & 1.45 & 2.42 & 2.06 & 3.75 & \textbf{1.01} &  &  &  \\

  \hline

  Rank & &
  5 & 2 & 4 & 6 & 8 & 3 & 9 & 7 & 10 & 1 &  &  &  \\\hline\hline
\end{tabular}

  }
  \caption{
    Average slowdowns of sequential algorithms for different data types and input distributions on \pcamd.
    The slowdowns average over input sizes with at least $2^{18}$ bytes.
    \label{tab:slowdown seq 133}
  }
\end{table}
\begin{table}[!ht]
  \resizebox*{!}{0.93\textheight}{

\begin{tabular}{ll|rrrrrrrrr|rrrr}
  Type
  & Distribution
  & \rotatebox[origin=c]{90}{\compissssort}
  & \rotatebox[origin=c]{90}{\compspdq}
  & \rotatebox[origin=c]{90}{\compblock}
  & \rotatebox[origin=c]{90}{\compmyssssaxtmann}
  & \rotatebox[origin=c]{90}{\compsyaros}
  & \rotatebox[origin=c]{90}{\compssort}
  & \rotatebox[origin=c]{90}{\compstim}
  & \rotatebox[origin=c]{90}{\compsmergequick}
  & \rotatebox[origin=c]{90}{\compswiki}
  & \rotatebox[origin=c]{90}{\radixsska}
  & \rotatebox[origin=c]{90}{\radixipp}
  & \rotatebox[origin=c]{90}{\radixlearned}
  & \rotatebox[origin=c]{90}{\compiparassrsort}\\\hline
  \double &        \distsorted &          1.07 & 1.83 & 31.60 & \textbf{1.01} & 14.48 & 28.79 & 1.02 & 92.63 & 3.93 & 27.46 & 94.22 & 75.36 &  \\
  \double & \distreversesorted & \textbf{1.03} & 1.70 & 19.10 &          1.06 &  5.72 &  7.75 & 1.14 & 36.85 & 7.45 & 12.07 & 37.44 & 27.41 &  \\
  \double &          \distones &          1.09 & 1.79 & 26.88 & \textbf{1.01} &  1.15 & 19.12 & 1.05 &  3.02 & 4.87 & 21.95 & 28.47 & 19.26 &  \\

  \hline\hline
  
  \double &            \distexpo & \textbf{1.00} &          1.11 & 1.23 & 1.29 & 2.34 & 2.69 & 4.44 & 4.42 & 4.19 &          1.19 & 1.65 & 2.21 &  \\
  \double &            \distzipf & \textbf{1.08} &          1.23 & 1.38 & 1.48 & 2.79 & 3.11 & 5.01 & 4.70 & 4.79 &          1.08 & 1.39 & 2.40 &  \\
  \double &  \distduplicatesroot & \textbf{1.04} &          1.23 & 1.52 & 1.51 & 1.23 & 2.12 & 1.34 & 6.01 & 2.57 &          1.59 & 2.39 & 2.93 &  \\
  \double & \distduplicatestwice &          1.20 &          1.35 & 1.38 & 1.52 & 2.60 & 2.85 & 3.19 & 3.79 & 3.18 & \textbf{1.02} & 1.45 & 2.69 &  \\
  \double & \distduplicateseight & \textbf{1.00} &          1.06 & 1.31 & 1.33 & 2.43 & 2.90 & 4.64 & 5.02 & 4.35 &          1.16 & 1.61 & 2.49 &  \\
  \double &    \distalmostsorted &          2.29 & \textbf{1.01} & 2.07 & 2.76 & 1.60 & 1.87 & 1.30 & 7.43 & 2.22 &          2.26 & 5.61 & 4.73 &  \\
  \double &         \distuniform &          1.05 &          1.20 & 1.20 & 1.34 & 2.43 & 2.54 & 3.82 & 3.26 & 3.63 & \textbf{1.01} & 1.78 & 2.11 &  \\

  \hline
  Total  & &

  1.18 & \textbf{1.16} & 1.42 & 1.55 & 2.13 & 2.55 & 3.00 & 4.79 & 3.45 & 1.28 & 2.00 & 2.98 &  \\

  Rank & &
  2 & 1 & 4 & 5 & 7 & 8 & 10 & 12 & 11 & 3 & 6 & 9 &  \\\hline\hline
           
  \ulong &        \distsorted & 1.11 & 1.83 & 28.42 & \textbf{1.02} & 13.30 & 26.46 &          1.05 & 84.18 & 3.33 & 34.08 & 116.92 & 130.41 & 16.69 \\
  \ulong & \distreversesorted & 1.06 & 1.73 & 17.76 & \textbf{1.02} &  5.28 &  7.00 &          1.04 & 32.52 & 7.72 & 13.81 &  43.72 &  47.28 &  7.59 \\
  \ulong &          \distones & 1.10 & 1.68 & 25.90 &          1.03 &  1.11 & 17.94 & \textbf{1.01} &  2.73 & 4.42 & 22.45 &  28.05 &  17.99 &  1.47 \\

  \hline\hline
  
  \ulong &            \distexpo & \textbf{1.02} &          1.15 & 1.24 & 1.38 & 2.23 & 2.59 & 4.42 & 4.09 & 4.14 &          1.11 & 2.06 & 2.06 &          1.10 \\
  \ulong &            \distzipf &          1.28 &          1.56 & 1.71 & 1.85 & 3.20 & 3.61 & 5.76 & 5.23 & 5.72 &          1.27 & 1.75 & 2.05 & \textbf{1.01} \\
  \ulong &  \distduplicatesroot & \textbf{1.06} &          1.22 & 1.52 & 1.63 & 1.16 & 1.95 & 1.39 & 5.35 & 2.73 &          1.43 & 2.26 & 2.94 &          1.35 \\
  \ulong & \distduplicatestwice &          1.49 &          1.67 & 1.65 & 1.90 & 2.99 & 3.29 & 3.76 & 4.19 & 3.85 &          1.21 & 2.26 & 2.65 & \textbf{1.00} \\
  \ulong & \distduplicateseight &          1.16 &          1.20 & 1.47 & 1.58 & 2.51 & 3.00 & 4.95 & 5.04 & 4.74 &          1.36 & 2.54 & 2.27 & \textbf{1.02} \\
  \ulong &    \distalmostsorted &          2.37 & \textbf{1.02} & 1.91 & 3.08 & 1.60 & 1.76 & 1.41 & 6.81 & 2.39 &          2.69 & 7.30 & 8.47 &          1.21 \\
  \ulong &         \distuniform &          1.25 &          1.41 & 1.38 & 1.61 & 2.61 & 2.80 & 4.09 & 3.43 & 4.01 & \textbf{1.00} & 2.80 & 3.71 &          1.12 \\

  \hline
  Total  & &

  1.32 & 1.30 & 1.54 & 1.80 & 2.21 & 2.64 & 3.25 & 4.77 & 3.79 & 1.37 & 2.66 & 4.32 & \textbf{1.11} \\

  Rank & &
  3 & 2 & 5 & 6 & 7 & 8 & 10 & 13 & 11 & 4 & 9 & 12 & 1 \\\hline\hline

  \uint &        \distsorted & 2.82 & 4.45 & 83.97 &          2.32 & 35.39 & 95.00 & \textbf{2.00} & 234.86 &  8.93 & 61.73 & 92.69 & 426.73 & 42.36 \\
  \uint & \distreversesorted & 1.47 & 2.38 & 26.97 &          1.51 &  7.54 & 12.80 & \textbf{1.00} &  49.47 &  7.45 & 14.09 & 51.86 &  86.09 & 10.10 \\
  \uint &          \distones & 2.51 & 4.09 & 80.92 & \textbf{1.99} &  2.49 & 54.23 &          2.11 &   7.59 & 12.12 & 34.89 & 17.80 &  58.18 &  4.03 \\

  \hline\hline
  
  \uint &            \distexpo & 1.29 & 1.53 & 1.67 & 1.62 & 3.20 & 3.56 &          5.92 & 5.77 & 5.61 &          1.13 & 1.11 &  2.80 & \textbf{1.05} \\
  \uint &            \distzipf & 1.67 & 2.10 & 2.40 & 2.18 & 4.76 & 5.14 &          8.11 & 7.72 & 7.94 & \textbf{1.09} & 1.34 &  3.13 &          1.22 \\
  \uint &  \distduplicatesroot & 1.35 & 1.43 & 1.85 & 1.70 & 1.41 & 2.42 &          1.26 & 6.66 & 2.76 & \textbf{1.09} & 1.67 &  3.61 &          1.61 \\
  \uint & \distduplicatestwice & 1.86 & 2.16 & 2.22 & 2.15 & 4.12 & 4.40 &          4.86 & 5.67 & 4.89 & \textbf{1.04} & 1.78 &  3.97 &          1.18 \\
  \uint & \distduplicateseight & 1.28 & 1.46 & 1.77 & 1.59 & 3.29 & 3.76 &          6.00 & 6.28 & 5.70 & \textbf{1.04} & 1.91 &  2.41 &          1.08 \\
  \uint &    \distalmostsorted & 2.77 & 1.22 & 2.43 & 3.56 & 1.89 & 2.76 & \textbf{1.09} & 8.40 & 2.23 &          2.30 & 7.36 & 10.79 &          1.30 \\
  \uint &         \distuniform & 1.35 & 1.62 & 1.56 & 1.63 & 3.20 & 3.25 &          4.79 & 4.01 & 4.53 & \textbf{1.01} & 1.12 &  3.99 &          1.20 \\

  \hline
  Total  & &

  1.59 & 1.61 & 1.96 & 1.98 & 2.91 & 3.51 & 3.68 & 6.22 & 4.45 & \textbf{1.19} & 1.84 & 5.51 & 1.22 \\

  Rank & &
  3 & 4 & 6 & 7 & 8 & 9 & 10 & 13 & 11 & 1 & 5 & 12 & 2 \\\hline\hline
  
  \pair &        \distsorted & 1.08 & 1.72 & 19.73 &          1.02 & 9.96 & 19.02 & \textbf{1.01} & 51.62 & 2.65 & 20.73 &  &  & 11.81 \\
  \pair & \distreversesorted & 1.07 & 1.18 & 10.15 &          1.08 & 3.48 &  4.63 & \textbf{1.07} & 17.97 & 7.55 &  7.46 &  &  &  5.07 \\
  \pair &          \distones & 1.12 & 1.64 & 17.10 & \textbf{1.01} & 1.05 & 16.22 &          1.12 &  2.13 & 3.05 & 13.49 &  &  &  1.15 \\

  \hline\hline
  
  \pair &            \distexpo & \textbf{1.04} &          1.94 & 1.18 & 1.52 & 1.82 & 2.07 & 3.87 & 3.35 & 4.04 &          1.10 &  &  &          1.07 \\
  \pair &            \distzipf &          1.39 &          2.68 & 1.53 & 2.06 & 2.53 & 2.72 & 4.83 & 4.09 & 5.25 &          1.25 &  &  & \textbf{1.00} \\
  \pair &  \distduplicatesroot & \textbf{1.05} &          1.15 & 1.39 & 1.66 & 1.09 & 1.76 & 1.65 & 4.31 & 3.24 &          1.23 &  &  &          1.12 \\
  \pair & \distduplicatestwice &          1.48 &          2.67 & 1.56 & 2.00 & 2.47 & 2.66 & 3.35 & 3.55 & 3.95 &          1.21 &  &  & \textbf{1.02} \\
  \pair & \distduplicateseight &          1.24 &          2.20 & 1.40 & 1.78 & 2.08 & 2.47 & 4.52 & 4.35 & 4.85 &          1.37 &  &  & \textbf{1.00} \\
  \pair &    \distalmostsorted &          3.40 & \textbf{1.00} & 2.62 & 4.12 & 2.17 & 2.50 & 2.80 & 7.94 & 4.55 &          3.17 &  &  &          1.66 \\
  \pair &         \distuniform &          1.29 &          2.39 & 1.35 & 1.76 & 2.24 & 2.34 & 3.75 & 2.98 & 4.00 & \textbf{1.04} &  &  &          1.08 \\

  \hline
  Total  & &

  1.43 & 1.88 & 1.53 & 2.01 & 2.00 & 2.34 & 3.37 & 4.16 & 4.22 & 1.37 &  &  & \textbf{1.12} \\

  Rank & &
  3 & 5 & 4 & 7 & 6 & 8 & 9 & 10 & 11 & 2 &  &  & 1 \\\hline\hline
  
  \quartet & \distuniform & 1.22 & 2.00 & 1.31 & 1.82 & 2.00 & 2.02 & 3.37 & 2.39 & 3.71 & \textbf{1.02} &  &  &  \\

  \hline

  Rank & &
  2 & 5 & 3 & 4 & 6 & 7 & 9 & 8 & 10 & 1 &  &  &  \\\hline\hline
           
  \bytes & \distuniform & 1.51 & 1.30 & 1.29 & 1.88 & 1.84 & 1.38 & 2.39 & 1.98 & 3.40 & \textbf{1.04} &  &  &  \\

  \hline

  Rank & &
  5 & 3 & 2 & 7 & 6 & 4 & 9 & 8 & 10 & 1 &  &  &  \\\hline\hline
\end{tabular}

  }
  \caption{
    Average slowdowns of sequential algorithms for different data types and input distributions on \pcinteltwo.
    The slowdowns average over input sizes with at least $2^{18}$ bytes.
    \label{tab:slowdown seq 132}
  }
\end{table}
\begin{table}[!ht]
  \resizebox*{!}{0.93\textheight}{

\begin{tabular}{ll|rrrrrrrrr|rrrr}
  Type
  & Distribution
  & \rotatebox[origin=c]{90}{\compissssort}
  & \rotatebox[origin=c]{90}{\compspdq}
  & \rotatebox[origin=c]{90}{\compblock}
  & \rotatebox[origin=c]{90}{\compmyssssaxtmann}
  & \rotatebox[origin=c]{90}{\compsyaros}
  & \rotatebox[origin=c]{90}{\compssort}
  & \rotatebox[origin=c]{90}{\compstim}
  & \rotatebox[origin=c]{90}{\compsmergequick}
  & \rotatebox[origin=c]{90}{\compswiki}
  & \rotatebox[origin=c]{90}{\radixsska}
  & \rotatebox[origin=c]{90}{\radixipp}
  & \rotatebox[origin=c]{90}{\radixlearned}
  & \rotatebox[origin=c]{90}{\compiparassrsort}\\\hline
  \double &        \distsorted & \textbf{1.03} & 1.64 & 29.22 &          1.11 & 16.69 & 22.54 & 1.24 & 70.63 & 2.51 & 27.84 & 66.47 & 67.60 &  \\
  \double & \distreversesorted &          1.01 & 1.56 & 11.86 & \textbf{1.00} &  4.75 &  4.98 & 1.04 & 21.65 & 4.68 &  8.75 & 19.44 & 18.32 &  \\
  \double &          \distones & \textbf{1.01} & 1.64 & 21.64 &          1.17 &  1.11 & 16.84 & 1.19 &  2.62 & 3.03 & 18.42 & 32.61 & 16.30 &  \\

  \double &            \distexpo & \textbf{1.03} & 1.10 & 1.20 & 1.24 & 2.62 & 2.92 &          4.78 & 4.32 & 4.79 & 1.36 &          1.32 & 2.17 &  \\
  \double &            \distzipf & \textbf{1.02} & 1.11 & 1.27 & 1.26 & 2.95 & 3.12 &          4.92 & 4.28 & 5.06 & 1.08 &          1.21 & 2.16 &  \\
  \double &  \distduplicatesroot &          1.14 & 1.64 & 1.96 & 1.86 & 1.79 & 2.69 & \textbf{1.14} & 7.04 & 3.71 & 2.32 &          3.24 & 3.98 &  \\
  \double & \distduplicatestwice &          1.18 & 1.26 & 1.31 & 1.37 & 2.80 & 2.95 &          3.25 & 3.48 & 3.54 & 1.09 & \textbf{1.09} & 2.31 &  \\
  \double & \distduplicateseight & \textbf{1.02} & 1.06 & 1.29 & 1.25 & 2.72 & 3.13 &          4.86 & 4.93 & 4.82 & 1.38 &          1.52 & 2.51 &  \\
  \double &    \distalmostsorted &          3.03 & 1.22 & 3.02 & 4.47 & 2.43 & 2.33 & \textbf{1.00} & 9.01 & 3.56 & 3.52 &          4.81 & 6.97 &  \\
  \double &         \distuniform &          1.09 & 1.15 & 1.15 & 1.25 & 2.64 & 2.68 &          3.96 & 3.07 & 3.95 & 1.10 & \textbf{1.04} & 2.13 &  \\

  \hline
  Total  & &

  1.25 & \textbf{1.21} & 1.51 & 1.61 & 2.54 & 2.82 & 2.89 & 4.83 & 4.16 & 1.53 & 1.71 & 3.49 &  \\

  Rank & &
  2 & 1 & 3 & 5 & 7 & 8 & 9 & 12 & 11 & 4 & 6 & 10 &  \\\hline\hline
           
  \ulong &        \distsorted & 1.33 & 1.92 & 29.77 & \textbf{1.00} & 16.75 & 24.57 &          1.03 & 59.38 & 2.81 & 37.63 & 79.94 & 113.03 & 16.30 \\
  \ulong & \distreversesorted & 1.01 & 1.54 & 10.29 & \textbf{1.01} &  4.11 &  4.67 &          1.01 & 15.95 & 4.66 &  9.84 & 20.49 &  27.22 &  5.05 \\
  \ulong &          \distones & 1.37 & 2.06 & 24.96 &          1.10 &  1.18 & 16.98 & \textbf{1.01} &  2.57 & 3.81 & 22.53 & 39.42 &  16.94 &  1.39 \\

  \ulong &            \distexpo & \textbf{1.04} & 1.18 & 1.33 & 1.34 & 2.51 & 2.89 &          5.09 & 3.73 & 5.07 & 1.26 & 1.62 &  2.06 &          1.04 \\
  \ulong &            \distzipf &          1.73 & 2.06 & 2.39 & 2.33 & 4.80 & 5.32 &          9.03 & 6.36 & 9.19 & 2.15 & 2.51 &  2.73 & \textbf{1.00} \\
  \ulong &  \distduplicatesroot &          1.59 & 2.32 & 2.90 & 2.77 & 2.42 & 3.61 &          1.80 & 7.98 & 5.65 & 3.15 & 4.26 &  4.56 & \textbf{1.00} \\
  \ulong & \distduplicatestwice &          2.04 & 2.47 & 2.55 & 2.62 & 4.94 & 5.36 &          6.36 & 5.47 & 6.95 & 1.99 & 3.41 &  4.35 & \textbf{1.00} \\
  \ulong & \distduplicateseight &          1.37 & 1.52 & 1.91 & 1.77 & 3.43 & 4.06 &          6.89 & 5.57 & 6.84 & 2.13 & 3.39 &  2.48 & \textbf{1.00} \\
  \ulong &    \distalmostsorted &          2.93 & 1.19 & 2.94 & 4.64 & 2.39 & 2.41 & \textbf{1.03} & 7.08 & 3.83 & 4.03 & 7.29 & 11.08 &          1.66 \\
  \ulong &         \distuniform &          1.43 & 1.73 & 1.73 & 1.82 & 3.59 & 3.78 &          5.86 & 3.75 & 5.84 & 1.34 & 2.02 &  4.99 & \textbf{1.00} \\

  \hline
  Total  & &

  \textbf{1.65} & 1.71 & 2.17 & 2.30 & 3.30 & 3.78 & 4.17 & 5.51 & 6.00 & 2.12 & 3.13 & 5.84 \\

  Rank & &
  2 & 3 & 5 & 6 & 8 & 9 & 10 & 11 & 13 & 4 & 7 & 12 \\\hline\hline

  \uint &        \distsorted & 2.48 & 4.51 & 67.12 & 2.80 & 37.74 & 54.21 & \textbf{1.93} & 139.72 & 6.14 & 64.10 & 28.28 & 343.53 & 35.51 \\
  \uint & \distreversesorted & 1.42 & 1.88 & 13.03 & 1.55 &  5.38 &  6.34 & \textbf{1.06} &  19.68 & 4.27 &  8.92 &  7.12 &  40.41 &  5.83 \\
  \uint &          \distones & 2.19 & 3.95 & 60.33 & 2.38 &  2.21 & 41.48 & \textbf{1.96} &   6.01 & 7.85 & 27.01 & 17.69 &  42.95 &  2.97 \\

  \uint &            \distexpo & 1.60 & 1.84 & 2.13 & 1.97 & 4.12 & 4.76 &          7.67 &  5.90 &  7.78 & 1.51 & \textbf{1.00} &  3.45 &          1.09 \\
  \uint &            \distzipf & 2.04 & 2.51 & 3.01 & 2.67 & 6.12 & 6.86 &         10.77 &  7.84 & 11.07 & 1.66 &          1.25 &  4.11 & \textbf{1.06} \\
  \uint &  \distduplicatesroot & 1.67 & 2.48 & 3.21 & 2.83 & 2.56 & 4.07 &          1.50 &  9.10 &  5.14 & 2.41 &          2.08 &  5.62 & \textbf{1.00} \\
  \uint & \distduplicatestwice & 2.65 & 3.13 & 3.32 & 3.13 & 6.46 & 7.09 &          7.72 &  6.89 &  8.31 & 1.63 & \textbf{1.09} &  6.71 &          1.10 \\
  \uint & \distduplicateseight & 1.53 & 1.81 & 2.31 & 1.93 & 4.31 & 5.07 &          7.80 &  6.48 &  7.89 & 1.55 &          1.30 &  2.97 & \textbf{1.00} \\
  \uint &    \distalmostsorted & 5.23 & 2.10 & 4.95 & 8.25 & 3.93 & 3.96 & \textbf{1.00} & 12.63 &  5.42 & 5.34 &          5.24 & 23.34 &          2.70 \\
  \uint &         \distuniform & 2.21 & 2.53 & 2.60 & 2.57 & 5.38 & 5.77 &          8.32 &  5.38 &  8.39 & 1.75 & \textbf{1.02} &  7.46 &          1.29 \\

  \hline
  Total  & &

  2.21 & 2.31 & 2.97 & 2.94 & 4.51 & 5.24 & 4.84 & 7.49 & 7.49 & 2.03 & \textbf{1.53} & 10.24 \\

  Rank & &
  4 & 5 & 7 & 6 & 8 & 10 & 9 & 11 & 12 & 3 & 2 & 13 \\\hline\hline
  
  \pair &        \distsorted &          1.03 & 1.65 & 20.71 & 1.04 & 12.67 & 18.52 & \textbf{1.03} & 35.04 & 2.41 & 23.54 &  &  & 12.43 \\
  \pair & \distreversesorted &          1.08 & 1.18 &  6.89 & 1.07 &  3.77 &  3.90 & \textbf{1.05} & 10.76 & 5.49 &  6.82 &  &  &  4.58 \\
  \pair &          \distones & \textbf{1.02} & 1.65 & 13.38 & 1.04 &  1.06 & 12.81 &          1.03 &  2.03 & 2.78 & 12.80 &  &  &  1.24 \\

  \pair &            \distexpo & 1.06 &          2.00 & 1.09 & 1.20 & 1.96 & 2.18 & 4.16 & 2.72 & 4.40 & 1.13 &  &  & \textbf{1.04} \\
  \pair &            \distzipf & 1.53 &          3.18 & 1.63 & 1.74 & 3.06 & 3.31 & 5.96 & 3.77 & 6.50 & 1.58 &  &  & \textbf{1.00} \\
  \pair &  \distduplicatesroot & 1.62 &          1.90 & 2.06 & 2.17 & 1.86 & 2.85 & 1.96 & 5.37 & 5.22 & 2.22 &  &  & \textbf{1.00} \\
  \pair & \distduplicatestwice & 1.67 &          3.23 & 1.69 & 1.86 & 3.10 & 3.34 & 4.26 & 3.32 & 4.99 & 1.55 &  &  & \textbf{1.00} \\
  \pair & \distduplicateseight & 1.24 &          2.37 & 1.34 & 1.44 & 2.31 & 2.70 & 4.96 & 3.64 & 5.16 & 1.72 &  &  & \textbf{1.00} \\
  \pair &    \distalmostsorted & 3.51 & \textbf{1.00} & 3.26 & 4.62 & 2.70 & 2.91 & 1.96 & 6.48 & 5.18 & 4.17 &  &  &          2.03 \\
  \pair &         \distuniform & 1.41 &          2.82 & 1.40 & 1.58 & 2.71 & 2.81 & 4.62 & 2.75 & 4.90 & 1.24 &  &  & \textbf{1.00} \\

  \hline
  Total  & &

  \textbf{1.60} & 2.21 & 1.68 & 1.90 & 2.48 & 2.85 & 3.69 & 3.83 & 5.16 & 1.77 &  &  \\

  Rank & &
  2 & 6 & 3 & 5 & 7 & 8 & 9 & 10 & 11 & 4 &  &  \\\hline\hline
  
  \quartet & \distuniform & 1.13 & 1.82 & 1.24 & 1.28 & 1.96 & 1.84 & 3.04 & 1.95 & 3.67 & \textbf{1.01} &  &  &  \\

  \hline

  Rank & &
  2 & 5 & 3 & 4 & 8 & 6 & 9 & 7 & 10 & 1 &  &  \\\hline\hline
           
  \bytes & \distuniform & 1.53 & 1.38 & 1.40 & 1.68 & 2.10 & 1.42 & 2.25 & 1.79 & 3.50 & \textbf{1.01} &  &  &  \\

  \hline

  Rank & &
  5 & 2 & 3 & 6 & 8 & 4 & 9 & 7 & 10 & 1 &  &  \\\hline\hline
\end{tabular}

  }
  \caption{
    Average slowdowns of sequential algorithms for different data types and input distributions on \pcintelfour.
    The slowdowns average over input sizes with at least $2^{18}$ bytes.
    \label{tab:slowdown seq 128}
  }
\end{table}


\begin{table}[!ht]
  \centering
  \resizebox*{!}{0.93\textheight}{

\begin{tabular}{ll|rr}
  Type
  & Distribution
  & \compmyssssaxtmann
  & \compssssschneider\\\hline
  \double &        \distsorted & \textbf{1.00} & 35.77 \\
  \double & \distreversesorted & \textbf{1.00} & 16.15 \\
  \double &          \distones & \textbf{1.00} & 17.03 \\

  \hline\hline
  
  \double &            \distexpo & \textbf{1.00} & 1.41 \\
  \double &            \distzipf & \textbf{1.00} & 1.34 \\
  \double &  \distduplicatesroot & \textbf{1.02} & 1.39 \\
  \double & \distduplicatestwice & \textbf{1.00} & 1.23 \\
  \double & \distduplicateseight & \textbf{1.00} & 1.52 \\
  \double &    \distalmostsorted & \textbf{1.02} & 1.13 \\
  \double &         \distuniform & \textbf{1.00} & 1.23 \\

  \hline
  Total  & &

  \textbf{1.01} & 1.32 \\

  Rank & &
  1 & 2 \\\hline\hline
           
  \ulong &        \distsorted & \textbf{1.00} & 37.39 \\
  \ulong & \distreversesorted & \textbf{1.00} & 15.50 \\
  \ulong &          \distones & \textbf{1.00} & 16.41 \\

  \hline\hline
  
  \ulong &            \distexpo & \textbf{1.00} & 1.41 \\
  \ulong &            \distzipf & \textbf{1.00} & 1.31 \\
  \ulong &  \distduplicatesroot & \textbf{1.02} & 1.33 \\
  \ulong & \distduplicatestwice & \textbf{1.00} & 1.24 \\
  \ulong & \distduplicateseight & \textbf{1.00} & 1.48 \\
  \ulong &    \distalmostsorted & \textbf{1.04} & 1.11 \\
  \ulong &         \distuniform & \textbf{1.00} & 1.24 \\

  \hline
  Total  & &

  \textbf{1.01} & 1.30 \\

  Rank & &
  1 & 2 \\\hline\hline

  \uint &        \distsorted & \textbf{1.00} & 39.40 \\
  \uint & \distreversesorted & \textbf{1.01} & 15.16 \\
  \uint &          \distones & \textbf{1.00} & 20.16 \\

  \hline\hline
  
  \uint &            \distexpo & \textbf{1.00} & 1.49 \\
  \uint &            \distzipf & \textbf{1.00} & 1.38 \\
  \uint &  \distduplicatesroot & \textbf{1.01} & 1.45 \\
  \uint & \distduplicatestwice & \textbf{1.00} & 1.25 \\
  \uint & \distduplicateseight & \textbf{1.00} & 1.56 \\
  \uint &    \distalmostsorted & \textbf{1.04} & 1.14 \\
  \uint &         \distuniform & \textbf{1.00} & 1.25 \\

  \hline
  Total  & &

  \textbf{1.01} & 1.35 \\

  Rank & &
  1 & 2 \\\hline\hline
  
  \pair &        \distsorted & \textbf{1.00} & 24.42 \\
  \pair & \distreversesorted & \textbf{1.00} & 10.47 \\
  \pair &          \distones & \textbf{1.00} & 12.16 \\

  \hline\hline
  
  \pair &            \distexpo & \textbf{1.00} & 1.32 \\
  \pair &            \distzipf & \textbf{1.01} & 1.20 \\
  \pair &  \distduplicatesroot & \textbf{1.02} & 1.24 \\
  \pair & \distduplicatestwice & \textbf{1.00} & 1.19 \\
  \pair & \distduplicateseight & \textbf{1.00} & 1.37 \\
  \pair &    \distalmostsorted & \textbf{1.04} & 1.07 \\
  \pair &         \distuniform & \textbf{1.01} & 1.16 \\

  \hline
  Total  & &

  \textbf{1.01} & 1.22 \\

  Rank & &
  1 & 2 \\\hline\hline
  
  \quartet & \distuniform & \textbf{1.01} & 1.12 \\

  \hline

  Rank & &
  1 & 2 \\\hline\hline
           
  \bytes & \distuniform & 1.09 & \textbf{1.04} \\

  \hline

  Rank & &
  2 & 1 \\\hline\hline
\end{tabular}

  }
  \caption{
    Average slowdowns of \compmyssssaxtmann and \compssssschneider for different data types and input distributions.
    The slowdowns average over the machines and input sizes with at least $2^{18}$ bytes.
    \label{tab:slowdown seq s4o ssss}
  }
\end{table}


\begin{figure}[!ht]
  \input{extern/ips4o-benchmark-suite-plots/benchmark/running_times/running_time_parallel_allgen_one_machine_pretty_i10pc136.tex}
  \caption{
    Running times of parallel algorithms on different input distributions and data types of size $D$ executed on machine \pcintelfour.
    The radix sorters \radixppbbr, \radixraduls, \radixregion, and \compiparassrsort does not support the data types \double and \bytes.
  }
  \label{fig:par rt distr types 128}
\end{figure}

\begin{figure}[!ht]
\input{extern/ips4o-benchmark-suite-plots/benchmark/running_times/running_time_parallel_allgen_one_machine_pretty_i10pc132.tex}
  \caption{
    Running times of parallel algorithms on different input distributions and data types of size $D$ executed on machine \pcinteltwo.
    The radix sorters \radixppbbr, \radixraduls, \radixregion, and \compiparassrsort does not support the data types \double and \bytes.
  }
  \label{fig:par rt distr types 132}
\end{figure}

\begin{figure}[!ht]
\input{extern/ips4o-benchmark-suite-plots/benchmark/running_times/running_time_parallel_allgen_one_machine_pretty_i10pc133.tex}
  \caption{
    Running times of parallel algorithms on different input distributions and data types of size $D$ executed on machine \pcamd.
    The radix sorters \radixppbbr, \radixraduls, \radixregion, and \compiparassrsort does not support the data types \double and \bytes.
  }
  \label{fig:par rt distr types 133}
\end{figure}

\begin{figure}[!ht]
\input{extern/ips4o-benchmark-suite-plots/benchmark/running_times/running_time_parallel_allgen_one_machine_pretty_i10pc135.tex}
  \caption{
    Running times of parallel algorithms on different input distributions and data types of size $D$ executed on machine \pcintellargefour.
    The radix sorters \radixppbbr, \radixraduls, \radixregion, and \compiparassrsort does not support the data types \double and \bytes.
  }
  \label{fig:par rt distr types 135}
\end{figure}


\begin{table}
  \resizebox*{!}{0.93\textheight}{

\begin{tabular}{ll|rrrrrr|rrrrrrr}
    Type
  & Distribution
  & \rotatebox[origin=c]{90}{\compiparassssort} 
  &  \rotatebox[origin=c]{90}{\compppbbs}
  & \rotatebox[origin=c]{90}{\compmyparassssaxtmann} 
  & \rotatebox[origin=c]{90}{\comppsort}
  & \rotatebox[origin=c]{90}{\comppbalancedsort} 
  & \rotatebox[origin=c]{90}{\compptbb} 
  & \rotatebox[origin=c]{90}{\radixregion}  
  & \rotatebox[origin=c]{90}{\radixppbbr}
  & \rotatebox[origin=c]{90}{\radixraduls}
  & \rotatebox[origin=c]{90}{\comppaspas}
  & \rotatebox[origin=c]{90}{\compiparassrsort} \\\hline
  \double &        \distsorted &          1.04 & 14.24 & 1.36 & 17.74 &  22.71 & \textbf{1.01} &  &  &  & 65.29 &  \\
  \double & \distreversesorted & \textbf{1.09} &  1.21 & 1.73 &  1.40 &  15.82 &          3.40 &  &  &  &  6.39 &  \\
  \double &          \distones &          1.04 & 12.29 & 1.30 & 19.88 & 319.78 & \textbf{1.00} &  &  &  & 64.20 &  \\

  \hline\hline
  
  \double &            \distexpo & \textbf{1.00} & 1.82 & 1.87 & 2.45 & 3.37 & 15.64 &  &  &  & 5.57 &  \\
  \double &            \distzipf & \textbf{1.00} & 1.89 & 1.98 & 2.51 & 3.25 & 16.17 &  &  &  & 5.99 &  \\
  \double &  \distduplicatesroot & \textbf{1.00} & 1.45 & 2.02 & 2.20 & 3.74 &  6.33 &  &  &  & 6.78 &  \\
  \double & \distduplicatestwice & \textbf{1.00} & 1.90 & 1.73 & 2.23 & 2.92 &  7.01 &  &  &  & 4.85 &  \\
  \double & \distduplicateseight & \textbf{1.00} & 1.84 & 1.94 & 2.29 & 3.34 & 15.16 &  &  &  & 5.63 &  \\
  \double &    \distalmostsorted & \textbf{1.00} & 1.50 & 2.12 & 4.12 & 2.57 &  3.43 &  &  &  & 7.15 &  \\
  \double &         \distuniform & \textbf{1.00} & 1.96 & 1.70 & 2.33 & 3.00 & 12.65 &  &  &  & 4.77 &  \\

  \hline
  Total  & &

  \textbf{1.00} & 1.75 & 1.90 & 2.53 & 3.15 & 9.57 &  &  &  & 5.76 &  \\

  Rank & &
  1 & 2 & 3 & 4 & 5 & 7 &  &  &  & 6 &  \\\hline\hline
  
  \ulong &        \distsorted & 1.17 & 12.63 & 1.31 & 17.38 &  23.67 & \textbf{1.00} &          7.19 & 78.17 & 44.41 &  & 10.76 \\
  \ulong & \distreversesorted & 1.17 &  1.24 & 1.96 &  1.59 &  18.27 &          3.96 & \textbf{1.07} &  7.83 &  4.24 &  &  1.37 \\
  \ulong &          \distones & 1.09 & 12.45 & 1.33 & 19.68 & 317.45 & \textbf{1.00} &          1.02 & 69.97 & 39.00 &  &  1.32 \\

  \hline\hline
  
  \ulong &            \distexpo & \textbf{1.02} & 1.61 & 1.97 & 2.39 & 3.32 & 14.06 &          1.63 &  1.51 & 2.61 &  &          1.29 \\
  \ulong &            \distzipf & \textbf{1.00} & 1.66 & 2.03 & 2.41 & 3.44 & 14.10 &          1.39 & 19.54 & 5.47 &  &          1.23 \\
  \ulong &  \distduplicatesroot & \textbf{1.00} & 1.35 & 2.06 & 2.20 & 3.62 &  7.54 &          1.33 &  9.04 & 5.70 &  &          1.23 \\
  \ulong & \distduplicatestwice & \textbf{1.03} & 1.74 & 1.87 & 2.25 & 3.11 &  7.34 &          1.11 & 10.11 & 3.43 &  &          1.08 \\
  \ulong & \distduplicateseight & \textbf{1.00} & 1.61 & 2.00 & 2.24 & 3.46 & 13.40 &          1.28 & 13.38 & 4.38 &  &          1.23 \\
  \ulong &    \distalmostsorted &          1.11 & 1.58 & 2.46 & 4.72 & 3.22 &  4.39 & \textbf{1.05} &  9.69 & 5.47 &  &          1.30 \\
  \ulong &         \distuniform &          1.11 & 1.96 & 2.01 & 2.59 & 3.15 & 13.15 &          1.40 &  1.26 & 1.27 &  & \textbf{1.03} \\

  \hline
  Total  & &

  \textbf{1.04} & 1.63 & 2.05 & 2.59 & 3.33 & 9.77 & 1.30 & 6.25 & 3.64 &  & 1.20 \\

  Rank & &
  1 & 4 & 5 & 6 & 7 & 10 & 3 & 9 & 8 &  & 2 \\\hline\hline
  
  \uint &        \distsorted & \textbf{1.23} &  9.48 & 1.74 & 10.22 &  17.28 &          2.09 &          4.87 &  7.39 &  &  & 4.96 \\
  \uint & \distreversesorted &          1.67 &  1.84 & 2.56 &  1.87 &  16.85 &          8.19 & \textbf{1.06} &  1.39 &  &  & 1.16 \\
  \uint &          \distones &          1.09 & 13.43 & 1.35 & 22.64 & 474.99 & \textbf{1.00} &          1.01 & 89.40 &  &  & 1.38 \\

  \hline\hline
  
  \uint &            \distexpo & 1.27 & 2.60 & 2.12 & 3.43 & 4.24 & 24.27 & 1.37 & 1.78 &  &  & \textbf{1.00} \\
  \uint &            \distzipf & 1.06 & 2.32 & 1.94 & 3.07 & 3.90 & 22.24 & 1.16 & 6.03 &  &  & \textbf{1.02} \\
  \uint &  \distduplicatesroot & 1.11 & 1.61 & 2.13 & 2.43 & 3.89 &  7.71 & 1.18 & 6.98 &  &  & \textbf{1.08} \\
  \uint & \distduplicatestwice & 1.46 & 3.09 & 2.27 & 3.53 & 4.61 & 12.22 & 1.07 & 1.59 &  &  & \textbf{1.00} \\
  \uint & \distduplicateseight & 1.24 & 2.66 & 2.13 & 3.21 & 3.99 & 23.06 & 1.16 & 1.54 &  &  & \textbf{1.04} \\
  \uint &    \distalmostsorted & 1.51 & 1.99 & 2.60 & 5.20 & 3.69 &  5.49 & 1.12 & 1.52 &  &  & \textbf{1.01} \\
  \uint &         \distuniform & 1.46 & 3.18 & 2.24 & 3.77 & 4.73 & 21.36 & 1.21 & 1.50 &  &  & \textbf{1.01} \\

  \hline
  Total  & &

  1.29 & 2.43 & 2.20 & 3.44 & 4.13 & 14.54 & 1.18 & 2.30 &  &  & \textbf{1.02} \\

  Rank & &
  3 & 6 & 4 & 7 & 8 & 9 & 2 & 5 &  &  & 1 \\\hline\hline
  
  \pair &        \distsorted &          1.05 & 12.66 & 1.32 & 16.25 &  23.98 & \textbf{1.00} & 6.54 & 29.00 & 75.74 &  & 9.68 \\
  \pair & \distreversesorted & \textbf{1.11} &  1.28 & 1.85 &  1.57 &  16.77 &          3.21 & 1.12 &  2.82 &  7.53 &  & 1.39 \\
  \pair &          \distones &          1.06 & 14.76 & 1.29 & 19.14 & 283.98 & \textbf{1.00} & 1.04 & 15.63 & 74.08 &  & 1.35 \\

  \hline\hline
  
  \pair &            \distexpo &          1.21 & 1.59 & 2.27 & 2.33 & 3.41 & 8.89 & 1.93 & \textbf{1.01} &  9.32 &  & 1.62 \\
  \pair &            \distzipf & \textbf{1.00} & 1.35 & 1.90 & 1.93 & 2.91 & 7.31 & 1.38 &          7.94 &  8.41 &  & 1.26 \\
  \pair &  \distduplicatesroot & \textbf{1.02} & 1.24 & 1.87 & 2.00 & 3.49 & 5.12 & 1.26 &          3.38 &  9.61 &  & 1.28 \\
  \pair & \distduplicatestwice & \textbf{1.02} & 1.37 & 1.88 & 1.86 & 2.84 & 4.37 & 1.24 &          4.69 &  6.20 &  & 1.17 \\
  \pair & \distduplicateseight & \textbf{1.02} & 1.36 & 1.96 & 1.89 & 3.10 & 7.29 & 1.31 &          8.00 &  7.57 &  & 1.29 \\
  \pair &    \distalmostsorted & \textbf{1.07} & 1.74 & 2.59 & 4.25 & 3.64 & 3.94 & 1.09 &          4.01 & 10.36 &  & 1.31 \\
  \pair &         \distuniform &          1.08 & 1.50 & 1.98 & 2.04 & 2.94 & 7.26 & 1.47 & \textbf{1.05} &  4.39 &  & 1.07 \\

  \hline
  Total  & &

  \textbf{1.06} & 1.44 & 2.05 & 2.23 & 3.17 & 6.07 & 1.36 & 3.30 & 7.70 &  & 1.28 \\

  Rank & &
  1 & 4 & 5 & 6 & 7 & 9 & 3 & 8 & 10 &  & 2 \\\hline\hline
  
  \quartet & \distuniform & \textbf{1.03} & 1.14 & 1.99 & 1.83 & 2.71 & 4.68 &  &  &  &  &  \\

  \hline

  Rank & &
  1 & 2 & 4 & 3 & 5 & 6 &  &  &  &  &  \\\hline\hline
  
  \bytes & \distuniform & \textbf{1.05} & 1.11 & 2.04 & 1.70 & 2.53 & 3.51 &  &  &  &  &  \\

  \hline

  Rank & &
  1 & 2 & 4 & 3 & 5 & 6 &  &  &  &  &  \\\hline\hline
\end{tabular}

  }
  \caption{
    Average slowdowns of parallel algorithms for different data types and input distributions obtained on machine \pcintelfour.
    The slowdowns average input sizes with at least $2^{21}t$ bytes.
  }
  \label{tab:slowdown par 128}
\end{table}

\begin{table}
  \resizebox*{!}{0.93\textheight}{

\begin{tabular}{ll|rrrrrr|rrrrrrr}
    Type
  & Distribution
  & \rotatebox[origin=c]{90}{\compiparassssort} 
  &  \rotatebox[origin=c]{90}{\compppbbs}
  & \rotatebox[origin=c]{90}{\compmyparassssaxtmann} 
  & \rotatebox[origin=c]{90}{\comppsort}
  & \rotatebox[origin=c]{90}{\comppbalancedsort} 
  & \rotatebox[origin=c]{90}{\compptbb} 
  & \rotatebox[origin=c]{90}{\radixregion}  
  & \rotatebox[origin=c]{90}{\radixppbbr}
  & \rotatebox[origin=c]{90}{\radixraduls}
  & \rotatebox[origin=c]{90}{\comppaspas}
  & \rotatebox[origin=c]{90}{\compiparassrsort} \\\hline
  \double &        \distsorted &          2.47 & 18.27 & 2.81 & 22.46 &  19.43 & \textbf{1.02} &  &  &  & 60.33 &  \\
  \double & \distreversesorted & \textbf{1.05} &  1.20 & 1.78 &  1.40 &   8.79 &          2.24 &  &  &  &  4.29 &  \\
  \double &          \distones &          2.17 & 17.91 & 2.30 & 26.99 & 292.91 & \textbf{1.03} &  &  &  & 60.63 &  \\

  \hline\hline
  
  \double &            \distexpo & \textbf{1.00} & 2.06 & 1.92 & 2.76 & 2.79 & 11.65 &  &  &  & 4.17 &  \\
  \double &            \distzipf & \textbf{1.00} & 2.36 & 2.08 & 3.16 & 3.33 & 13.48 &  &  &  & 4.65 &  \\
  \double &  \distduplicatesroot & \textbf{1.00} & 1.63 & 2.29 & 2.62 & 3.53 &  5.71 &  &  &  & 5.71 &  \\
  \double & \distduplicatestwice & \textbf{1.00} & 2.15 & 1.79 & 2.60 & 2.64 &  5.37 &  &  &  & 3.52 &  \\
  \double & \distduplicateseight & \textbf{1.00} & 2.10 & 1.98 & 2.67 & 2.93 & 11.29 &  &  &  & 4.29 &  \\
  \double &    \distalmostsorted & \textbf{1.00} & 1.47 & 2.07 & 4.21 & 1.57 &  2.72 &  &  &  & 5.05 &  \\
  \double &         \distuniform & \textbf{1.00} & 2.23 & 1.78 & 2.70 & 2.65 &  9.35 &  &  &  & 3.43 &  \\

  \hline
  Total  & &

  \textbf{1.00} & 1.97 & 1.98 & 2.92 & 2.71 & 7.54 &  &  &  & 4.34 &  \\

  Rank & &
  1 & 2 & 3 & 5 & 4 & 7 &  &  &  & 6 &  \\\hline\hline
  
  \ulong &        \distsorted & 2.32 & 16.98 & 2.91 & 21.77 &  18.67 & \textbf{1.01} &          8.19 & 93.89 & 50.76 &  & 12.20 \\
  \ulong & \distreversesorted & 1.34 &  1.43 & 2.28 &  1.75 &  11.17 &          2.83 & \textbf{1.00} &  8.30 &  4.24 &  &  1.47 \\
  \ulong &          \distones & 1.62 & 18.42 & 2.30 & 26.88 & 291.98 & \textbf{1.07} &          1.09 & 85.76 & 52.94 &  &  1.09 \\

  \hline\hline
  
  \ulong &            \distexpo & \textbf{1.04} & 1.98 & 2.06 & 2.79 & 3.01 & 11.02 &          1.58 &  1.45 & 1.96 &  &          1.08 \\
  \ulong &            \distzipf & \textbf{1.00} & 2.17 & 2.09 & 2.99 & 3.24 & 12.17 &          1.32 & 18.30 & 5.64 &  &          1.30 \\
  \ulong &  \distduplicatesroot & \textbf{1.00} & 1.55 & 2.27 & 2.53 & 3.50 &  5.64 &          1.17 &  9.66 & 6.40 &  &          1.26 \\
  \ulong & \distduplicatestwice &          1.19 & 2.36 & 2.10 & 2.91 & 3.12 &  6.18 & \textbf{1.09} & 11.02 & 3.39 &  &          1.15 \\
  \ulong & \distduplicateseight & \textbf{1.05} & 2.02 & 2.11 & 2.66 & 2.98 & 10.68 &          1.14 & 14.02 & 4.45 &  &          1.15 \\
  \ulong &    \distalmostsorted &          1.23 & 1.73 & 2.62 & 5.24 & 1.99 &  3.42 & \textbf{1.04} &  9.95 & 5.18 &  &          1.26 \\
  \ulong &         \distuniform &          1.21 & 2.50 & 2.15 & 3.09 & 3.11 & 10.31 &          1.36 &  1.54 & 1.15 &  & \textbf{1.06} \\

  \hline
  Total  & &

  \textbf{1.10} & 2.02 & 2.19 & 3.08 & 2.95 & 7.80 & 1.23 & 6.43 & 3.49 &  & 1.18 \\

  Rank & &
  1 & 4 & 5 & 7 & 6 & 10 & 3 & 9 & 8 &  & 2 \\\hline\hline
  
  \uint &        \distsorted & 3.33 & 14.36 & 3.55 & 14.59 &  18.15 & \textbf{1.96} &          6.28 &   8.67 &  &  & 6.47 \\
  \uint & \distreversesorted & 1.94 &  2.12 & 2.80 &  2.07 &  12.90 &          5.32 & \textbf{1.02} &   1.28 &  &  & 1.14 \\
  \uint &          \distones & 1.97 & 19.35 & 1.99 & 32.52 & 473.11 & \textbf{1.06} &          1.08 & 105.42 &  &  & 1.09 \\

  \hline\hline
  
  \uint &            \distexpo & 1.46 & 3.38 & 2.43 & 4.20 & 4.33 & 19.28 &          1.32 & 1.93 &  &  & \textbf{1.00} \\
  \uint &            \distzipf & 1.10 & 2.99 & 2.03 & 3.73 & 3.90 & 17.67 & \textbf{1.05} & 5.83 &  &  &          1.10 \\
  \uint &  \distduplicatesroot & 1.21 & 1.94 & 2.41 & 2.96 & 3.48 &  6.35 & \textbf{1.01} & 7.00 &  &  &          1.36 \\
  \uint & \distduplicatestwice & 1.64 & 3.79 & 2.48 & 4.08 & 4.38 &  9.68 & \textbf{1.04} & 2.12 &  &  &          1.06 \\
  \uint & \distduplicateseight & 1.38 & 3.48 & 2.43 & 3.95 & 4.29 & 18.54 &          1.10 & 1.84 &  &  & \textbf{1.09} \\
  \uint &    \distalmostsorted & 1.72 & 2.25 & 2.76 & 6.11 & 2.87 &  4.77 &          1.17 & 1.34 &  &  & \textbf{1.01} \\
  \uint &         \distuniform & 1.53 & 3.63 & 2.23 & 3.95 & 4.09 & 14.73 &          1.09 & 1.73 &  &  & \textbf{1.08} \\

  \hline
  Total  & &

  1.42 & 2.98 & 2.39 & 4.06 & 3.87 & 11.54 & 1.11 & 2.43 &  &  & \textbf{1.09} \\

  Rank & &
  3 & 6 & 4 & 8 & 7 & 9 & 2 & 5 &  &  & 1 \\\hline\hline
  
  \pair &        \distsorted & 2.17 & 14.22 & 2.93 & 19.64 &  17.52 & \textbf{1.03} &          7.26 & 34.12 & 95.13 &  & 11.46 \\
  \pair & \distreversesorted & 1.14 &  1.31 & 2.05 &  1.74 &   9.42 &          2.76 & \textbf{1.03} &  3.11 &  8.53 &  &  1.53 \\
  \pair &          \distones & 1.95 & 20.27 & 2.40 & 25.29 & 197.76 & \textbf{1.03} &          1.06 & 19.66 & 97.93 &  &  1.06 \\

  \hline\hline
  
  \pair &            \distexpo & \textbf{1.06} & 1.49 & 2.02 & 2.36 & 2.45 & 6.71 &          1.52 & 1.07 &  8.42 &  &          1.20 \\
  \pair &            \distzipf & \textbf{1.00} & 1.58 & 1.93 & 2.38 & 2.55 & 6.87 &          1.35 & 7.99 &  9.45 &  &          1.35 \\
  \pair &  \distduplicatesroot & \textbf{1.01} & 1.34 & 2.05 & 2.31 & 3.06 & 4.89 &          1.16 & 4.30 & 10.97 &  &          1.13 \\
  \pair & \distduplicatestwice & \textbf{1.05} & 1.65 & 1.99 & 2.30 & 2.48 & 4.08 &          1.15 & 4.92 &  6.91 &  &          1.17 \\
  \pair & \distduplicateseight & \textbf{1.02} & 1.50 & 2.04 & 2.22 & 2.48 & 6.44 &          1.16 & 7.57 &  8.34 &  &          1.21 \\
  \pair &    \distalmostsorted &          1.06 & 1.67 & 2.60 & 4.49 & 2.12 & 3.25 & \textbf{1.03} & 3.97 & 11.00 &  &          1.33 \\
  \pair &         \distuniform &          1.06 & 1.73 & 2.01 & 2.41 & 2.39 & 6.19 &          1.36 & 1.32 &  4.74 &  & \textbf{1.01} \\

  \hline
  Total  & &

  \textbf{1.04} & 1.56 & 2.08 & 2.56 & 2.49 & 5.31 & 1.24 & 3.55 & 8.26 &  & 1.20 \\

  Rank & &
  1 & 4 & 5 & 7 & 6 & 9 & 3 & 8 & 10 &  & 2 \\\hline\hline
  
  \quartet & \distuniform & \textbf{1.00} & 1.26 & 2.01 & 2.20 & 2.28 & 4.50 &  &  &  &  &  \\

  \hline

  Rank & &
  1 & 2 & 3 & 4 & 5 & 6 &  &  &  &  &  \\\hline\hline
  
  \bytes & \distuniform & \textbf{1.01} & 1.16 & 1.94 & 1.93 & 2.16 & 3.33 &  &  &  &  &  \\

  \hline

  Rank & &
  1 & 2 & 4 & 3 & 5 & 6 &  &  &  &  &  \\\hline\hline
\end{tabular}

  }
  \caption{
    Average slowdowns of parallel algorithms for different data types and input distributions obtained on machine \pcinteltwo.
    The slowdowns average input sizes with at least $2^{21}t$ bytes.
  }
  \label{tab:slowdown par 132}
\end{table}

\begin{table}
  \resizebox*{!}{0.93\textheight}{

\begin{tabular}{ll|rrrrrr|rrrrrrr}
    Type
  & Distribution
  & \rotatebox[origin=c]{90}{\compiparassssort} 
  &  \rotatebox[origin=c]{90}{\compppbbs}
  & \rotatebox[origin=c]{90}{\compmyparassssaxtmann} 
  & \rotatebox[origin=c]{90}{\comppsort}
  & \rotatebox[origin=c]{90}{\comppbalancedsort} 
  & \rotatebox[origin=c]{90}{\compptbb} 
  & \rotatebox[origin=c]{90}{\radixregion}  
  & \rotatebox[origin=c]{90}{\radixppbbr}
  & \rotatebox[origin=c]{90}{\radixraduls}
  & \rotatebox[origin=c]{90}{\comppaspas}
  & \rotatebox[origin=c]{90}{\compiparassrsort} \\\hline
  \double &        \distsorted &          1.08 & 11.04 & 1.26 & 14.97 & 15.74 & \textbf{1.00} &  &  &  & 47.54 &  \\
  \double & \distreversesorted & \textbf{1.01} &  1.15 & 1.45 &  1.53 &  3.63 &          1.48 &  &  &  &  5.11 &  \\
  \double &          \distones &          1.10 &  7.43 & 1.23 & 18.54 & 48.16 & \textbf{1.00} &  &  &  & 47.74 &  \\

  \hline\hline
  
  \double &            \distexpo & \textbf{1.00} & 1.52 & 1.41 & 2.22 & 2.88 & 4.35 &  &  &  & 5.06 &  \\
  \double &            \distzipf & \textbf{1.00} & 1.55 & 1.46 & 2.23 & 2.92 & 4.21 &  &  &  & 4.97 &  \\
  \double &  \distduplicatesroot & \textbf{1.00} & 1.40 & 1.39 & 2.24 & 3.19 & 2.82 &  &  &  & 5.62 &  \\
  \double & \distduplicatestwice & \textbf{1.00} & 1.62 & 1.43 & 2.11 & 2.81 & 2.68 &  &  &  & 4.64 &  \\
  \double & \distduplicateseight & \textbf{1.00} & 1.52 & 1.46 & 2.29 & 2.85 & 4.42 &  &  &  & 5.20 &  \\
  \double &    \distalmostsorted & \textbf{1.00} & 1.52 & 1.96 & 4.39 & 2.24 & 1.86 &  &  &  & 6.73 &  \\
  \double &         \distuniform & \textbf{1.00} & 1.71 & 1.43 & 2.20 & 2.78 & 3.96 &  &  &  & 4.63 &  \\

  \hline
  Total  & &

  \textbf{1.00} & 1.55 & 1.50 & 2.44 & 2.80 & 3.33 &  &  &  & 5.22 &  \\

  Rank & &
  1 & 3 & 2 & 4 & 5 & 6 &  &  &  & 7 &  \\\hline\hline
  
  \ulong &        \distsorted & 1.05 & 10.76 & 1.21 & 14.83 & 15.67 & \textbf{1.00} &          6.04 & 42.27 & 27.60 &  & 7.78 \\
  \ulong & \distreversesorted & 1.07 &  1.23 & 1.58 &  1.64 &  3.86 &          1.58 & \textbf{1.04} &  5.03 &  3.12 &  & 1.27 \\
  \ulong &          \distones & 1.06 &  7.41 & 1.17 & 18.31 & 47.38 & \textbf{1.00} &          1.01 & 34.98 & 30.32 &  & 1.06 \\

  \hline\hline
  
  \ulong &            \distexpo & \textbf{1.03} & 1.46 & 1.50 & 2.25 & 2.91 & 3.99 &          1.40 & 1.63 & 1.91 &  &          1.24 \\
  \ulong &            \distzipf & \textbf{1.00} & 1.47 & 1.49 & 2.22 & 2.94 & 3.82 &          1.29 & 6.92 & 3.55 &  &          1.27 \\
  \ulong &  \distduplicatesroot & \textbf{1.00} & 1.36 & 1.39 & 2.23 & 3.16 & 2.80 &          1.22 & 5.67 & 4.18 &  &          1.15 \\
  \ulong & \distduplicatestwice & \textbf{1.04} & 1.57 & 1.53 & 2.21 & 2.95 & 2.71 &          1.14 & 5.64 & 2.80 &  &          1.11 \\
  \ulong & \distduplicateseight & \textbf{1.03} & 1.45 & 1.50 & 2.29 & 2.90 & 4.06 &          1.22 & 7.00 & 3.27 &  &          1.22 \\
  \ulong &    \distalmostsorted &          1.08 & 1.66 & 2.16 & 4.78 & 2.42 & 2.05 & \textbf{1.07} & 6.63 & 4.23 &  &          1.18 \\
  \ulong &         \distuniform &          1.05 & 1.67 & 1.54 & 2.31 & 2.97 & 3.84 &          1.25 & 1.44 & 1.19 &  & \textbf{1.02} \\

  \hline
  Total  & &

  \textbf{1.03} & 1.52 & 1.57 & 2.51 & 2.88 & 3.23 & 1.22 & 4.08 & 2.79 &  & 1.17 \\

  Rank & &
  1 & 4 & 5 & 6 & 8 & 9 & 3 & 10 & 7 &  & 2 \\\hline\hline
  
  \uint &        \distsorted & 1.14 & 14.38 & 1.33 & 16.51 & 14.74 & \textbf{1.13} &          5.53 & 16.98 &  &  & 6.20 \\
  \uint & \distreversesorted & 1.25 &  1.49 & 1.68 &  1.67 &  4.32 &          1.96 & \textbf{1.02} &  1.75 &  &  & 1.12 \\
  \uint &          \distones & 1.12 &  8.33 & 1.27 & 19.15 & 56.60 & \textbf{1.00} &          1.02 & 48.85 &  &  & 1.05 \\

  \hline\hline
  
  \uint &            \distexpo &          1.10 & 2.20 & 1.60 & 2.70 & 3.18 & 6.38 & 1.27 & 1.64 &  &  & \textbf{1.07} \\
  \uint &            \distzipf & \textbf{1.02} & 2.07 & 1.52 & 2.58 & 3.14 & 5.97 & 1.21 & 4.53 &  &  &          1.20 \\
  \uint &  \distduplicatesroot & \textbf{1.01} & 1.68 & 1.53 & 2.31 & 3.21 & 3.51 & 1.19 & 5.33 &  &  &          1.13 \\
  \uint & \distduplicatestwice &          1.12 & 2.36 & 1.57 & 2.61 & 3.14 & 3.75 & 1.10 & 1.75 &  &  & \textbf{1.00} \\
  \uint & \distduplicateseight & \textbf{1.04} & 2.08 & 1.53 & 2.56 & 3.06 & 6.01 & 1.17 & 2.00 &  &  &          1.09 \\
  \uint &    \distalmostsorted &          1.19 & 1.99 & 2.17 & 5.53 & 2.56 & 2.28 & 1.10 & 2.32 &  &  & \textbf{1.04} \\
  \uint &         \distuniform &          1.17 & 2.56 & 1.57 & 2.79 & 3.17 & 5.69 & 1.13 & 1.43 &  &  & \textbf{1.00} \\

  \hline
  Total  & &

  1.09 & 2.12 & 1.63 & 2.89 & 3.06 & 4.53 & 1.17 & 2.29 &  &  & \textbf{1.07} \\

  Rank & &
  2 & 5 & 4 & 7 & 8 & 9 & 3 & 6 &  &  & 1 \\\hline\hline
  
  \pair &        \distsorted &          1.06 & 12.94 & 1.20 & 14.57 & 15.62 & \textbf{1.00} & 6.13 & 17.80 & 55.45 &  & 7.71 \\
  \pair & \distreversesorted & \textbf{1.06} &  1.51 & 1.54 &  1.68 &  3.87 &          1.51 & 1.10 &  2.11 &  6.43 &  & 1.28 \\
  \pair &          \distones &          1.07 &  9.92 & 1.15 & 18.42 & 44.53 & \textbf{1.00} & 1.01 &  8.84 & 59.46 &  & 1.09 \\

  \hline\hline
  
  \pair &            \distexpo & \textbf{1.04} & 1.48 & 1.48 & 2.19 & 2.83 & 2.91 & 1.44 & 1.04 & 5.90 &  & 1.29 \\
  \pair &            \distzipf & \textbf{1.00} & 1.45 & 1.45 & 2.09 & 2.78 & 2.77 & 1.32 & 3.22 & 6.72 &  & 1.31 \\
  \pair &  \distduplicatesroot & \textbf{1.00} & 1.57 & 1.31 & 2.29 & 3.27 & 2.63 & 1.21 & 2.73 & 7.61 &  & 1.20 \\
  \pair & \distduplicatestwice & \textbf{1.01} & 1.42 & 1.48 & 2.03 & 2.79 & 2.29 & 1.25 & 2.68 & 5.85 &  & 1.11 \\
  \pair & \distduplicateseight & \textbf{1.02} & 1.48 & 1.49 & 2.19 & 2.81 & 3.01 & 1.17 & 3.28 & 6.60 &  & 1.26 \\
  \pair &    \distalmostsorted & \textbf{1.05} & 1.98 & 2.06 & 4.18 & 2.44 & 1.86 & 1.09 & 2.75 & 8.50 &  & 1.15 \\
  \pair &         \distuniform & \textbf{1.04} & 1.50 & 1.55 & 2.15 & 2.89 & 2.80 & 1.31 & 1.12 & 4.40 &  & 1.05 \\

  \hline
  Total  & &

  \textbf{1.02} & 1.54 & 1.53 & 2.37 & 2.82 & 2.58 & 1.25 & 2.20 & 6.39 &  & 1.19 \\

  Rank & &
  1 & 5 & 4 & 7 & 9 & 8 & 3 & 6 & 10 &  & 2 \\\hline\hline
  
  \quartet & \distuniform & \textbf{1.01} & 1.19 & 1.45 & 1.96 & 2.55 & 2.17 &  &  &  &  &  \\

  \hline

  Rank & &
  1 & 2 & 3 & 4 & 6 & 5 &  &  &  &  &  \\\hline\hline
  
  \bytes & \distuniform & \textbf{1.03} & 1.12 & 1.48 & 2.00 & 2.43 & 2.09 &  &  &  &  &  \\

  \hline

  Rank & &
  1 & 2 & 3 & 4 & 6 & 5 &  &  &  &  &  \\\hline\hline
\end{tabular}

  }
  \caption{
    Average slowdowns of parallel algorithms for different data types and input distributions obtained on machine \pcamd.
    The slowdowns average input sizes with at least $2^{21}t$ bytes.
  }
  \label{tab:slowdown par 133}
\end{table}

\begin{table}
  \resizebox*{!}{0.93\textheight}{

\begin{tabular}{ll|rrrrrr|rrrrrrr}
    Type
  & Distribution
  & \rotatebox[origin=c]{90}{\compiparassssort} 
  &  \rotatebox[origin=c]{90}{\compppbbs}
  & \rotatebox[origin=c]{90}{\compmyparassssaxtmann} 
  & \rotatebox[origin=c]{90}{\comppsort}
  & \rotatebox[origin=c]{90}{\comppbalancedsort} 
  & \rotatebox[origin=c]{90}{\compptbb} 
  & \rotatebox[origin=c]{90}{\radixregion}  
  & \rotatebox[origin=c]{90}{\radixppbbr}
  & \rotatebox[origin=c]{90}{\radixraduls}
  & \rotatebox[origin=c]{90}{\comppaspas}
  & \rotatebox[origin=c]{90}{\compiparassrsort} \\\hline
  \double &        \distsorted &          1.38 &  4.36 & 3.57 & 8.71 &   3.91 & \textbf{1.24} &  &  &  & 11.66 &  \\
  \double & \distreversesorted & \textbf{1.08} &  1.99 & 3.54 & 3.50 &  33.14 &          8.28 &  &  &  &  6.39 &  \\
  \double &          \distones &          2.23 & 15.45 & 2.95 & 3.79 & 161.14 & \textbf{1.26} &  &  &  & 11.06 &  \\

  \hline\hline
  
  \double &            \distexpo & \textbf{1.01} & 1.89 & 3.03 & 3.00 & 4.00 & 17.67 &  &  &  & 5.42 &  \\
  \double &            \distzipf & \textbf{1.00} & 2.05 & 3.38 & 3.34 & 5.28 & 20.29 &  &  &  & 6.38 &  \\
  \double &  \distduplicatesroot & \textbf{1.00} & 1.68 & 3.85 & 3.18 & 5.64 &  9.79 &  &  &  & 8.00 &  \\
  \double & \distduplicatestwice & \textbf{1.00} & 2.05 & 2.86 & 2.96 & 3.80 &  9.77 &  &  &  & 5.21 &  \\
  \double & \distduplicateseight & \textbf{1.00} & 1.83 & 2.93 & 2.67 & 3.82 & 15.83 &  &  &  & 5.17 &  \\
  \double &    \distalmostsorted & \textbf{1.01} & 2.83 & 4.03 & 9.66 & 2.62 & 10.32 &  &  &  & 7.06 &  \\
  \double &         \distuniform & \textbf{1.00} & 2.08 & 2.75 & 2.97 & 3.76 & 15.88 &  &  &  & 5.24 &  \\

  \hline
  Total  & &

  \textbf{1.00} & 2.03 & 3.23 & 3.56 & 4.02 & 13.66 &  &  &  & 5.99 &  \\

  Rank & &
  1 & 2 & 3 & 4 & 5 & 7 &  &  &  & 6 &  \\\hline\hline
  
  \ulong &        \distsorted &          1.47 &  4.75 & 2.22 & 9.82 &   4.12 & \textbf{1.45} & 5.51 & 26.12 & 16.95 &  &          5.50 \\
  \ulong & \distreversesorted & \textbf{1.10} &  1.91 & 3.56 & 3.77 &  31.91 &          8.39 & 3.24 & 12.98 &  8.60 &  &          4.15 \\
  \ulong &          \distones &          4.94 & 18.91 & 3.54 & 4.48 & 190.32 &          1.58 & 3.37 & 27.58 & 15.69 &  & \textbf{1.21} \\

  \hline\hline
  
  \ulong &            \distexpo & \textbf{1.08} & 1.96 & 3.20 & 3.14 & 4.87 & 19.91 & 3.05 &  1.80 &  4.86 &  &          1.23 \\
  \ulong &            \distzipf & \textbf{1.00} & 2.00 & 3.51 & 3.24 & 5.38 & 19.31 & 3.08 & 30.36 & 12.53 &  &          4.42 \\
  \ulong &  \distduplicatesroot & \textbf{1.00} & 1.65 & 3.87 & 3.27 & 5.71 &  9.94 & 3.84 & 19.67 & 16.42 &  &          3.43 \\
  \ulong & \distduplicatestwice & \textbf{1.00} & 1.97 & 2.89 & 2.83 & 3.73 &  9.85 & 2.22 & 15.49 &  7.37 &  &          2.55 \\
  \ulong & \distduplicateseight & \textbf{1.01} & 1.69 & 2.87 & 2.49 & 3.84 & 14.73 & 2.14 & 17.58 & 10.19 &  &          2.74 \\
  \ulong &    \distalmostsorted & \textbf{1.00} & 2.83 & 4.07 & 9.64 & 2.77 & 10.81 & 3.29 & 14.71 & 10.28 &  &          3.34 \\
  \ulong &         \distuniform &          1.15 & 2.30 & 3.20 & 3.31 & 4.30 & 16.91 & 2.87 &  1.40 &  3.02 &  & \textbf{1.00} \\

  \hline
  Total  & &

  \textbf{1.03} & 2.03 & 3.35 & 3.58 & 4.26 & 13.92 & 2.87 & 8.75 & 8.12 &  & 2.38 \\

  Rank & &
  1 & 2 & 5 & 6 & 7 & 10 & 4 & 9 & 8 &  & 3 \\\hline\hline
  
  \uint &        \distsorted & \textbf{2.01} &  4.80 & 7.19 & 7.18 &   9.39 & 3.03 & 4.44 &  3.46 &  &  &          2.84 \\
  \uint & \distreversesorted & \textbf{1.21} &  1.93 & 2.91 & 2.68 &  23.08 & 8.82 & 2.15 &  1.42 &  &  &          1.27 \\
  \uint &          \distones &          2.75 & 29.11 & 4.61 & 8.71 & 533.66 & 1.97 & 5.36 & 52.76 &  &  & \textbf{1.33} \\

  \hline\hline
  
  \uint &            \distexpo &          1.45 & 3.34 & 3.61 & 4.60 & 7.78 & 34.24 & 2.88 &  3.03 &  &  & \textbf{1.02} \\
  \uint &            \distzipf & \textbf{1.00} & 2.81 & 3.06 & 3.52 & 5.84 & 26.87 & 2.31 & 10.95 &  &  &          3.26 \\
  \uint &  \distduplicatesroot & \textbf{1.00} & 1.89 & 3.29 & 2.78 & 5.74 &  8.66 & 2.69 & 12.64 &  &  &          2.65 \\
  \uint & \distduplicatestwice &          1.40 & 3.56 & 3.25 & 4.29 & 5.78 & 16.27 & 2.09 &  1.86 &  &  & \textbf{1.03} \\
  \uint & \distduplicateseight &          1.27 & 3.25 & 3.27 & 4.07 & 6.40 & 28.26 & 2.25 &  2.07 &  &  & \textbf{1.11} \\
  \uint &    \distalmostsorted & \textbf{1.14} & 2.06 & 3.05 & 5.78 & 4.15 &  7.46 & 2.25 &  1.52 &  &  &          1.28 \\
  \uint &         \distuniform &          1.53 & 3.73 & 3.45 & 4.34 & 6.69 & 26.29 & 2.50 &  1.80 &  &  & \textbf{1.00} \\

  \hline
  Total  & &

  \textbf{1.24} & 2.86 & 3.28 & 4.11 & 5.96 & 18.42 & 2.41 & 3.04 &  &  & 1.44 \\

  Rank & &
  1 & 4 & 6 & 7 & 8 & 9 & 3 & 5 &  &  & 2 \\\hline\hline
  
  \pair &        \distsorted &          1.52 &  2.93 & 2.33 & 10.32 &  8.15 & \textbf{1.07} & 3.48 & 8.00 & 15.73 &  &          4.38 \\
  \pair & \distreversesorted & \textbf{1.07} &  1.91 & 3.14 &  5.76 & 21.25 &          8.15 & 2.92 & 5.88 & 11.18 &  &          4.00 \\
  \pair &          \distones &          3.63 & 12.24 & 2.80 &  5.22 & 70.67 &          1.32 & 2.08 & 5.97 & 17.38 &  & \textbf{1.15} \\

  \hline\hline
  
  \pair &            \distexpo &          1.20 & 2.90 & 3.64 &  5.12 & 4.04 & 14.27 & 3.53 & \textbf{1.18} & 18.28 &  &          1.53 \\
  \pair &            \distzipf & \textbf{1.00} & 2.27 & 3.31 &  4.44 & 2.97 & 11.95 & 3.06 &         13.93 & 18.25 &  &          4.98 \\
  \pair &  \distduplicatesroot & \textbf{1.01} & 2.58 & 3.81 &  6.39 & 6.71 &  9.15 & 4.01 &          9.30 & 24.77 &  &          3.42 \\
  \pair & \distduplicatestwice & \textbf{1.00} & 2.46 & 3.05 &  4.22 & 4.07 &  7.30 & 2.54 &          9.19 & 13.48 &  &          3.48 \\
  \pair & \distduplicateseight & \textbf{1.02} & 2.23 & 2.97 &  3.84 & 3.02 &  9.60 & 2.28 &         11.55 & 14.89 &  &          3.43 \\
  \pair &    \distalmostsorted & \textbf{1.00} & 2.66 & 3.79 & 14.09 & 6.58 & 10.68 & 3.25 &          7.75 & 14.84 &  &          4.09 \\
  \pair &         \distuniform &          1.13 & 2.81 & 3.37 &  4.69 & 3.77 & 12.17 & 3.20 &          1.32 &  9.45 &  & \textbf{1.04} \\

  \hline
  Total  & &

  \textbf{1.05} & 2.55 & 3.41 & 5.52 & 4.24 & 10.51 & 3.08 & 5.57 & 15.68 &  & 2.79 \\

  Rank & &
  1 & 2 & 5 & 7 & 6 & 9 & 4 & 8 & 10 &  & 3 \\\hline\hline
  
  \quartet & \distuniform & \textbf{1.01} & 1.64 & 3.28 & 4.45 & 5.09 & 8.95 &  &  &  &  &  \\

  \hline

  Rank & &
  1 & 2 & 3 & 4 & 5 & 6 &  &  &  &  &  \\\hline\hline
  
  \bytes & \distuniform & \textbf{1.14} & 1.17 & 3.61 & 4.73 & 8.11 & 7.00 &  &  &  &  &  \\

  \hline

  Rank & &
  1 & 2 & 3 & 4 & 6 & 5 &  &  &  &  &  \\\hline\hline
\end{tabular}

  }
  \caption{
    Average slowdowns of parallel algorithms for different data types and input distributions obtained on machine \pcintellargefour.
    The slowdowns average input sizes with at least $2^{21}t$ bytes.
  }
  \label{tab:slowdown par 135}
\end{table}

\fi 

\end{document}